\documentclass[superscriptaddress,
 amsmath,amssymb,
 aps,
 reprint,
pra
]{revtex4-2}

\usepackage{graphicx}
\usepackage{dcolumn}
\usepackage{array}
\usepackage{physics}
\usepackage{amsfonts, amsmath}
\usepackage{dsfont}
\usepackage{amsthm}
\usepackage{bm}
\usepackage{hyperref}
\usepackage{bbold}
\usepackage{xcolor}
\usepackage{hyperref}
\usepackage[normalem]{ulem}
\usepackage{tabularx}
\usepackage{tikz}
\usetikzlibrary{arrows.meta,positioning,calc,fit}

\newcommand{\commentout}[1]{}

\newcommand{\per}{\mathrm{Per}}
\newcommand{\poly}{\mathrm{poly}}

\newtheorem{theorem}{Theorem}
\newtheorem{corollary}{Corollary}
\newtheorem{proposition}{Proposition}

\newtheorem{problem}{Problem}

\newtheorem{lemma}{Lemma}

\usepackage{mathtools}
\providecommand{\pf}{\operatorname{pf}}
\providecommand{\C}{\mathbb C}
\providecommand{\qv}{\boldsymbol q}

\newcommand{\AppendixTOCDiscard}[4]{}
\newcommand{\AppendixTOCHide}{%
    \let\contentsline\AppendixTOCDiscard
}
\newcommand{\AppendixTOCShow}{%
    \let\contentsline\AppendixTOCSavedContentsline
}

\begin{document}
\addtocontents{toc}{\protect\AppendixTOCHide}
\definecolor{navy}{RGB}{46,72,102}
\definecolor{pink}{RGB}{219,48,122}
\definecolor{grey}{RGB}{184,184,184}
\definecolor{yellow}{RGB}{255,192,0}
\definecolor{grey1}{RGB}{217,217,217}
\definecolor{grey2}{RGB}{166,166,166}
\definecolor{grey3}{RGB}{89,89,89}
\definecolor{red}{RGB}{255,0,0}

\preprint{APS/123-QED}

\title{Fast classical simulation algorithms for free-fermion dynamics with magic input states}

\author{Jiwon Heo}

\affiliation{
 Graduate School of Quantum Science and Technology, Korea Advanced Institute of Science and Technology~(KAIST)
}

\author{Oliver Reardon-Smith}
\affiliation{Center for Quantum Enabled-Computing, Center for Theoretical Physics of the Polish Academy of Sciences, Al. Lotników 32/46, 02-668 Warsaw, Poland}

\author{Michał Oszmaniec}
\affiliation{Center for Quantum Enabled-Computing, Center for Theoretical Physics of the Polish Academy of Sciences, Al. Lotników 32/46, 02-668 Warsaw, Poland}

\author{Zolt\'an Zimbor\'as}
\affiliation{University of Helsinki, Yliopistonkatu 4 00100 Helsinki, Finland}
\affiliation{HUN-REN Wigner Research Centre for Physics, Budapest, Hungary}
\affiliation{Algorithmiq Ltd, Kanavakatu 3C 00160 Helsinki, Finland}

\author{Changhun Oh}
\email{changhun0218@gmail.com}
\affiliation{Department of Physics, Korea Advanced Institute of Science and Technology, Daejeon 34141, Korea}
\affiliation{
 Graduate School of Quantum Science and Technology, Korea Advanced Institute of Science and Technology~(KAIST)
}

\begin{abstract}
The classical cost of simulating free-fermion (matchgate) circuits depends on both the input state and the computational task. We develop classical algorithms for circuits with non-Gaussian inputs, addressing exact sampling and both additive-error and exact estimation of expectation values. For inputs consisting of $n$ copies of the four-mode magic state, we give exact sampling algorithms, generalizing the Clifford and Clifford algorithm for Boson sampling, for passive and active free-fermion dynamics with worst-case arithmetic cost $O(2^n)$ per sample, without any multiplicative polynomial prefactor. The passive algorithm enables an exact simulation of the non-interacting regime of a recent trapped-ion experiment. We also extend recently developed additive-error estimation of number correlators and related observables from passive to active dynamics
with runtime polynomial in $n$ and the inverse additive-error, for input states formed by products of $n$ four-mode even parity states. Our estimator uses a Gaussian-state decomposition adapted to each sampled state to control its second moment. This includes estimation of individual output probabilities. Finally, we show that for even-parity product inputs composed of constant-size blocks, expectation values of Majorana monomials of logarithmic weight can be computed exactly in polynomial time, improving on prior Majorana propagation-style algorithms with quasi-polynomial runtime. The exact algorithm processes the input blocks by dynamic programming, sharing calculations across subsets of Majorana factors. Our suite of algorithms provide classical benchmarks for fermionic quantum simulations across a broad spectrum of computational tasks.
\end{abstract}

\maketitle

\section{Introduction}

Quantum simulation of fermionic systems is a key application of quantum computers, motivated by a variety of problems including in quantum chemistry and many-body physics~\cite{mcardle2020quantum}, as well as certifiable quantum advantage~\cite{oszmaniec2022fermion}. Free-fermion dynamics, generated by quadratic Hamiltonians, are efficiently classically simulable for Gaussian inputs~\cite{valiant2002quantum,Bravyi2005}. The input state changes this picture: suitable non-Gaussian states can make fermionic linear-optical circuits computationally powerful~\cite{hebenstreit2020computational,oszmaniec2022fermion}. 
In fact, recent trapped-ion experiments have implemented fermionic Gaussian operations on non-Gaussian magic inputs~\cite{Phasecraft}. These developments motivate a systematic analysis of how classical simulation costs depend on the structure and number of non-Gaussian input blocks.

Two tasks are particularly important for assessing the classical complexity of quantum circuits. First, sampling asks for outcomes drawn from the full output distribution. Sampling problems, including random circuit sampling~\cite{bouland2019complexity}, boson sampling~\cite{aaronson2013computational}, and fermion sampling~\cite{oszmaniec2022fermion}, play a central role in complexity-theoretic proposals for quantum computational advantage. Second, observable estimation asks for expectation values of specified physical quantities, including probabilities, correlations, and energies. It is closely tied to practical quantum simulation, where the goal is to predict or measure particular properties of a system. The two tasks can have different classical costs, so understanding both is necessary to assess the computational landscape formed by these circuits.

Previous work has approached these tasks from a variety of different directions. Methods based on Gaussian decompositions give classical simulation algorithms for non-Gaussian fermionic circuits, with costs governed by resource-theoretic monotones quantifying non-Gaussianity in the circuit~\cite{DiasKonig2024,reardon2024improved,cudby2023gaussian,dias2026optimal}. For exact sampling, combining the gate-by-gate method of Ref.~\cite{bravyi2022simulate} with Gaussian decompositions of the input states~\cite{DiasKonig2024} implies a runtime of $O(n^5 2^n)$ for $n$ four-mode magic input blocks. For exact expectation values, direct Majorana propagation expands a weight-$k$ observable into up to $(2M)^k$ terms on $M$ modes~\cite{jozsa2008matchgates}. Recent Gaussian-operator decompositions improve this scaling for four-mode magic inputs, with a runtime bound that still scales as $n^{O(k)}$~\cite{bako2025fermionic}. These bounds yield polynomial runtime for a constant Majorana weight. For additive-error estimation, output probabilities and observables can be estimated efficiently for structured magic inputs under passive free-fermion dynamics~\cite{oh2026classical}.

Here, we address both tasks for free-fermion circuits with non-Gaussian inputs. For sampling, we take $n$ copies of the four-mode magic state $\ket{\phi_4}=(\ket{1100}+\ket{0011})/\sqrt{2}$. The passive and active samplers have worst-case arithmetic costs $O(n^4+2^n)$ and $O(n^5+2^n)$ per sample, respectively. In both cases, the exponential contribution is $O(2^n)$, with no multiplicative polynomial factor. This scaling follows from sharing probability calculations within each recovery step and summing their costs over the recursive sampling path. The samples follow the full occupation-number distribution and can be used to evaluate many observables from the same set of outcomes. We use the passive algorithm to simulate the noninteracting regime of the trapped-ion experiment in Ref.~\cite{Phasecraft}. The resulting samples reproduce the free-fermion benchmarks for the doublon number and triplet density and allow us to evaluate the high-weight Wilson-loop observables directly.

For expectation values, we first develop an exact algorithm for products of arbitrary pure even-parity states on constant-size input blocks. After active free-fermion evolution, the expectation value of a Majorana string of weight $k$ on $M$ modes can be computed in $O(n^2+nK2^K)$ time, where $n$ is the number of input blocks and $K=\min\{k,2M-k\}$. In particular, the runtime is polynomial when $K=O(\log n)$. This extends the polynomial-time regime from constant to logarithmic Majorana weight for arbitrary constant-size even-parity input blocks, without requiring the special form of $|\phi_4\rangle$.
The algorithm combines blockwise factorization with dynamic programming to sum the contributions from different assignments of Majorana factors to input blocks.

Beyond exact computation, we also consider additive-error estimation of number correlators of arbitrary weight. For products of arbitrary pure even-parity four-mode states, we extend the corresponding passive-evolution results of Ref.~\cite{oh2026classical} to active free-fermion dynamics. After a one-time preprocessing of the Gaussian circuit, our algorithm uses $O(n^3\epsilon^{-2}\log(2/\delta))$ arithmetic operations for additive error $\epsilon$ and failure probability at most $\delta$. A direct extension of the passive estimator can have exponentially growing variance. The main ingredient is an unbiased estimator of transition amplitudes between block-product states, based on weighted Gaussian overlaps. We decompose one of the block-product states into Gaussian components and choose the phases in this decomposition using the covariance matrix of the Gaussian vector in each sample. Together with a Gaussian compression bound, this yields an estimator whose second moment is at most one. Expressing number correlators as signed averages of these amplitudes then yields the stated estimation guarantee. The same framework also estimates individual output probabilities and occupation marginals.

The remainder of this paper is organized as follows. Sections~\ref{Sec:Preliminaries} and~\ref{Sec:Problem setup} introduce the notation and define the simulation tasks. Section~\ref{sec:instrument-sampling} presents a common recursive framework for exact sampling, which we apply to passive and active Gaussian evolution in Secs.~\ref{Sec:Passive sampling} and~\ref{Sec:Active sampling}, respectively. Sections~\ref{Sec:Mean value} and~\ref{Sec:Approximate mean value} address exact computation of Majorana expectation values and additive-error estimation of number correlators, respectively. We discuss the implications and open questions in Sec.~\ref{Sec:Discussion}. We summarize the key findings of our work in Table~\ref{tab:informal-summary}.

\begin{table*}[t]
    \centering
    \small
    \renewcommand{\arraystretch}{1.2}
    \setlength{\tabcolsep}{4pt}
    \begin{tabular*}{\textwidth}{@{\extracolsep{\fill}}ccccccc@{}}
        \hline\hline
        Task & Error & Input state & Operation & Output & Result & Runtime \\
        \hline
        \noalign{\vskip4pt}
        Sampling
        & Exact
        & $|\phi_4\rangle^{\otimes n}$
        & Passive
        & Fock basis sample
        & Thm.~\ref{thm:sampling-passive}
        & $O(n^4+2^n)$
        \\\hline
        \noalign{\vskip5pt}
        Sampling
        & Exact
        & $|\phi_4\rangle^{\otimes n}$ 
        & Active
        & Fock basis sample
        & Thm.~\ref{thm:sampling-active-worst}
        & $O(n^5+2^n)$
        \\
        \noalign{\vskip5pt}
        \hline
        \noalign{\vskip4pt}
        \parbox[c]{1.7cm}{\centering Expectation\\value}
        & Exact
        & \parbox[c]{3.4cm}{\centering Arbitrary constant-size\\block-product state}
        & Active
        & \begin{tabular}[c]{@{}c@{}}
    Expectation value of product \\
   of $k$ Majoranas
  \end{tabular}
        & Thm.~\ref{thm:exact-expectation-value}
        & $O(n^2+nk2^k)$
        \\
        \noalign{\vskip5pt}
        \hline
        \noalign{\vskip4pt}
        \parbox[c]{1.7cm}{\centering Overlap}
        & \parbox[c]{1.3cm}{\centering Additive\\error}
        & \parbox[c]{3.6cm}{\centering Four-mode block-product\\ even states}
        & Active
        & \parbox[c]{1.3cm}{Overlap\\$\bra{\Phi}\hat{G}\ket{\Psi}$}
        & Thm.~\ref{thm:active-flo-expectation-value-estimation}
        & $O(n^3\epsilon^{-2}\log(2/\delta))$
        \\
        \noalign{\vskip5pt}
        \hline
        \noalign{\vskip4pt}
        \parbox[c]{1.7cm}{\centering Expectation\\value}
        & \parbox[c]{1.3cm}{\centering Additive\\error}
        & \parbox[c]{3.6cm}{\centering Four-mode block-product\\ even states}
        & Active
        & \parbox[c]{3cm}{Expectation value of\\number correlator}
        & Thm.~\ref{thm:active-number-correlators}
        & $O(n^3\epsilon^{-2}\log(2/\delta))$
        \\
        \noalign{\vskip5pt}
        \hline\hline
    \end{tabular*}
    \caption{Summary of our main results. All input blocks are pure and have even parity. In each case $n$ is the number of tensor-product blocks in the input state. The four-mode magic state $|\phi_4\rangle$ is defined in Eq.~\eqref{eq:magic-block}. For Majorana strings on $M$ modes, we restrict to $k\leq M$; longer strings are reduced to their complements using parity symmetry. The additive-error results return an estimate with error $\epsilon$ and failure probability at most $\delta$ and the number correlator algorithm applies to correlators of arbitrary weight. The additive-error runtimes exclude the one-time polynomial-time preprocessing of a circuit description.}
    \label{tab:informal-summary}
\end{table*}

\section{Preliminaries}\label{Sec:Preliminaries}
We use hats for operators on Fock space and leave their single-particle and Majorana transformation matrices unhatted. We consider a system of $M$ fermionic modes described by annihilation and creation operators $\hat c_j$ and $\hat c_j^\dagger$, which satisfy
\begin{align}
    \{\hat{c}_i,\hat{c}_j\} &= 0, & \{\hat{c}_i,\hat{c}_j^\dagger\} &= \delta_{ij}\hat I,
\end{align}
where $\{\hat A,\hat B\}:=\hat A\hat B+\hat B\hat A$. The vacuum $|0\rangle$ is annihilated by every $\hat c_i$, and an ordered product of creation operators defines the occupation basis
\begin{align}
    |x\rangle &= (\hat c_1^\dagger)^{x_1}\cdots(\hat c_M^\dagger)^{x_M}|0\rangle,
\end{align}
where $x\in\{0,1\}^M$ and $x_j$ is the eigenvalue of the number operator $\hat n_j:=\hat c_j^\dagger\hat c_j$.

To describe the Gaussian dynamics and the observables considered below, we use the Majorana operators
\begin{align}
    \hat\gamma_{2j-1} &= \hat c_j+\hat c_j^\dagger, &
    \hat\gamma_{2j} &= -i(\hat c_j-\hat c_j^\dagger),
\end{align}
which obey $\{\hat\gamma_a,\hat\gamma_b\}=2\delta_{ab}\hat I$. Products of these operators are the observables addressed by our exact expectation-value algorithm. We write a Hermitian Majorana monomial of weight $k$ as
\begin{align}
    \hat\Gamma &= i^{k(k-1)/2}\hat\gamma_{a_1}\cdots\hat\gamma_{a_k},
    \label{eq:majorana-string}
\end{align}
where $1\leq a_1<\cdots<a_k\leq2M$ and the weight counts the number of Majorana factors. We also call these operators Majorana strings, with the convention that the weight-zero string is the identity.

The evolution is described by fermionic Gaussian unitaries generated by quadratic Hamiltonians. When the Hamiltonian preserves particle number, the resulting unitary is \textit{passive} and acts linearly on the annihilation operators through a matrix $U\in\operatorname{U}(M)$:
\begin{align}
    \hat U^\dagger\hat c_i\hat U &= \sum_{j=1}^M U_{ij}\hat c_j.
\end{align}
Allowing pairing terms gives the general class of Gaussian unitaries, which we call \textit{active}; this class includes passive unitaries. Their action is linear in the Majorana representation, where a matrix $O\in\operatorname{SO}(2M)$ specifies
\begin{align}
    \hat O^\dagger\hat\gamma_a\hat O &= \sum_{b=1}^{2M}O_{ab}\hat\gamma_b.
\end{align}
Gaussian unitaries preserve fermionic parity, but generally do not conserve the number of particles. We will occasionally have reason to employ non-unitary Gaussian operations, as defined in Ref.~\cite{knill2001fermioniclinearopticsmatchgates,Bravyi2005}, which fall into two classes, invertible Gaussian operations are products of exponentials of even complex quadratics in the Majorana fermion operators and not necessarily invertible operators which may be written as limits of the first class. This latter class includes, for example, single mode number projectors. Classical simulation algorithms of purely Gaussian circuits remains efficient when the definition of Gaussian operations is generalized to include these operators. 

\section{Problem setup}\label{Sec:Problem setup}

We study sampling and expectation-value computation after Gaussian evolution of a block-product input, as illustrated in Fig.~\ref{fig:Structure}. The $M$ modes are partitioned into $n$ blocks of sizes $m_1,\ldots,m_n$ and ordered block by block, so that the input takes the form
\begin{align}
    |\Psi_{\mathrm{in}}\rangle &= \bigotimes_{j=1}^n|\psi_j\rangle,
    & M &= \sum_{j=1}^n m_j,
\end{align}
where each $|\psi_j\rangle$ is normalized. We use $\hat G$ for either a passive or an active Gaussian unitary. Below, we define the computational tasks in this common setting and specify the input and observable classes covered by our algorithms.

\begin{figure}[t]
    \centering
    \includegraphics[width=0.9\linewidth]{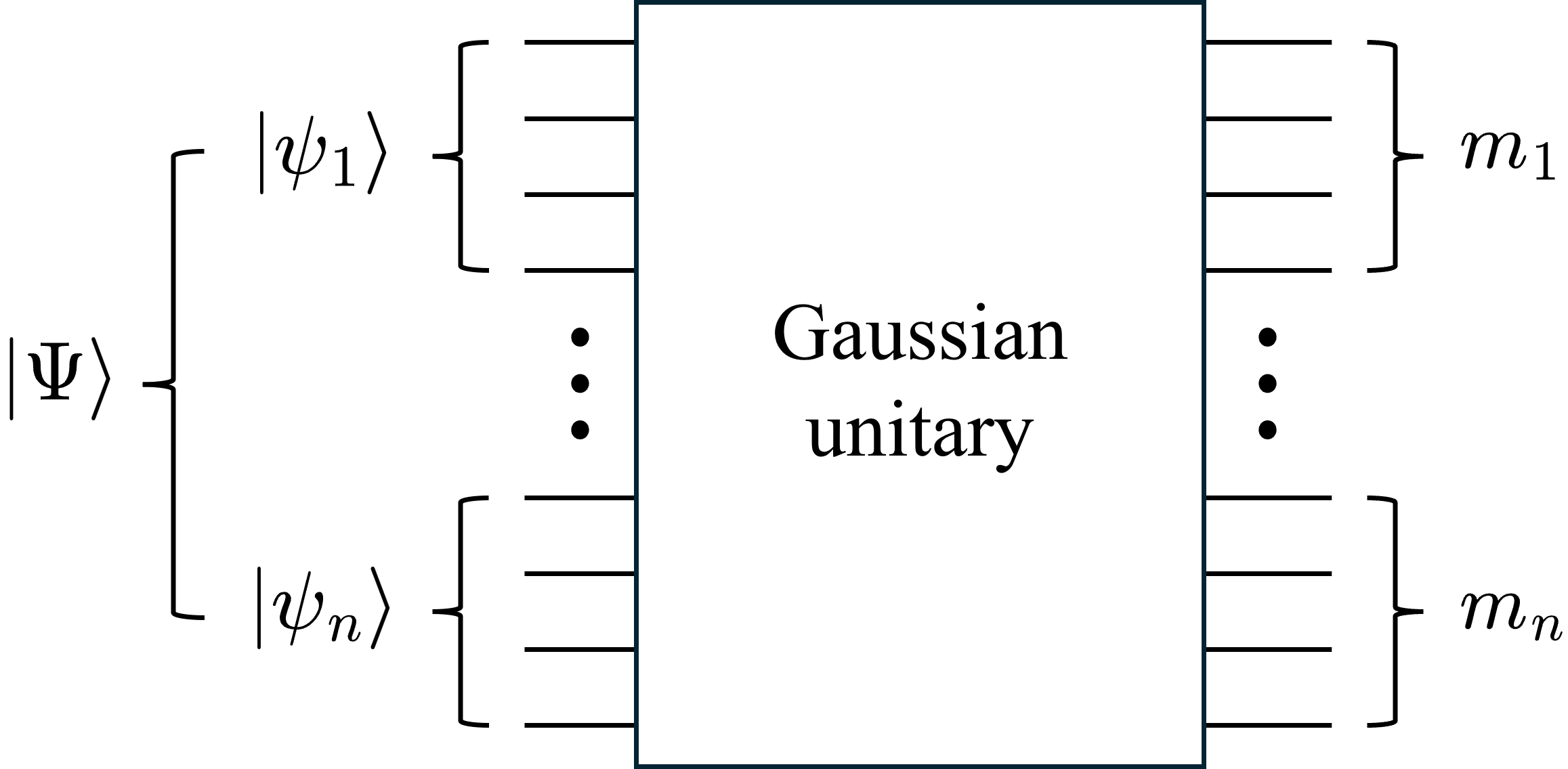}
    \caption{Gaussian evolution of a block-product input. The $M$ modes are partitioned into $n$ blocks, each prepared in a state $|\psi_j\rangle$. We study occupation-number sampling and expectation values after the Gaussian evolution.}
    \label{fig:Structure}
\end{figure}

For sampling, the Gaussian evolution is followed by an occupation-number measurement. Each outcome $x\in\{0,1\}^M$ records the occupations of all $M$ modes, with at most one fermion per mode by the Pauli exclusion principle.

\begin{problem}[Exact sampling]
Given $|\Psi_{\mathrm{in}}\rangle$ and $\hat G$, generate $x\in\{0,1\}^M$ with probability
\begin{align}
    p(x) &= \left|\langle x|\hat G|\Psi_{\mathrm{in}}\rangle\right|^2.
\end{align}
\end{problem}

Our exact sampling algorithms take $M=4n$ and $|\Psi_{\mathrm{in}}\rangle=|\Psi_n\rangle:=|\phi_4\rangle^{\otimes n}$, where
\begin{align}
    |\phi_4\rangle &= \frac{|1100\rangle+|0011\rangle}{\sqrt2}.
    \label{eq:magic-block}
\end{align}
The algorithms combine recursive sampling with shared calculations of the probabilities required to recover a parent sample. They generate samples from the full occupation-number distribution in $O(2^n)$ time under both passive and active Gaussian evolution, as shown in Secs.~\ref{Sec:Passive sampling} and~\ref{Sec:Active sampling}, respectively.

For expectation values, we consider a Hermitian observable $\hat A$. Its expectation value yields the mean outcome of measuring $\hat A$ on the output state.

\begin{problem}[Exact expectation-value computation]\label{Def:Mean value}
Given $|\Psi_{\mathrm{in}}\rangle$, $\hat G$, and an operator $\hat A$, exactly compute
\begin{align}
    \mu &= \langle\Psi_{\mathrm{in}}|\hat G^\dagger\hat A\hat G|\Psi_{\mathrm{in}}\rangle.
\end{align}
\end{problem}

Our exact expectation-value algorithm allows arbitrary block states of definite fermionic parity, with all block sizes bounded by a constant independent of $n$. Each block may have either even or odd parity. For simplicity, we present the algorithm for even-parity blocks; the same runtime bound holds for any choice of block parities. We take $\hat A$ to be a Hermitian Majorana monomial as defined in Eq.~\eqref{eq:majorana-string}. The algorithm uses dynamic programming to exploit the block-product structure, with a computational cost that depends on the Majorana weight, as shown in Sec.~\ref{Sec:Mean value}.

We also consider estimation with an additive error tolerance $\epsilon$ and failure probability $\delta$.

\begin{problem}[Additive-error expectation-value estimation]
Given the setting of Problem~\ref{Def:Mean value}, an error tolerance $\epsilon>0$, and a failure probability $0<\delta<1$, output an estimate $\widetilde\mu$ satisfying
\begin{align}
    \Pr\!\left[|\widetilde\mu-\mu|\leq\epsilon\right] &\geq 1-\delta.
\end{align}
\end{problem}

For additive-error estimation, we allow arbitrary pure even-parity states on four-mode blocks. Local Gaussian unitaries, which can be absorbed into $\hat G$, bring each block to the form~\cite{bako2025fermionic}
\begin{align}
    |\psi_j\rangle &= \alpha_j|1100\rangle+\beta_j|0011\rangle,
    \label{eq:paired-block}
\end{align}
where $\alpha_j,\beta_j\in\mathbb C$ and $|\alpha_j|^2+|\beta_j|^2=1$. Our algorithm uses an unbiased estimator based on weighted Gaussian overlaps. The observables include multipoint number correlators of arbitrary weight and occupation-number projectors, whose expectation values give individual output probabilities and marginal probabilities for specified occupations on a subset of modes. We also consider binned probabilities for integer-weighted occupation sums with polynomially bounded range. Section~\ref{Sec:Approximate mean value} extends the corresponding passive-FLO reductions of Ref.~\cite{oh2026classical} to active Gaussian evolution.

\section{Instrument-based recursive sampling}
\label{sec:instrument-sampling}

We present a general physics-inspired principle for sampling from highly symmetric quantum circuits. Our approach uses a quantum instrument whose average channel (i) is covariant with respect to the quantum evolution and (ii) acts on the measured probabilities as a classical stochastic map.
The general principle, illustrated in Fig.~\ref{fig:instrument-overview}, is to first apply a suitable quantum instrument, sample the output of the resulting conditional process, and recover a parent output using a small set of parent Born probabilities. This construction provides the recursive principle underlying the exact samplers for fermionic circuits. When applied to bosonic linear optics it recovers the original boson sampling algorithm of Clifford and Clifford~\cite{clifford2018classical}, as shown in App.~\ref{app:bosonic-cc}.

\subsection{General sampling principle}

Let $\mathcal H_{\rm p}$ and $\mathcal H_{\rm c}$ denote Hilbert spaces we call parent and child, with circuit actions $U_{\rm p}$ and $U_{\rm c}$ and finite POVMs $\{P_x\}$ and $\{Q_y\}$, respectively. Consider a completely positive trace-preserving map
\begin{equation}
 \mathcal C:\mathcal B(\mathcal H_{\rm p})\longrightarrow
 \mathcal B(\mathcal H_{\rm c}).
\end{equation}
An instrument unraveling of $\mathcal C$ is a family of completely positive maps $\{\mathcal I_\alpha\}$ satisfying $\sum_\alpha\mathcal I_\alpha=\mathcal C$~\cite{DaviesLewis1970,Watrous2018}. Applying the instrument to an input state $\rho$ produces a classical label $\alpha$ with probability $\pi_\alpha=\operatorname{Tr}\mathcal I_\alpha(\rho)$, together with the \emph{post-measurement} quantum state $\rho_\alpha=\frac{\mathcal I_\alpha(\rho)}{\pi_\alpha}$, whenever $\pi_\alpha>0$. The channel $\mathcal{C}$ determines the average state when the label $\alpha$ is disregarded.

We require both covariance with the circuit and compatibility with the final measurement:
\begin{align}
 \mathcal C\circ\operatorname{Ad}_{U_{\rm p}}
 &=\operatorname{Ad}_{U_{\rm c}}\circ\,\mathcal{C},
 \label{eq:instrument-covariance}\\
 \mathcal C^\dagger(Q_y)&=\sum_x K(y\mid x)P_x.
 \label{eq:instrument-measurement}
\end{align}
Here $\operatorname{Ad}_U(\rho)=U\rho U^\dagger$, $\mathcal C^\dagger$ is the trace adjoint, and $K$ is a stochastic matrix: $K(y\mid x)\ge0$ and $\sum_yK(y\mid x)=1$. Equation~\eqref{eq:instrument-measurement} says that the channel acts on the measured probabilities as classical noise (note that this is an extra condition and is not a consequence of covariance). The following theorem shows how an instrument satisfying these conditions can be used to generate samples from the parent quantum circuit $U_{\rm p}$ on a general input state; see Fig.~\ref{fig:instrument-overview}~(a).

\begin{theorem}[Covariant-instrument sampling and Bayesian recovery]
\label{thm:instrument-recovery}
Assume Eqs.~\eqref{eq:instrument-covariance} and~\eqref{eq:instrument-measurement}. Define
\begin{align}
 p(x)&=\operatorname{Tr}(P_xU_{\rm p}\rho U_{\rm p}^\dagger), & q_\alpha(y)&=\operatorname{Tr}(Q_yU_{\rm c}\rho_\alpha U_{\rm c}^\dagger).
\end{align}
Sample $\alpha$ with probability $\pi_\alpha$, then sample $y$ from $q_\alpha$. The marginal law of $y$ is
\begin{equation}
 q(y)=\sum_\alpha\pi_\alpha q_\alpha(y)
     =\sum_xK(y\mid x)p(x).
 \label{eq:instrument-mixture}
\end{equation}
For a sampled $y$, draw $x$ according to
\begin{equation}
 R_p(x\mid y)=\frac{K(y\mid x)p(x)}{q(y)}.
 \label{eq:instrument-recovery}
\end{equation}
The returned outcome has distribution $p$ exactly.
\end{theorem}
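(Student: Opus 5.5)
The proof has two parts. First we establish the mixture identity, Eq.~\eqref{eq:instrument-mixture}. Then we show that Bayesian recovery reproduces $p$.

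For the first part, we start from $q_\alpha(y)=\operatorname{Tr}(Q_yU_{\rm c}\rho_\alpha U_{\rm c}^\dagger)$ with $\rho_\alpha=\mathcal I_\alpha(\rho)/\pi_\alpha$. Labels with $\pi_\alpha=0$ have zero weight, so
\begin{equation*}
\sum_\alpha\pi_\alpha q_\alpha(y)=\operatorname{Tr}\bigl(Q_y\,U_{\rm c}\,\mathcal C(\rho)\,U_{\rm c}^\dagger\bigr),
\end{equation*}
which uses linearity and $\sum_\alpha\mathcal I_\alpha=\mathcal C$. Covariance, Eq.~\eqref{eq:instrument-covariance}, rewrites $\operatorname{Ad}_{U_{\rm c}}\circ\mathcal C(\rho)$ as $\mathcal C(U_{\rm p}\rho U_{\rm p}^\dagger)$. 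Passing to the trace adjoint turns the right-hand side into $\operatorname{Tr}\bigl(\mathcal C^\dagger(Q_y)\,U_{\rm p}\rho U_{\rm p}^\dagger\bigr)$. The measurement compatibility condition, Eq.~\eqref{eq:instrument-measurement}, then gives $\sum_xK(y\mid x)\operatorname{Tr}(P_xU_{\rm p}\rho U_{\rm p}^\dagger)=\sum_xK(y\mid x)p(x)$. This establishes Eq.~\eqref{eq:instrument-mixture}. In particular, $q$ is a normalized distribution, because the $Q_y$ form a POVM and $\mathcal C$ is trace preserving; equivalently, $\sum_yK(y\mid x)=1$.

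For the second part, the two-stage procedure first samples $(\alpha,y)$ from the joint law $\pi_\alpha q_\alpha(y)$, whose $y$-marginal is $q(y)$. It then draws $x$ from $R_p(\cdot\mid y)$, and this draw depends only on $y$. The probability of returning $x$ is therefore
\begin{equation*}
\sum_{y:\,q(y)>0}q(y)R_p(x\mid y)=\sum_{y:\,q(y)>0}K(y\mid x)p(x).
\end{equation*}
We need to check that $R_p(\cdot\mid y)$ is a probability distribution whenever $q(y)>0$. Nonnegativity follows from $K\ge0$ and $p\ge0$. Normalization is exactly Eq.~\eqref{eq:instrument-mixture}: $\sum_xK(y\mid x)p(x)=q(y)$.

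The main point needing care is the restriction to $y$ with $q(y)>0$, since only those $y$ are ever sampled. Suppose $q(y)=0$. Every term $K(y\mid x)p(x)$ in the sum is nonnegative, so each one vanishes. Consequently the restricted sum equals the full sum $\sum_yK(y\mid x)p(x)=p(x)$ by stochasticity of $K$. Hence the returned outcome has law exactly $p$.
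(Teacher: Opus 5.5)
Your proposal is correct and follows the paper's proof essentially step for step: the same chain (summing the instrument, covariance, trace adjoint, measurement compatibility) gives the mixture identity, and the same support argument, that $q(y)=0$ forces every $K(y\mid x)p(x)$ to vanish, shows the restricted sum still returns $p(x)$. Your one addition, handling labels with $\pi_\alpha=0$, is sound; it becomes airtight once you state that $\mathcal I_\alpha(\rho)\ge 0$ with zero trace forces $\mathcal I_\alpha(\rho)=0$.
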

\begin{proof}
Averaging the instrument outputs and using the two compatibility identities gives
\begin{align}
 \sum_\alpha\pi_\alpha q_\alpha(y)
 &=\operatorname{Tr}\!\left[Q_yU_{\rm c}\mathcal C(\rho)U_{\rm c}^\dagger\right]\nonumber\\
 &= \operatorname{Tr}\big[Q_y\,\mathcal{C}(U_p\rho U_p^\dagger)\big] \nonumber \\
&=\operatorname{Tr}\!\left[\mathcal C^\dagger(Q_y)U_{\rm p}\rho U_{\rm p}^\dagger\right]\nonumber\\
 &=\sum_xK(y\mid x)p(x).
\end{align}

Thus, Eq.~\eqref{eq:instrument-recovery} is normalized on the positive support of $q$. Its output marginal is
\begin{equation}
 \sum_{y:q(y)>0}q(y)R_p(x\mid y)=p(x).
\end{equation}
Indeed, if $p(x)>0$, any omitted $y$ has $K(y\mid x)=0$; the assertion is immediate when $p(x)=0$.
\end{proof}

\begin{figure*}[t]
\centering
\begingroup
\definecolor{instNavy}{HTML}{203B56}
\definecolor{instTeal}{HTML}{247E83}
\definecolor{instOchre}{HTML}{AC7135}
\begin{tikzpicture}[
  x=1cm,y=1cm,
  every node/.style={font=\small,align=center},
  theory/.style={draw=instNavy!55,rounded corners=2pt,fill=instNavy!3,
    text width=2.85cm,minimum height=0.93cm,inner sep=5pt},
  stage/.style={draw=instTeal!75,rounded corners=2pt,fill=instTeal!4,
    text width=6.1cm,minimum height=0.89cm,inner sep=5pt},
  cert/.style={-{Stealth[length=2mm]},draw=black,densely dashed},
  run/.style={-{Stealth[length=2mm]},draw=black,semithick},
  lab/.style={font=\footnotesize,fill=white,inner sep=2pt}
]
\node[anchor=west,font=\small\bfseries,text=black] at (0,0.5)
  {(a) Covariance and the averaged law};
\node[anchor=west,font=\small\bfseries,text=black] at (8.35,0.5)
  {(b) One recursive recovery step};

\node[theory] (rho) at (1.55,-0.55) {$\rho$\\parent input};
\node[theory] (child) at (5.55,-0.55)
  {$\mathcal C(\rho)=\sum_\alpha\pi_\alpha\rho_\alpha$\\average over children};
\node[theory] (out) at (1.55,-2.65)
  {$U_{\rm p}\rho U_{\rm p}^\dagger$\\parent output state};
\node[theory] (cout) at (5.55,-2.65)
  {$U_{\rm c}\mathcal C(\rho)U_{\rm c}^\dagger$\\$=\mathcal C(U_{\rm p}\rho U_{\rm p}^\dagger)$};
\node[theory,minimum height=0.75cm] (p) at (1.55,-4.9)
  {$p(x)$\\target distribution};
\node[theory,minimum height=0.75cm] (q) at (5.55,-4.9)
  {$q(y)=\sum_\alpha\pi_\alpha q_\alpha(y)$};
\draw[cert] (rho) -- node[lab,above] {$\mathcal C$} (child);
\draw[cert] (rho) -- node[lab,left] {$U_{\rm p}$} (out);
\draw[cert] (child) -- node[lab,right] {$U_{\rm c}$} (cout);
\draw[cert] (out) -- node[lab,above] {$\mathcal C$} (cout);
\draw[cert] (out) -- node[lab,left] {measure} (p);
\draw[cert] (cout) -- node[lab,right] {measure} (q);
\draw[cert] (p) -- node[lab,above] {$K$} (q);

\node[stage] (choose) at (11.65,-0.55)
  {Draw $\alpha\sim\pi$; form the child $\rho_\alpha$\\
   {\footnotesize Retain $\rho$ for the recovery step.}};
\node[stage] (sample) at (11.65,-2.05)
  {Recursively draw $y\sim q_\alpha$\\
   {\footnotesize Averaging over $\alpha$ gives the marginal $q(y)$.}};
\node[stage,draw=instOchre!85,fill=instOchre!5] (weights) at (11.65,-3.55)
  {Evaluate $w_x=K(y\mid x)p(x)$\\
   for $x\in\mathcal N(y)=\{x:K(y\mid x)>0\}$.};
\node[stage] (return) at (11.65,-5.05)
  {Draw $x\sim R_p(\cdot\mid y)$,\quad
   $R_p(x\mid y)=\dfrac{w_x}{\sum_z w_z}$\\
   \footnotesize Return $x$; its distribution is exactly $p$.};
\draw[run] (choose) -- (sample);
\draw[run] (sample) -- (weights);
\draw[run] (weights) -- (return);
\end{tikzpicture}
\caption{Instrument-based sampling protocol for symmetric quantum circuits. A covariant channel $\mathcal C$ admits an input-side
unravelling into child states $\rho_\alpha$ with probabilities
$\pi_\alpha$. Covariance and compatibility with the output measurement
give the averaged identity $\sum_\alpha\pi_\alpha q_\alpha=Kp$.
Theorem~\ref{thm:instrument-recovery} then reconstructs a parent sample
from a recursively generated child sample and parent probabilities on
the predecessor set $\mathcal N(y)$. In particular, one does not require
$q_\alpha=Kp$ for a fixed branch $\alpha$. For a pure parent input and a rank-one output measurement,
evaluating these probabilities amounts to computing the corresponding
parent amplitudes. The parent input is retained throughout; the
instrument label $\alpha$ and the recovered outcome $x$ play distinct
roles.}
\label{fig:instrument-overview}
\endgroup
\end{figure*}

Theorem~\ref{thm:instrument-recovery} is the basis of the recursion. To sample a potentially hard parent law $p$ corresponding to an input $\rho$ and circuit $U_{\rm p}$, first draw $\alpha$ with probability $\pi_\alpha$ and form $\rho_\alpha$. Then sample from the conditional output law $q_\alpha(y)=\operatorname{Tr}(Q_yU_{\rm c}\rho_\alpha U_{\rm c}^\dagger)$. To recover the parent sample, the probabilities $p(x)=\operatorname{Tr}(P_xU_{\rm p}\rho U_{\rm p}^\dagger)$ need to be evaluated only on
\begin{equation}
 \mathcal N(y)=\{x:K(y\mid x)>0\}.
 \label{eq:instrument-neighborhood}
\end{equation}
One need not compute $q(y)$ separately: by Eq.~\eqref{eq:instrument-mixture}, summing $K(y\mid x)p(x)$ over $\mathcal N(y)$ gives exactly $q(y)$. For a pure input and a rank-one measurement, evaluating these probabilities amounts to computing the corresponding amplitudes. These amplitudes may themselves be costly, but in the cases we have studied in sections~\ref{Sec:Passive sampling} and~\ref{Sec:Active sampling} they share an algebraic structure and can be evaluated together. The number of relevant amplitudes can be small even when the full output distribution has exponentially many outcomes.

If the same covariant instrument structure is available for the conditional states $\rho_\alpha$, this procedure can be applied recursively; see Fig.~\ref{fig:instrument-overview}(b). The recursive procedure saves the parent input, draws an instrument branch, calls the child sampler, and applies Eq.~\eqref{eq:instrument-recovery} on return. The base case is an exactly sampleable ``terminal family''. For efficiency, the algorithm requires an accessible instrument distribution $\pi_\alpha$, tractable descriptions of the conditional states $\rho_\alpha$, manageable predecessor sets $\mathcal{N}(y)$, and a termination argument. These requirements do not follow from Thm.~\ref{thm:instrument-recovery} alone. The key intuition is that successive instrument applications can make the child states easier to sample from. For the FLO circuits considered here, this happens because the instrument can reduce the fermionic magic in the states.

In Secs.~\ref{sec:instrument-passive} and~\ref{sec:instrument-active}, we apply this general approach to passive and active fermionic linear optics, respectively, and explain how these requirements are met in each case; see Fig.~\ref{fig:fermionic-recursions}.

\subsection{Passive fermionic linear optics}
\label{sec:instrument-passive}
For $V=\mathbb C^M$ and $k\ge1$, we define the parent and child spaces by
\begin{equation}
 \mathcal H_{\rm p}=\bigwedge^kV,\qquad
 \mathcal H_{\rm c}=\bigwedge^{k-1}V.
 \label{eq:passive-parent-child}
\end{equation}
The circuit actions are $U_{\rm p}=\bigwedge^kU$ and $U_{\rm c}=\bigwedge^{k-1}U$, namely the restrictions of $\hat U$ to the corresponding particle sectors.

For sampling from passive FLO circuits on $k$ fermions, we choose the instrument $\mathcal I^{(k)}_a(\rho)=\frac1k\hat c_a\rho\hat c_a^\dagger$, which describes particle loss in orthogonal modes. It averages to the normalized one-particle loss channel
\begin{equation}
 \mathcal D_k(\rho)=\frac1k\sum_{a=1}^M
       \hat c_a\rho\hat c_a^\dagger.
 \label{eq:passive-loss-channel}
\end{equation}
On $\bigwedge^kV$, $\sum_a\hat c_a^\dagger\hat c_a=kI$, so this map is trace preserving. The transformation law for the annihilation operators and unitarity of $U$ give Eq.~\eqref{eq:instrument-covariance}. 

For a pure parent state $\ket{\psi}$, the branch probabilities and normalized child states are
\begin{equation}
 \pi_a=\frac{\langle\psi|\hat n_a|\psi\rangle}{k},\qquad
 |\psi_a\rangle=\frac{\hat c_a|\psi\rangle}
 {\|\hat c_a|\psi\rangle\|}.
 \label{eq:passive-branches}
\end{equation}
Only branches with $\pi_a>0$ are sampled.

Write an occupation outcome as a set of occupied modes $S\subset [M]$. For $|S|=k-1$, the adjoint action is
\begin{equation}
 \mathcal D_k^\dagger(|S\rangle\langle S|)
  =\frac1k\sum_{j\notin S}|S\cup\{j\}\rangle\langle S\cup\{j\}|.
\end{equation}
Consequently, $K(S\mid T)=1/k$ when $S\subset T$ and $|T|=k$, and it is zero otherwise.

We write $p_\psi^G(x)=|\langle x|\hat G|\psi\rangle|^2$ when the circuit must be specified, and omit the superscript when it is fixed. With $p_\psi(T)=|\langle T|\hat U|\psi\rangle|^2$, Theorem~\ref{thm:instrument-recovery} gives
\begin{align}
 q_\psi(S)&=\frac1k\sum_{j\notin S}p_\psi(S\cup\{j\}),\\
 R_{p_\psi}(S\cup\{j\}\mid S)
 &=\frac{p_\psi(S\cup\{j\})}
 {\sum_{\ell\notin S}p_\psi(S\cup\{\ell\})}.
 \label{eq:passive-return-rule}
\end{align}
Thus recovering a parent sample from one child sample requires only $M-k+1$ parent amplitudes to restore a particle. Note that the input mode annihilated in Eq.~\eqref{eq:passive-branches} and the restored output mode need not coincide.

For the sampling algorithms, consider the magic block of Eq.~\eqref{eq:magic-block}, with optional vacuum modes:
\begin{align}
 |\Psi_n\rangle=|\phi_4\rangle^{\otimes n}\otimes
 |0\rangle^{\otimes(M-4n)},\qquad M\ge4n.
 \label{eq:instrument-magic-input}
\end{align}
An annihilation resolves an intact block into a one-particle Fock state or removes an occupied resolved mode. Thus, the conditional inputs remain products of intact blocks and resolved occupations. In Sec.~\ref{sec:passive-algorithm}, we specialize the general approach to prove Thm.~\ref{thm:sampling-passive} and derive the runtime using the subroutines of App.~\ref{s1:sec:passive}.

\subsection{Active fermionic linear optics}
\label{sec:instrument-active}
For $V=\mathbb C^M$, an active Gaussian circuit preserves parity rather than particle number. Its invariant spaces are the even- and odd-parity subspaces of the full fermionic Fock space,
\begin{equation}
 \mathcal{F}^\sigma=\bigoplus_{k:\,(-1)^k=\sigma}\bigwedge^kV,
 \qquad \sigma\in\{+1,-1\}.
\end{equation}
The parent and child spaces are $\mathcal H_{\rm p}=\mathcal{F}^\sigma$ and $\mathcal H_{\rm c}=\mathcal{F}^{-\sigma}$. The circuit actions $U_{\rm p}$ and $U_{\rm c}$ are the corresponding restrictions of $\hat O$. Both spaces have dimension $2^{M-1}$. Fixed-particle sectors are not suitable invariant spaces, even when the input has a definite particle number, because the circuit need not preserve it.

For sampling from active FLO circuits, we choose the ladder-operator instrument with Kraus operators $\hat c_j/\sqrt M$ and $\hat c_j^\dagger/\sqrt M$, which describes particle loss and gain in orthogonal modes. It averages to the balanced loss-and-gain channel
\begin{align}
 \mathcal E(\rho)&=\frac1M\sum_{j=1}^M
 (\hat c_j\rho\hat c_j^\dagger+\hat c_j^\dagger\rho\hat c_j) =\frac1{2M}\sum_{a=1}^{2M}\hat\gamma_a\rho\hat\gamma_a.
 \label{eq:active-channel}
\end{align}
The anti-commutation relations imply trace preservation, and the orthogonality of $O$ gives Eq.~\eqref{eq:instrument-covariance}. 

For a pure parent state $\ket{\psi}$, the branch probabilities are
\begin{equation}
 \pi_{j,-}=\frac{\langle\hat n_j\rangle_\psi}{M},\qquad
 \pi_{j,+}=\frac{1-\langle\hat n_j\rangle_\psi}{M}.
 \label{eq:active-instrument-probabilities}
\end{equation}
The normalized child states are $|\psi_{j,-}\rangle=\hat c_j|\psi\rangle/\|\hat c_j|\psi\rangle\|$ and $|\psi_{j,+}\rangle=\hat c_j^\dagger|\psi\rangle/\|\hat c_j^\dagger|\psi\rangle\|$, respectively. Only branches with positive probability are sampled. Equivalently, choose a mode uniformly and then select loss or gain according to its occupation expectation. The Majorana Kraus operators $\hat\gamma_a/\sqrt{2M}$ give the same channel, but each branch is a reversible Gaussian transformation and therefore cannot reduce non-Gaussianity during the recursion.

Write an occupation outcome as a bit string $y$ of child parity. The adjoint action is
\begin{equation}
 \mathcal E^\dagger(|y\rangle\langle y|)
   =\frac1M\sum_{j=1}^M|y\oplus e_j\rangle\langle y\oplus e_j|,
\end{equation}
where $e_j$ is the $j$th unit bit string. Consequently, $K(y\mid x)=1/M$ when $x=y\oplus e_j$ for some $j\in[M]$, and it is zero otherwise. The classical channel flips one uniformly chosen bit.

With $p_\psi(x)=|\langle x|\hat O|\psi\rangle|^2$, Theorem~\ref{thm:instrument-recovery} gives
\begin{align}
 q_\psi(y)&=\frac1M\sum_{j=1}^Mp_\psi(y\oplus e_j),\\
 R_{p_\psi}(y\oplus e_j\mid y)
 &=\frac{p_\psi(y\oplus e_j)}{\sum_{\ell=1}^Mp_\psi(y\oplus e_\ell)}.
 \label{eq:active-return-rule}
\end{align}
Thus recovering a parent sample from one child sample requires only $M$ parent amplitudes to restore the parent parity, despite the exponential dimension of the parity sector. The input mode acted on by the instrument and the output mode flipped during recovery need not coincide.

For the sampling algorithms, consider the magic-block input of Eq.~\eqref{eq:instrument-magic-input}. A gain or loss resolves an intact block into a Fock state, whereas an operation on a resolved mode only flips its occupation. Thus, the conditional inputs remain products of intact blocks and resolved occupations. Unlike passive recursion, an active step does not deterministically reduce the particle number: it either resolves an intact block or flips the occupation of an already resolved mode. Section~\ref{sec:active-expected} turns this instrument into an almost-surely terminating sampler and bounds its expected runtime. In Sec.~\ref{sec:active-worst}, we group finite batches of steps to enforce progress in every recursive call, thereby obtaining a worst-case runtime bound. Both versions rely on the Pfaffian calculations of App.~\ref{s1:sec:active}.

\begin{figure*}[t]
\centering
\begingroup
\definecolor{recNavy}{HTML}{203B56}
\definecolor{recTeal}{HTML}{247E83}
\definecolor{recOchre}{HTML}{AC7135}
\begin{tikzpicture}[
  x=1cm,y=1cm,
  every node/.style={font=\small,align=center},
  space/.style={draw=recNavy!55,rounded corners=2pt,fill=recNavy!3,
    text width=2.75cm,minimum height=1.03cm,inner sep=4pt},
  recstep/.style={draw=recTeal!75,rounded corners=2pt,fill=recTeal!4,
    text width=6.5cm,minimum height=0.8cm,inner sep=4pt},
  run/.style={-{Stealth[length=2mm]},draw=black,semithick},
  lab/.style={font=\footnotesize,fill=white,inner sep=2pt}
]
\node[anchor=west,font=\small\bfseries,text=black] at (0,0.5)
  {(a) Passive FLO: remove one particle};
\node[anchor=west,font=\small\bfseries,text=black] at (8.3,0.5)
  {(b) Active FLO: switch parity};
\node[space] (pp) at (1.5,-0.55)
  {Parent\\$\bigwedge^{k}\mathbb C^M$};
\node[space] (pc) at (5.35,-0.55)
  {Child\\$\bigwedge^{k-1}\mathbb C^M$};
\draw[run] (pp) --
  node[lab,above,text height=2.5ex,text depth=0.8ex]
  {$\hat c_a$} (pc);
\node[space,text width=2.3cm] (ap) at (9.63,-0.55)
  {Parent\\$\mathcal{F}^{\sigma}$};
\node[space,text width=2.3cm] (ac) at (13.83,-0.55)
  {Child\\$\mathcal{F}^{-\sigma}$};
\draw[run] (ap) --
  node[lab,above,text height=2.5ex,text depth=0.8ex]
  {$\hat c_a$ or $\hat c_a^\dagger$} (ac);

\node[recstep] (ps) at (3.43,-2.15)
  {Recursively sample $S$,\quad $|S|=k-1$.};
\node[recstep] (as) at (11.73,-2.15)
  {Recursively sample $y$ of the opposite parity.};
\draw[run] (pc.south) -- (5.35,-1.55) -- (3.43,-1.55) -- (ps.north);
\draw[run] (ac.south)
  -- (13.83,-1.55)
  -- (11.73,-1.55)
  -- (as.north);

\node[recstep,draw=recOchre!85,fill=recOchre!5] (pr) at (3.43,-3.63)
  {Restore an output particle: $S\longmapsto S\cup\{j\}$\\
   $j\notin S$,\quad weight $p(S\cup\{j\})$.};
\node[recstep,draw=recOchre!85,fill=recOchre!5] (ar) at (11.73,-3.63)
  {Restore the parent parity: $y\longmapsto y\oplus e_\ell$\\
   $\ell\in[M]$,\quad weight $p(y\oplus e_\ell)$.};
\draw[run] (ps) -- (pr);
\draw[run] (as) -- (ar);

\node[anchor=north,align=left,text width=6.6cm,font=\footnotesize]
  at (3.43,-4.45)
  {\textbf{Recursion parameter.} Each step lowers the particle
   number: $k\to k-1$. For $n$ magic blocks, the initial values
   are $k=2n$ and $g=n$. Stop at $k=1$, where a single-particle
   distribution is sampled directly.};
\node[anchor=north,align=left,text width=6.6cm,font=\footnotesize]
  at (11.73,-4.45)
  {\textbf{Recursion parameter.} The mode is uniform in $[M]$.
   Hitting an intact magic block lowers $g\to g-1$, with probability
   $4g/M$; a resolved-mode hit leaves $g$ unchanged.
   Start at $g=n$ and stop at $g=0$ (Gaussian sampling).};
\end{tikzpicture}
\caption{The two fermionic instances of the instrument recursion.
(a) The passive channel $\mathcal D_k$ maps the $k$-particle space to
the $(k-1)$-particle space. Its classical action deletes a uniformly
chosen occupied output mode, and recovery adds one mode using parent
probabilities. (b) The active channel $\mathcal E$ interchanges the
two parity sectors without reducing their dimension. Its classical
action flips a uniformly chosen output bit; recovery chooses among
the $M$ bit-flipped parent outcomes. For the magic-block inputs,
progress of the active recursion is measured by the number $g$ of
intact non-Gaussian blocks, rather than by the dimension of the
Hilbert space. The input mode used by the instrument and the output
mode chosen during recovery need not coincide.}
\label{fig:fermionic-recursions}
\endgroup
\end{figure*}

\section{Sampling problem for passive Gaussian dynamics}\label{Sec:Passive sampling}
\subsection{Exact sampling algorithm}
\label{sec:passive-algorithm}
We now apply the recursive construction of Sec.~\ref{sec:instrument-sampling} to the input $|\Psi_n\rangle=|\phi_4\rangle^{\otimes n}$ on $M=4n$ modes. The recovery rule specifies how to reconstruct a parent sample from a child sample. To obtain an efficient algorithm, we must also represent the conditional inputs and evaluate the required parent probabilities. For the present input, the conditional states retain a simple block structure, and all candidate probabilities at a return step can be calculated together.

\begin{theorem}[Exact passive sampling]
\label{thm:sampling-passive}
Given a passive Gaussian unitary specified by $U\in U(4n)$, an exact occupation-number sample from $\hat U|\Psi_n\rangle$ can be generated with per-run arithmetic cost
\begin{align}
O(n^4+2^n)=O(2^n).
\label{eq:sampling-passive-bound}
\end{align}
\end{theorem}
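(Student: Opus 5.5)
We apply the passive recursion of Sec.~\ref{sec:instrument-passive} to $|\Psi_n\rangle$ with $k=2n$ particles, and bound the arithmetic cost summed along the recursion path. \textbf{State representation.} Any conditional input reached by the recursion has the form $\bigotimes_{j\in I}|\phi_4\rangle_j\otimes|F\rangle$. Here $I$ is the set of $g$ still-intact blocks, and $|F\rangle$ is a Fock state on the resolved modes. Annihilating mode $a$ in an intact block $j$ sends $|\phi_4\rangle_j$ to a single-particle Fock state: the partner mode of the same pair stays occupied. Annihilating an occupied resolved mode removes that particle. The branch probabilities $\pi_a=\langle\hat n_a\rangle/k$ are read off directly, since $\langle\hat n_a\rangle=1/2$ on intact modes and equals $0$ or $1$ on resolved modes. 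Each step lowers $k$ by one. Along any path, every block is eventually resolved, so there are $2n$ levels. At each level the number of intact blocks $g$ is nonincreasing, and $g$ drops at least once every two levels until it reaches zero.

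\textbf{Amplitudes at a return step.} At a level with current state $\psi$ (with $g$ intact blocks and $k$ particles) and returned child sample $S$, we need $\langle S\cup\{j\}|\hat U|\psi\rangle$ for all $j\notin S$. We expand each intact block as $|1100\rangle+|0011\rangle$. The amplitude then becomes a sum over $2^g$ choices $\tau\in\{0,1\}^g$ of determinants $\det U_{S\cup\{j\},\,C(\tau)}$, where $C(\tau)$ is the input occupation pattern. All $j$ share the rows $S$. By Laplace expansion along the added row $j$, each $j$-amplitude equals $\sum_\tau\sum_{c\in C(\tau)} U_{j c}\,(\pm)\det U_{S,C(\tau)\setminus\{c\}}$. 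The key idea is to reorganise this as a single vector: we compute $v_c=\sum_\tau\pm 2^{-g/2}\det U_{S,C(\tau)\setminus c}$ over all columns $c$, and then the full amplitude vector is $Uv$, which costs $O(Mk)=O(n^2)$ per level. To obtain $v$, we walk over $\tau$ in Gray-code order. Neighbouring patterns differ by swapping two columns for two others in a single block, so the cofactor data (the adjugate $\det(A)A^{-1}$ of the $(k-1)\times k$ minors) can be updated by low-rank updates at $O(\poly(n))$ cost per $\tau$.

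\textbf{Cost accounting.} The naive per-level cost from the previous paragraph is $O(\poly(n)2^g)$, which would give $O(\poly(n)2^n)$ in total. To remove the polynomial prefactor, we use a second structural fact: a pair-factorised block contributes columns in pairs, so the amplitude factorises as a Pfaffian/hafnian-like product over blocks. This lets the $\tau$-sum be organised as a product-tree recursion: we fix the resolved columns once with an $O(k^3)$ elimination, and then each intact block adds a $2\times2$ structure. The goal is a per-level cost of $O(n^3+2^g)$ with $O(1)$ amortised work per $\tau$. Summing over levels, the $\sum_{\text{levels}}2^{g}$ is a geometric series, because $g$ decreases at least every two levels, giving $O(2^n)$. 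The polynomial parts sum to $O(n\cdot n^3)=O(n^4)$. The base case $k=1$ is a single-particle distribution, sampled in $O(M)$. Correctness follows from Theorem~\ref{thm:instrument-recovery} applied inductively at each level.

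\textbf{Main obstacle.} The hard step is achieving $O(1)$ amortised arithmetic per pattern $\tau$ for the whole vector $v$, rather than $O(\poly(n))$. I would first try a Gray-code enumeration over $\tau$. Each flip in one block changes only two columns, so I would precompute, at the start of a level, the Schur-complement reduction onto a $2g\times2g$ core. The sum over $\tau$ is then a sum of principal minors. I would evaluate this sum over $\tau$ by a subset-sum / dynamic-programming recursion (expand one block at a time, halving work), so that the total is $O(2^g)$ plus polynomial setup. If this fails, the fallback is to accept $O(g\,2^g)$ per level and argue the levels still sum to $O(2^n)$ only when $g<n$, treating the top level specially.
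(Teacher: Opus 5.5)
\textbf{Gap: the $O(2^g)$ return step is assumed, not proved.} Your architecture matches the paper's: the loss-instrument recursion on $k=2n$ particles, one shared cofactor vector $v$ (the paper's $h$) followed by a single product with $U$, a one-time $O(k^3)$ elimination of the resolved columns, and a sum of $2^{g}$ along the sampling path. The step that carries the theorem is evaluating the $2^g$-term cofactor sum in $O(2^g)$ operations with no polynomial prefactor, and you only state this as a goal. Gray-code enumeration with low-rank adjugate updates costs $\mathrm{poly}(n)$ per pattern. Your fallback of $O(g2^g)$ per level fails already at the top level, where $g=n$ costs $O(n2^n)$. As written, you therefore obtain only $O(\mathrm{poly}(n)2^n)$. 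Your description of the sum is also off in two ways. The amplitude does not factorize over blocks. After eliminating $F$, the terms are not principal minors, because the rows are fixed to $S$ and only the column pairs vary.

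\textbf{What is needed.} The missing idea is a binary recursion whose work per node is polynomial in the number $r$ of \emph{remaining} blocks, not in $n$ or $k$. After the common elimination, the residual matrix $W$ has $2g-1$ or $2g$ rows and $4g$ columns; if $U_{S,F}$ has column deficiency at least two, the whole vector vanishes. At a node with $r$ blocks, handle each of the two pair choices $Y$ of the next block as follows:
\begin{itemize}
\item Row-reduce the two selected columns in $O(r^2)$ operations and recurse on the $(r-1)$-block residual $W'_Y$.
\item For a rank-two pair, lift the result as $\frac{1}{\det E_Y}\binom{-V_Y\mathcal C_{r-1}(W'_Y)}{\mathcal C_{r-1}(W'_Y)}$, then embed and add.
\item For a rank-one pair in a cofactor call, the recursion becomes a scalar determinant sum $\mathcal D_{r-1}$ times a fixed vector.
\item Lower rank contributes zero.
\end{itemize}
This case split means no singular matrix is ever inverted, a problem your Schur-complement and adjugate-update route would have to confront. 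The recurrence $T_r\le2T_{r-1}+O(r^2)$ gives $T_g\le2^g\bigl(C_0+C\sum_{j\ge1}j^2/2^j\bigr)=O(2^g)$. Each return step therefore costs $O(k^3+Mk+2^g)$.

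\textbf{Secondary error in the path sum.} Your claim that $g$ drops at least once every two levels is false. If you resolve $m$ blocks first and then remove their $m$ leftover particles, $g=n-m$ stays fixed for $m$ consecutive steps. The bound still holds by a counting argument instead:
\begin{itemize}
\item There are exactly $n$ first hits, at parents with distinct values $g=n,\dots,1$.
\item Before the $t$th removal from a resolved block, at least $t$ blocks must have been hit, so that parent has $g\le n-t$.
\end{itemize}
Hence $\sum_k2^{g_k}\le\sum_{j=1}^n2^j+\sum_{t=1}^n2^{n-t}=3\cdot2^n-3$. The polynomial work over the $2n$ levels is $O(n^4)$, as you say.
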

Throughout the sampling results, costs count exact arithmetic, conjugation, comparisons, zero tests, nonnegative square roots, and exact draws from finite represented distributions.

At any stage, the input consists of $g$ intact four-mode blocks and a set $F$ of occupied resolved modes. It has $k=|F|+2g$ particles. An annihilation on an intact block selects one of its two occupation components and leaves a one-particle Fock state; an annihilation on an occupied resolved mode empties that mode. Thus, every conditional input is specified by the remaining intact blocks and resolved occupations. Its occupation expectations are $1/2$ on intact modes and either zero or one on resolved modes, so the instrument probabilities are directly available.

The sampler keeps $U$ fixed and proceeds recursively:
\begin{enumerate}
\item If $k=1$, write the input as $|\psi\rangle=\sum_a v_a|\{a\}\rangle$, draw $j$ with probability $|(Uv)_j|^2$, and return $\{j\}$.
\item Otherwise, save the current input $|\psi\rangle$. Choose $a$ with probability $\langle\psi|\hat n_a|\psi\rangle/k$, and recursively sample the normalized child proportional to $\hat c_a|\psi\rangle$. Denote the returned occupied set by $S$.
\item Calculate all weights
\begin{align}
w_j=|\langle S\cup\{j\}|\hat U|\psi\rangle|^2,
\qquad j\notin S.
\label{eq:sampling-passive-weights}
\end{align}
Draw $j$ with probability $w_j/\sum_{\ell\notin S}w_\ell$, and return $S\cup\{j\}$.
\end{enumerate}
Every downward step removes one particle, so the procedure reaches its base case after $2n-1$ reductions. Thm.~\ref{thm:instrument-recovery}, with the passive return rule~\eqref{eq:passive-return-rule}, proves exactness by induction on $k$. The return calculation uses the saved parent input, and its normalization is positive for every child outcome that can be sampled.

The main computational step is the simultaneous evaluation of the return weights. Write the saved input as
\begin{align}
|\psi\rangle&=2^{-g/2}\sum_{y\in\{0,1\}^g}|J_y\rangle,\label{eq:sampling-residual-input}\\
J_y&=(F,X_1(y_1),\ldots,X_g(y_g)),
\label{eq:sampling-input-lists}
\end{align}
where $X_b(0)$ and $X_b(1)$ are the two occupied pairs of intact block $b$. Each pair is internally ordered, and the blocks follow the fixed input-mode ordering. For the returned set $S$, let $B_y=U_{S,J_y}$. Appending the candidate output row $j$ gives the amplitude
\begin{align}
\mathcal A_j=2^{-g/2}\sum_y
\det\begin{pmatrix}B_y\\ U_{j,J_y}\end{pmatrix}.
\label{eq:sampling-passive-amplitude}
\end{align}
The appended row order may contribute a sign relative to the ordered occupation basis; its squared modulus is $w_j$.

The candidate row enters each determinant linearly. Let $D$ be the ordered list containing $F$ and all modes of the intact blocks, and define the cofactor vector
\begin{align}
z_a(B)=(-1)^{k+a}\det B^{\setminus a},
\label{eq:sampling-cofactor}
\end{align}
where $B^{\setminus a}$ deletes column $a$. If $\iota_y$ embeds coordinates on $J_y$ into those on $D$, then
\begin{align}
h=2^{-g/2}\sum_y\iota_y z(B_y),\qquad
\mathcal A=U_{:,D}h.
\label{eq:sampling-shared-cofactor}
\end{align}
Thus, the sum over input components is performed once, in $h$, and a single matrix-vector multiplication is sufficient to compute every candidate amplitude.

The determinant subroutine in App.~\ref{s1:sec:passive} first eliminates the columns associated with $F$, which are common to all branches. It then shares the remaining calculation across the two pair choices in each intact block. The resulting binary recurrence has $O(r^2)$ work at a node with $r$ remaining blocks:
\begin{align}
T_r\le 2T_{r-1}+O(r^2),\qquad T_g=O(2^g).
\label{eq:sampling-binary-cost}
\end{align}
Including the common elimination and the final multiplication, all return weights are obtained in
\begin{align}
O(k^3+Mk+2^g)
\label{eq:sampling-passive-query}
\end{align}
operations. The appendix gives the cofactor updates and treats rank-deficient matrices.

To bound the total cost, the exponential work must be summed along the sampled path. A first annihilation in an intact block decreases $g$ by one; a subsequent annihilation removes the remaining particle from an already resolved block. There are $n$ first hits. Before the $t$th removal from a resolved block, at least $t$ blocks have already been hit. Counting the final annihilation solely for this bound therefore gives
\begin{align}
\sum_{k=1}^{2n}2^{g_k}
&\le\sum_{j=1}^n2^j+\sum_{t=1}^n2^{n-t}=3\cdot2^n-3,
\label{eq:sampling-passive-path}
\end{align}
where $g_k$ is the intact-block count at the $k$-particle parent. The polynomial work sums to $O(n^4)$ for $M=4n$, proving Thm.~\ref{thm:sampling-passive} on every sampling path.

\begin{figure*}
    \centering
    \includegraphics[width=0.95\linewidth]{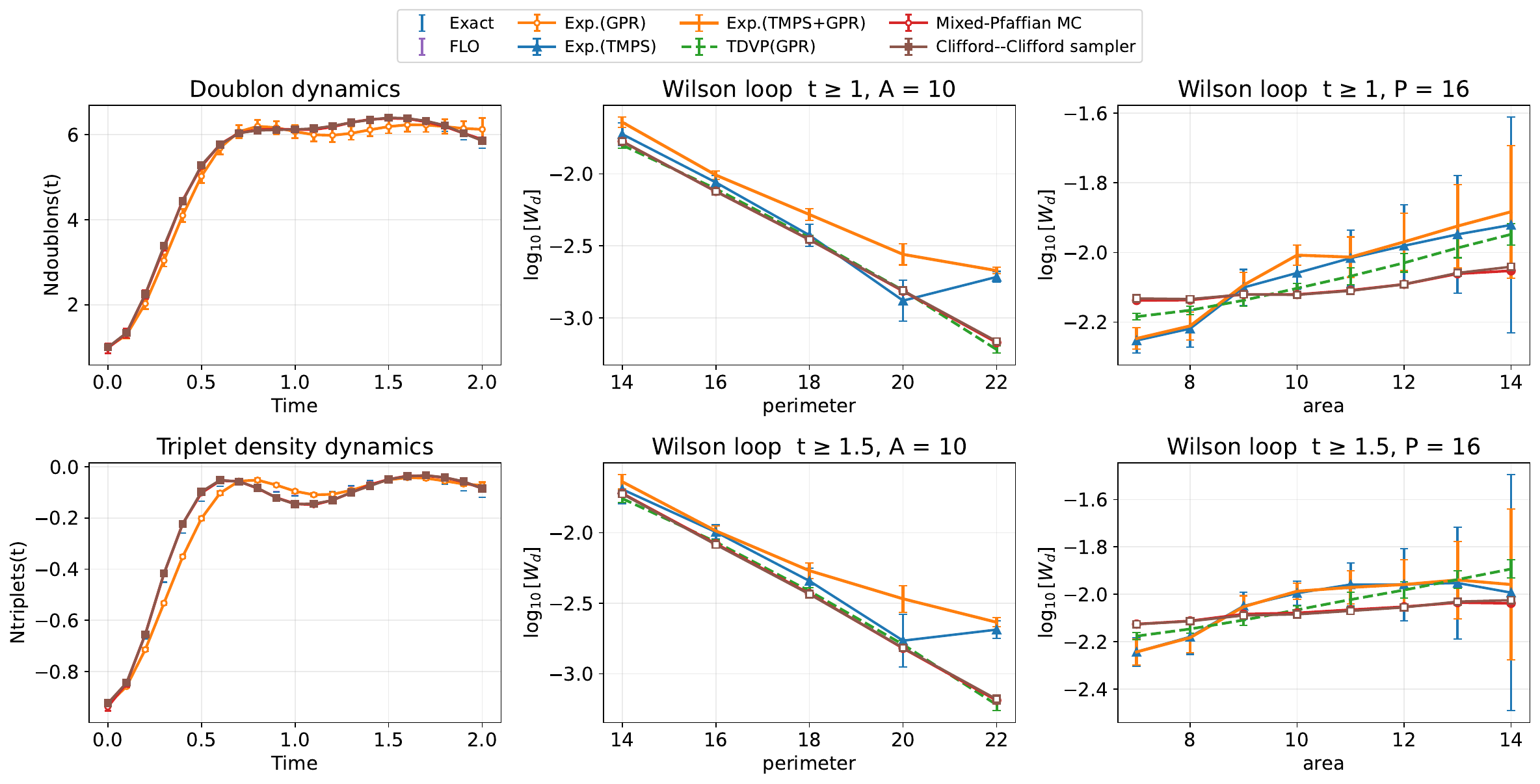}
    \caption{
    Noninteracting trapped-ion benchmark. Error bars for the exact-sampling estimates show central 68\% shot-level bootstrap percentile intervals obtained from 5,000 resamples.
    The exact passive sampler reproduces the exact free-fermion results for the doublon number and triplet density and agrees closely with the independent mixed-Pfaffian additive-error calculation of Ref.~\cite{oh2026classical} for the Wilson-loop observables.
    Both classical calculations show a small systematic deviation from the experimentally inferred Wilson-loop values.
    }
    \label{fig:phasecraft}
\end{figure*}

\subsection{Numerical simulation}
We demonstrate exact classical sampling in the noninteracting regime of the trapped-ion Fermi-Hubbard experiment presented as going beyond exact classical simulation in Ref.~\cite{Phasecraft}. The non-Gaussian input makes this regime nontrivial to simulate despite its passive Gaussian evolution. Reference~\cite{Phasecraft} obtained exact classical benchmarks for low-weight observables, while leaving exact sampling unimplemented because of its computational cost. Our algorithm makes exact sampling feasible for this experimental instance and provides a classical benchmark for its high-weight Wilson-loop observables.

We simulate 28 particles in 56 fermionic modes, with an input consisting of 13 copies of $|\phi_4\rangle$, two additional occupied modes, and two vacuum modes. We apply the passive Gaussian unitary corresponding to the experimental circuit of Ref.~\cite{Phasecraft}. At each time point, we generate $10^4$ samples from the exact output distribution and use the same samples to estimate the doublon number, triplet density, and Wilson-loop observables. The Wilson-loop estimates are further averaged over the time windows and loop geometries shown in Fig.~\ref{fig:phasecraft}, following Ref.~\cite{Phasecraft}. The observable estimates have finite-sample statistical uncertainty, with error bars obtained by bootstrap resampling.

Figure~\ref{fig:phasecraft} shows agreement with the exact free-fermion benchmarks for the doublon number and triplet density. For the high-weight Wilson loops, our results also agree closely with the independent mixed-Pfaffian additive-error calculations of Ref.~\cite{oh2026classical}. This agreement provides a cross-check between two distinct approaches: the present algorithm samples the full output distribution exactly, whereas Ref.~\cite{oh2026classical} estimates the observables directly to additive error. Both classical calculations show a small systematic discrepancy from the experimentally inferred Wilson-loop values. Our exact sampler thus provides an independent classical reference for assessing the experimental and error-mitigation results in the non-interacting regime.

\section{Sampling problem for active Gaussian dynamics}
\label{Sec:Active sampling}

\subsection{Expected-time sampling}
\label{sec:active-expected}
For active Gaussian evolution, we use the balanced gain-and-loss instrument and the conditional bit-flip recovery of Sec.~\ref{sec:instrument-active}. The conditional inputs again consist of intact blocks and resolved occupations. A nonzero ladder operation on an intact block resolves it into a Fock state, whereas an operation on a resolved mode flips its occupation. The number of intact blocks therefore never increases, although a step need not decrease it.

\begin{theorem}[Expected-time active sampling]
\label{thm:sampling-active-expected}
Given an active Gaussian unitary specified by $O\in SO(8n)$, there is an exact occupation-number sampler for $\hat O|\Psi_n\rangle$ that terminates almost surely and has expected arithmetic cost
\begin{align}
O\!\left(n^4\log(n+1)+2^n\right)=O(2^n).
\label{eq:sampling-active-expected-bound}
\end{align}
\end{theorem}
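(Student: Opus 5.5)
The sampler is the active analogue of the passive recursion of Sec.~\ref{sec:passive-algorithm}, built on the balanced loss-and-gain instrument of Sec.~\ref{sec:instrument-active}. The state of the recursion is a pair $(\text{intact blocks},\,\text{resolved occupation string})$ with $g$ intact blocks.

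At each call the algorithm proceeds as follows. If $g=0$, the input is a Fock state, and we sample $\hat O|y_0\rangle$ exactly with a standard Gaussian sampler in $O(M^3)=O(n^3)$ time. Otherwise we save the input, draw a mode $j$ uniformly, and choose loss or gain with the probabilities of Eq.~\eqref{eq:active-instrument-probabilities}. These are $1/2$ on intact modes and $0/1$ on resolved modes, so the step is forced there. We then recurse on the normalized child and, on return, apply the bit-flip recovery rule of Eq.~\eqref{eq:active-return-rule}.

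Exactness follows from Thm.~\ref{thm:instrument-recovery} by induction along the (almost surely finite) recursion path. Since each step is a correct one-step recovery whenever the child sample is exact, conditioning on termination gives the exact law.

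Termination is controlled as follows. With $g$ intact blocks, a step resolves a block with probability $4g/M=g/(2n)$. The number of steps spent at level $g$ is therefore geometric with mean $2n/g$, and the total number of steps has expectation
\[
\sum_{g=1}^n \frac{2n}{g}=O(n\log(n+1)).
\]
It is finite almost surely.

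The cost of a recovery step at a parent with $g$ intact blocks must be $O(\poly(n)+2^g)$. This requires the $M$ amplitudes $\langle y\oplus e_\ell|\hat O|\psi\rangle$ for all $\ell$ to be computed together. I would write $|\psi\rangle=2^{-g/2}\sum_{u\in\{0,1\}^g}|J_u\rangle$, exactly as in Eq.~\eqref{eq:sampling-residual-input}, and express each Fock-to-Fock Gaussian amplitude as a Pfaffian of a matrix assembled from $O$. Flipping output bit $\ell$ adds or removes one row/column, so all $M$ flipped amplitudes should come from a single vector of Pfaffian cofactors, analogous to Eq.~\eqref{eq:sampling-shared-cofactor}. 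The common part from resolved modes is eliminated once, in $O(n^3)$ time. The binary recurrence over block choices, $T_r\le 2T_{r-1}+O(r^2)$, then gives the $O(2^g)$ term, and a final matrix–vector product gives the $O(nM)$ term. I take this Pfaffian analogue of the appendix determinant subroutine (App.~\ref{s1:sec:active}) as the main technical obstacle. The difficulty is that the Pfaffian structure mixes creation and annihilation blocks, and rank-deficient cases must be handled by Schur-complement updates rather than by inverses.

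The total expected cost then needs separate bounds for the polynomial and exponential parts. For the polynomial part, each step costs $O(n^3)$, so by the expected step count the total is $O(n^4\log(n+1))$. For the exponential part, the expected number of steps at level $g$ is $2n/g$, which gives
\[
\mathbb E\Bigl[\sum_{\text{steps}}2^{g}\Bigr]=\sum_{g=1}^n \frac{2n}{g}2^g.
\]
This sum is dominated by its top terms and equals $O(n2^n/n)=O(2^n)$. The key point is that $\sum_g 2^g/g=O(2^n/n)$, so the multiplicative factor $n$ is cancelled. Adding the base-case sampler, $O(n^3)$ once, yields Eq.~\eqref{eq:sampling-active-expected-bound}.
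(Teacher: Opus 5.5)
Your proposal is correct and follows essentially the same route as the paper. It uses the balanced loss-and-gain recursion with forced flips on resolved modes, geometric waiting times at each level $g$, the shared Pfaffian-vector recovery of App.~\ref{s1:sec:active} at cost $O(M^3+2^g)$ per step, and the sum $\sum_g \mathbb{E}[L_g]\,2^g=O(2^n)$ via $\sum_{g\le n}2^g/g=O(2^n/n)$.

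Two small points. First, $O\in SO(8n)$ means $2M=8n$, so $M=4n$ and the resolution probability is $4g/M=g/n$ rather than $g/(2n)$; this only changes constants. Second, exactness for an unbounded-depth recursion is best argued as the paper does. Truncate at a finite depth and sample the endpoint exactly, so the sampler is exact by induction on depth, and note that the probability of reaching the truncation tends to zero.
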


The recursive procedure is as follows:
\begin{enumerate}
\item If $g=0$, sample the Gaussian output of the remaining Fock input using a Gaussian occupation sampler~\cite{Bravyi2005}.
\item Otherwise, save the current input $|\psi\rangle$. Choose a mode $a$ uniformly, and apply loss or gain with probabilities $\langle\hat n_a\rangle_\psi$ and $1-\langle\hat n_a\rangle_\psi$, respectively. Recursively sample the normalized child, obtaining a string $x$.
\item Calculate all neighbor weights
\begin{align}
w_j=|\langle x\oplus e_j|\hat O|\psi\rangle|^2.
\label{eq:sampling-active-weights}
\end{align}
Choose $j$ with probability $w_j/\sum_\ell w_\ell$, and return $x\oplus e_j$.
\end{enumerate}
The Gaussian circuit remains fixed throughout this recursion. At a state with $g>0$ intact blocks, a uniformly chosen mode belongs to an intact block with probability $4g/M$. The waiting time $L_g$ before the next block is resolved is therefore geometric, with
\begin{align}
\mathbb E L_g=\frac{M}{4g}.
\label{eq:sampling-active-waiting}
\end{align}
The recursion reaches $g=0$ almost surely. Exactness follows from Thm.~\ref{thm:instrument-recovery} and Eq.~\eqref{eq:active-return-rule}. More explicitly, truncate the downward procedure at a finite depth and use exact endpoint sampling. The resulting sampler is exact at every truncation depth, and the probability that the truncation is reached tends to zero.

As in the passive case, the return weights share a common amplitude calculation. For a saved parent and a returned string $x$, define the Gaussian state $|\zeta\rangle=\hat O^\dagger|x\rangle$ and the transition vector
\begin{align}
h_a=\langle\zeta|\hat\gamma_a|\psi\rangle.
\label{eq:sampling-active-transition}
\end{align}
The Majorana transformation law gives
\begin{align}
(Oh)_a&=\langle x|\hat\gamma_a\hat O|\psi\rangle,
& w_j&=|(Oh)_{2j-1}|^2.
\label{eq:sampling-active-readout}
\end{align}
The last identity follows because $\hat\gamma_{2j-1}$ flips occupation $j$, up to a sign. Consequently, one transition vector and one matrix--vector multiplication yield all $M$ neighbor weights.

To calculate $h$, introduce the two ladder-transition vectors
\begin{align}
a_b^-&=\langle\zeta|\hat c_b|\psi\rangle,
\qquad a_b^+=\langle\zeta|\hat c_b^\dagger|\psi\rangle,\\
h_{2b-1}&=a_b^-+a_b^+,
\qquad h_{2b}=-i(a_b^--a_b^+).
\label{eq:sampling-ladder-vectors}
\end{align}
Expanding the intact blocks as in Eq.~\eqref{eq:sampling-residual-input} expresses these quantities as sums of Gaussian occupation amplitudes, represented by Pfaffians. Their common structure permits two levels of reuse: all insertion and deletion amplitudes are obtained from a shared Pfaffian vector, and the calculation of that vector is shared across the two occupation choices of each intact block.

Appendix~\ref{s1:sec:active} develops this routine and proves that all neighbor weights, up to a common positive factor, can be calculated in
\begin{align}
O(M^3+2^g)
\label{eq:sampling-active-query}
\end{align}
operations. The same bound applies to an absolute scalar Born probability. The routine includes singular cases and does not require a nonzero vacuum overlap. Its binary calculation obeys the same $O(2^g)$ recurrence as the passive cofactor routine.

There is one return calculation for every downward step. The expected exponential work is
\begin{align}
\mathbb E\!\left[\sum_{g=1}^nL_g2^g\right]
&=\frac M4\sum_{g=1}^n\frac{2^g}{g}=O\!\left(\frac Mn2^n\right).
\label{eq:sampling-active-total}
\end{align}
The polynomial work contributes $O(M^4H_n)$, where $H_n=\sum_{g=1}^n1/g$. Setting $M=4n$ gives Thm.~\ref{thm:sampling-active-expected}. Although the expected number of visits is larger at small $g$, the corresponding amplitude calculations are exponentially cheaper.

This procedure has no finite bound on the length of an individual sampling path: resolved modes may be selected arbitrarily many times before another intact block is visited. In section~\ref{sec:active-worst} show how to modify the outer recursion to obtain a deterministically bounded runtime.

\subsection{Worst-case sampling}
\label{sec:active-worst}
The expected-time sampler can revisit resolved modes arbitrarily many times before reaching an intact block. A worst-case guarantee therefore requires progress after a bounded amount of work. We obtain this by grouping finite batches of instrument paths and replacing selected groups of endpoints by Gaussian mixtures. Every nonterminal recursive call then has strictly fewer intact blocks.

\begin{samepage}
\begin{theorem}[Worst-case active sampling]
\label{thm:sampling-active-worst}
For every $O\in SO(8n)$, an exact occupation-number sample from $\hat O|\Psi_n\rangle$ can be generated with per-run arithmetic cost
\begin{align}
    O(n^5+2^n)&=O(2^n)
    \label{eq:sampling-active-worst-bound}
\end{align}
and at most $n$ nonterminal recursive calls.
\end{theorem}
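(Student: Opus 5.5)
The plan is to keep the covariant-instrument recursion of Thm.~\ref{thm:instrument-recovery} with the balanced loss-and-gain channel $\mathcal E$, but to \emph{compose} the instrument with itself until an intact block is hit. Concretely, at a state with $g\ge1$ intact blocks and resolved occupations $F$, consider the composite channel $\mathcal E^{L}$ obtained by applying the ladder instrument repeatedly, and stop the unravelling at the first step that lands on an intact block. Paths that stay on resolved modes only flip resolved occupations, so they act on the Fock part as Gaussian (indeed Fock-to-Fock) operations. I would therefore group, for each $\ell\ge0$, all instrument paths consisting of $\ell$ resolved-mode flips followed by one intact-block hit. The resulting branch is a product of $g-1$ intact blocks with a random resolved configuration. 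Summing the geometric series over $\ell$ gives a single instrument whose branches all have $g-1$ intact blocks. The classical kernel is the corresponding mixture $\sum_\ell (1-4g/M)^\ell(4g/M)K^{\ell+1}$ of bit-flip kernels, where $K$ flips one uniform bit. Covariance (Eq.~\eqref{eq:instrument-covariance}) and the stochastic-map condition (Eq.~\eqref{eq:instrument-measurement}) are inherited under composition and convex combination, so Thm.~\ref{thm:instrument-recovery} applies directly. The number of intact blocks then strictly decreases at every nonterminal call, which yields the claimed bound of at most $n$ nonterminal calls.

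The difficulty is that this kernel has full support on all $y$ of the right parity, so recovery would need exponentially many parent probabilities. To avoid this, I would not push the resolved flips through the kernel. Instead, I would absorb them into the child state, using the ``replace groups of endpoints by Gaussian mixtures'' idea. The resolved-mode flips are random Majorana/ladder operations on Fock modes. Averaged over $\ell$ and the chosen modes, they produce a classical mixture over resolved bit patterns, i.e.\ a mixture of Gaussian (Fock) states on the resolved modes. The child then becomes $\sum_r \lambda_r\,(\text{intact}^{\otimes(g-1)}\otimes|r\rangle)$. This is still a mixture of block-product states and is sampled by first drawing $r$ and then recursing. Only the last intact-block hit is treated as the instrument branch, so the recovery kernel remains the single bit flip $K$ with predecessor set of size $M$, and Eq.~\eqref{eq:active-return-rule} is used with the saved parent. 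For exactness, I would need to verify that the averaged resolved-flip channel on the parent side commutes with $\hat O$ up to reorganisation into an equivalent parent mixture. Concretely, the parent being recovered should be the pre-flip mixture, which requires computing parent weights for each mixture component, or equivalently computing the total weights $K(y|x)p(x)$ summed over $r$. The cleaner choice is to draw the resolved pattern $r$ on the \emph{parent} side first: resolved modes are in definite Fock states, and flipping them is a Gaussian unitary that can be absorbed into $\hat O$. The parent becomes a Fock-modified state with the same $g$, and only the intact hit is instrumental.

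For the cost, each nonterminal call at level $g$ does one return computation costing $O(M^3+2^g)$ by Eq.~\eqref{eq:sampling-active-query}, plus polynomial bookkeeping to sample the grouped branch, namely the geometric stopping distribution and the resolved pattern. The bookkeeping cost is $O(M^2)$ if done by a closed form for the per-mode flip parities. Summing over $g=n,\dots,1$ gives $\sum_g 2^g=O(2^n)$ plus $n\cdot O(M^3)=O(n^4)$. An additional $O(n^5)$ may enter if absorbing the flips into the Gaussian circuit requires recomputing an $O(M^3)$ Gaussian description at each of up to $O(n)$ substeps per level. The terminal $g=0$ call uses a Gaussian sampler at polynomial cost, which gives $O(n^5+2^n)$ overall.

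The main obstacle is the exactness of the grouping step. I must show that the grouped branch distribution over (resolved pattern, hit block, loss/gain) can be sampled exactly in polynomial time despite the unbounded $\ell$. I must also show that the associated recovery still needs only $M$ parent amplitudes. I would first try to prove that the resolved-flip part commutes with the instrument structure, so that the predecessor set stays $\{x\oplus e_j\}$, and fall back on a parity-generating-function formula for the mixture weights if it does not.
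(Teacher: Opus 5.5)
\textbf{There is a genuine gap: the recovery step breaks once the resolved-mode flips are removed.} Theorem~\ref{thm:instrument-recovery} with the single-flip kernel of Eq.~\eqref{eq:active-return-rule} relies on one application of the \emph{full} channel $\mathcal E$, averaged over all $M$ modes. Only that average satisfies Eq.~\eqref{eq:instrument-covariance}.

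Your two reorganisations both leave this setting:
\begin{enumerate}
\item \emph{Stopping at the first intact hit.} The stopping time is correlated with which modes are chosen. So the stopped average is $\tilde{\mathcal E}_{\rm int}\sum_{\ell\ge0}\tilde{\mathcal E}_{\rm res}^{\,\ell}$, where $\tilde{\mathcal E}_{\rm res},\tilde{\mathcal E}_{\rm int}$ are the subnormalised parts of $\mathcal E$ on resolved and intact modes. It is not $\sum_\ell(1-p)^\ell p\,\mathcal E^{\ell+1}$ with $p=4g/M$. Because $O$ mixes intact and resolved modes, these pieces are not covariant, and Thm.~\ref{thm:instrument-recovery} does not apply. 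The same objection rules out treating ``only the last intact hit'' as the instrument with kernel $K$.
\item \emph{Drawing the resolved pattern $r$ on the parent side and absorbing the flips into $\hat O$.} This changes the target: the recovered sample has law $p_{\psi_r}$, not $p_\psi$. A resolved flip is the Majorana branch $\hat\gamma_{2a-1}$, and $\hat O\hat\gamma_b|\psi\rangle=\sum_aO_{ab}\hat\gamma_a\hat O|\psi\rangle$ acts coherently on the output, not as classical noise.
\end{enumerate}
Consequently every flip must be undone by its own recovery with the saved intermediate input, at cost $O(M^3+2^g)$. With a geometric stopping time the number of flips is unbounded. That is exactly the expected-time sampler of Thm.~\ref{thm:sampling-active-expected}, so your ``one return computation per level'' accounting does not hold.

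\textbf{The missing idea is how the paper enforces progress without stopping.}
\begin{enumerate}
\item It runs a \emph{deterministic} batch of $\ell=R+2$ steps. The batch average is then $\mathcal E^{\ell}$, and recovery is $\ell$ single-flip steps applied in reverse.
\item It repairs the non-progressing paths by pooling rather than conditioning. Type A paths never touch an intact block and receive a uniform label $j$. Type B paths touch only block $j$, exactly twice. Grouped by $(f,j)$, they average on block $j$ to $(b\hat\rho_\star+w\hat\Pi_e/8)/(b+w)$.
\item The pair-deletion count $B_\ell(f)\le R\binom{\ell}{2}B_{\ell-2}(f)$ gives $w\ge4b/3$ whenever $R\le12g$. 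This places the pooled state inside the convex-Gaussian region, at the $3/7$ threshold, with an explicit fourteen-state ensemble. Every recursive child therefore has at most $g-1$ intact blocks, while the group's output law is preserved.
\item After the child returns $x$, an original path is redrawn from the posterior $\propto\alpha_\lambda\mu(u\mid\lambda)p_\lambda(x)$. This uses nine scalar Born queries and the counting recurrence~\eqref{s1:eq:bridge}, and it makes the reverse recoveries exact.
\item When $R>12g$ the weight bound fails. The paper then terminates with the gate-by-gate sampler of cost $O(M^5+M^22^g)$, which is affordable because $g<n/4$.
\end{enumerate}
Only then do $\sum_k(4k+3)2^{n-k}=O(2^n)$ and $\sum_s\ell_sM^3=O(n^5)$ give the theorem. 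You correctly flag the grouping as the main obstacle. However, you give no construction that makes the grouped state Gaussian or restores a path for recovery, and that is where the proof actually lies.
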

\end{samepage}

A finite batch of loss and gain operations can still visit only resolved modes, leaving every intact block unchanged. The key is to combine these histories with histories that resolve one block, grouping endpoints that agree everywhere else. Within each group, the selected block is described by an averaged density operator containing both the original non-Gaussian state and a uniform even-parity contribution. We choose the grouping so that this average admits a Gaussian ensemble. Sampling a component of that ensemble then removes one intact block from the recursive input. Progress is therefore guaranteed for the input passed to the recursive call, even when some of the original histories in the group make no progress.

More precisely, the grouping supplies enough of the maximally mixed even-parity state $\hat\Pi_e/8$ for the averaged block state to be a convex combination of $\hat\Pi_e/8$ and
\begin{align}
    \hat\sigma&=\frac37|\phi_4\rangle\langle\phi_4|+\frac47\frac{\hat\Pi_e}{8},
    \label{eq:sampling-mixture-threshold}
\end{align}
which is proved in App.~\ref{s1:app:group-count}. Here, $\hat\Pi_e$ is the even-parity projector on the four-mode block. Both states admit explicit finite ensembles of pure Gaussian states; the ensemble for $\hat\sigma$ is constructed in App.~\ref{s1:app:ensemble}. Sampling such a Gaussian component preserves the group's averaged output distribution and removes one intact block from the recursive input.

A recursive call has the following structure. Here $g$ is the number of intact blocks, and the current Gaussian circuit includes any Gaussian preparations absorbed at earlier calls.
\begin{enumerate}
    \item \textbf{Generate and group a batch.} Sample a bounded sequence of loss and gain operations using Eq.~\eqref{eq:active-instrument-probabilities}. Group the selected paths by their common endpoint outside a chosen block. The remaining paths are retained individually; each of their endpoints already has fewer than $g$ intact blocks.
    \item \textbf{Sample the endpoint output.} For an individual path, recursively sample its endpoint. For a selected group, sample a Gaussian component of the averaged block state, absorb its preparation into the circuit, and recursively sample the resulting input. In either case, the number of intact blocks decreases.
    \item \textbf{Restore the path and recover.} Given the sampled endpoint output, draw an original instrument path from its conditional distribution within the group. Apply the recovery rule in Eq.~\eqref{eq:active-return-rule} backwards along this path, using the circuit and input states saved by the parent call.
\end{enumerate}
If no intact blocks remain, the terminal call uses Gaussian occupation sampling. When only a sufficiently small fraction remains, a bounded gate-by-gate sampler terminates the recursion. Its cost contains a polynomial factor multiplying $2^g$, where $g$ is the remaining block count; the reduction from $n$ to $g$ makes this terminal cost fit within $O(n^5+2^n)$. Appendix.~\ref{s1:sec:upgrade} specifies the batch length, grouping rule, terminal threshold, and complete implementation.

The Gaussian component used by the recursive call is an auxiliary representation of the grouped endpoint state. Recovery still requires a history of the original loss and gain instrument. After the recursive call returns an output, we therefore sample an original history conditional on both its group and that output. This conditioning is necessary because different histories generally assign different probabilities to the returned output. It restores the original joint distribution of histories and outputs, allowing the recovery updates to be applied in reverse order. The posterior weights each history by its prior probability and its endpoint's Born probability for the returned output. The sampled Gaussian component is discarded once the original history is restored.

A restored history may leave all intact blocks unchanged, but it is used only during recovery and does not initiate another recursive call. Thus, restoring the history preserves exactness without undoing the progress already made by the recursion. Appendix.~\ref{s1:app:bridge} implements this conditional draw using a fixed number of endpoint probability queries and a counting recurrence for the paths.

The cost bound uses both this strict progress and the shared probability calculations of App.~\ref{s1:sec:active}. After $k=n-g$ blocks have been resolved, a batch has $O(k+1)$ steps. Its exponential work is therefore $O((k+1)2^{n-k})$. Since $k$ strictly increases between recursive calls, the total exponential contribution obeys
\begin{align}
    \sum_{k=0}^{n-1}(k+1)2^{n-k}&=O(2^n).
    \label{eq:sampling-worst-scaling}
\end{align}
This summation is what avoids a polynomial prefactor multiplying $2^n$. The accumulated polynomial work is $O(n^5)$, and the single terminal call also fits the bound in Thm.~\ref{thm:sampling-active-worst}. The full correctness and cost analysis is given in App.~\ref{s1:app:worst-proof}.

\section{Exact expectation-value problem}\label{Sec:Mean value}

\begin{figure*}
    \centering
    \definecolor{meanTeal}{HTML}{247E83}
    \resizebox{0.8\textwidth}{!}{%
    \begin{tikzpicture}[
        tablebox/.style={draw=meanTeal!75,rounded corners=2pt,fill=meanTeal!4,minimum width=2.2cm,minimum height=0.8cm,align=center,font=\small},
        updatebox/.style={draw=gray!55,rounded corners=2pt,minimum height=1.25cm,align=center,font=\small},
        flowarrow/.style={-{Stealth[length=2mm]},draw=black}
    ]
        \node[anchor=west,font=\small\bfseries] at (0,0.7) {(a) Add one input block at a time};
        \node[tablebox] (f0) at (1.3,0) {$f_0(S)=\mathbf1_{S=\varnothing}$};
        \node[tablebox] (f1) at (5.1,0) {$f_1(S)$};
        \node[font=\large] (dots) at (8.4,0) {$\cdots$};
        \node[tablebox] (fn) at (11.6,0) {$f_n(S)$};
        \draw[flowarrow] (f0)--node[above,font=\footnotesize]{block $1$}(f1);
        \draw[flowarrow] (f1)--(dots);
        \draw[flowarrow] (dots)--node[above,font=\footnotesize]{block $n$}(fn);
        \node[anchor=west,font=\small\bfseries] at (0,-1.15) {(b) Compute one complete table $f_q$};
        \node[updatebox,text width=3.0cm] (init) at (1.6,-2.15) {Initialize one vector per subset\\[5pt]$v_S=f_{q-1}(S)|\phi_4\rangle$};
        \node[updatebox,text width=5.5cm] (upd) at (7.0,-2.15) {For $t=k,\ldots,1$ and $t\in S$:\\[5pt]$v_S\leftarrow v_S+(-1)^{|S\cap[t-1]|}\hat\eta_t^{[q]}v_{S\setminus\{t\}}$};
        \node[updatebox,text width=2.5cm] (out) at (12.2,-2.15) {Contract\\[5pt]$f_q(S)=\langle\phi_4|v_S\rangle$};
        \draw[flowarrow] (init)--(upd);
        \draw[flowarrow] (upd)--(out);
    \end{tikzpicture}%
    }
    \caption{Dynamic programming for exact expectation values. (a) The scalar table $f_q$ stores the auxiliary-state coefficients after the first $q$ input blocks. The final entry gives $\mu=i^{k(k-1)/2}f_n([k])$, with the complementary-string sign restored when applicable. (b) Each block update applies the factors $\hat I+\hat\eta_t^{[q]}\otimes\hat b_t^\dagger$ in the order $t=k,\ldots,1$, using one local vector per subset. Auxiliary creation supplies the displayed sign. The source vector $v_{S\setminus\{t\}}$ is unchanged during pass $t$, allowing in-place updates. For the illustrated four-mode input, every vector has dimension $16$.}
    \label{fig:Dynamic programming}
\end{figure*}

We consider an input $|\Psi_{\mathrm{in}}\rangle$ consisting of $n$ even-parity blocks of constant size, evolved by a Gaussian unitary $\hat O$ with Majorana matrix $O\in\operatorname{SO}(2M)$. The observable is a Hermitian Majorana monomial $\hat\Gamma$ of weight $0\leq k\leq2M$, as defined in Eq.~\eqref{eq:majorana-string}.

Our algorithm adds one input block at a time and stores a table indexed by subsets of the transformed Majorana factors. The main difficulty is to account for their anti-commutation signs while sharing calculations across all subsets. We represent the subset indices as occupation states of auxiliary fermionic modes. Their creation operators supply exactly the required signs, and their factorized action gives an efficient update of the entire table. The auxiliary Fock space is a representation of the table, so this construction introduces no additional physical modes to simulate.

Fermionic parity gives a further reduction~\cite{Bravyi2005,Parity}: odd-weight strings have zero expectation, while even-weight strings can be replaced by their complements with a known sign, specified in Eq.~\eqref{eq:mean-complementary-string}. We therefore use the effective weight
\begin{align}
    K&:=\min\{k,2M-k\}.
\end{align}

\begin{theorem}\label{thm:exact-expectation-value}
Under the assumptions above, there is a classical algorithm that exactly computes
\begin{align}
    \mu&=\langle\Psi_{\mathrm{in}}|\hat O^\dagger\hat\Gamma\hat O|\Psi_{\mathrm{in}}\rangle
\end{align}
in $O(n^2+nK2^K)$ time.
\end{theorem}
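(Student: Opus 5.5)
We plan to first reduce to $k=K$ and then run the block-by-block dynamic program sketched in Fig.~\ref{fig:Dynamic programming}. For the reduction, note that $\hat O$ and $|\Psi_{\mathrm{in}}\rangle$ have definite parity. This gives $\mu=0$ for odd $k$. For even $k>M$ we write $\hat\Gamma=\pm i^{M}\hat\Gamma^{c}\hat P$, where $\hat P=\prod_a\hat\gamma_a$ up to a phase is the parity operator and $\hat\Gamma^c$ is the complementary string. Since $\hat O$ commutes with $\hat P$ and $\hat P|\Psi_{\mathrm{in}}\rangle=|\Psi_{\mathrm{in}}\rangle$, we get $\mu=\pm\langle\hat O^\dagger\hat\Gamma^c\hat O\rangle$, with the sign tracked explicitly. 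This leaves a string of weight $K$.

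Next we conjugate each factor: $\hat O^\dagger\hat\gamma_{a_t}\hat O=\sum_b O_{a_tb}\hat\gamma_b$. Splitting the sum over $b$ by input block $q$ gives $\hat O^\dagger\hat\gamma_{a_t}\hat O=\sum_{q=1}^n\hat\eta^{[q]}_t$, where $\hat\eta_t^{[q]}$ is a linear combination of the Majoranas in block $q$. Forming the $\hat\eta^{[q]}_t$ costs $O(MK)=O(nK)$ after reading the relevant $K$ rows of $O$. Expanding the product $\hat\eta_1\cdots\hat\eta_K$ then amounts to assigning each factor $t$ to a block $q(t)$. For a fixed assignment, we reorder the factors so that those belonging to block $1$ come first, then block $2$, and so on. Each transposition of odd operators on different blocks gives a sign $-1$. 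Because every block is even, the expectation then factorizes into $\prod_q\langle\psi_q|\prod_{t:q(t)=q}\hat\eta^{[q]}_t|\psi_q\rangle$, with the factors inside each block kept in increasing order of $t$.

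The central step is to encode these reorder signs with auxiliary fermionic modes $\hat b_1,\dots,\hat b_K$. We define $\hat W_q=\prod_{t=K}^{1}(\hat I+\hat\eta^{[q]}_t\otimes\hat b_t^\dagger)$ and propose the identity
\begin{align}
\langle\hat O^\dagger\hat\gamma_{a_1}\cdots\hat\gamma_{a_K}\hat O\rangle=\langle [K]|\,\langle\Psi_{\mathrm{in}}|\hat W_n\cdots\hat W_1|\Psi_{\mathrm{in}}\rangle|0\rangle_b,
\end{align}
where each $\hat\eta\otimes\hat b^\dagger$ is an even operator, so the factors belonging to different blocks commute. This is the step I expect to be the main obstacle. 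It requires a careful check that the signs $(-1)^{|S\cap[t-1]|}$ generated by the auxiliary creation operators reproduce exactly the reorder sign of the assignment, together with the correct internal order within each block. I would first verify the claim for $K=2$ with two blocks, and then argue by induction on $q$.

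Once the identity holds, the quantity $f_q(S)$, defined as the amplitude of the auxiliary state $|S\rangle$ after blocks $1,\dots,q$, obeys a recurrence. It is computed by initializing $v_S=f_{q-1}(S)|\psi_q\rangle$ for all $S\subseteq[K]$. We then perform $K$ in-place passes $v_S\leftarrow v_S+(-1)^{|S\cap[t-1]|}\hat\eta_t^{[q]}v_{S\setminus\{t\}}$ and finish by contracting $f_q(S)=\langle\psi_q|v_S\rangle$. Block size is constant, so each pass costs $O(2^K)$ local operations, and each block costs $O(K2^K)$. Over all blocks this gives $O(nK2^K)$. The setup, namely rotating the constant-size blocks into a fixed form and assembling the coefficients, contributes $O(n^2)$. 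The output is $\mu=\pm i^{K(K-1)/2}f_n([K])$.

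Correctness of the in-place update follows because pass $t$ reads only entries $S\setminus\{t\}$, which are not modified during that pass. Together with the factorization step, this yields Theorem~\ref{thm:exact-expectation-value}.
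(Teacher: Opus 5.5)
Your proposal is correct and follows the paper's proof essentially step for step. The shared steps are the parity/complement reduction to weight $K$, the blockwise split $\hat\eta_t=\sum_q\hat\eta_t^{[q]}$, auxiliary fermions attached by an ordinary tensor product, the identity of Lemma~\ref{lem:mean-aux-factorization}, and the in-place update with sign $(-1)^{|S\cap[t-1]|}$ at cost $O(K2^K)$ per block.

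The identity you flag as the main obstacle needs no induction. Nilpotence of $\hat b_t^\dagger$ gives $\hat I+\hat\eta_t\otimes\hat b_t^\dagger=\prod_q(\hat I+\hat\eta_t^{[q]}\otimes\hat b_t^\dagger)$. Regrouping these factors by block, using the cross-block commutation you already noted (the two anticommutation signs cancel), then yields your $\hat W_q$ directly. This works for your decreasing order as well as the paper's increasing one, provided the same order is used in every block.
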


\begin{proof}[Proof sketch]
We illustrate the construction for $|\phi_4\rangle^{\otimes n}$, for which $M=4n$. Appendix.~\ref{App:Main} proves the result for general block-product states and also allows different bra and ket states. By the parity reduction above, it suffices to consider even $k$ with $0<k\leq M$; the identity case is immediate. The complementary-string sign is restored at the end when this reduction is used.

Write $\hat\Gamma=i^{k(k-1)/2}\hat\gamma_{a_1}\cdots\hat\gamma_{a_k}$ with $a_1<\cdots<a_k$, and set $\hat\eta_t:=\hat O^\dagger\hat\gamma_{a_t}\hat O$. Gaussian evolution decomposes each factor over input blocks as
\begin{align}
    \hat\eta_t&=\sum_{q=1}^n\hat\eta_t^{[q]},\qquad \hat\eta_t^{[q]}:=\sum_{r=8q-7}^{8q}O_{a_t r}\hat\gamma_r.
\end{align}
For $A=\{t_1<\cdots<t_s\}\subseteq[k]$, define
\begin{align}
    \hat\eta_A^{[q]}&:=\hat\eta_{t_1}^{[q]}\cdots\hat\eta_{t_s}^{[q]},\qquad c_q(A):=\langle\phi_4|\hat\eta_A^{[q]}|\phi_4\rangle,
\end{align}
where the matrix element uses the local representation on block $q$. In particular, $c_q(\varnothing)=1$ and $c_q(A)=0$ for odd $|A|$.

Introduce $k$ auxiliary fermionic modes with creation operators $\hat b_t^\dagger$, attached to the physical space by an ordinary tensor product. Operators belonging to the two spaces therefore commute, while the canonical anti-commutation relations hold within each space. For an increasing subset $A$, write $\hat b_A^\dagger$ for the ordered product of its creation operators and $|A\rangle:=\hat b_A^\dagger|\varnothing\rangle$. The operator
\begin{align}
    \hat M_q&:=\sum_{A\subseteq[k]}c_q(A)\hat b_A^\dagger
\end{align}
records all ways to assign factors to block $q$. Multiplying these operators excludes repeated indices by nilpotence and supplies the fermionic reordering signs automatically. More explicitly, the two minus signs from exchanging physical factors on different blocks and auxiliary creation operators cancel. Together with block factorization, this gives the representation proved in Lem.~\ref{lem:mean-aux-factorization}. Since only even subsets contribute, the $\hat M_q$ commute, and we can apply them in block order.

Starting from the auxiliary vacuum $|w_0\rangle:=|\varnothing\rangle$, define
\begin{align}
    |w_q\rangle&:=\hat M_q|w_{q-1}\rangle,\qquad f_q(S):=\langle S|w_q\rangle.
    \label{eq:mean-dp-recurrence}
\end{align}
Thus $q$ counts the processed blocks, while $S$ labels an entry in the same $2^k$-entry table at every step. The desired coefficient is
\begin{align}
    \mu&=i^{k(k-1)/2}f_n([k]).
    \label{eq:mean-dp-output}
\end{align}
We do not construct $\hat M_q$ as a matrix or evaluate its coefficients separately. The local-vector update below applies it in $O(k2^k)$ time. All $n$ blocks therefore cost $O(nk2^k)$, including construction of the local operators. Reading the supplied dense matrix $O$ adds $O(M^2)=O(n^2)$. Using the shorter complementary string when necessary gives $O(n^2+nK2^K)$.
\end{proof}

\subsection{Table update rule}
\label{Sec:Mean subroutine}

Suppose that the complete preceding table $f_{q-1}$ is available. To apply $\hat M_q$, we keep the physical state of block $q$ until all factors have acted, and contract with its bra only at the end. In the local representation,
\begin{align}
    \hat M_q&=\langle\phi_4|\prod_{t=1}^k\bigl(\hat I+\hat\eta_t^{[q]}\otimes\hat b_t^\dagger\bigr)|\phi_4\rangle,
\end{align}
Here, the bra and ket contract only the physical block, with identity operators on the auxiliary space implicit. The factors are ordered increasingly from left to right, so acting on a vector processes $t=k,\ldots,1$.

\begin{lemma}
\label{lem:mean-table-update}
Given the preceding table and the local operators $\hat\eta_1^{[q]},\ldots,\hat\eta_k^{[q]}$, the next table can be computed in $O(k2^k)$ time.
\end{lemma}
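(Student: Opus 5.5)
We will realize $\hat M_q|w_{q-1}\rangle$ directly from the factorized form
\begin{align}
    \hat M_q=\langle\phi_4|\prod_{t=1}^k\bigl(\hat I+\hat\eta_t^{[q]}\otimes\hat b_t^\dagger\bigr)|\phi_4\rangle .
\end{align}
We first check that this product reproduces $\sum_A c_q(A)\hat b_A^\dagger$. Expanding the product selects a subset $A=\{t_1<\cdots<t_s\}$ of factors. Because physical and auxiliary operators commute by the tensor-product convention, the selected term equals $\hat\eta_{t_1}^{[q]}\cdots\hat\eta_{t_s}^{[q]}\otimes\hat b_{t_1}^\dagger\cdots\hat b_{t_s}^\dagger$, and its contraction is $c_q(A)\hat b_A^\dagger$. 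The product form is therefore exact.

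The algorithm keeps, for every subset $S\subseteq[k]$, a local physical vector $v_S$ in the block Hilbert space, of dimension $2^{m_q}=O(1)$. The joint state is
\begin{align}
    |\Omega\rangle=\sum_S v_S\otimes|S\rangle .
\end{align}
We initialize it as $\sum_S f_{q-1}(S)\,|\phi_4\rangle\otimes|S\rangle$, i.e.\ $v_S=f_{q-1}(S)|\phi_4\rangle$. We then apply the factors right to left, $t=k,\ldots,1$, and finally contract with $\langle\phi_4|$ on the physical side, giving $f_q(S)=\langle\phi_4|v_S\rangle$. This last step costs $O(2^k)$.

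The key computation is the action of one factor $\hat I+\hat\eta_t^{[q]}\otimes\hat b_t^\dagger$ on $|\Omega\rangle$. The operator $\hat b_t^\dagger$ annihilates $|S\rangle$ when $t\in S$. When $t\notin S$, it maps $|S\rangle$ to $(-1)^{|S\cap[t-1]|}|S\cup\{t\}\rangle$, the sign coming from anticommuting $\hat b_t^\dagger$ past the creation operators of $S$ with smaller index in the ordered product. Relabeling $S\cup\{t\}\to S$ gives the update
\begin{align}
    v_S\leftarrow v_S+(-1)^{|S\cap[t-1]|}\,\hat\eta_t^{[q]}\,v_{S\setminus\{t\}},\qquad t\in S,
\end{align}
with all $v_S$ for $t\notin S$ left unchanged. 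Since $t\notin S\setminus\{t\}$, every source vector is untouched during pass $t$, so the update can be done in place and is well defined.

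The cost is as follows. Each pass touches $2^{k-1}$ subsets. Each touched subset needs one multiplication of an $O(1)$-dimensional vector by the fixed local matrix of $\hat\eta_t^{[q]}$, plus a sign. The sign is computable in $O(1)$ amortized time, for instance by iterating subsets in Gray-code order or by maintaining popcounts of prefixes. The $k$ passes therefore give $O(k2^k)$, and initialization and contraction add $O(2^k)$.

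Building each local matrix $\hat\eta_t^{[q]}=\sum_{r}O_{a_t r}\hat\gamma_r$ uses the $2m_q=O(1)$ fixed block Majorana matrices and costs $O(1)$ per $t$, hence $O(k)$ per block. This stays within the bound. Odd-parity or general blocks change only constants.

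The main point to verify carefully is the sign bookkeeping. One must check that the ordering of auxiliary creation operators in $|S\rangle$ is consistent with the increasing-order definition of $\hat b_A^\dagger$ used in $\hat M_q$. One must also check that the physical factor order inside $c_q(A)$ matches the right-to-left processing, i.e.\ that the operator applied last sits leftmost. Both follow from writing out a single term of the product, as in the first step.
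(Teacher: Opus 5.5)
Your proposal is correct and follows the paper's proof essentially step for step. You initialize $v_S=f_{q-1}(S)|\phi_4\rangle$, apply $\hat I+\hat\eta_t^{[q]}\otimes\hat b_t^\dagger$ for $t=k,\ldots,1$ through the in-place update with sign $(-1)^{|S\cap[t-1]|}$ coming from $\hat b_t^\dagger|S\setminus\{t\}\rangle$, and contract with $\langle\phi_4|$, for $O(k2^k)$ constant-size matrix--vector products; your explicit check of the factorized form and your note on computing the signs in amortized $O(1)$ time are details the paper leaves implicit.
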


\begin{proof}[Proof sketch]
Use one local vector per subset and perform the following steps:
\begin{enumerate}
    \item Initialize
    \begin{align}
        v_S&:=f_{q-1}(S)|\phi_4\rangle,\qquad S\subseteq[k].
        \label{eq:mean-subroutine-initialization}
    \end{align}
    \item For $t=k,\ldots,1$, update all subsets containing $t$ by
    \begin{align}
        v_S&\leftarrow v_S+(-1)^{|S\cap[t-1]|}\hat\eta_t^{[q]}v_{S\setminus\{t\}},\qquad t\in S,
        \label{eq:mean-subroutine-update}
    \end{align}
    leaving the other vectors unchanged, with $[0]=\varnothing$.
    \item Contract to obtain the complete new table:
    \begin{align}
        f_q(S)&=\langle\phi_4|v_S\rangle.
        \label{eq:mean-subroutine-output}
    \end{align}
\end{enumerate}
The sign in step 2 is precisely the sign of $\hat b_t^\dagger|S\setminus\{t\}\rangle$. The source vector has no $t$ in its index and remains unchanged during that pass, so the update is performed in place. Lem.~\ref{lem:mean-one-block-update} proves that these steps apply $\hat M_q$ exactly. Each vector has dimension $16$, and each update uses one local matrix--vector multiplication. The $k$ passes cost $O(k2^k)$; initialization and contraction cost $O(2^k)$.
\end{proof}

\section{Approximate expectation-value problem}
\label{Sec:Approximate mean value}

\begin{figure*}
    \centering
    \resizebox{\linewidth}{!}{%
    \begin{tikzpicture}[
        flowbox/.style={draw=gray!55,dashed,align=center,text width=3.35cm,minimum height=1.4cm,inner sep=5pt,font=\small},
        flowtitle/.style={font=\footnotesize\bfseries,align=center,text width=3.75cm},
        flowarrow/.style={-{Stealth[length=2mm]},draw=black}
    ]
        \node[flowbox] (input) at (0,0) {$|\Psi\rangle=\sum_Jc_J|J\rangle$\\[5pt]$\Pr[J]=c_J^2$};
        \node[flowbox] (filter) at (4.3,0) {$|V_J\rangle=\hat F\hat O|J\rangle$\\[5pt]$\hat F=\bigotimes_j\hat F_j$};
        \node[flowbox] (phase) at (8.6,0) {$\Tr[K(\varphi_j)^T(\Gamma_J)_{jj}]=0$};
        \node[flowbox] (sample) at (12.9,0) {$Y=2^n\langle G_{\bm s}|V_J\rangle/c_J$\\[5pt]$\bm s\sim\operatorname{Unif}\{\pm1\}^n$};
        \node[flowtitle,above=3mm of input] {1. Sample input string};
        \node[flowtitle,above=3mm of filter] {2. Apply Gaussian operation};
        \node[flowtitle,above=3mm of phase] {3. Choose phases};
        \node[flowtitle,above=3mm of sample] {4. Compute the sample};
        \draw[flowarrow] (input.east)--(filter.west);
        \draw[flowarrow] (filter.east)--(phase.west);
        \draw[flowarrow] (phase.east)--(sample.west);
    \end{tikzpicture}%
    }
    \caption{Schematic illustration of sample generation for our approximate expectation-value algorithm. We first sample an occupation string $J$ from the block-product input and construct the Gaussian vector $|V_J\rangle$ by applying the Gaussian circuit and a generally non-unitary Gaussian operation. We then choose the phases based on $|V_J\rangle$ and sample a Gaussian state from the corresponding decomposition. Evaluating and rescaling their overlap yields an unbiased estimator $Y$ of the target amplitude $\langle\Phi|\hat O|\Psi\rangle$, with second moment bounded by one. Here $\Gamma_J$ is the covariance matrix of the normalized $|V_J\rangle$, and $K(\varphi)$ is the covariance matrix of $|g_+(\varphi)\rangle$.}
    \label{fig:Approximate expectation value}
\end{figure*}

We extend the additive-error estimation results of Ref.~\cite{oh2026classical} from passive to active Gaussian evolution for products of four-mode two-particle states with arbitrary coefficients. The quantities of interest include multipoint number correlators of arbitrary weight, individual output probabilities, marginal probabilities for specified occupations on selected modes, and binned probabilities for integer-weighted occupation sums with polynomially bounded range. These quantities are expectation values of bounded observables, and estimates obtained from $N$ experimental samples have statistical uncertainty of order $N^{-1/2}$.

The observable reductions used in the passive setting remain valid for active evolution: number correlators and occupation projectors reduce to averages of Gaussian transition amplitudes through parity operators, while binned probabilities are obtained from Fourier components generated by occupation-dependent phases. The main task is therefore to estimate these amplitudes with a controlled second moment. A direct extension of the passive overlap estimator remains unbiased, but has a variance that can grow exponentially under active evolution. Our construction samples an occupation string from one block-product state and transfers the coefficients of the other into a positive non-unitary Gaussian operator acting on the resulting Gaussian ket. The remaining bra is a product of equal-weight two-particle blocks, whose Gaussian decomposition is chosen using the ket's covariance matrix. This last choice controls each overlap contribution. We prove a Gaussian compression bound which controls the combined second moment. 

\begin{samepage}
\begin{theorem}\label{thm:active-flo-inner-product-estimation}\label{thm:active-flo-expectation-value-estimation}
Let $\hat G$ be a fermionic Gaussian unitary on $4n$ modes, given by a polynomial-size Gaussian circuit including its overall phase, and let
\begin{align}
    |\Psi\rangle &= \bigotimes_{j=1}^n |\psi_j\rangle,
    & |\Phi\rangle &= \bigotimes_{j=1}^n |\phi_j\rangle,
\end{align}
where all blocks are normalized four-mode even-parity states. Given $0<\epsilon,\delta<1$, a randomized classical algorithm returns an estimate $\widetilde a$ of $a=\langle\Phi|\hat G|\Psi\rangle$ satisfying
\begin{align}
    \Pr\!\left[|\widetilde a-a|\leq\epsilon\right] &\geq 1-\delta,
\end{align}
in $O(n^3\epsilon^{-2}\log(2/\delta))$ arithmetic operations after a one-time polynomial-time preprocessing of the Gaussian circuit that retains its overall phase. The same guarantee holds for estimating $\langle x|\hat G|\Psi\rangle$ for any occupation string $x\in\{0,1\}^{4n}$.
\end{theorem}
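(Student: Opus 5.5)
Our plan is to build an unbiased estimator $Y$ of $a=\langle\Phi|\hat G|\Psi\rangle$ with $\mathbb E|Y|^2\le 1$, compute each sample in $O(n^3)$ arithmetic, and finish with a median-of-means (or Hoeffding after bounding) argument, applied separately to real and imaginary parts, to obtain the $O(n^3\epsilon^{-2}\log(2/\delta))$ cost. We follow the four steps of Fig.~\ref{fig:Approximate expectation value}.

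\textbf{Normal forms and sampling.} By the normal form of Eq.~\eqref{eq:paired-block}, local Gaussian unitaries are absorbed into $\hat G$ during preprocessing, so $|\psi_j\rangle=\alpha_j|1100\rangle+\beta_j|0011\rangle$ and likewise $|\phi_j\rangle=\alpha'_j|1100\rangle+\beta'_j|0011\rangle$. After absorbing phases we may take all coefficients real and nonnegative. Write $|\Psi\rangle=\sum_J c_J|J\rangle$ with $J\in\{0,1\}^n$ and $c_J=\prod_j(\alpha_j\text{ or }\beta_j)$, and sample $J$ with probability $c_J^2$.

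\textbf{Filtering the bra.} We transfer the bra coefficients into a positive diagonal filter. On each block let $\hat F_j=\sqrt2\,(\alpha'_j\hat P_{1100}+\beta'_j\hat P_{0011}+\cdots)$, a positive Gaussian operator that is diagonal in occupation number, built from number projectors and products of $e^{\theta\hat n}$. We arrange that
\[
\langle\Phi|=\langle\phi_4|^{\otimes n}\hat F,\qquad \hat F=\bigotimes_j\hat F_j,\qquad \|\hat F_j\|\le\sqrt2\max(\alpha'_j,\beta'_j)\le\sqrt2.
\]
Then $a=\sum_J c_J\langle\phi_4^{\otimes n}|V_J\rangle$, where $|V_J\rangle=\hat F\hat G|J\rangle$ is an (unnormalized) Gaussian vector. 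Its covariance matrix and norm are available in $O(n^3)$ from the preprocessed circuit.

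\textbf{Phased decomposition of the bra.} Decompose $|\phi_4\rangle=\tfrac12\sum_{s=\pm}|g_s(\varphi)\rangle$, where $|g_\pm(\varphi)\rangle=(|1100\rangle\pm e^{i\varphi}\ldots)$ are normalized Gaussian states (one-parameter families of paired states). We choose $\varphi_j$ so that $\operatorname{Tr}[K(\varphi_j)^T(\Gamma_J)_{jj}]=0$, which is always solvable because the trace is a sinusoid in $\varphi_j$ with zero mean. We then set
\[
Y=2^n\langle G_{\bm s}|V_J\rangle/c_J,\qquad \bm s\sim\operatorname{Unif}\{\pm1\}^n .
\]
Unbiasedness is immediate: averaging over $\bm s$ reproduces $\langle\phi_4^{\otimes n}|V_J\rangle$, and averaging over $J$ with weights $c_J^2$ gives $a$. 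Each sample requires one Gaussian overlap, a Pfaffian of size $O(n)$, computed in $O(n^3)$ time with the phase tracked by the preprocessing.

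\textbf{Second-moment bound (main obstacle).} We must show
\[
\mathbb E|Y|^2=\sum_J 4^n 2^{-n}\sum_{\bm s}|\langle G_{\bm s}|V_J\rangle|^2\le 1 .
\]
The required ingredient is a Gaussian compression bound. For a normalized Gaussian $|v\rangle$ whose block covariances are orthogonal to $K(\varphi_j)$, we aim to show that
\[
2^{-n}\sum_{\bm s}|\langle G_{\bm s}|v\rangle|^2\le 2^{-n}\prod_j(\text{local factor}).
\]
The route is to use that $|\langle G|v\rangle|^2=\sqrt{\det((K+\Gamma)/2)}$-type formulas factor, and that the zero-trace condition kills the cross term, so the average over $s_j$ factorizes up to a determinant inequality (Fischer/Hadamard for the block-diagonal part). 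Combined with $\sum_J\|\hat F\hat G|J\rangle\|^2=\operatorname{Tr}(\hat F^2)$-type identities, this yields $\prod_j(\alpha_j'^2+\beta_j'^2)=1$. I expect this determinant/factorization inequality to be the delicate step. My first attempt would be to reduce it to a single-block statement via the Fischer inequality on the Pfaffian-overlap formula.

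The final claim for $\langle x|\hat G|\Psi\rangle$ follows by replacing the bra with a Fock state, which needs no decomposition, so $|Y|^2$ reduces directly to a Born-probability sum bounded by one.
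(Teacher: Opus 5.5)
Your outline follows the paper's construction: canonical forms, importance sampling $J\sim c_J^2$, a diagonal Gaussian filter carrying the bra coefficients, the phased decomposition with $\operatorname{Tr}[K(\varphi_j)^T(\Gamma_J)_{jj}]=0$, and median of means. The per-string step is fine and easier than you expect. With $H_{\bm s}=(\Gamma_J+K_{\bm s})^T(\Gamma_J+K_{\bm s})$, the phase condition forces $\operatorname{Tr}H_{\bm s}=16n$. Each $8\times 8$ diagonal block has trace $16$, so your Fischer route also works. AM--GM then gives $|\langle G_{\bm s}|V_J\rangle|^2\le 2^{-2n}\|V_J\|^2$ for every $\bm s$.

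\textbf{The gap.} It is in the next step, which is where the real difficulty lies. You still need $\sum_J\|\hat F\hat G|J\rangle\|^2\le 1$, and this is not a ``$\operatorname{Tr}(\hat F^2)$-type identity''. The sum equals $\operatorname{Tr}(\hat F^2\hat G\hat P_{\mathcal C}\hat G^\dagger)$, where $\hat P_{\mathcal C}$ is the rank-$2^n$ projector onto $\operatorname{span}\{|J\rangle\}$. For the paper's filter, $\operatorname{Tr}\hat F^2=\prod_j(\sqrt{u_j}+\sqrt{v_j})^4$ is generally exponentially large. The operator-norm bound $2^n\|\hat F\|_{\rm op}^2$ is also exponential for generic coefficients. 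An active $\hat G$ correlates all blocks, so no blockwise determinant inequality controls this trace. If a general parity-preserving unitary replaced $\hat G$, the trace could be exponentially large, so Gaussianity of $\hat G$ must enter essentially.

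\textbf{The missing ingredient.} The paper's Gaussian compression inequality states $\operatorname{Tr}(\hat P_E\hat X)\le\|\hat X\|_4$ for every positive even Gaussian $\hat X$. Here $\hat P_E$ projects each block onto $\{|0000\rangle,|1111\rangle\}$, and a local particle--hole Gaussian unitary $\hat C$ maps it to $\hat P_{\mathcal C}$.
\begin{itemize}
\item The one-block case uses the Gaussian canonical form, the identity $|\langle x|\hat V|1111\rangle|^2=|\langle\bar x|\hat V|0000\rangle|^2$, and a four-factor H\"older inequality.
\item The induction over correlated blocks needs $\|\hat X_0\|_4+\|\hat X_4\|_4\le\|\hat X\|_4$ for the vacuum- and full-conditioned pieces. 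This follows from Petz monotonicity of $\operatorname{Tr}(\hat\rho^{1/4}\hat\sigma^{3/4})$ under partial trace.
\end{itemize}
Applied to $\hat X=\hat C\hat G^\dagger\hat F^2\hat G\hat C^\dagger$, this bounds the sum by $\|\hat F^2\|_4$.

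\textbf{The filter must be fixed.} The exact choice of filter matters here. The symmetric choice $\operatorname{diag}(v_j^{1/4},u_j^{1/4})^{\otimes2}\otimes\operatorname{diag}(u_j^{1/4},v_j^{1/4})^{\otimes2}$ gives $\|\hat F_j^2\|_4=u_j^2+v_j^2=1$. H\"older shows $\|\hat F_j^2\|_4\ge u_j^2+v_j^2$ for any single-mode factorization with the same $|1100\rangle,|0011\rangle$ entries, with equality only for balanced factors. So the ``$\cdots$'' in your $\hat F_j$ cannot be left open.

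\textbf{Two smaller points.}
\begin{itemize}
\item For normalized $|g_\pm\rangle$, one has $(|1100\rangle+|0011\rangle)/\sqrt2=(|g_+\rangle+|g_-\rangle)/\sqrt2$, not $\frac12\sum_s|g_s\rangle$. With your extra $\sqrt2$ per block, the estimator $2^n\langle G_{\bm s}|V_J\rangle/c_J$ therefore has mean $2^{n/2}a$; drop the $\sqrt2$ from $\hat F_j$.
\item Your direct treatment of $\langle x|\hat G|\Psi\rangle$ is correct and simpler than the paper's reduction: evaluate $\langle x|\hat G|J\rangle/c_J$ exactly, with second moment $\sum_J|\langle x|\hat G|J\rangle|^2\le 1$.
\end{itemize}
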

\end{samepage}

We describe the construction in the same conventions as App.~\ref{app:approx}. Write $|a\rangle:=|1100\rangle$ and $|b\rangle:=|0011\rangle$. Local Gaussian transformations bring the two sets of input blocks to the canonical forms
\begin{align}
    |\phi_j\rangle&=u_j|a\rangle+v_j|b\rangle,\qquad |\psi_j\rangle=c_{j,a}|a\rangle+c_{j,b}|b\rangle,
\end{align}
with nonnegative normalized coefficients~\cite{Oszmaniec2014}. Absorb these transformations into the circuit and denote the remaining Gaussian unitary by $\hat O$. The target amplitude retains the form $a=\langle\Phi|\hat O|\Psi\rangle$ in this canonical representation.

The bra state is represented using the positive Gaussian operation $\hat F_j$ of Eq.~\eqref{eq:app-bounded-filter}. They satisfy $\hat F_j|a\rangle=u_j|a\rangle$ and $\hat F_j|b\rangle=v_j|b\rangle$, so
\begin{align}
    \langle\phi_j|&=\sqrt2\langle\psi_{\rm eq}|\hat F_j,\qquad |\psi_{\rm eq}\rangle:=\frac{|a\rangle+|b\rangle}{\sqrt2}.
\end{align}
These general Gaussian operations also cover zero coefficients; they need not be invertible. Expanding the ket as $|\Psi\rangle=\sum_{J\in\{a,b\}^n}c_J|J\rangle$ gives the product distribution $\Pr[J]=c_J^2$. We sample $J$ from this distribution and form the unnormalized Gaussian vector
\begin{align}
    |V_J\rangle&:=\hat F\hat O|J\rangle,\qquad \hat F:=\bigotimes_{j=1}^n\hat F_j.
\end{align}
A zero vector contributes zero and requires no further calculation.

For a nonzero $|V_J\rangle$, we use the Gaussian states
\begin{align}
    |g_s(\varphi)\rangle&:=\frac{|\psi_{\rm eq}\rangle+s|e_\varphi\rangle}{\sqrt2},\qquad s\in\{\pm1\},\\
    |e_\varphi\rangle&:=\frac{e^{-i\varphi/2}|0000\rangle+e^{i\varphi/2}|1111\rangle}{\sqrt2}.
\end{align}
The average over $s$ is $|\psi_{\rm eq}\rangle/\sqrt2$, independently of $\varphi$. This freedom lets us choose each phase from the covariance matrix of the normalized $|V_J\rangle$, before drawing any signs. Lem.~\ref{lem:branch-bound} shows that these choices bound the overlap for every sign string simultaneously. We then draw independent uniform signs $s_j$ and evaluate
\begin{align}
    Y&:=\frac{2^n}{c_J}\langle G_{\bm s}|V_J\rangle,\qquad |G_{\bm s}\rangle:=\bigotimes_{j=1}^n|g_{s_j}(\varphi_j)\rangle.
\end{align}
Only strings with $c_J>0$ can be sampled. Gaussian overlap evaluation, including its phase, takes polynomial time.

Averaging over the signs and the occupation string gives the target amplitude. To control the sampling cost, the factors $c_J^2$ cancel the importance weights in the second moment, leaving a trace over the canonical occupation subspace. The Gaussian compression bound in App.~\ref{app:approx} bounds this trace by the fourth Schatten norm of $\hat F^2$, which equals one. Consequently,
\begin{align}
    \mathbb E[Y]&=a,\qquad \mathbb E[|Y|^2]\leq1.
    \label{eq:active-amplitude-second-moment}
\end{align}
Applying median-of-means estimation to the real and imaginary parts yields Thm.~\ref{thm:active-flo-expectation-value-estimation}. The occupation-string draw uses the same importance-sampling principle as Thm.~4 of Ref.~\cite{oh2026classical}; the phase choice and Gaussian compression supply the bound for active evolution. After the circuit is compiled once with its phase retained, each estimator sample costs $O(n^3)$ arithmetic operations; App.~\ref{app:approx} gives the implementation and cost analysis.

When one input is Gaussian, the construction omits the occupation-string draw. The resulting unbiased estimator $Z_x$ of $\langle x|\hat G|\Psi\rangle$ satisfies the stronger bound
\begin{align}
    \mathbb E[|Z_x|^2]&\leq\prod_{j=1}^n\lambda_j\leq1,\qquad \lambda_j:=\max\{c_{j,a}^2,c_{j,b}^2\},
\end{align}
as shown in Eq.~\eqref{eq:app-gaussian-block-bound}. Here the non-unitary Gaussian operation is associated with the block-product state $|\Psi\rangle$; equivalently, apply the construction to the conjugate amplitude. For maximally non-Gaussian blocks, each $\lambda_j=1/2$.

The transition-amplitude estimator gives additive-error estimates of number correlators through the same reduction used for passive evolution in Ref.~\cite{oh2026classical}.

\begin{samepage}
\begin{theorem}[Number correlators]\label{thm:active-number-correlators}
Let $|\Psi\rangle$ be a product of normalized four-mode even-parity states and $\hat G$ a fermionic Gaussian unitary on $4n$ modes. For any $S\subseteq[4n]$ and $0<\epsilon,\delta<1$, there is a randomized classical algorithm that estimates
\begin{align}
    \mu_S &= \langle\Psi|\hat G^\dagger\Bigl(\prod_{j\in S}\hat n_j\Bigr)\hat G|\Psi\rangle
\end{align}
by a real number $\widetilde\mu_S$ satisfying
\begin{align}
    \Pr\!\left[|\widetilde\mu_S-\mu_S|\leq\epsilon\right] &\geq 1-\delta,
\end{align}
in $O(n^3\epsilon^{-2}\log(2/\delta))$ arithmetic operations after a one-time polynomial-time preprocessing of the Gaussian circuit, with no restriction on $|S|$.
\end{theorem}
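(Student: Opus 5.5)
The plan is to reduce the number correlator to a signed average of Gaussian transition amplitudes of the type covered by Thm.~\ref{thm:active-flo-expectation-value-estimation}. The reduction is the same one used for passive evolution in Ref.~\cite{oh2026classical}. First, write each number operator through its local parity, $\hat n_j=(\hat I-\hat P_j)/2$ with $\hat P_j=-i\hat\gamma_{2j-1}\hat\gamma_{2j}$. This gives
\begin{align}
    \prod_{j\in S}\hat n_j=2^{-|S|}\sum_{T\subseteq S}(-1)^{|T|}\hat P_T,
\end{align}
where $\hat P_T:=\prod_{j\in T}\hat P_j$. The correlator therefore becomes
\begin{align}
    \mu_S=\mathbb E_{T}\bigl[(-1)^{|T|}\langle\Psi|\hat G^\dagger\hat P_T\hat G|\Psi\rangle\bigr],
\end{align}
with $T$ uniform over subsets of $S$.

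Each $\hat P_T$ is a Gaussian unitary: it is a product of commuting single-mode parities, and each parity is the exponential of a quadratic Majorana term. It follows that $\hat G^\dagger\hat P_T\hat G$ is a Gaussian unitary on $4n$ modes. Its Majorana matrix is $O^TD_TO$, where $D_T$ is diagonal with entries $-1$ on the Majoranas of modes in $T$. Its overall phase is fixed by the identity $\hat G^\dagger\hat P_T\hat G\,\hat G^\dagger\hat P_T\hat G=\hat I$, combined with the known phase of $\hat G$ retained in the preprocessing. Each term is then an amplitude $\langle\Psi|\hat G_T|\Psi\rangle$ with bra and ket both equal to $|\Psi\rangle$. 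This is exactly the setting of Thm.~\ref{thm:active-flo-expectation-value-estimation} with $|\Phi\rangle=|\Psi\rangle$.

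The estimator proceeds as follows. Draw a uniformly random $T\subseteq S$. Then draw one sample $Y$ of the amplitude estimator underlying Thm.~\ref{thm:active-flo-expectation-value-estimation} for the circuit $\hat G_T$, and output $W=(-1)^{|T|}\operatorname{Re}Y$. Taking the real part is justified because $\mu_S$ is real. Unbiasedness comes from Eq.~\eqref{eq:active-amplitude-second-moment} together with the expansion above. The same equation gives the conditional bound $\mathbb E[|Y|^2\mid T]\le1$, so $\mathbb E[W^2]\le1$ uniformly in $T$. This bound holds no matter how large $|S|$ is, which is why there is no restriction on the weight of the correlator. Median-of-means with $O(\epsilon^{-2}\log(2/\delta))$ samples then gives the stated guarantee.

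The main obstacle is cost accounting. A naive implementation would recompile a new circuit $\hat G_T$ for every sample, which would exceed $O(n^3)$ per sample. I would handle this by compiling $\hat G$ once and keeping $O$ together with the canonicalizing local transformations. The factor $\hat P_T$ is then applied as a sign flip on rows of the Majorana matrix when forming $\hat G_T|J\rangle$. Because $\hat P_T$ is diagonal in the occupation basis, it can equivalently be inserted just before the bra-side filter.

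With this arrangement, the covariance matrix of $|V_J\rangle$ and the subsequent Gaussian overlap cost $O(n^3)$, as in the amplitude theorem. The phase bookkeeping adds only a sign per sample, namely a determinant-free phase of $\hat P_T$ relative to the compiled representation. Checking this phase carefully is the delicate step. My first approach would be to verify it on the vacuum and use the fact that $\hat P_T$ preserves parity.
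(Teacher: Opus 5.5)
Your proposal is correct and follows essentially the same route as the paper: the parity expansion $\prod_{j\in S}\hat n_j=2^{-|S|}\sum_{T\subseteq S}(-1)^{|T|}\hat P_T$, a uniformly random $T$, one draw of the amplitude estimator of Thm.~\ref{thm:active-flo-expectation-value-estimation} for $\hat G^\dagger\hat P_T\hat G$ with $|\Phi\rangle=|\Psi\rangle$, the signed real part with second moment at most one, median-of-means, and reuse of the compiled $\hat G$ and $\hat G^\dagger$ so that each draw costs $O(n^3)$. Two small corrections. First, your aside that $\hat P_T$ can ``equivalently be inserted just before the bra-side filter'' is false: being diagonal in the occupation basis lets $\hat P_T$ commute with $\hat F$ but not with $\hat G^\dagger$, so it must stay between $\hat G^\dagger$ and $\hat G$ (apply the compiled $\hat G$, then $\hat P_T$, then the compiled $\hat G^\dagger$, as the paper does). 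Second, the phase of $\hat G^\dagger\hat P_T\hat G$ does not depend on the phase of $\hat G$, which cancels, and the involution identity alone fixes it only up to sign; it is pinned down by $\hat P_T=\prod_{j\in T}(\hat I-2\hat n_j)$ itself, e.g.\ via $\hat P_T|0\rangle=|0\rangle$ as you suggest.
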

\end{samepage}

\begin{proof}
For $T\subseteq S$, define the Gaussian parity operator $\hat P_T:=\prod_{j\in T}(\hat I-2\hat n_j)$. The passive reduction of Ref.~\cite{oh2026classical} gives
\begin{align}
    \mu_S &= 2^{-|S|}\sum_{T\subseteq S}(-1)^{|T|}\langle\Psi|\hat G^\dagger\hat P_T\hat G|\Psi\rangle.
\end{align}
Sample $T\subseteq S$ uniformly, use the transition-amplitude estimator above for the Gaussian unitary $\hat G^\dagger\hat P_T\hat G$, and multiply the real part of its estimator by $(-1)^{|T|}$. This gives an unbiased real estimator of $\mu_S$ with second moment at most one, so median-of-means estimation using $O(\epsilon^{-2}\log(2/\delta))$ independent samples proves the claim. The compiled representations of $\hat G$ and $\hat G^\dagger$ are reused for every draw of $T$. Since $\hat P_T$ is a product of single-mode parity operators, applying $\hat G^\dagger\hat P_T\hat G$ to a Gaussian state costs $O(n^3)$ per draw, as in App.~\ref{app:approx}.
\end{proof}

This includes the four-mode two-particle inputs with arbitrary coefficients in Eq.~\eqref{eq:paired-block}. Replacing any selected $\hat n_j$ by $\hat I-\hat n_j$ only changes the signs in the parity expansion. The same error and runtime guarantees therefore apply to individual output probabilities and marginal probabilities for specified occupations on a subset of modes.

For integer-weighted occupation sums with polynomially bounded range, the Fourier approach of Refs.~\cite{oh2026classical,oh2024quantum} expresses binned probabilities in terms of Gaussian transition amplitudes. The same reduction applies to active evolution, so the estimator above reconstructs these probabilities with polynomial overhead.

\section{Discussion}\label{Sec:Discussion}

We have developed classical algorithms for simulating free-fermion circuits with non-Gaussian block-product inputs. For $n$ copies of a four-mode magic state, our algorithms generate exact samples under both passive and active evolution in $O(2^n)$ time. We have also established polynomial-time exact computation of logarithmic-weight Majorana expectations for arbitrary constant-size even-parity blocks, and polynomial-time additive-error estimation of overlaps and number correlators for general even-parity four-mode blocks under active evolution. The latter result also covers individual output probabilities but does not, by itself, yield an efficient sampler for the full output distribution. In particular, because inverse-polynomial additive accuracy does not provide relative-error control over exponentially small probabilities. These results provide classical benchmarks for fermionic quantum experiments and clarify how the available simulation guarantees depend on the input structure, the observable, and the required accuracy.

Our expectation-value algorithms support classical training and evaluation of fermionic quantum machine-learning models. Fermionic Born Machines use Pauli-$Z$ correlators to evaluate their training losses~\cite{bako2025fermionic}. Our algorithms may also be useful for studying extensions of the fermionic learning architectures in Ref.~\cite{kerenidis2026scalable} to arbitrary pure states of definite parity on constant-size input blocks and higher-order occupation-correlator readouts. Each such correlator can be evaluated by our exact algorithm in polynomial time when its weight is logarithmic in the number of blocks.

Several open questions remain. First, what is the optimal classical complexity of the tasks considered here? In particular, for the exact expectation-value problem, our algorithm runs in polynomial time when the effective weight $K=\min\{k,2M-k\}$ is $O(\log n)$. Establishing hardness results for larger effective weights would help clarify the limits of efficient exact computation for block-product inputs.

Another question is whether the states we have developed our algorithms for are the most general possible. In the case of passive bosonic linear optical circuits, polynomial-time additive-error estimation extends to larger constant-size blocks. It is tempting to conjecture that the additive error algorithm we have developed here can also be generalised to larger blocks. Additionally it is possible that the sampling algorithms we have developed can apply more broadly, for example to arbitrary tensor products of $4$-mode parity eigenstates.

\begin{acknowledgements}
Generative AI tools were used to assist with exploratory calculations, proofs, literature searches, and language editing. All results, proofs, and references were independently verified by the authors, who take full responsibility for the content of the manuscript. 
J.H. and C.O. were supported by the National Research Foundation of Korea Grants (No. RS-2024-00431768 and No. RS-2025-00515456) funded by the Korean government (Ministry of Science and ICT (MSIT)) and the Institute of Information \& Communications Technology Planning \& Evaluation (IITP) Grants funded by the Korean government (MSIT) (No. RS-2024-00437284, No. IITP-2025-RS-2025-02283189 and No. IITP-2025-RS-2025-02263264) by Global Partnership Program of Leading Universities in Quantum Science and Technology (RS-2025-08542968) through the National Research Foundation of Korea~(NRF) funded by the Korean government (Ministry of Science and ICT(MSIT)). MO and ORS acknowledge the support from National Science Center, Poland within the QuantERA III Programme (No 2023/05/Y/ST2/00140 acronym Tuquan). This work was also supported in part by the Horizon Europe project Quantum Excellence Centre for Quantum-Enhanced Applications (QEC4QEA).
    The C4QEC project is carried out within the IRAP of the Foundation for Polish Science co-financed by the European Union.
\end{acknowledgements}

\clearpage
\appendix
\onecolumngrid

\begingroup
\renewcommand{\tocname}{Contents of the appendices}
\makeatletter
\let\l@subsubsection\@gobbletwo
\makeatother
\let\AppendixTOCSavedContentsline\contentsline
\AppendixTOCHide
\tableofcontents
\endgroup
\addtocontents{toc}{\protect\AppendixTOCShow}

\section{Bosonic Clifford--Clifford sampling from the loss instrument}
\label{app:bosonic-cc}

The instrument-and-recovery construction also gives a direct derivation of the original bosonic Clifford--Clifford algorithm~\cite{clifford2018classical, clifford2024faster} (and its extensions to input with collisions~\cite{Brod2020classicalsimulation}). The essential change from passive fermions is that an occupied output mode can contain several particles, so the classical deletion kernel is weighted by occupation number.

Let $\hat a_j$ be bosonic annihilation operators, with $[\hat a_i,\hat a_j^\dagger]=\delta_{ij}\hat I$, and let $\hat\Gamma_k(U)$ denote the restriction of a passive interferometer to the $k$-boson sector. The parent and child spaces are
\begin{align}
 \mathcal H_{\mathrm p}=\operatorname{Sym}^{k}(\mathbb C^M),
 \qquad \mathcal H_{\mathrm c} =\operatorname{Sym}^{k-1}(\mathbb C^M).
 \label{eq:boson-spaces}
\end{align}
On these spaces, define the normalized one-particle loss channel and its mode-resolved instrument by
\begin{align}
 \mathcal D_k^{\mathrm B}(\hat\rho)=\frac1k\sum_{a=1}^{M}\hat a_a\hat\rho\hat a_a^\dagger,
 \qquad \mathcal I_a^{\mathrm B}(\hat\rho)&=\frac1k\hat a_a\hat\rho\hat a_a^\dagger.
 \label{eq:boson-channel}
\end{align}
The channel is trace preserving because $\sum_a\hat a_a^\dagger\hat a_a=k\hat I$ on the parent space. Unitary mixing of the annihilation operators yields
\begin{align}
    \mathcal D_k^{\mathrm B}\!\left(\hat\Gamma_k(U)\hat\rho\hat\Gamma_k(U)^\dagger\right)  =\hat\Gamma_{k-1}(U)\mathcal D_k^{\mathrm B}(\hat\rho) \hat\Gamma_{k-1}(U)^\dagger.
 \label{eq:boson-covariance}
\end{align}
Write $|\mathbf t\rangle$ for a normalized occupation vector with $\sum_j t_j=k$, and let $\mathbf e_j$ be the $j$th unit vector. For a child occupation $\mathbf s$, the adjoint channel satisfies
\begin{align}
 (\mathcal D_k^{\mathrm B})^\dagger(|\mathbf s\rangle\langle\mathbf s|) =\frac1k\sum_{j=1}^{M}(s_j+1)  |\mathbf s+\mathbf e_j\rangle\langle\mathbf s+\mathbf e_j|.
 \label{eq:boson-adjoint}
\end{align}
Thus, its classical action removes one uniformly chosen particle, rather than choosing an occupied mode uniformly:
\begin{align}
 K_k^{\mathrm B}(\mathbf s\mid\mathbf t)
 =\frac1k\sum_{j:t_j>0}t_j\,
  \mathbf 1_{\{\mathbf s=\mathbf t-\mathbf e_j\}}.
 \label{eq:boson-kernel}
\end{align}
For a parent output law $p$, Thm.~\ref{thm:instrument-recovery} therefore gives
\begin{align}
 q(\mathbf s)=\frac1k\sum_{j=1}^{M}(s_j+1)p(\mathbf s+\mathbf e_j),
 \qquad
 R_p(\mathbf s+\mathbf e_j\mid\mathbf s)
 =\frac{(s_j+1)p(\mathbf s+\mathbf e_j)}{kq(\mathbf s)}.
 \label{eq:boson-recovery}
\end{align}
We emphasize that the factors $s_j+1$ are required even for a collision-free input, since collisions are possible in the output distribution.

To recover the original algorithm, let $S\subseteq[M]$ contain $k$ distinct occupied input modes and write $|1_S\rangle$ for the corresponding input state. The loss instrument chooses $a\in S$ with probability $1/k$ and leaves $|1_{S\setminus\{a\}}\rangle$. Hence a recursive call first chooses a uniformly random deleted input mode, samples the smaller interferometer problem, and then applies Eq.~\eqref{eq:boson-recovery}. Averaging over the deleted mode produces precisely the law $q$ in Eq.~\eqref{eq:boson-recovery}; averaging the return step then gives $p$. 

In occupation notation, the parent probabilities are
\begin{align}
 p_S(\mathbf t)=\left|\langle\mathbf t|\hat\Gamma_k(U)|1_S\rangle\right|^2 =\frac{|\per U[\mathbf t,S]|^2}{\prod_{i=1}^{M}t_i!},
 \label{eq:boson-permanent-law}
\end{align}
where $U[\mathbf t,S]$ uses the columns in $S$ and repeats row $i$ exactly $t_i$ times. For a fixed child occupation $\mathbf s$, the recovery weight becomes
\begin{align}
 (s_j+1)p_S(\mathbf s+\mathbf e_j)
 =\frac{|\per U[\mathbf s+\mathbf e_j,S]|^2}
 {\prod_{i=1}^{M}s_i!}.
 \label{eq:boson-factorial-cancellation}
\end{align}
The common factorial cancels when normalizing over $j$. Thus the bosonic occupation factors reproduce, rather than modify, the permanent-square weights of Clifford and Clifford.

More explicitly, for $N$ distinct occupied input modes choose a uniformly random ordering $(\alpha_1,\ldots,\alpha_N)$ and set $S_k=\{\alpha_1,\ldots,\alpha_k\}$. This is the reverse of the deletion order. Starting with the empty output list, append $r_k=j$ with probability proportional to
\begin{align}
 \left|\per U[(r_1,\ldots,r_{k-1},j),S_k]\right|^2,
 \qquad j\in[M].
 \label{eq:boson-cc-weights}
\end{align}
At $k=1$ these weights are $|U_{j\alpha_1}|^2$. Converting the final list into occupations gives the desired $N$-boson sample. The correctness statement averages over the auxiliary input ordering; it need not hold for each fixed ordering. For a fixed parent set $S_k$, averaging over its uniformly chosen deleted element and the preceding recursive choices is exactly the instrument marginal used above.

For completeness, let $B=U[(r_1,\ldots,r_{k-1}),S_k]$, and denote by $B_{\hat\ell}$ the square matrix obtained by deleting its $\ell$th column. Expansion along the appended row yields
\begin{align}
 \per U[(r_1,\ldots,r_{k-1},j),S_k]
 =\sum_{\ell=1}^{k}U_{j\alpha_\ell}\per B_{\hat\ell}.
 \label{eq:boson-shared-minors}
\end{align}
The shared-minor method of Ref.~\cite{clifford2018classical} evaluates these $k$ permanents jointly in $O(k2^k)$ arithmetic operations and all $M$ return weights in a further $O(Mk)$ operations. Summing over $k$ gives the original $O(N2^N+MN^2)$ time and $O(M)$ additional-space bounds, excluding storage of the interferometer. 

\section{Shared probability calculations for exact sampling}
\label{s1:sec:s1-sampling}

This appendix develops the shared probability calculations used by the exact samplers: determinant sums for passive recovery in App.~\ref{s1:sec:passive}, and Pfaffian sums for active recovery and individual Born probabilities in App.~\ref{s1:sec:active}. We prove their correctness, arithmetic cost, and storage bounds, including singular cases. The complete worst-case active sampler is given separately in App.~\ref{s1:sec:upgrade}.

We use the input block $\ket{\phi_4}:=(\ket{1100}+\ket{0011})/\sqrt2$ on $M\ge4n$ modes. An intact block is one still in $\ket{\phi_4}$; a resolved mode has a specified Fock occupation. At a given stage, let $g$ be the number of intact blocks and $F$ the ordered list of occupied resolved modes. After absorbing any local Gaussian preparation into the circuit, the input has the form
\begin{align}
    \ket\psi
    &=2^{-g/2}\sum_{y\in\{0,1\}^g}\ket{J_y},
    \qquad
    J_y:=(F,X_1(y_1),\ldots,X_g(y_g)),
    \qquad
    k:=|F|+2g.
    \label{s1:eq:residual-input}
\end{align}
Here, $X_b(0)$ and $X_b(1)$ are the two occupied pairs of intact block $b$, and $k$ is the input particle number. Lists use the displayed order, with each pair internally ordered. Since moving a two-particle pair past other blocks contributes no choice-dependent sign, the branch coefficients have the same phase; we suppress this common phase. The initial input is $\ket{\Psi_n}:=\ket{\phi_4}^{\otimes n}\otimes\ket0^{\otimes(M-4n)}$.

All costs below count exact arithmetic, conjugation, comparisons, zero tests, nonnegative square roots, and sampling from finite represented distributions. The calculations include singular matrices.

\subsection{Shared determinant calculations for passive sampling}
\label{s1:sec:passive}

In Sec.~\ref{sec:passive-algorithm}, each recovery step adds one occupied output mode to the set returned by the recursive call. The new mode is selected using the probabilities of the candidate outcomes under the saved parent input, as specified in Eq.~\eqref{eq:sampling-passive-weights}. The runtime bound in the main text relies on computing all these probabilities together. Here, we construct the shared determinant calculation used for this purpose and prove its correctness, arithmetic cost, and working-memory bound, including when the matrices involved are rank deficient.

We use the convention $\hat U^\dagger\hat c_j\hat U=\sum_{a=1}^M U_{ja}\hat c_a$ and $\hat U|0\rangle=|0\rangle$, with $U\in U(M)$. For an input state $|\psi\rangle$, define
\begin{align}
    p_\psi(T)
    &:=|\langle T|\hat U|\psi\rangle|^2,
\end{align}
where $T$ is an occupied output set and its modes are listed in increasing order.

\begin{theorem}[Probability queries for passive sampling]
\label{s1:thm:passive-probability-query}
Let $|\psi\rangle$ be an input of the form in Eq.~\eqref{s1:eq:residual-input}, with $g$ intact blocks and $k=|F|+2g\ge1$ particles on $M$ modes. Given $U\in U(M)$ and an occupied output set $S\subseteq[M]$ with $|S|=k-1$, a classical algorithm can compute all probabilities
\begin{align}
    p_\psi(S\cup\{j\}),
    \qquad j\notin S,
\end{align}
in $O(k^3+Mk+2^g)$ arithmetic operations and $O(k^2+M)$ extra working memory, excluding the supplied matrix $U$. The algorithm includes rank-deficient cases.
\end{theorem}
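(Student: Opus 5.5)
Following Eqs.~\eqref{eq:sampling-passive-amplitude}--\eqref{eq:sampling-shared-cofactor}, the plan is to compute the single shared vector $h=2^{-g/2}\sum_y\iota_y z(B_y)\in\mathbb C^{k}$ indexed by the columns $D$, and then read off all amplitudes as $\mathcal A=U_{:,D}h$. The probabilities are $|\mathcal A_j|^2$ for $j\notin S$; the possible row-insertion sign is irrelevant after taking modulus. The final multiplication costs $O(Mk)$. What remains is to compute $h$ in $O(k^3+2^g)$ time with $O(k^2)$ working memory, including the singular cases.

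\textbf{Step 1: common elimination.} Order the columns so that the $f=|F|$ columns of $F$ come first. Let $A=U_{S,F}$, of size $(k-1)\times f$. Compute a column-pivoted LU/QR factorization of $A$, i.e.\ a unimodular-determinant row transformation $L$ (a product of elementary eliminations and permutations with a recorded determinant $\det L$) that brings $A$ to echelon form. This costs $O(k^3)$. Apply $L$ to the remaining $4g$ columns $U_{S,X_b(\cdot)}$ at cost $O(k^2 g)$.

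If $\operatorname{rank}A<f-1$, every minor of every $B_y$ that deletes one column vanishes identically, so $h=0$. If $\operatorname{rank}A=f-1$, only cofactors deleting an $F$-column can survive, and among those only columns in the dependency. In that case the cofactor $z_a$ equals a fixed null-vector coefficient times a single determinant of the reduced $(2g)\times(2g)$ block, so it reduces to one computation of type $\sum_y\det$. If $\operatorname{rank}A=f$, the $F$-cofactors are obtained from the Schur complement: after elimination, the lower $(k-1-f)$ rows restricted to the pair columns form a $(2g-1)\times 4g$ matrix $C$. Then $z(B_y)$ on the $F$-coordinates is a fixed linear functional times $\det$ of an appended row, and on the pair coordinates it is $\det(\text{top pivots})$ times the cofactors of $C_{:,J'_y}$. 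So the whole problem reduces to computing
\[
\tilde h=\sum_{y}\iota_y\, z\bigl(C_{:,X_1(y_1)\cdots X_g(y_g)}\bigr)\in\mathbb C^{4g},
\]
plus an $O(kf)$ correction for the $F$-coordinates, which we obtain by solving a triangular system once.

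\textbf{Step 2: the binary recurrence.} We compute the cofactor vector of a $(2g-1)\times 2g$ matrix summed over the $2^g$ pair choices. We process block $1$ and branch on $y_1\in\{0,1\}$. For each branch, we perform Gaussian elimination of the two chosen columns against $C$: pivot two rows, which costs $O(g^2)$ to update the $O(g)\times O(g)$ remaining matrix. This leaves a $(2g-3)\times 4(g-1)$ subproblem of the same form, scaled by a $2\times2$ pivot determinant, while the two eliminated coordinates of the cofactor vector are expressed linearly through the subproblem's data. Rather than storing each branch, we process the branches depth-first, accumulating into $\tilde h$, and undo updates via a stack of rank-two modifications. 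This keeps memory at $O(g^2)\subseteq O(k^2)$. The cost then satisfies $T_r\le 2T_{r-1}+O(r^2)$, which gives $T_g=O(2^g)$ since $\sum_r r^2 2^{g-r}=O(2^g)$. Adding the $O(k^3+Mk)$ from Step 1 and the final multiplication yields the claimed bound.

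\textbf{Main obstacle.} The delicate point is making the per-node elimination in Step 2 valid without pivot failures: a chosen $2\times 2$ pair may be rank deficient against the current rows. Also, the linear readout of the two eliminated cofactor coordinates must be expressible in $O(r^2)$ time without recomputing determinants. I would handle this by carrying at each node the pair (scalar prefactor, adjugate-like vector) and using the identity that, for a $(m-1)\times m$ matrix, the cofactor vector spans the kernel and is multilinear in the columns. When a pivot fails, I would fall back to rank bookkeeping as in Step 1: rank drop by two gives a zero contribution, and rank drop by one gives a single surviving kernel direction. Each of these costs $O(r^2)$ per node, so the recurrence is unchanged.
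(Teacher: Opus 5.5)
Your proposal follows the paper's proof essentially step for step. It uses the shared cofactor vector $h$ read out through $\mathcal A=U_{:,D}h$, and a single elimination of the common $F$ columns split by rank ($\operatorname{rank}U_{S,F}<|F|-1$ gives $h=0$, $=|F|-1$ gives a scalar determinant sum, $=|F|$ gives a cofactor sum over the $(2g-1)\times4g$ residual). It then runs a pair-by-pair binary elimination with rank-two, rank-one and rank-zero branches, giving $T_r\le2T_{r-1}+O(r^2)$ with an $O(g^2)$ undo stack.

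A few minor points are left implicit:
\begin{itemize}
\item The vector $h$ has length $|F|+4g\le2k$, not $k$; the $O(Mk)$ readout cost is unaffected.
\item The determinant sums arising in the rank-$(|F|-1)$ case and from rank-one pairs need the companion recursion $\mathcal D_g$. The paper spells this out; it uses the same elimination and has the same cost.
\item Each lift must act on the summed vector returned by its subtree, not on leaf contributions pushed into a global $\tilde h$. Only the former gives the $O(r^2)$ work per node that your recurrence assumes; the latter would put a polynomial factor in front of $2^g$.
\end{itemize}
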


The proof is organized as follows. In Sec.~\ref{s1:sec:passive-amplitudes}, we express all candidate probabilities through one cofactor vector and identify the sum that must be computed. In Sec.~\ref{s1:app:passive}, we eliminate the columns shared by every input occupation list, reducing the remaining calculation to either a cofactor-vector sum or a scalar determinant sum. Section~\ref{s1:app:passive-recursions} describes a recursive subroutine for evaluating these sums, and Sec.~\ref{s1:app:passive-correctness} proves its correctness. Finally, Sec.~\ref{s1:sec:passive-cost} combines these results and analyzes their arithmetic cost and working memory to prove Thm.~\ref{s1:thm:passive-probability-query}. The accumulation of these costs along the sampling path is handled in Sec.~\ref{sec:passive-algorithm}.

\subsubsection{Obtaining the recovery weights from a cofactor vector}
\label{s1:sec:passive-amplitudes}

The main text obtains all candidate amplitudes by multiplying $U$ by a single vector that depends on the saved input and the returned set $S$. We first specify this vector and the occupation lists used to construct it. This identifies the sum that the subsequent elimination procedure must evaluate.

For the saved parent, $k=|F|+2g$ and $|S|=k-1$. Define the ordered list
\begin{align}
    Q&:=\bigl(X_1(0),X_1(1),\ldots,X_g(0),X_g(1)\bigr).
\end{align}
For each $y\in\{0,1\}^g$, let $Q_y:=(X_1(y_1),\ldots,X_g(y_g))$ contain the selected pair from each intact block, and set $D:=(F,Q)$. All lists are formed by concatenation, with each pair internally ordered. Then, $J_y=(F,Q_y)$ and $|D|=|F|+4g\le2k$. 

Let $B_y:=U_{S,J_y}$, with the rows indexed by $S$ in increasing order. For any $(m-1)\times m$ matrix $B$, define its cofactor vector by
\begin{align}
    z_a(B)&:=(-1)^{m+a}\det B^{\setminus a},
    \qquad a=1,\ldots,m,
    \label{s1:eq:passive-cofactors}
\end{align}
where $B^{\setminus a}$ is obtained by deleting column $a$. We use the convention that the determinant of the empty matrix is one. Let $\iota_y$ place the entries indexed by $J_y$ into the corresponding coordinates of $D$ and insert zeros elsewhere.

The cofactor representation derived in Eq.~\eqref{eq:sampling-shared-cofactor} gives
\begin{align}
    h&:=2^{-g/2}\sum_{y\in\{0,1\}^g}\iota_y z(B_y),
    \qquad
    \mathcal A=U_{:,D}h,
    \label{s1:eq:passive-h}
\end{align}
where $\mathcal A=(\mathcal A_1,\ldots,\mathcal A_M)^T$. Here, $\mathcal A_j$ uses the output-row order obtained by appending $j$ after $S$. Reordering these rows changes only a sign independent of $y$, so $|\mathcal A_j|^2=p_\psi(S\cup\{j\})$ for $j\notin S$. For $j\in S$, the repeated row gives $\mathcal A_j=0$. This algebraic recursion sums both choices for each intact block; the outer sampling recursion selects only one child state.

\subsubsection{Eliminating the common columns}
\label{s1:app:passive}

The columns indexed by $F$ occur in every branch matrix $B_y$. We can therefore eliminate them once and apply the resulting reduction to all terms defining $h$. Depending on the rank of these columns, the remaining calculation is either a cofactor-vector sum, a scalar determinant sum, or zero.

Write $L:=U_{S,F}$ and $r_F:=\operatorname{rank}L$. Order the independent columns of $F$ first and set $\delta:=|F|-r_F$. For the calculation below, use this reordered list in $J_y$ and $D$, recording its permutation and sign. One common row elimination gives
\begin{align}
    E[L\ \ U_{S,Q}]
    &=\begin{pmatrix}
        I_{r_F}&A&V\\
        0&0&W
    \end{pmatrix},
    \qquad
    W\in\mathbb C^{(\delta+2g-1)\times4g}.
    \label{s1:eq:passive-common}
\end{align}
Here, $E$ is an invertible row-transformation matrix, and the first two column groups have widths $r_F$ and $\delta$. The fixed column reordering changes all branch amplitudes by one common sign. The subscript $y$ on $V_y,W_y$ selects the columns indexed by $Q_y$; $\iota_y^Q$ embeds a vector on those columns into all $4g$ coordinates of $Q$, with zeros elsewhere.

For $g\ge1$, define the cofactor sum for $(2g-1)\times4g$ matrices and the determinant sum for $2g\times4g$ matrices by
\begin{align}
    \mathcal C_g(W)&:=\sum_{y\in\{0,1\}^g}\iota_y^Q z(W_y),
    \qquad
    \mathcal D_g(W):=\sum_{y\in\{0,1\}^g}\det W_y.
    \label{s1:eq:passive-CD}
\end{align}
We also set $\mathcal D_0:=1$. The following lemma specifies which sum is required and how its result determines $h$.

\begin{lemma}
\label{s1:lem:passive-common-reduction}
With the row reduction in Eq.~\eqref{s1:eq:passive-common}, we have $h=0$ if $\delta\ge2$. In the remaining cases,
\begin{align}
    h&=
    \begin{cases}
        \displaystyle
        \frac{2^{-g/2}}{\det E} \binom{-V\mathcal C_g(W)}{\mathcal C_g(W)},
        &\delta=0,
        \\[8pt]
        \displaystyle
        \frac{2^{-g/2}}{\det E} \begin{pmatrix}
            -a\\1\\0_{4g}
        \end{pmatrix}\mathcal D_g(W),
        &\delta=1.
    \end{cases}
    \label{s1:eq:passive-lift}
\end{align}
Here, $A=a$ is one column when $\delta=1$. These formulas use the reordered coordinates; the original column order and its permutation sign are restored afterwards.
\end{lemma}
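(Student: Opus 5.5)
Each branch matrix $B_y=U_{S,J_y}$ is a $(k-1)\times k$ matrix, and its columns are ordered as $(F,Q_y)$ after the fixed reordering of $F$. The plan is to compute the cofactor vector of $EB_y$ and relate it to $z(B_y)$ by one common scalar. I use the elementary identity $z(EB)=\det(E)\,z(B)$ for an invertible $(k-1)\times(k-1)$ matrix $E$. This holds because each maximal minor of $EB$ equals $\det E$ times the corresponding minor of $B$. Hence $z(B_y)=z(EB_y)/\det E$, and every branch is governed by
\begin{align}
 EB_y=\begin{pmatrix} I_{r_F}&A&V_y\\ 0&0&W_y\end{pmatrix}.
\end{align}
The rows have sizes $r_F$ and $\delta+2g-1$, and the columns have widths $r_F$, $\delta$ and $2g$. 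The remaining work is to read off the cofactors of this block-triangular matrix, coordinate group by coordinate group, and then sum over $y$ with the embedding $\iota_y$. This embedding acts as the identity on the $F$ coordinates and as $\iota^Q_y$ on the $Q$ coordinates.

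First, take $\delta\ge2$. Deleting a single column leaves at least one of the $\delta$ dependent columns. The $\delta$ columns of the middle group are zero in the lower block, and the $F$-part then has rank at most $r_F$ in the top $r_F$ rows. A rank count shows that every maximal minor vanishes: the $F$-columns that remain span at most $r_F$ dimensions while occupying more than $r_F$ columns, unless $\delta-1\le0$. Hence $z(B_y)=0$ for all $y$, and therefore $h=0$.

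Next, take $\delta=0$. The matrix is $\begin{pmatrix}I&V_y\\0&W_y\end{pmatrix}$ with $W_y$ of size $(2g-1)\times 2g$. When a $Q$-column is deleted, the minor factorizes as $\det I\cdot\det W_y^{\setminus a}$, so the $Q$-block of cofactors is $z(W_y)$. The sign shift $(-1)^{r_F}$ from the column index is cancelled by the corresponding shift in the total size $m$. For the $F$-coordinates, I avoid minor-by-minor expansion. Instead I use the characterization that $z(B)$ spans the right kernel of $B$ and is normalized by the bordered determinant: $\det\binom{B}{c^T}=c^Tz(B)$ for every row vector $c$. The kernel condition on the top rows forces the $F$-part to equal $-V_y$ times the $Q$-part. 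The $Q$-part itself is fixed by the bordered-determinant identity applied to rows $c$ supported on $Q$. This gives $z(EB_y)=\binom{-V_y z(W_y)}{z(W_y)}$. Summing over $y$ with the embeddings turns $V_y z(W_y)$ into $V\iota^Q_y z(W_y)$, which yields $\binom{-V\mathcal C_g(W)}{\mathcal C_g(W)}$.

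Finally, take $\delta=1$, where $W_y$ is a square $2g\times 2g$ matrix. Deleting any $Q$-column leaves the zero lower part of the middle column together with a $2g\times(2g-1)$ block. The lower $2g$ rows then have rank less than $2g$, so every $Q$-cofactor vanishes. The kernel argument shows that the $F$- and middle coordinates are proportional to $(-a,1)$, with the middle entry equal to $\det W_y$ up to the $(-1)^{m+a}$ convention. The pieces are assembled by summing, which gives $(-a,1,0)^T\mathcal D_g(W)$, multiplied by $2^{-g/2}/\det E$. The case $g=0$ is consistent with $\mathcal D_0=1$.

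The main obstacle is the sign bookkeeping in the last two cases: aligning the $(-1)^{m+a}$ convention with the bordered-determinant normalization, so that the middle entry is exactly $+\det W_y$ and the $Q$-block is exactly $+z(W_y)$. I would settle this with the identity $\det\binom{B}{c^T}=c^Tz(B)$ rather than by tracking individual permutation signs.
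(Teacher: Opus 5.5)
Your proposal is correct and is essentially the paper's argument. The paper borders $B_y$ with an arbitrary row $v$, applies $E$, clears $v_{F_{\rm ind}}$ against the identity rows, and compares coefficients of $v$ after a single Laplace expansion; this rests on the same identity $\det\binom{B}{c^T}=c^T z(B)$ you use, organized as one computation rather than your split into $z(EB)=\det(E)\,z(B)$, the relation $Bz(B)=0$, and rank counts. Two small corrections: $z(B)$ spans $\ker B$ only when $B$ has full rank, but you only need $Bz(B)=0$, which holds unconditionally; and the middle-entry sign you left open is $(-1)^{(r_F+1+2g)+(r_F+1)}=+1$, matching the paper's remark that $2g$ is even.
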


\begin{proof}
Split $F=(F_{\rm ind},F_{\rm dep})$ into $r_F$ independent columns and $\delta$ dependent columns. For any row $v$ and branch $y$, multiply the upper $k-1$ rows by $E$, then subtract $v_{F_{\rm ind}}$ times its identity rows from the last row. Expanding along that identity block yields
\begin{align}
    \det(E)\det\begin{pmatrix}
        U_{S,J_y}\\v_{J_y}
    \end{pmatrix}
    &=
    \det\begin{pmatrix}
        0_{(\delta+2g-1)\times\delta}&W_y\\
        v_{F_{\rm dep}}-v_{F_{\rm ind}}A&
        v_{Q_y}-v_{F_{\rm ind}}V_y
    \end{pmatrix}.
    \label{s1:eq:passive-reduction-proof}
\end{align}
If $\delta\ge2$, the first $\delta$ columns on the right-hand side are supported only in the last row, so the determinant vanishes for every $v$ and hence $h=0$. For $\delta=0$ or $\delta=1$, expand along the last row or the first column, respectively. The latter expansion has positive sign because $2g$ is even. Comparing coefficients of the arbitrary row $v$ and summing over $y$ with the factor $2^{-g/2}$ gives Eq.~\eqref{s1:eq:passive-lift}, without any rank assumption on $W_y$.
\end{proof}

For $g=0$, the matrix $L$ has $k-1$ rows and $k$ columns, so $\delta\ge1$. The case $\delta=1$ uses the empty determinant $\mathcal D_0=1$, while $\delta\ge2$ gives zero. Thus, no cofactor sum with a negative row count is needed. The permutation and sign restoration used above follows the identity in Eq.~\eqref{s1:eq:cofactor-permutation} below.

\subsubsection{Subroutine: Computing the cofactor and determinant sums}
\label{s1:app:passive-recursions}

The preceding reduction expresses $h$ through either the cofactor sum $\mathcal C_g(W)$ or the determinant sum $\mathcal D_g(W)$. Both are needed in the recursion: eliminating a rank-one selected pair in a cofactor call leaves a determinant problem on the remaining blocks. We therefore describe the two updates together. Its input is a matrix $W$ with the dimensions in Eq.~\eqref{s1:eq:passive-CD}, together with its columns grouped into two alternative pairs for each intact block. Its output is the vector $\mathcal C_g(W)$ or the scalar $\mathcal D_g(W)$.

At each call, we split the sum according to the pair selected from the first remaining block. Eliminating that pair reduces the calculation to a sum over one fewer block. The rank of the selected pair determines which reduction applies and whether the contribution vanishes. We state the update rules below and prove their correctness in Sec.~\ref{s1:app:passive-correctness}.

\begin{enumerate}
    \item \textbf{Handle the base case.}
    A determinant call with $g=0$ returns $\mathcal D_0=1$. Cofactor calls occur only for $g\ge1$. For $g\ge1$, consider the two choices for the first remaining block.

    \item \textbf{Eliminate the selected pair.}
    For one choice, let $Y$ be the two selected columns and let $X$ contain all columns from the remaining blocks. Discard the unselected alternative pair for this branch. The cofactor signs use the number of columns in each selected branch matrix at the current recursive level.

    For a rank-two pair, a row transformation gives
    \begin{align}
        E_Y[Y\ \ X]
        &=\begin{pmatrix}
            I_2&V_Y\\
            0&W'_Y
        \end{pmatrix}.
        \label{s1:eq:passive-pair-two}
    \end{align}
    In a determinant call, $W'_Y$ has $2(g-1)$ rows; in a cofactor call, it has $2g-3$ rows. It has $4(g-1)$ columns in both cases.

    In a rank-one cofactor call, place an independent selected column first and reduce to
    \begin{align}
        E_Y[Y\ \ X]
        &=\begin{pmatrix}
            1&a_Y&V_Y\\
            0&0&W'_Y
        \end{pmatrix},
        \qquad
        W'_Y\in\mathbb C^{2(g-1)\times4(g-1)}.
        \label{s1:eq:passive-pair-one}
    \end{align}
    The row transformations are implemented by row swaps, scalings, and elimination on the pivot columns, while recording their determinant factors. A determinant branch with a rank-deficient selected pair contributes zero. A cofactor branch with a rank-zero selected pair also contributes zero.

    \item \textbf{Recurse, restore, and add.}
    Let $Y_0,Y_1$ denote the two alternative selected-column matrices. For the determinant sum, recurse only on rank-two choices and return
    \begin{align}
        \mathcal D_g(W)
        &=
        \sum_{\substack{Y\in\{Y_0,Y_1\}\\
                        \operatorname{rank}Y=2}}
        \frac{\mathcal D_{g-1}(W'_Y)}{\det E_Y}.
        \label{s1:eq:passive-D-rec}
    \end{align}

    For the cofactor sum, let $\jmath_Y$ embed the selected pair and remaining-block coordinates into the parent coordinate order, placing zeros at the discarded alternative pair. Return the sum of $\jmath_Yt_Y$ over the two choices, where
    \begin{align}
    t_Y&:=
    \begin{cases}
        \displaystyle
        \frac{1}{\det E_Y}
        \binom{-V_Y\mathcal C_{g-1}(W'_Y)}
              {\mathcal C_{g-1}(W'_Y)},
        &\operatorname{rank}Y=2,
        \\[8pt]
        \displaystyle
        \frac{\mathcal D_{g-1}(W'_Y)}{\det E_Y}
        \begin{pmatrix}
            -a_Y\\1\\0_{4(g-1)}
        \end{pmatrix},
        &\operatorname{rank}Y=1.
    \end{cases}
    \label{s1:eq:passive-C-rec}
\end{align}
    At rank zero, set $t_Y=0$.

    If the independent rank-one column was second, restore the selected-column swap and its sign before embedding and adding the branch result. For any column permutation matrix $\Pi$, the required restoration follows from
    \begin{align}
        z(B\Pi)
        &=\det(\Pi)\Pi^Tz(B),
        \qquad
        z(B)=\det(\Pi)\Pi z(B\Pi).
        \label{s1:eq:cofactor-permutation}
    \end{align}
    This identity is justified in the correctness proof below. Row-operation signs and scalings are already included in $\det E_Y$, and all divisions use pivots explicitly found to be nonzero.
\end{enumerate}

For $g=1$, the cofactor matrix has one row, so only rank-zero or rank-one branches occur. The latter terminate at $\mathcal D_0=1$. Thus, the recursion never requires a cofactor call with a negative row count.

\subsubsection{Correctness of the subroutine}
\label{s1:app:passive-correctness}

We now show that each branch update preserves its contribution to the required sum. A rank-two selected pair reduces either type of sum to the same type with one fewer block. A rank-one pair contributes only to the cofactor sum and reduces it to a determinant sum. These identities also explain why the remaining rank-deficient cases vanish.

\begin{lemma}
\label{s1:lem:passive-recursion-correctness}
The subroutine in Sec.~\ref{s1:app:passive-recursions} computes the sums $\mathcal C_g(W)$ and $\mathcal D_g(W)$ defined in Eq.~\eqref{s1:eq:passive-CD}, including when the selected branch matrices are rank deficient.
\end{lemma}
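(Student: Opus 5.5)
We argue by induction on the number $g$ of remaining blocks, proving simultaneously that the determinant call returns $\mathcal D_g(W)$ and the cofactor call returns $\mathcal C_g(W)$. The base case $g=0$ for determinants is the empty-determinant convention $\mathcal D_0=1$. Cofactor calls with $g=1$ are handled through the rank-one branch, which terminates at $\mathcal D_0$.

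For the inductive step we split each sum according to the pair $Y\in\{Y_0,Y_1\}$ selected from the first block:
\begin{align}
\mathcal D_g(W)&=\sum_Y\sum_{y'}\det[Y\ X_{y'}], &
\mathcal C_g(W)&=\sum_Y\jmath_Y\sum_{y'}z([Y\ X_{y'}]),
\end{align}
where $y'$ ranges over the choices for the remaining $g-1$ blocks and $X_{y'}$ is the corresponding column selection from $X$. It then suffices to show, for each fixed $Y$ and $y'$, that the branch formula reproduces the single term. Summation over $y'$ follows by linearity, and the induction hypothesis then identifies the inner sums as $\mathcal D_{g-1}(W'_Y)$ or $\mathcal C_{g-1}(W'_Y)$.

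\textbf{Determinant case.} If $\operatorname{rank}Y<2$, the two columns of $Y$ are dependent in every $[Y\ X_{y'}]$, so each term vanishes. If $\operatorname{rank}Y=2$, we use
\[
\det(E_Y)\det[Y\ X_{y'}]=\det\begin{pmatrix}I_2&V_{Y,y'}\\0&W'_{Y,y'}\end{pmatrix}=\det W'_{Y,y'},
\]
which gives Eq.~\eqref{s1:eq:passive-D-rec}.

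\textbf{Cofactor case.} We follow the proof of Lemma~\ref{s1:lem:passive-common-reduction} verbatim, with $Y$ playing the role of the common columns $F$. Append an arbitrary row $v$; then $\det\binom{B}{v}=v\cdot z(B)$ for an $(m-1)\times m$ matrix $B$. After applying the transformation $E_Y$ and eliminating the pivot block of $Y$ from the row $v$, Eq.~\eqref{s1:eq:passive-reduction-proof} takes the following form.
\begin{itemize}
\item[] For rank two, the resulting matrix is block triangular. Expanding along the identity block and comparing coefficients of $v$ gives $\binom{-V_Yz(W'_{Y,y'})}{z(W'_{Y,y'})}/\det E_Y$.
\item[] For rank one (with the independent column placed first), one zero column remains, supported only in the last row. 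Expanding along it gives $(-a_Y,1,0)^T\det W'_{Y,y'}/\det E_Y$. The sign of this expansion is positive because $W'$ has an even number $2(g-1)$ of rows.
\item[] For rank zero, two zero columns appear and the term vanishes.
\end{itemize}

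\textbf{Column reordering and embedding.} When the independent column of a rank-one pair was originally second, we need the identity~\eqref{s1:eq:cofactor-permutation}. To prove it, observe that $v\cdot z(B\Pi)=\det\binom{B\Pi}{v\Pi}=\det(\Pi)\,(v\Pi)\cdot z(B)$ holds for all rows $v$. Since the recursion places $Y$ before $X$, which is already the parent column order, the only remaining step is that $\jmath_Y$ inserts zeros at the discarded alternative pair. This matches the definition of $\iota_y^Q$.

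\textbf{Main obstacle.} The hardest part is the sign bookkeeping in the rank-one expansion. One must check that the cofactor sign $(-1)^{m+a}$, taken at the current recursive level's column count, agrees with the expansion sign; that the parity argument applies at every level; and that pivot divisions occur only at entries explicitly tested nonzero, so that rank-deficient $W'$ requires no assumption.
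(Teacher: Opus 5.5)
Your proposal follows the paper's proof, with more detail written out: simultaneous induction on $g$, block-triangular expansion for the determinant branches, and rerunning the reduction of Eq.~\eqref{s1:eq:passive-reduction-proof} with the selected pair as the common columns ($\delta=2-\operatorname{rank}Y$) for the cofactor branches, plus an appended-row proof of Eq.~\eqref{s1:eq:cofactor-permutation}. There is one slip in that last step, where both equalities misplace $\Pi$. The correct chain is $(v\Pi)\cdot z(B\Pi)=\det\binom{B\Pi}{v\Pi}=\det(\Pi)\,v\cdot z(B)$, which gives $z(B\Pi)=\det(\Pi)\Pi^{T}z(B)$. Your version yields $\Pi$ where $\Pi^{T}$ belongs, and it is harmless here only because the swap of the two selected columns is an involution.
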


\begin{proof}
For a determinant call, expansion along the identity block in Eq.~\eqref{s1:eq:passive-pair-two} gives the reduced determinant divided by $\det E_Y$. If the selected pair has rank less than two, every completed square matrix has dependent columns and contributes zero. Summing over the remaining choices proves Eq.~\eqref{s1:eq:passive-D-rec}.

For a cofactor call, apply the algebraic reduction in Eq.~\eqref{s1:eq:passive-reduction-proof} with the selected pair as the common columns and $g-1$ remaining blocks. Here, $\delta=2-\operatorname{rank}Y$. Comparing coefficients of the appended row and summing over the remaining choices gives the two formulas in Eq.~\eqref{s1:eq:passive-C-rec} when $\delta=0$ or $\delta=1$, and the zero contribution when $\delta=2$.

Equation~\eqref{s1:eq:cofactor-permutation} follows by appending an arbitrary row to $B$, applying the column permutation, and comparing coefficients in the determinant expansion. Thus, restoring each permutation and its sign before embedding preserves the branch contribution. Every nonzero recursive branch has one fewer block, so induction from the empty determinant $\mathcal D_0=1$ proves both recursions.
\end{proof}

\subsubsection{Proof of the probability-query theorem}
\label{s1:sec:passive-cost}
\label{s1:app:passive-storage}

We now combine the cofactor representation, the common-column reduction, and the recursive subroutine to prove Thm.~\ref{s1:thm:passive-probability-query}. We also establish the time and memory bounds for one probability calculation; the main text analyzes how these costs accumulate along the sampling path.

\begin{proof}[Proof of Thm.~\ref{s1:thm:passive-probability-query}]
Lem.~\ref{s1:lem:passive-common-reduction} reduces the calculation of $h$ to either $\mathcal C_g(W)$ or $\mathcal D_g(W)$, or returns $h=0$. Lem.~\ref{s1:lem:passive-recursion-correctness} computes the required sum, including rank-deficient cases. After restoring the common columns and their ordering, Eq.~\eqref{s1:eq:passive-h} gives all candidate probabilities as $|\mathcal A_j|^2$, where $\mathcal A=U_{:,D}h$.

Each algebraic node makes at most two calls with one fewer block. Its row updates, vector restoration, and embedding use $O(g^2)$ operations. Hence, a common upper bound for the costs of the two recursive sums satisfies $T_g \le2T_{g-1}+Cg^2$ with $T_0 = O(1)$, so that
\begin{align}
    T_g &\le2^g\left(C_0+C\sum_{j=1}^g\frac{j^2}{2^j}\right) =O(2^g).
    \label{s1:eq:binary-cost}
\end{align}
Initial common-column elimination costs $O(k^3)$, restoring $h$ costs $O(k^2)$, and multiplying by $U_{:,D}$ costs $O(Mk)$ because $|D|\le2k$. Taking the squared absolute values costs $O(M)$, which is absorbed in this bound. Thus, all candidate probabilities are obtained in $O(k^3+Mk+2^g)$ arithmetic operations.

For storage, a pair reduction retains at most two pivot rows, their multipliers and scalars, and the permutation data. This takes $O(g)$ words. We process the two branches successively, work in one active table, keep discarded-pair columns inactive, and undo the updates on return. A first-branch result retained while evaluating the second branch also uses $O(g)$ words. The stack therefore uses $\sum_{j=1}^gO(j)=O(g^2)$ space, in addition to an $O(k^2)$ table and an $O(M)$ candidate vector. The initial common-column reduction has the same $O(k^2)$ storage scale. Since $g\le k/2$, the total extra working memory is $O(k^2+M)$, excluding the supplied circuit matrix $U$.
\end{proof}

This establishes the per-query bound used in Eq.~\eqref{eq:sampling-passive-query}. Section~\ref{sec:passive-algorithm} sums these costs along the sampling path using Eq.~\eqref{eq:sampling-passive-path} and obtains the total runtime in Thm.~\ref{thm:sampling-passive}. Since $k\le2n$, the extra working-memory bound above is $O(n^2+M)$ for the inputs considered there.

\subsection{Shared Pfaffian calculations for active sampling}\label{s1:sec:active}
In Sec.~\ref{Sec:Active sampling}, both active sampling algorithms use a recovery step to obtain a sample from the parent output distribution. At each recovery step, we choose which bit of the current occupation string to flip, using the probabilities of the $M$ candidate strings under the saved parent input as weights. The runtime bounds stated in the main text rely on computing all these weights together in $O(M^3+2^g)$ time, where $g$ is the number of intact input blocks. In this subsection, we describe the shared Pfaffian calculation that achieves this bound and prove its correctness and cost per call; the main text analyzes how often this calculation is invoked by each sampling algorithm.

More precisely, let $|\psi\rangle$ denote the saved parent input and $\hat O$ the Gaussian unitary with Majorana matrix $O$. For the current occupation string $x$, the recovery step requires the weights $p_\psi^O(x\oplus e_j)$ for $j\in[M]$, where
\begin{align}
    p_\psi^O(z) &:= |\langle z|\hat O|\psi\rangle|^2,
    \qquad z\in\{0,1\}^M.
\end{align}
Here, $e_j$ denotes the bit string whose only nonzero entry is at position $j$.

\begin{theorem}[Probability queries for active sampling]
\label{s1:thm:active-probability-query}
Let $|\psi\rangle$ be an input of the form in Eq.~\eqref{s1:eq:residual-input}, with $g$ intact blocks on $M$ modes. Given $O\in\operatorname{SO}(2M)$ and an occupation string $x\in\{0,1\}^M$, a classical algorithm can compute either of the following:
\begin{enumerate}
    \item all $M$ neighboring probabilities $p_\psi^O(x\oplus e_j)$, up to a common positive factor;
    \item the absolute Born probability $p_\psi^O(x)$.
\end{enumerate}
Each task requires $O(M^3+2^g)$ arithmetic operations and $O(M^2)$ working memory. The algorithm includes singular cases and does not assume a nonzero vacuum overlap.
\end{theorem}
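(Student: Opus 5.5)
We follow the passive strategy of Thm.~\ref{s1:thm:passive-probability-query}. First reduce every requested probability to a sum over the $2^g$ occupation branches of Gaussian amplitudes. Then evaluate that sum by a binary recursion that eliminates one intact block per level. The shared object is the transition vector $h_a=\langle\zeta|\hat\gamma_a|\psi\rangle$ with $|\zeta\rangle=\hat O^\dagger|x\rangle$. By Eq.~\eqref{eq:sampling-active-readout}, $w_j=|(Oh)_{2j-1}|^2$, so one $O(M^2)$ matrix--vector product gives all $M$ neighbour weights once $h$ is known. The absolute Born probability $p_\psi^O(x)=|\langle\zeta|\psi\rangle|^2$ is the scalar analogue of the same calculation. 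Thus the theorem reduces to computing either the scalar $\langle\zeta|\psi\rangle$ or the $2M$-vector $h$ in $O(M^3+2^g)$ operations, with the correct overall normalization.

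\textbf{Step 1: Pfaffian form of the branch amplitudes.} Expand $|\psi\rangle=2^{-g/2}\sum_y|J_y\rangle$. Each branch overlap $\langle\zeta|\hat\gamma_{a}|J_y\rangle$ and $\langle\zeta|J_y\rangle$ is an overlap between a Gaussian state and a Fock state, possibly with one Majorana inserted. To avoid assuming $\langle 0|\zeta\rangle\neq0$, I will write these overlaps through a formula that holds without normalization conditions. I will use the standard representation $\langle x|\hat O\,\hat\gamma_{b_1}\cdots\hat\gamma_{b_m}|0\rangle$ as a Pfaffian of a $2M\times 2M$ antisymmetric matrix built from $O$ and $x$ (as in Bravyi's Gaussian Wick formula), restricted to the rows and columns indexed by the occupied Majoranas of $J_y$. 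After writing $|J_y\rangle$ as a product of $\hat\gamma$'s on the vacuum, each branch amplitude becomes $\pf(\Lambda[R_y])$ for one fixed antisymmetric matrix $\Lambda$ of size $O(M)$. Here $R_y$ is the index set of the Majoranas of $F$ together with those of the chosen pairs $X_b(y_b)$, and inserting $\hat\gamma_a$ adds one index. The insertion vector $h$ is then the row of Pfaffian cofactors with respect to the extra index, exactly as the cofactor vector $z(B)$ played this role in Eq.~\eqref{s1:eq:passive-cofactors}. Computing $\Lambda$ costs $O(M^3)$. Any vacuum-overlap singularity is handled by using a Gaussian reference other than the vacuum, chosen by a pivoted factorization. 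This reference choice is what the claim about ``no nonzero vacuum overlap'' requires.

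\textbf{Step 2: common elimination and binary recursion.} The indices of $F$ are shared by all branches. We eliminate them once by a pivoted skew-symmetric (Parlett--Reid type) reduction in $O(M^3)$, keeping track of rank deficiency as in Lem.~\ref{s1:lem:passive-common-reduction}. The residual dependent indices either force zero or turn a cofactor sum into a scalar Pfaffian sum. The remaining matrix involves only the $8g$ Majorana indices of the intact blocks, plus the one appended index. We then define the analogues of $\mathcal D_g$ and $\mathcal C_g$: the sum over $y$ of $\pf$ of the selected principal submatrices, and the corresponding sum of cofactor vectors. Each sum splits over the two choices for the first block. For each choice, we eliminate its four Majorana indices by a skew Schur complement. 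If the selected $4\times4$ block has full rank, the eliminated problem is of the same type with one fewer block. If it is rank deficient, the branch vanishes or converts to the scalar sum, mirroring Eqs.~\eqref{s1:eq:passive-D-rec}--\eqref{s1:eq:passive-C-rec}. Each node costs $O(g^2)$, so $T_g\le2T_{g-1}+O(g^2)$ gives $O(2^g)$ by Eq.~\eqref{s1:eq:binary-cost}. Undoing updates on return keeps the memory at $O(M^2)$.

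\textbf{Main obstacle.} I expect the hardest step to be the sign and normalization bookkeeping in the Pfaffian cofactor recursion with singular pivots. Three things must be checked:
\begin{itemize}
\item the Pfaffian analogue of Eq.~\eqref{s1:eq:cofactor-permutation}, namely $\pf(P^TAP)=\det P\,\pf A$ together with the cofactor expansion along the appended index;
\item the rank-deficient reductions, which must be proved by the same ``append an arbitrary row'' trick: introduce an arbitrary extra Majorana coefficient vector $v$ and compare coefficients of $v$;
\item the common positive factor from normalizing $|\zeta\rangle$ and the reference state, which must cancel across the $M$ neighbours for task 1 and be tracked explicitly, together with its modulus, for task 2.
\end{itemize}
I would settle these first on a single block ($g=1$), where both sides can be checked by direct expansion, and then argue by induction on $g$.
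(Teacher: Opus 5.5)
Your overall architecture matches the paper's: a Pfaffian representation of every branch amplitude through one fixed skew matrix, a shared Pfaffian-cofactor vector from which all $M$ neighbour weights follow by one matrix--vector product, an initial common elimination, and a binary recursion with $T_g\le 2T_{g-1}+O(g^2)$. The gap is in Step~2, where you carry over the rank case analysis of the determinant recursion (Lem.~\ref{s1:lem:passive-common-reduction} and Eqs.~\eqref{s1:eq:passive-D-rec}--\eqref{s1:eq:passive-C-rec}).

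That analysis works for determinants because the rows are fixed. Row operations clear dependent common columns everywhere except in the appended row, and a rank-deficient selected column pair makes every completion singular. Your branch quantities are instead Pfaffians of \emph{principal} submatrices, whose rows are the same branch-dependent coordinates as their columns. Only congruences preserve this structure, and a vanishing principal block does not make the Pfaffian vanish. Take two common coordinates $f_1,f_2$ and a selected pair $p,q$. Then $\pf C[f_1,f_2,p,q]=C_{f_1f_2}C_{pq}-C_{f_1p}C_{f_2q}+C_{f_1q}C_{f_2p}$, which is generically nonzero even when $C_{f_1f_2}=C_{pq}=0$. This situation already occurs for bras with nonzero vacuum overlap, where $\langle\zeta|J\rangle\propto\pf Z_{J,J}$ for an arbitrary skew $Z$. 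Consequently, after your sweep over $F$ the leftover common coordinates neither force zero nor reduce the problem to a scalar sum, and the remaining matrix is not confined to the intact-block indices. Likewise, a branch whose selected pivot block is singular does not vanish. Setting these contributions to zero is wrong precisely in the singular cases the theorem claims to cover, so no ``append an arbitrary row'' argument can establish those rules. Also, with $|J\rangle=\prod_{j\in J}\hat\gamma_{2j-1}|0\rangle$ a selected pair contributes two indices, not four, so its pivot block is a single internal entry.

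What is missing is a correct vanishing criterion and a pivot that works uniformly over all later block choices. The paper supplies these as follows.
\begin{enumerate}
\item It sweeps the common block until it is \emph{entirely} zero, but keeps the surviving common coordinates, which still couple to the variable ones (Eq.~\eqref{s1:eq:pf-initial-residual}).
\item It prunes only by counting, returning zero when $f>2r+1$. This bound also limits each node to $6r+1$ coordinates, which is what gives the $O(r^2)$ node cost.
\item When an internal pair entry vanishes, it applies the balanced correction $p_0\mapsto p_0+\eta$, $p_1\mapsto p_1-\eta$ of Eq.~\eqref{s1:eq:balanced-pivot}. This leaves the embedded sum of the two branch vectors unchanged, because each component is affine in the internal entry with the same residual Pfaffian as coefficient in both alternatives (Lem.~\ref{s1:lem:balanced-pivots}).
\item Each selected-pair Schur update recreates a common block of rank at most two. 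Common elimination must therefore be repeated at every node, not performed once for $F$.
\end{enumerate}
An equally workable patch would be to merge the selected pair into the common list and re-sweep, using the same counting prune.

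Finally, your treatment of zero vacuum overlap and of the absolute normalization in task~2 is only asserted. The paper obtains both from the normal form with $d$ always-occupied orbitals (Lem.~\ref{s1:lem:gaussian-coefficients}), whose auxiliary coordinates form exactly such a zero common block, and from $|\kappa|^2=\det(I+Z^\dagger Z)^{-1/2}$.
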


The proof is organized as follows. In App.~\ref{s1:sec:active-amplitudes}, we construct a common Pfaffian representation of Gaussian occupation amplitudes. Using this representation, App.~\ref{s1:sec:pf-vector} reduces the calculation of all neighboring probabilities to a single sum of Pfaffian vectors. We describe a recursive subroutine for computing this sum in App.~\ref{s1:sec:pf-routine} and prove its correctness in App.~\ref{s1:app:pf-identities}. Next, App.~\ref{s1:app:scalar} shows how the same subroutine computes an individual Born probability with its absolute normalization. Finally, in App.~\ref{s1:sec:active-cost}, we combine these constructions and analyze their arithmetic cost and storage requirements to prove Thm.~\ref{s1:thm:active-probability-query}.

\subsubsection{Expressing Gaussian amplitudes as Pfaffians}\label{s1:sec:active-amplitudes}
Fix a saved parent input $|\psi\rangle$ of the form in Eq.~\eqref{s1:eq:residual-input} and a returned occupation string $x$, and let $|\zeta\rangle:=\hat O^\dagger|x\rangle$. Expanding the intact input blocks yields the ladder-transition vectors of Eq.~\eqref{eq:sampling-ladder-vectors} as
\begin{align}
 a^-_a=2^{-g/2}\sum_{y\in\{0,1\}^g}\langle\zeta|\hat c_a|J_y\rangle,\qquad
 a^+_a=2^{-g/2}\sum_{y\in\{0,1\}^g}\langle\zeta|\hat c_a^\dagger|J_y\rangle.
 \label{s1:eq:ladder-sums}
\end{align}
Each vector $\hat c_a|J_y\rangle$ or $\hat c_a^\dagger|J_y\rangle$ is either zero or another occupation-basis state up to a sign. Therefore, evaluating Eq.~\eqref{s1:eq:ladder-sums} requires occupation amplitudes $\langle\zeta|J\rangle$ for different occupation lists $J$. We first express these amplitudes using matrices that depend only on $|\zeta\rangle$. This common representation will allow us to share the calculation across the different choices of $J$.

For an ordered list $J=(j_1,\ldots,j_s)$ of distinct mode indices, write
\begin{align}
    |J\rangle
    &:=\hat c_{j_1}^\dagger\cdots\hat c_{j_s}^\dagger|0\rangle,
    \qquad |J|:=s.
\end{align}
For the matrices $H$ and $Z$ introduced below, $H_{:,J}$ denotes the submatrix of $H$ obtained by retaining all rows and the columns indexed by $J$, and $Z_{J,J}$ denotes the principal submatrix of $Z$ indexed by $J$. The selected columns of $H$ and the selected rows and columns of $Z$ follow the order specified by $J$. 

\begin{lemma}\label{s1:lem:gaussian-coefficients}
For $|\zeta\rangle=\hat O^\dagger|x\rangle$, there exist an integer $0\le d\le M$, a matrix $H\in\mathbb C^{d\times M}$, a skew-symmetric matrix $Z\in\mathbb C^{M\times M}$, and a scalar $\kappa\ne0$ such that
\begin{align}
    \langle\zeta|J\rangle=\kappa\,\pf C(J),
    \qquad C(J) := \begin{pmatrix}
        0_d&H_{:,J}\\
        -H_{:,J}^{T}&Z_{J,J}
    \end{pmatrix}
    \label{s1:eq:gaussian-coefficient}
\end{align}
whenever $d+|J|$ is even; otherwise, the amplitude is zero. Here, $0_d$ is the $d\times d$ zero matrix and $\pf$ denotes the Pfaffian. The same $H$, $Z$, and $\kappa$ apply to every $J$, and $H$ and $Z$ can be constructed from $O$ and $x$ in $O(M^3)$ operations. The construction also ensures $HH^\dagger=I_d$ and $ZH^\dagger=0$.
\end{lemma}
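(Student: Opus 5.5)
The plan is to bring $|\zeta\rangle$ into a normal form using the standard Bravyi--König-type representation of pure Gaussian states. This form should be valid even when the vacuum overlap vanishes. Write $|\zeta\rangle=\hat O^\dagger|x\rangle$. Since $|x\rangle$ is a Fock state and $\hat O^\dagger$ is Gaussian, $|\zeta\rangle$ is a pure Gaussian state with covariance matrix obtained from that of $|x\rangle$ by conjugation with $O^T$. This costs $O(M^3)$.

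From this covariance matrix we form the $M\times M$ correlation blocks $\langle\zeta|\hat c_a^\dagger\hat c_b|\zeta\rangle$ and $\langle\zeta|\hat c_a\hat c_b|\zeta\rangle$. Equivalently, we take a Bogoliubov matrix $(\mathcal U,\mathcal V)$ with $|\zeta\rangle$ annihilated by the operators $\hat d_i=\sum_b(\bar{\mathcal U}_{ib}\hat c_b+\bar{\mathcal V}_{ib}\hat c_b^\dagger)$. A Bloch--Messiah/QR step then splits mode space into a $d$-dimensional subspace that is fully occupied in $|\zeta\rangle$ and a complement on which $|\zeta\rangle$ has a nonzero paired vacuum component.

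Concretely, I would aim for the factorization
\begin{align}
|\zeta\rangle=\kappa'\Bigl(\prod_{i=1}^d\hat f_i^\dagger\Bigr)\exp\Bigl(\tfrac12\sum_{a,b}\bar Z_{ab}\hat c_a^\dagger\hat c_b^\dagger\Bigr)|0\rangle,
\qquad \hat f_i^\dagger=\sum_b\bar H_{ib}\hat c_b^\dagger.
\end{align}
Here the rows of $H$ are orthonormal, so $HH^\dagger=I_d$, and they span the occupied subspace, i.e.\ the kernel of $\mathcal V^T$ in suitable coordinates. The matrix $Z$ is the pairing matrix $-\mathcal V^{-1}\mathcal U$ restricted to the orthogonal complement and extended by zero, which gives $ZH^\dagger=0$. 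I would obtain $d$, $H$, and $Z$ by a rank-revealing decomposition, namely an SVD or pivoted QR of the relevant $M\times M$ block, in $O(M^3)$. The constant $\kappa'$ is fixed by normalization, using the determinant of the invertible part, and we set $\kappa=\bar\kappa'$ up to an ordering sign. The conditions $HH^\dagger=I$ and $ZH^\dagger=0$ make the occupied creators $\hat f_i^\dagger$ anticommute with the pairing exponent in a controlled way, so the representation is well defined and consistent.

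Next I would compute $\langle\zeta|J\rangle$ by expanding the factorized ket and taking its conjugate overlap with $\hat c_{j_1}^\dagger\cdots\hat c_{j_s}^\dagger|0\rangle$. This is a vacuum expectation of a product of linear and quadratic fermionic operators. The standard Wick/Grassmann-integral identity gives the coefficient of $|J\rangle$ in $\prod_i\hat f_i^\dagger e^{\frac12 c^\dagger Z c^\dagger}|0\rangle$. That coefficient is a sum over pairings in which each $\hat f_i^\dagger$ contracts with one $j\in J$ with weight $H_{ij}$, and the remaining indices of $J$ pair via $Z_{J,J}$. This sum is exactly $\pf\begin{pmatrix}0_d&H_{:,J}\\-H_{:,J}^T&Z_{J,J}\end{pmatrix}$, because the zero block forbids $f$--$f$ contractions. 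The amplitude vanishes unless $d+|J|$ is even, since every term in the expansion has parity $d$.

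Conjugation and index orientation must be checked carefully; I would replace $H,Z$ by their conjugates if needed. The key point is that $J$ enters only through the selected columns, so the same $H,Z,\kappa$ work for every $J$.

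I expect the main obstacle to be the singular case: getting $d$ correct and proving that the complement genuinely supports a pairing exponential, with $\mathcal V$ invertible there, uniformly in $x$ and $O$ and without assuming $\langle 0|\zeta\rangle\neq0$. I would handle this by choosing the occupied subspace as the null space of the annihilator's creation-part coefficient matrix and proving invertibility on the complement from the Bogoliubov unitarity relations. As a fallback, I would derive the normal form from the Gaussian state $\hat O^\dagger|x\rangle$ by applying particle–hole transformations on the modes flipped in $x$.
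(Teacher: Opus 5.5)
Your overall route is the paper's. You put $|\zeta\rangle$ in the Avez--Bender normal form, with $d$ orthonormal occupied orbitals times a pairing exponential supported on their orthogonal complement, then conjugate and read off the bordered Pfaffian. Your pairing expansion, the parity argument for $d+|J|$ odd, and the identification of $H$, $Z$ as the conjugated orbital and pairing data are all fine.

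The problem is the concrete construction of $d$, $H$, $Z$, which is exactly the step you call the crux: you have interchanged the two Bogoliubov blocks. With your convention $\hat d_i=\sum_b(\bar{\mathcal U}_{ib}\hat c_b+\bar{\mathcal V}_{ib}\hat c_b^\dagger)$, an orbital $w$ is fully occupied iff $\sum_b w_b\hat c_b^\dagger$ annihilates $|\zeta\rangle$. Equivalently, it equals some $\sum_i\alpha_i\hat d_i$ whose annihilation part vanishes, $\alpha^T\bar{\mathcal U}=0$. The relations $\{\hat d_i,\hat d_k\}=0$ and $\{\hat d_i,\hat d_k^\dagger\}=\delta_{ik}$ read $\bar{\mathcal U}\mathcal V^\dagger+\bar{\mathcal V}\mathcal U^\dagger=0$ and $\bar{\mathcal U}\mathcal U^T+\bar{\mathcal V}\mathcal V^T=I$. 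From them, the occupied subspace is exactly $\ker\bar{\mathcal U}$, the null space of the \emph{annihilation}-part block.

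Similarly, comparing with $\hat c_a|\zeta\rangle=\sum_b Z^{\rm ket}_{ab}\hat c_b^\dagger|\zeta\rangle$ shows that the pairing matrix comes from inverting the annihilation block on the complement. When $\bar{\mathcal U}$ is invertible, $Z^{\rm ket}=-\bar{\mathcal U}^{-1}\bar{\mathcal V}$, so your $Z$ would be $-\mathcal U^{-1}\mathcal V$, not $-\mathcal V^{-1}\mathcal U$. The null space of the creation-part block, which you propose to use (your ``kernel of $\mathcal V^T$''), describes the orbitals that are certainly \emph{empty*}. As a check, take $|\zeta\rangle=|0\rangle$: then $\bar{\mathcal U}=I$ and $\bar{\mathcal V}=0$, and your recipe returns $d=M$ and requires $\mathcal V^{-1}$ with $\mathcal V=0$. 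For the fully occupied state it returns $d=0$.

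Swapping the roles gives exactly the paper's construction. Write down the annihilators $\hat O^\dagger\hat c_j\hat O$ ($x_j=0$) and $\hat O^\dagger\hat c_j^\dagger\hat O$ ($x_j=1$) directly from $O$ and $x$; there is no need to go through the covariance matrix. Let $A$ be their annihilation-part coefficients and take an orthonormal basis $D$ of $\ker A$ as the occupied orbitals. Then solve the pairing equations on $(\ker A)^\perp$. This gives $H=D^\dagger$, $Z=(Z^{\rm ket})^*$ and $\kappa=\lambda_\zeta^*(-1)^{d(d-1)/2}$, with $HH^\dagger=I_d$ and $ZH^\dagger=0$ built in.

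Your particle--hole fallback does not by itself remove the singular case. It only rewrites $|\zeta\rangle$ as a Gaussian unitary applied to the vacuum, and that state's vacuum overlap can still vanish, so the occupied-orbital factor is still needed.
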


\begin{proof}
The Gaussian normal form~\cite{AvezBender2012} gives
\begin{align}
    |\zeta\rangle &=\lambda_\zeta\,\hat d_1^\dagger\cdots\hat d_d^\dagger \exp\left(\frac12\sum_{a,b=1}^{M}Z^{\rm ket}_{ab}\hat c_a^\dagger\hat c_b^\dagger\right)|0\rangle,
    \qquad \hat d_i^\dagger:=\sum_{a=1}^{M}D_{ai}\hat c_a^\dagger.
    \label{s1:eq:gaussian-normal-form}
\end{align}
Here, the columns of $D$ describe orthonormal always-occupied orbitals, $Z^{\rm ket}$ is a skew-symmetric pairing matrix supported on their orthogonal complement, and $\lambda_\zeta\ne0$. Thus, $D^\dagger D=I_d$ and $D^\dagger Z^{\rm ket}=0$.

Expanding the creation operators and comparing with the Pfaffian expansion gives Eq.~\eqref{s1:eq:gaussian-coefficient} with
\begin{align}
    H&=D^\dagger,
    \qquad Z=(Z^{\rm ket})^*,
    \qquad \kappa=\lambda_\zeta^*(-1)^{d(d-1)/2}.
\end{align}
The sign accounts for the ordering of the auxiliary coordinates. Coefficients with $d+|J|$ odd vanish because the remaining particles are created in pairs. These choices also give $HH^\dagger=I_d$ and $ZH^\dagger=0$.

To construct these matrices, use the operators $\hat O^\dagger\hat c_j\hat O$ for $x_j=0$ and $\hat O^\dagger\hat c_j^\dagger\hat O$ for $x_j=1$, all of which annihilate $|\zeta\rangle$. Their coefficients are obtained directly from $O$ and $x$. Let $A$ collect the coefficients multiplying the annihilation operators. Find an orthonormal basis $D$ of $\ker A$ and solve the pairing equations on its orthogonal complement to obtain $Z^{\rm ket}$. These operations construct $H$ and $Z$ in $O(M^3)$ time, including when $A$ is singular.
\end{proof}

\subsubsection{Obtaining neighboring probabilities from a vector sum}
\label{s1:sec:pf-vector}

In the preceding subsection, we expressed Gaussian occupation amplitudes as Pfaffians. We now use this representation to compute both ladder-transition vectors in Eq.~\eqref{s1:eq:ladder-sums} together. For a fixed occupation list, annihilating an occupied mode deletes one row and column, while expanding the Pfaffian enlarged by creation produces the same collection of minors. We collect these minors, with their corresponding signs, into a vector and show that summing these vectors in common coordinates determines both ladder-transition vectors and hence all neighboring probabilities up to a common positive factor.

We first consider the case where $x$ has parity opposite to the input particle number $k$. Since $|\zeta\rangle=\hat O^\dagger|x\rangle$ has the same parity as $x$, each matrix $C(J_y)$ has odd order $d+k$. Deleting one row and column therefore gives an even-order matrix whose Pfaffian is defined. If $x$ has the same parity as the parent input, all neighboring probabilities are zero.

For an odd skew-symmetric matrix $C$ of order $N$, define the vector
\begin{align}
    \qv_i(C)
    &:=(-1)^{i+1}\pf C^{\hat i},
    \qquad i=1,\ldots,N,
    \label{s1:eq:pf-vector}
\end{align}
where $C^{\hat i}$ is obtained by deleting row and column $i$ from $C$. We use the convention $\pf(0_{0\times0})=1$.

For $C_y:=C(J_y)$, the first $d$ entries of $\qv(C_y)$ correspond to the auxiliary coordinates, while the remaining entries correspond to the occupied modes listed in $J_y$. These physical mode labels depend on $y$, so we first place the vectors in a common coordinate order before adding them. Specifically, let $\iota_y$ retain the first $d$ entries, place each physical entry in the position corresponding to its mode label among the $M$ physical coordinates, and fill the unselected positions with zeros. Define
\begin{align}
    \binom{u}{v}
    &:=\sum_{y\in\{0,1\}^g}\iota_y\qv(C_y),
    \qquad
    u\in\mathbb C^d,\quad v\in\mathbb C^M.
    \label{s1:eq:grouped-vector}
\end{align}
The following lemma shows that this single vector sum determines both ladder-transition vectors.

\begin{lemma}
\label{s1:lem:ladder-readout}
Suppose that $x$ has parity opposite to the input particle number $k$. Then, the vectors $u$ and $v$ in Eq.~\eqref{s1:eq:grouped-vector} satisfy
\begin{align}
    a^- =2^{-g/2}\kappa(-1)^d v, 
    \qquad a^+=2^{-g/2}\kappa(-1)^k \left(H^{T}u+Z^{T}v\right).
    \label{s1:eq:grouped-readout}
\end{align}
\end{lemma}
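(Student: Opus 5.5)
The plan is to prove both identities branch by branch, for a fixed occupation list $J=J_y$ with $|J|=k$, and then sum over $y$ using the common embedding $\iota_y$. Lemma~\ref{s1:lem:gaussian-coefficients} expresses every amplitude $\langle\zeta|J'\rangle$ through the same $H$, $Z$, $\kappa$. It therefore suffices to show two things: that $\langle\zeta|\hat c_a|J\rangle=\kappa(-1)^d\,[\iota\qv(C(J))]_{d+a}$, and that $\langle\zeta|\hat c_a^\dagger|J\rangle=\kappa(-1)^k\,[H^Tu_J+Z^Tv_J]_a$, where $(u_J,v_J):=\iota\qv(C(J))$. Linearity of the sum in Eq.~\eqref{s1:eq:ladder-sums}, with the factor $2^{-g/2}$, then gives Eq.~\eqref{s1:eq:grouped-readout}.

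\textbf{Annihilation.} If $a\notin J$, both sides vanish, the right because $\iota$ places zeros on unselected modes. If $a=j_p$, anticommuting $\hat c_a$ to its creation partner gives
\begin{align}
\hat c_a|J\rangle=(-1)^{p-1}|J\setminus\{j_p\}\rangle .
\end{align}
Hence $\langle\zeta|\hat c_a|J\rangle=(-1)^{p-1}\kappa\,\pf C(J\setminus\{j_p\})$. Deleting the mode $j_p$ from $J$ is the same as deleting row and column $i=d+p$ from $C(J)$. So $\pf C(J\setminus\{j_p\})=\pf C(J)^{\hat i}$, whose signed version is $\qv_{d+p}=(-1)^{d+p+1}\pf C^{\hat i}$. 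Comparing signs, $(-1)^{p-1}=(-1)^d(-1)^{d+p+1}$, which gives the factor $(-1)^d$.

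\textbf{Creation.} If $a\in J$, the left side is zero. The right side must then also vanish, and I expect to check this from the general formula below. For general $a$, write
\begin{align}
\hat c_a^\dagger|J\rangle=(-1)^k|(J,a)\rangle
\end{align}
by moving $\hat c_a^\dagger$ past the $k$ existing creation operators. The amplitude is then $(-1)^k\kappa\,\pf C((J,a))$, where $C((J,a))$ is the even-order matrix $C(J)$ bordered by the column $w:=(H_{:,a};Z_{J,a})$. That is, the appended row and column has entries $H_{ra}$ against the auxiliary coordinates and $Z_{j a}$ against the modes, and zero on the diagonal. Expanding a Pfaffian along its last row/column gives
\begin{align}
\pf\begin{pmatrix}C&w\\-w^T&0\end{pmatrix}=\sum_i (-1)^{i+1}w_i\pf C^{\hat i}=w\cdot\qv(C).
\end{align}
The standard expansion has the form $\sum_i(-1)^{i+N+1}\cdots$ with $N+1$ even, so I will fix the sign convention there.

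Splitting $w\cdot\qv$ into the auxiliary part and the modal part gives $\sum_r H_{ra}u_r+\sum_{j\in J}Z_{ja}v_j$. Because $\iota$ zero-fills unselected modes, the modal sum extends to all of $[M]$ as $(Z^Tv)_a$, and the auxiliary part is $(H^Tu)_a$. This is exactly the stated formula. When $a\in J$, the bordered matrix has a repeated index, so its Pfaffian vanishes; the same expansion then shows the right side is automatically zero, which settles the case left open above.

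\textbf{Main obstacle.} The main obstacle is sign bookkeeping: the Pfaffian expansion sign conventions, the ordering of the auxiliary block relative to the modes, and the orientation of the bordered matrix, meaning whether the new column sits as $H_{:,a}$ with $-H^T$ below, consistent with $C(J)$. I would verify all of these on the smallest cases, $d=0$, $k=1$ and $d=1$, $k=0$, before committing to them.
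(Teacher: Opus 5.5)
Your proposal is correct and follows the paper's proof essentially step for step. You take annihilation per branch by deleting row and column $d+\ell$ of $C(J)$ with sign $(-1)^{\ell-1}$, take creation via the bordered matrix with column $(H_{:,a};Z_{J,a})$ and sign $(-1)^k$, and then sum over $y$ using the zero-filling embedding. The expansion you display, $\sum_i(-1)^{i+1}w_i\pf C^{\hat i}$, already carries the correct sign (the $2\times2$ case returns $w_1$), so the sign issue you left open is settled.
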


\begin{proof}
Fix an occupation list $J=(j_1,\ldots,j_k)$. We use the matrix $C(J)$ before applying annihilation or creation.

First, consider annihilation. If $a=j_\ell$ is occupied, applying $\hat c_a$ removes that mode and contributes the sign $(-1)^{\ell-1}$. By Lem.~\ref{s1:lem:gaussian-coefficients},
\begin{align}
    \langle\zeta|\hat c_a|J\rangle=\kappa(-1)^{\ell-1} \pf C(J)^{\hat{d+\ell}}=\kappa(-1)^d\qv_{d+\ell}(C(J)).
    \label{s1:eq:annihilation-pf}
\end{align}
Here, $C(J)^{\hat{d+\ell}}$ denotes the submatrix obtained by deleting row and column $d+\ell$ from $C(J)$, corresponding to the occupied mode $j_\ell$. If $a\notin J$, the amplitude is zero. Thus, all annihilation amplitudes are obtained from the physical components of $\qv(C(J))$.

Next, consider creation. Appending mode $a$ to the occupation list adds the column $w_a(J):=\binom{H_{:,a}}{Z_{J,a}}$ and its corresponding row to $C(J)$. Expanding the Pfaffian along this column gives $\qv(C(J))^T w_a(J)$. Moving $\hat c_a^\dagger$ past the $k$ existing creation operators contributes the sign $(-1)^k$, so
\begin{align}
    \langle\zeta|\hat c_a^\dagger|J\rangle
    &=\kappa(-1)^k\qv(C(J))^T w_a(J).
    \label{s1:eq:creation-pf}
\end{align}
If $a\in J$, the appended coordinate repeats an existing one, making the Pfaffian zero, consistently with $\hat c_a^\dagger|J\rangle=0$.

Finally, sum these expressions over the occupation lists $J_y$. The embedding $\iota_y$ places each physical component at its corresponding mode label, so the annihilation amplitudes sum to $\kappa(-1)^d v$. For creation, the auxiliary components contribute $H^{T}u$, and the physical components contribute $Z^{T}v$. Including the input normalization $2^{-g/2}$ proves Eq.~\eqref{s1:eq:grouped-readout}.
\end{proof}

Once $u$ and $v$ are known, Eqs.~\eqref{eq:sampling-ladder-vectors} and~\eqref{eq:sampling-active-readout} determine all neighboring output probabilities up to a common positive factor. Both ladder-transition vectors contain the common factor $2^{-g/2}\kappa$, so all neighboring probabilities contain the factor $2^{-g}|\kappa|^2$. This factor cancels when the weights are normalized to choose which bit to flip. Thus, the recovery step does not require us to compute $\kappa$. When an absolute output probability is required, we determine $|\kappa|^2$ using the procedure in Sec.~\ref{s1:app:scalar}.

\subsubsection{Subroutine: Computing the Pfaffian vector sum}\label{s1:sec:pf-routine}

In the preceding subsection, we showed that all neighboring output probabilities can be obtained, up to a common positive factor, from the vector sum in Eq.~\eqref{s1:eq:grouped-vector}. We now describe a recursive algorithm for computing this sum. Its input consists of the matrices $H$ and $Z$ from Lem.~\ref{s1:lem:gaussian-coefficients}, the ordered list $F$ of occupied resolved modes, and the two ordered pairs $X_b(0),X_b(1)$ for each intact block $b=1,\ldots,g$. These data specify every matrix $C(J_y)$ without listing the $2^g$ block choices individually. The algorithm returns the vectors $u\in\mathbb C^d$ and $v\in\mathbb C^M$ defined in Eq.~\eqref{s1:eq:grouped-vector}: the sum of the corresponding Pfaffian vectors after placing their entries in the common auxiliary and physical coordinate order.

The branch matrices share their auxiliary coordinates and resolved occupations, so common elimination steps can be performed before branching on the next intact block. We then restore the eliminated coordinates and add the branch vectors in common coordinates. The identities used below are proved in Sec.~\ref{s1:app:pf-identities}.

\begin{enumerate}
    \item \textbf{Form the shared matrix.}
    
    Form $C([M])$ from Eq.~\eqref{s1:eq:gaussian-coefficient}, on the auxiliary coordinates and all physical modes. Retain the $d$ auxiliary coordinates and the occupied modes $F$ as common coordinates, together with both pairs $X_b(0),X_b(1)$ from each intact block. Order the retained coordinates as the auxiliary list, $F$, and then $X_b(0),X_b(1)$ for $b=1,\ldots,g$, with each pair internally ordered. Every $C_y$ is the principal submatrix selecting one pair from each block, in the order specified by $J_y$ in Eq.~\eqref{s1:eq:residual-input}. The expanded matrix has $d+|F|+4g$ coordinates, whereas one selected branch has $d+|F|+2g$.

    \item \textbf{Sweep through common coordinates.}
    
    A nonzero common pair can be eliminated using Pfaffian factorization. Move it to the front, recording the permutation, and write
    \begin{align}
        \mathcal A=\begin{pmatrix}P&B\\-B^{T}&T\end{pmatrix},\qquad
        P\coloneqq\begin{pmatrix}0&p\\-p&0\end{pmatrix},\qquad
        S\coloneqq T+B^{T}P^{-1}B,\qquad p\ne0.
        \label{s1:eq:pf-pivot}
    \end{align}
    For even matrices $\pf(\mathcal A)=p\,\pf(S)$. For our odd branch matrices, the corresponding vector lift appears in Step 4. Because the pair is common, this single Schur update reduces every branch at once. Store $P,B,p$ and the coordinate permutation, and continue with $S$.

    Repeat until the entire common submatrix is zero, not merely until one tested pair is zero. At that point $f$ common coordinates remain and the expanded residual has the form
    \begin{align}
        \mathcal A_{\rm rem}=\begin{pmatrix}0_f&B_{\rm rem}\\-B_{\rm rem}^{T}&T_{\rm rem}\end{pmatrix},\qquad
        B_{\rm rem}\in\C^{f\times4g},\quad T_{\rm rem}\in\C^{4g\times4g}.
        \label{s1:eq:pf-initial-residual}
    \end{align}
    None of the block choices has yet been made. A fixed $y$ retains only $2g$ of these $4g$ variable coordinates. The zero common block is retained: its coordinates can still couple to the variable ones.

    \item \textbf{Select the next pair and recurse on the remaining sum.}
    
    At each recursive node, first complete the common sweep so that the remaining common submatrix is zero. Let $f$ be its number of coordinates and $r$ the number of block choices still unresolved; initially $r=g$. If $y_1,\ldots,y_{b-1}$ have already been fixed, then $r=g-b+1$ and the remaining sum runs over $y_b,\ldots,y_g$. Each recursive call returns the sum of the Pfaffian vectors over its remaining block choices, embedded in the coordinate order at the start of that call.

    Each selected branch has order $f+2r$, which is odd, so $f$ is odd. First handle the terminal cases. If $f>2r+1$, return zero: after deleting any one coordinate, there are more surviving common coordinates than selected variable coordinates available to pair with them. If $r=0$, the odd residual is $0_f$; return $(1)$ when $f=1$ and zero when $f>1$. These are exactly its maximal-Pfaffian vectors. Saved pivots are restored on return.

    Otherwise split into $y_b=0$ and $y_b=1$, each retaining its selected pair and summing over $y_{b+1},\ldots,y_g$. Let the internal pair entries be $p_0,p_1$. If either is zero, choose $\eta$ avoiding $-p_0$ and $p_1$ and replace
    \begin{align}
        p_0\longmapsto p_0+\eta,\qquad p_1\longmapsto p_1-\eta.
        \label{s1:eq:balanced-pivot}
    \end{align}
    Any three distinct field elements suffice for this choice; $0,1,2$ are convenient, not special. For every fixed choice of the remaining blocks, these opposite changes preserve the sum of the two Pfaffian vectors after embedding them in common coordinates. This cancellation is proved in Lem.~\ref{s1:lem:balanced-pivots}. Both selected pairs are now valid, nonzero pivots. Eliminate the selected pair in each branch using Eq.~\eqref{s1:eq:pf-pivot}, then recurse with $r-1$.

    Since the Schur update can create a nonzero common block, each branch begins again with the common sweep. Its new common block has rank at most two, and at most one additional common-pair pivot restores it to zero. Thus common elimination is used both before any block choice and after each selected-pair elimination.

    \item \textbf{Lift, embed, and add.}
    
    For each selected branch, let $t$ be the Pfaffian vector after eliminating a pivot. The vector before elimination is recovered by
    \begin{align}
        t\longmapsto p\binom{-P^{-1}Bt}{t}.
        \label{s1:eq:pf-lift}
    \end{align}
    This identity is proved in Lem.~\ref{s1:lem:pf-vector-lift}. The lift restores the two eliminated coordinates and multiplies the remaining entries by the pivot scalar. Since the lift is linear, it also applies directly to the embedded sum of the residual vectors.

    At a binary split, let $\widetilde t_s$ denote the vector returned from branch $y_b=s$ after restoring its selected pair $X_b(s)$, for $s\in\{0,1\}$. The two vectors share the remaining coordinates but contain different selected pairs. Define $\operatorname{embed}_{X_b(s)}$ to place the entries of $\widetilde t_s$ in the parent coordinate order and insert zeros at the coordinates of the unselected pair $X_b(1-s)$. We then add the embedded vectors,
    \begin{align}
        t_{\rm parent}
        &:=
        \operatorname{embed}_{X_b(0)}(\widetilde t_0)
        +\operatorname{embed}_{X_b(1)}(\widetilde t_1).
        \label{s1:eq:pf-embed-add}
    \end{align}
    Undo preceding common pivots in reverse order. If a reordering was $C'=\Pi C\Pi^{T}$, restore its vector by $\qv(C)=\det(\Pi)\Pi^{T}\qv(C')$. Keep physical mode labels throughout. At the top, the returned vector is exactly $(u,v)$ in Eq.~\eqref{s1:eq:grouped-vector}, with zero entries on unused physical modes.
\end{enumerate}

\subsubsection{Correctness of the subroutine}
\label{s1:app:pf-identities}
\label{s1:app:balance}

We now prove the correctness of the subroutine described in Sec.~\ref{s1:sec:pf-routine}. We first prove the vector-lifting identity and show that it applies to sums of residual vectors. Next, we show that the opposite corrections used to make both candidate pivots nonzero preserve their embedded sum. Using these two identities, we prove by induction that the recursive procedure returns the required vector sum.

\begin{lemma}
\label{s1:lem:pf-vector-lift}
Let $C$ be an odd skew-symmetric matrix with the block form in Eq.~\eqref{s1:eq:pf-pivot}, where $p\ne0$. Then
\begin{align}
 \qv(C)=p\binom{-P^{-1}B\qv(S)}{\qv(S)},\qquad S=T+B^{T}P^{-1}B.
 \label{s1:eq:pf-vector-lift-proof}
\end{align}
No invertibility assumption is required for $S$.
\end{lemma}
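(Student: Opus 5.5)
The plan is to reduce the odd-order vector identity to the familiar even-order factorization $\pf\mathcal A=p\,\pf S$ by \emph{bordering}. Let $C$ have odd order $N$. For an arbitrary vector $w\in\C^N$, form the even skew-symmetric matrix
\begin{align}
 \widehat C(w)=\begin{pmatrix}C&w\\-w^T&0\end{pmatrix},
\end{align}
obtained by appending $w$ as a last column and $-w^T$ as a last row. Expanding the Pfaffian along this last column gives $\pf\widehat C(w)=\qv(C)^Tw$, with the sign convention of Eq.~\eqref{s1:eq:pf-vector}. This is the same expansion already used in the proof of Lem.~\ref{s1:lem:ladder-readout}. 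Because both sides are linear in $w$, it suffices to show that $\pf\widehat C(w)$ equals $p$ times the inner product of the claimed right-hand side with $w$, for every $w$.

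Next, write $w=(w_1,w_2)$ with $w_1\in\C^2$ on the pivot coordinates. The bordered matrix still has the leading $2\times2$ block $P$ with $p\neq0$, so the even-order factorization applies to it. Its Schur complement is the bordered matrix
\begin{align}
 \begin{pmatrix}S&w_2+B^TP^{-1}w_1\\ \ast&0\end{pmatrix}.
\end{align}
The corner entry vanishes because $w_1^TP^{-1}w_1=0$, since $P^{-1}$ is skew. The entry $w_2+B^TP^{-1}w_1$ comes from the cross term, and one must check carefully that the sign convention $S=T+B^TP^{-1}B$ carries over, with $-w^T$ playing the role of $-B^T$. Hence
\begin{align}
 \pf\widehat C(w)=p\,\qv(S)^T\bigl(w_2+B^TP^{-1}w_1\bigr).
\end{align}
This rests on the expansion identity already used above.

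Then compare coefficients. The coefficient of $w_2$ is $p\,\qv(S)$. The coefficient of $w_1$ is $p\,(B^TP^{-1})^T\qv(S)=p\,P^{-T}B\qv(S)=-p\,P^{-1}B\qv(S)$, using $P^{-T}=-P^{-1}$. This gives exactly Eq.~\eqref{s1:eq:pf-vector-lift-proof}. Since only $p\ne0$ was used, in the factorization of the bordered matrix, no invertibility of $S$ is needed.

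The step I expect to be the main obstacle is the sign bookkeeping. Two things need checking: that the last-column expansion yields $\qv(C)^Tw$ with the factor $(-1)^{i+1}$ and not its negative for odd $N$, and that the even factorization $\pf\mathcal A=p\,\pf S$ holds with $S=T+B^TP^{-1}B$ and not with a minus sign. I would first verify both on $N=3$ with $S$ of size $1$, where $\qv(S)=(1)$, and then rely on the general linear-algebraic argument. Finally, linearity of the lift in $\qv(S)$ gives the remark that the lift applies to embedded sums of residual vectors.
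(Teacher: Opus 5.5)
Your proof is correct and is essentially the paper's argument: both border $C$ by an arbitrary column $w$ and apply the even-order congruence $\pf(LAL^T)=\det(L)\pf(A)$ with the block-triangular $L$. Your modified border $w_2+B^TP^{-1}w_1$ is exactly the lower part of the paper's $Lw$, and both sign conventions you flagged check out. The paper packages the same computation as the general identity $\qv(LCL^T)=\det(L)L^{-T}\qv(C)$ and then reads off $\qv(P\oplus S)=(0,0,p\,\qv(S)^T)^T$, a form it reuses for permutation restoration.
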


\begin{proof}
The required identity follows from standard Pfaffian congruence and appended-column expansion~\cite{Wimmer2011}. For every invertible matrix $L$, these formulas give
\begin{align}
 \qv(LCL^{T})=\det(L)L^{-T}\qv(C).
 \label{s1:eq:pf-congruence}
\end{align}
Indeed, append a column $Lw$ on the transformed side and $w$ on the original side; Pfaffian congruence equates their expansions for every $w$. Choose
\begin{align}
 L\coloneqq\begin{pmatrix}I&0\\B^{T}P^{-1}&I\end{pmatrix},\qquad
 LCL^{T}=P\oplus S,\qquad \det L=1.
\end{align}
The vector of $P\oplus S$ is $(0,0,p\qv(S)^{T})^{T}$: deleting either pivot coordinate leaves its partner isolated. Equation~\eqref{s1:eq:pf-congruence} now proves Eq.~\eqref{s1:eq:pf-vector-lift-proof}.

For a grouped call, each branch cross block $B_y$ is a restriction of the same expanded $B$. If $t$ is the embedded sum of its residual vectors, then $\sum_{y\in\{0,1\}^g}B_y\qv(S_y)=Bt$. Linearity therefore proves the lift in Eq.~\eqref{s1:eq:pf-lift} for the entire sum. A coordinate permutation is the special case $L=\Pi$, with restoration $\det(\Pi)\Pi^{T}$.

\end{proof}

The next lemma supplies the additional identity needed by the shared recursion. Opposite corrections make both candidate pivots nonzero while preserving the embedded sum of their branch vectors. This lets us use the same elimination step even when an unmodified pivot vanishes.

\begin{lemma}
\label{s1:lem:balanced-pivots}
Fix the later block choices at a binary split, and let $C_0$ and $C_1$ be the two odd branch matrices. Their selected pairs occupy the same two positions in their respective coordinate lists, and deleting these pairs leaves the same ordered residual matrix. Let $C_0(\eta)$ and $C_1(-\eta)$ be obtained by changing the corresponding internal pair entries by $+\eta$ and $-\eta$, respectively, together with the changes required by skew symmetry. Then
\begin{align}
 \iota_0\qv(C_0(\eta))+\iota_1\qv(C_1(-\eta))
 &=\iota_0\qv(C_0)+\iota_1\qv(C_1),
 \label{s1:eq:balanced-sum-identity}
\end{align}
where $\iota_0$ and $\iota_1$ embed the branch vectors into their common expanded coordinate list.
\end{lemma}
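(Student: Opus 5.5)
The plan is to reduce the claim to a statement about Pfaffian vectors that are affine in the perturbation, and then to show that the affine parts cancel after embedding. Place the two selected coordinates of each branch first, in the same positions. The perturbation changes only the entry $(C)_{12}$ and its skew partner $(C)_{21}$. Deleting row and column $i$ gives a matrix in which the perturbed pair survives whenever $i$ is not one of the two selected coordinates. The Pfaffian is affine in a single entry pair: expanding along the first row gives $\pf X(\eta)=\pf X+\eta\,\pf X^{\widehat{12}}$, where $X^{\widehat{12}}$ deletes both selected coordinates. For $i\in\{1,2\}$, the perturbed entry is deleted, so those components do not change.

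Hence, for each branch $s$ and each coordinate $i$ outside the selected pair,
\begin{align}
 \qv_i(C_s(\pm\eta))-\qv_i(C_s)=\pm\eta\,(-1)^{i+1}\pf\bigl(C_s^{\hat i}\bigr)^{\widehat{12}} .
\end{align}
Deleting the selected pair and the residual coordinate $i$ from $C_0$ and from $C_1$ yields the same matrix. This holds by hypothesis, because the ordered residual matrices coincide after the pair is removed. The index $i$ refers to the same residual coordinate in both branches, and it sits at the same position, since the pair occupies identical positions. The sign $(-1)^{i+1}$, together with the expansion sign from removing the pair when $i>2$, is therefore identical in both branches. The changes are $+\eta\,\rho_i$ in branch $0$ and $-\eta\,\rho_i$ in branch $1$ for a common scalar $\rho_i$.

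Under $\iota_0$ and $\iota_1$, residual coordinates go to the same expanded positions, while the selected-pair components go to disjoint positions and are unchanged. Summing the two embedded vectors therefore cancels every change, which proves Eq.~\eqref{s1:eq:balanced-sum-identity}.

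The main obstacle is the sign bookkeeping in the affine expansion. I must check that the sign in front of $\pf(\cdot)^{\widehat{12}}$ depends only on the position of $i$ relative to the pair and not on the branch. To handle this, I would apply the row expansion to the submatrix $C_s^{\hat i}$, where the pair still sits at positions $1,2$ because $i>2$. The coefficient of $\eta$ is then exactly $\pf$ of that submatrix with its first two coordinates deleted, and it carries no extra sign. This yields equality of the $\rho_i$ across branches.
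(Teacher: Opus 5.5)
Your proposal is correct and follows essentially the same argument as the paper. Components on the selected pair are unchanged because deleting either endpoint removes the corrected edge. At every shared coordinate the Pfaffian is affine in the internal entry, with coefficient equal to the signed Pfaffian of the common residual with that coordinate deleted, and this coefficient agrees across branches. Hence the $\pm\eta$ changes cancel after embedding, and your version simply makes the paper's terse sign bookkeeping explicit.

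The one step you leave implicit, moving the pair to the front, is harmless. The same permutation $\Pi$ serves both branches, and $\qv$ transforms under it by the common factor $\det(\Pi)\Pi^{T}$.
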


\begin{proof}
Fix all later block choices and compare the alternatives of the next pair under Eq.~\eqref{s1:eq:balanced-pivot}. Since deleting a coordinate on the selected pair removes the corrected edge, the component at that coordinate does not change. At any shared coordinate, the Pfaffian is affine in that pair's internal entry. Deleting the two endpoints of that edge leaves the same ordered residual matrix for both alternatives. Since each alternative occupies the same two positions in its own branch list, the signed coefficients agree. The changes $+\eta$ and $-\eta$ therefore cancel in the embedded sum. Summing over later choices proves the claimed invariance.

\end{proof}

These identities allow us to prove that the recursive procedure returns the required sum, including when some branch matrices are singular.

\begin{lemma}
\label{s1:lem:grouped-pf-correctness}
Consider a complex skew-symmetric matrix whose coordinates consist of a fixed common list and two disjoint ordered pairs for each of $g$ blocks, as in Step 1 of Sec.~\ref{s1:sec:pf-routine}. Suppose that every branch obtained by selecting one pair from each block has odd order. The procedure computes the unweighted embedded sum of their maximal-Pfaffian vectors. In particular, for the matrices $C(J_y)$ it returns $(u,v)$ in Eq.~\eqref{s1:eq:grouped-vector}.
\end{lemma}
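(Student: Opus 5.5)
We argue by induction on the number $r$ of unresolved blocks, with the invariant that every recursive call, entered with a residual expanded matrix and a list of remaining block pairs, returns the embedded sum of the maximal-Pfaffian vectors of all its branch matrices, in the coordinate order of that call. The top-level statement then follows by taking $r=g$ and the matrices $C(J_y)$, since Step 1 realises each $C(J_y)$ as the principal submatrix of the expanded matrix that selects one pair per block in the order of Eq.~\eqref{s1:eq:residual-input}. The sum is then $(u,v)$ by the definition in Eq.~\eqref{s1:eq:grouped-vector}.

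\textbf{Common sweep.} First we show that a common-pair elimination commutes with branching. Each branch $C_y$ contains the common pair with the same pivot $p$, and its cross block $B_y$ is a column restriction of the expanded $B$. The Schur complement of the expanded matrix, restricted to the branch's variable coordinates, is therefore exactly the Schur complement $S_y$ of $C_y$, because restriction commutes with $T+B^TP^{-1}B$. Lemma~\ref{s1:lem:pf-vector-lift}, together with its linearity remark, then lifts the embedded sum of the $\qv(S_y)$ to the embedded sum of the $\qv(C_y)$. Permutations are handled by the congruence identity~\eqref{s1:eq:pf-congruence}. The sweep terminates because each step removes two coordinates, leaving a zero common block of odd size $f$, where oddness follows from the parity hypothesis.

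\textbf{Terminal cases.} These need a rank count. Take a branch matrix $\begin{pmatrix}0_f & B\\ -B^T & T\end{pmatrix}$ of order $f+2r$ and delete one coordinate. If $f>2r+1$, at least $f-1>2r$ common coordinates survive. Every perfect matching must pair each of them with a distinct variable coordinate, but only $2r$ are available, so every entry of $\qv$ vanishes. For $r=0$ the residual is $0_f$; its vector is $(1)$ for $f=1$ by the empty-Pfaffian convention and zero otherwise.

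\textbf{Binary split.} Fix the later choices $y_{b+1},\dots,y_g$. By Lemma~\ref{s1:lem:balanced-pivots}, replacing $p_0,p_1$ by $p_0+\eta,p_1-\eta$ leaves the embedded sum of the two branch vectors unchanged. Since this holds for each fixed set of later choices, summing over them preserves the full embedded sum of the call. The choice of $\eta\notin\{-p_0,p_1\}$ makes both pivots nonzero. Each branch now eliminates its selected pair, which yields a smaller instance of the same form with $r-1$ blocks; its renewed common block is handled by the sweep. The induction hypothesis gives each branch's residual sum. Lemma~\ref{s1:lem:pf-vector-lift} lifts it, and $\operatorname{embed}_{X_b(s)}$ with addition~\eqref{s1:eq:pf-embed-add} reproduces the sum over $y_b$ as well.

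\textbf{Main obstacle.} The delicate point is the embedding consistency in the balanced-pivot step. The corrected entries live in different coordinate pairs $X_b(0)$ and $X_b(1)$, so we must check that "the same position" in each branch list gives identical signed cofactors at shared coordinates; Lemma~\ref{s1:lem:balanced-pivots} supplies this. We must also make sure the lift signs agree with the branch-internal order $J_y$ rather than the expanded order. I would verify this by noting that the expanded ordering restricts to $J_y$ on each branch, so all permutation signs are branch-independent.
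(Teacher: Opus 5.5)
Your proof is correct and follows the paper's own argument. Both are an induction over the recursion: the paper orders it lexicographically in $(r,f)$, whereas you induct on $r$ and treat the common sweep as a terminating inner loop. Both use Lemma~\ref{s1:lem:pf-vector-lift} and its linearity at every pivot, the pairing count for pruning and the $r=0$ base, and Lemma~\ref{s1:lem:balanced-pivots} at each binary split.
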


\begin{proof}
We prove correctness by induction on $(r,f)$ in lexicographic order. A common pivot lowers $f$ by two and has the exact linear lift just proved. Pruning and the $r=0$ base return the correct vectors by the pairing argument in Step 3. A binary split lowers $r$, balanced correction preserves its embedded sum, and lifting, embedding, and adding its two exact results reconstruct that sum. Restoring the initial common sweep gives the required embedded sum, and in particular Eq.~\eqref{s1:eq:grouped-vector}.
\end{proof}

After a selected-pair pivot, write $B_F$ for the columns of the pivot cross block corresponding to the surviving common coordinates. The new common block is $B_F^{T}P^{-1}B_F$, with rank at most two. Since a nonzero $2\times2$ principal pivot removes that rank by the Schur rank identity, one additional common pivot suffices. This bounds the additional common elimination at each node.

\subsubsection{Computing individual output probabilities}
\label{s1:app:scalar}

The preceding construction computes all neighboring output probabilities up to a common positive factor. We now use the same subroutine to compute an individual Born probability $p_\psi^O(x)$ with its absolute normalization. This requires a scalar sum of Gaussian occupation amplitudes. We show that the corresponding Pfaffian sum can be obtained as one component of the vector returned by our subroutine, after adding one isolated common coordinate. We then determine the normalization factor needed to recover the probability. These individual probabilities are used for the endpoint tags in Eq.~\eqref{eq:sampling-posterior-tags} and for the bounded fallback in Sec.~\ref{s1:app:fallback}.

\begin{corollary}
\label{s1:lem:scalar-born-query}
Suppose that $x$ has the same parity as the input particle number $k$, so that $d+k$ is even. The scalar probability is given by
\begin{align}
 S_\psi&:=\sum_{y\in\{0,1\}^g}\pf C(J_y),\qquad
 p_\psi^O(x)=2^{-g}|\kappa|^2|S_\psi|^2.
 \label{s1:eq:scalar-query}
\end{align}
The sum $S_\psi$ can be computed by the shared elimination after placing one isolated common coordinate first in each branch matrix. Moreover, $|\kappa|^2$ can be determined in $O(M^3)$ operations. At incompatible parity, $p_\psi^O(x)=0$.
\end{corollary}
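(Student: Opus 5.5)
The plan is to prove the three parts of the corollary in turn: the probability formula, the reduction of $S_\psi$ to the vector subroutine, and the computation of $|\kappa|^2$.

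\textbf{The formula for $p_\psi^O(x)$.} Since $\hat O$ is unitary, $\langle x|\hat O|\psi\rangle=\langle\zeta|\psi\rangle$ with $|\zeta\rangle=\hat O^\dagger|x\rangle$. Expanding $|\psi\rangle$ by Eq.~\eqref{s1:eq:residual-input} gives $2^{-g/2}\sum_y\langle\zeta|J_y\rangle$, where the suppressed common phase drops out of the modulus. Each term equals $\kappa\,\pf C(J_y)$ by Lem.~\ref{s1:lem:gaussian-coefficients}. Taking the squared modulus gives $2^{-g}|\kappa|^2|S_\psi|^2$.

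The parity condition is handled as follows. $|\zeta\rangle$ has the parity of $x$, and every $|J_y\rangle$ has parity $(-1)^k$. So if the parities differ, the overlap vanishes; this is consistent with the lemma's statement that the amplitude is zero when $d+|J|$ is odd.

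\textbf{Reducing $S_\psi$ to the vector subroutine.} When $d+k$ is even, prepend a new common coordinate $0$ that is isolated, so its row and column are zero. Call the resulting odd matrices $\widetilde C_y=0_1\oplus C(J_y)$.

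By Eq.~\eqref{s1:eq:pf-vector}, the first component is $\qv_1(\widetilde C_y)=\pf C(J_y)$. The other components vanish, because deleting any other coordinate leaves the isolated coordinate unpaired.

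Hence the first entry of the embedded sum $\sum_y\iota_y\qv(\widetilde C_y)$ is exactly $S_\psi$. These augmented matrices satisfy the hypotheses of Lem.~\ref{s1:lem:grouped-pf-correctness}: a common list (the new coordinate, the auxiliary coordinates, and $F$), two pairs per block, and odd order. The lemma therefore computes this sum in $O(M^3+2^g)$ operations.

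The isolated coordinate is never chosen as a pivot, since its pairs are zero. It is simply carried through as one of the $f$ common coordinates, and the pruning rule $f>2r+1$ remains valid.

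\textbf{Computing $|\kappa|^2$.} This is the step I expect to be the main obstacle, because $\kappa$ carries the normalization that Lem.~\ref{s1:lem:gaussian-coefficients} left implicit.

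I would use the normalization of $|\zeta\rangle$ directly. In the normal form \eqref{s1:eq:gaussian-normal-form}, the occupied orbitals $\hat d_i^\dagger$ are orthonormal, and the pairing state lives on their orthogonal complement. Its squared norm is
\begin{align}
\bigl\|e^{\frac12\sum Z^{\rm ket}_{ab}\hat c_a^\dagger\hat c_b^\dagger}|0\rangle\bigr\|^2=\sum_{|J|\text{ even}}|\pf Z^{\rm ket}_{J,J}|^2=\sqrt{\det(I+Z^{\rm ket\dagger}Z^{\rm ket})}.
\end{align}
This is the standard identity $\sum_J\pf(A_{JJ})\overline{\pf(A_{JJ})}=\pf$ of a doubled matrix, equivalently the modulus of $\det(I-A^\ast A)$ for skew $A$. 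The product of $\hat d^\dagger$'s contributes norm one and commutes with the pairing factor, because $D^\dagger Z^{\rm ket}=0$.

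Setting $\||\zeta\rangle\|=1$ then gives
\begin{align}
|\kappa|^2=|\lambda_\zeta|^2=\det(I+Z^\dagger Z)^{-1/2}.
\end{align}
This determinant costs $O(M^3)$, and the square root is an allowed operation.

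The delicate point is verifying this identity with the orthogonality constraint $ZH^\dagger=0$. I would check it by restricting to the complement of the $d$ occupied orbitals, where the pairing Gaussian is an ordinary BCS-type state, and then block-diagonalizing $Z^{\rm ket}$ into $2\times2$ blocks via its Youla form. There each factor contributes $1+|z_i|^2$, so the product over blocks yields $\det(I+Z^\dagger Z)^{1/2}$.
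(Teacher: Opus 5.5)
Your proposal is correct and follows the paper's proof essentially step for step. The probability formula comes from expanding $|\psi\rangle$ with Lem.~\ref{s1:lem:gaussian-coefficients}; $S_\psi$ is read off as the first component of $\qv(0_1\oplus C(J_y))$, with the shared elimination run unchanged; and $|\kappa|^2=\det(I+Z^\dagger Z)^{-1/2}$ follows from the normalization of $|\zeta\rangle$ in a paired-orbital (Youla) basis. One minor point: the $\hat d_i^\dagger$ commute with the even pairing exponent regardless of orthogonality, so $D^\dagger Z^{\rm ket}=0$ is what makes the norm factorize, not what makes the operators commute.
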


\begin{proof}
The scalar probability formula follows by expanding $|\psi\rangle$ and applying Lem.~\ref{s1:lem:gaussian-coefficients}. For the scalar sum, observe that
\begin{align}
 \qv_\star(0_1\oplus C(J_y))=\pf C(J_y).
 \label{s1:eq:scalar-isolated-coordinate}
\end{align}
Here the isolated coordinate is placed \emph{first}. Every other component of that odd matrix's vector vanishes because the isolated coordinate remains after its deletion. Declare $\star$ common and run Sec.~\ref{s1:sec:pf-routine} unchanged; its $\star$ component is $S_\psi$. Thus one invocation of the shared elimination evaluates the scalar sum.

For absolute normalization, use the coefficient construction in Lem.~\ref{s1:lem:gaussian-coefficients}. The occupied orbitals are orthonormal, and the pairing matrix is supported on their orthogonal complement. In a paired orbital basis, the normal-form vector before multiplication by $\lambda_\zeta$ has squared norm $\prod_\nu(1+|z_\nu|^2)=\det(I+Z^\dagger Z)^{1/2}$, where $z_\nu$ are the pair-creation coefficients. Since $|\zeta\rangle$ is normalized and $|\kappa|=|\lambda_\zeta|$, we obtain
\begin{align}
    |\kappa|^2
    &=\det(I+Z^\dagger Z)^{-1/2}.
    \label{s1:eq:scalar-normalization}
\end{align}
The matrix $I+Z^\dagger Z$ is positive definite, and its determinant can be computed in $O(M^3)$ operations. This formula also applies when the vacuum overlap vanishes.
\end{proof}

When comparing probabilities for different input states at a fixed $x$ and a fixed circuit, the same Gaussian coefficient representation applies, so $|\kappa|^2$ is common and may be omitted. The input normalization factor $2^{-g}$ must still be retained, with $g$ equal to the number of intact blocks in the corresponding input. The bounded fallback compares different output strings, whose Gaussian bras generally have different $\kappa$, and therefore requires the absolute normalization above.

\subsubsection{Proof of the probability-query theorem}
\label{s1:sec:active-cost}

We now describe how to compute the two types of probabilities in Thm.~\ref{s1:thm:active-probability-query} using the preceding results. We then analyze the time and memory required for each calculation.

\begin{proof}[Proof of Thm.~\ref{s1:thm:active-probability-query}]
We first check the parity of $x$. If it equals the input parity, all neighboring probabilities are zero. If it differs from the input parity, the individual probability $p_\psi^O(x)$ is zero. For either calculation with compatible parity, construct the matrices $H$ and $Z$ describing $|\zeta\rangle=\hat O^\dagger|x\rangle$ using Lem.~\ref{s1:lem:gaussian-coefficients}.

For the neighboring probabilities, the subroutine in Sec.~\ref{s1:sec:pf-routine} returns $(u,v)$ by Lem.~\ref{s1:lem:grouped-pf-correctness}. Lemma~\ref{s1:lem:ladder-readout}, together with Eqs.~\eqref{eq:sampling-ladder-vectors} and~\eqref{eq:sampling-active-readout}, then gives all neighboring probabilities up to a common positive factor. For the individual probability, Corollary~\ref{s1:lem:scalar-born-query} gives the scalar Pfaffian sum using the same subroutine with one additional isolated coordinate and determines its absolute normalization.

We next bound the computation time. Constructing the Gaussian coefficient matrices and performing the initial elimination of common coordinate pairs cost $O(M^3)$ operations. After this elimination, any node that is not pruned satisfies $f\le2r+1$, where $f$ is the number of remaining common coordinates and $r$ is the number of unresolved blocks. Therefore, its expanded matrix has at most $f+4r\le6r+1$ coordinates.

Each of the two branches eliminates its selected pair and performs at most one additional common-pair elimination, requiring $O(r^2)$ operations. Restoring the eliminated coordinates, embedding the returned vectors, and adding them also take at most $O(r^2)$ operations. Thus, writing $T_r$ for the remaining recursive computation after the initial common elimination, we have
\begin{align}
    T_r &\le2T_{r-1}+O(r^2),
    \qquad T_0=O(1),
    \label{s1:eq:active-pf-cost}
\end{align}
which implies $T_g=O(2^g)$, as in Eq.~\eqref{s1:eq:binary-cost}.

Converting $(u,v)$ into the neighboring probabilities requires $O(M^2)$ additional operations. Hence, the total cost of computing all neighboring probabilities up to a common positive factor is $O(M^3+2^g)$. For the individual probability, the additional isolated coordinate is included among the common coordinates, so the same recursive cost bound applies. Determining $|\kappa|^2$ costs another $O(M^3)$ operations. Thus, the individual probability also requires $O(M^3+2^g)$ operations. Both calculations include singular matrices and zero vacuum overlap.

For memory, process the two branches successively, updating the matrix in place and undoing the updates after each branch. At a node with $r$ unresolved blocks, the saved pivot rows, scalars, permutations, and first branch's returned vector require $O(r)$ words. Summing over the recursive levels gives $O(g^2)$ space. The matrix and the records for the initial common elimination require $O(M^2)$ space. The Gaussian coefficient construction and absolute normalization also use $O(M^2)$ space. Therefore, either probability calculation uses $O(M^2)$ working memory.
\end{proof}

This proves the cost of one probability query. Section~\ref{sec:active-expected} and Appendix~\ref{s1:app:worst-proof} analyze the query counts and total runtimes of the expected-time and worst-case samplers, respectively.

The $O(M^2)$ memory bound applies to one probability query. In the expected-time sampler, storing one state-update record per downward step requires an additional $O(\tau)$ words, where $\tau$ is the length of the sampled instrument path.

\section{Worst-case active sampling: algorithm and proof}
\label{s1:sec:upgrade}

We give the complete implementation and proof of Thm.~\ref{thm:sampling-active-worst}. The algorithm groups a bounded batch of instrument paths, samples the output of a grouped endpoint through a Gaussian mixture, and restores an original path conditional on that output before applying the recovery updates. The calculations below establish the Gaussian replacement, implement the conditional path sampler, and bound the work over the recursion.

At the start of a call, the system has $M=4n$ modes, of which $R:=M-4g$ are resolved and $g$ four-mode blocks remain intact. Let $f_0$ be the resolved-mode occupation string and $\hat G$ the current Gaussian circuit. If $g=0$, use Gaussian occupation sampling. If $R>12g$, use the gate-by-gate sampler in App.~\ref{s1:app:fallback}. Otherwise set $\ell:=R+2$ and perform the following batch.
\begin{enumerate}
    \item \textbf{Generate and group the path.} Draw $\ell$ successive ladder operations using Eq.~\eqref{eq:active-instrument-probabilities}, saving the parent circuit and the intermediate input states. Each operation selects a mode uniformly; its loss or gain outcome is determined by the occupation on a resolved mode and is equiprobable on an intact mode. Classify visits relative to the blocks that were intact at the start of this batch. A Type A path never visits them; assign it an intact block $j$ uniformly. A Type B path visits them exactly twice, both times within the same block $j$. For either type, retain the group label $u=(f,j)$, where $f$ is the endpoint on the initially resolved modes. Every other path is ordinary: retain its full history, whose endpoint has fewer than $g$ intact blocks.
    \item \textbf{Recursively sample the endpoint.} For an ordinary path, use its endpoint with circuit $\hat G$. For a special group $u$, sample a Gaussian state $|\xi\rangle=\hat V_j|0000\rangle$ from the ensemble in App.~\ref{s1:app:ensemble}, using the group weights calculated in App.~\ref{s1:app:group-count}. Set the resolved modes to $f$, replace block $j$ by vacuum, and use the child circuit $\hat G\hat V_j$. In both cases the recursive input has fewer intact blocks.
    \item \textbf{Restore the path and recover.} Let $x$ be the output returned by the recursive call. For an ordinary path, keep its saved history. For a special group, discard the sampled Gaussian component and draw an original path conditioned on $(u,x)$ using App.~\ref{s1:app:bridge}. Reconstruct its intermediate input states, then apply Eq.~\eqref{eq:active-return-rule} in reverse order along the path using the saved parent circuit $\hat G$.
\end{enumerate}
The original Type A paths need not resolve a block: they are restored only after the recursive call, for use in recovery. Recursive progress is ensured by the Gaussian replacement of their grouped endpoint. We next justify this replacement and the path restoration, then combine their costs in App.~\ref{s1:app:worst-proof}.

\subsection{Computing the grouped endpoint state}
\label{s1:app:group-count}

For a special group, all endpoints agree outside the selected block. We calculate the averaged state of that block and show that it contains enough of the uniform even-parity contribution to admit the Gaussian mixture described in the main text. The choice $\ell=R+2$ and the nonterminal condition $R\leq12g$ enter through this weight bound.

Recall that a Type A path visits only initially resolved modes and is assigned a block label $j$ uniformly among the $g$ intact blocks. A Type B path visits initially intact modes exactly twice, both times in the same block $j$. In either case, let $f$ denote the endpoint on the $R$ initially resolved modes. Paths with the same $(f,j)$ form a special group. Visits are classified relative to the start of the batch, even though the first visit to $j$ already resolves that block. Every path outside these two classes resolves at least one initially intact block and retains its full history.

Let $B_t(f)$ count ordered sequences of $t$ resolved-mode flips taking $f_0$ to the specified string $f$. Define
\begin{align}
    b:=\frac{B_\ell(f)}{g},
    \qquad w :=16\binom{\ell}{2}B_{\ell-2}(f).
    \label{s1:eq:group-weights}
\end{align}
These are the Type A and Type B group weights after removing the common factor $M^{-\ell}$.

\begin{lemma}
\label{s1:lem:upgrade-group-weights}
The special group $(f,j)$ has probability $(b+w)/M^\ell$. If this probability is positive, its averaged endpoint state on block $j$ is
\begin{align}
    \hat\omega_{f,j}=\frac{b\hat\rho_\star+w\hat\Pi_e/8}{b+w},
    \qquad \hat\rho_\star:=|\phi_4\rangle\langle\phi_4|,
    \label{s1:eq:group-state}
\end{align}
where $\hat\Pi_e$ is the four-mode even-parity projector. All endpoint states in the group agree outside block $j$. Moreover, for $\ell=R+2$ and $R\le12g$, we have $w\ge4b/3$.
\end{lemma}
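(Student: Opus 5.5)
The plan is to treat the three claims in turn: the group probability, the averaged endpoint state, and the weight bound. Each reduces to counting instrument histories. The count rests on two facts from Eq.~\eqref{eq:active-instrument-probabilities}. First, every step selects a mode with probability $1/M$. Second, on a resolved mode the loss or gain outcome is forced, with conditional probability one, while on a mode of an intact block the occupation expectation is $1/2$, so loss and gain each have probability $1/(2M)$.

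\emph{Type A paths.} Each of the $\ell$ steps hits an initially resolved mode and is a deterministic flip. A path with endpoint $f$ is therefore a word of length $\ell$ over the $R$ resolved modes taking $f_0$ to $f$. Each such word has probability $M^{-\ell}$, and there are $B_\ell(f)$ of them. The uniform block label contributes a factor $1/g$, giving total weight $b\,M^{-\ell}$. Block $j$ is never touched, so its endpoint state is $\hat\rho_\star$.

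\emph{Type B paths.} I would choose the $\binom{\ell}{2}$ time positions of the two visits to block $j$. The remaining $\ell-2$ steps form a resolved word to $f$, contributing $B_{\ell-2}(f)$. The first visit picks one of $4$ modes, and summing its two equiprobable outcomes gives total weight $1/M$. The second visit picks one of $4$ modes of the now-resolved block and is deterministic. This gives $16\binom{\ell}{2}B_{\ell-2}(f)M^{-\ell}=w\,M^{-\ell}$, so the group probability is $(b+w)/M^\ell$ as claimed.

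For the block state, I would show that the first visit produces a uniformly random odd Fock state of the four modes. Loss on $a$ projects onto the component of $|\phi_4\rangle$ containing $a$ and removes $a$; gain projects onto the component lacking $a$ and adds $a$. The result is $|x\oplus e_a\rangle$ with $x\in\{1100,0011\}$ chosen with probability $1/2$ and $a$ uniform. The resulting $2\times4$ strings are exactly the $8$ odd strings, each appearing once. The second visit flips a uniform bit, and each even string has exactly four odd neighbours. The average over Type B histories is therefore the uniform diagonal mixture $\hat\Pi_e/8$, with no coherences because every branch ends in a Fock state. Weighting the two types by $b$ and $w$ gives Eq.~\eqref{s1:eq:group-state}. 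Agreement outside block $j$ holds because the resolved modes end at $f$ by construction. Moreover, the other blocks that were intact at the start of the batch are untouched: Type A paths never visit them, and Type B paths visit only block $j$.

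The main step is the inequality $w\ge 4b/3$, which reduces to the combinatorial bound
\begin{align}
B_\ell(f)\le R\binom{\ell}{2}B_{\ell-2}(f).
\end{align}
To prove it, I would use an injection. Because $\ell=R+2>R$, every word of length $\ell$ over $R$ letters repeats some letter. Take the letter whose second occurrence comes earliest, and delete its first two occurrences. This preserves every letter's parity, so the result is a word of length $\ell-2$ that still takes $f_0$ to $f$. The original word is recovered from the reduced word, the two deleted positions, and the letter. The map is therefore injective into a set of size $R\binom{\ell}{2}B_{\ell-2}(f)$.

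Combining this with $R\le12g$ gives
\begin{align}
b=\frac{B_\ell(f)}{g}\le\frac{R}{g}\binom{\ell}{2}B_{\ell-2}(f)\le12\binom{\ell}{2}B_{\ell-2}(f)=\frac34w.
\end{align}
I expect the only delicate point to be making the injection's reconstruction unambiguous: the reinserted letter must not create an earlier repeat than the one recorded. The choice ``first two occurrences of the letter completing the earliest repeat'' should handle this. If it does not, the weaker bound obtained by dropping uniqueness and counting all repeated pairs still works, because only an upper bound on $B_\ell(f)$ is needed. The edge case $R=0$ is immediate: then $\ell=2$, $B_\ell(f)=0$ and $b=0$.
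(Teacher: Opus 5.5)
Your proposal is correct and matches the paper's proof essentially step for step. It uses the same counting of Type A and Type B histories, the same uniform odd-then-even argument for the averaged local Type B state, and the same pair-deletion injection giving $B_\ell(f)\le R\binom{\ell}{2}B_{\ell-2}(f)$. The reconstruction point you flag as delicate is not actually an issue: the recorded letter and its two positions determine the original word whichever deterministic rule selected them, so injectivity is automatic.
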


\begin{proof}
A Type A sequence has probability $M^{-\ell}$ and receives block label $j$ with probability $1/g$. Its state on block $j$ remains $\hat\rho_\star$. Summing over the $B_\ell(f)$ resolved-mode sequences gives total mass $b/M^\ell$.

For Type B, choose the two positions at which block $j$ is visited, the two local mode indices, and the resolved-mode sequence. There are respectively $\binom{\ell}{2}$, $4^2$, and $B_{\ell-2}(f)$ choices. Summing over the ladder outcomes therefore gives total mass $w/M^\ell$.

The first local visit offers the eight odd occupation strings uniformly: each of the four modes can be selected, and annihilation and creation each have probability $1/2$. The second visit flips a uniformly chosen local mode and yields the eight even-occupation strings uniformly. Equivalently, the $4\cdot2\cdot4=32$ local histories are equiprobable, and each even endpoint has four preimages. Thus, the averaged local endpoint state of Type B is $\hat\Pi_e/8$. Combining the two contributions proves the group probability and Eq.~\eqref{s1:eq:group-state}.

It remains to prove the weight bound. Suppose first that $R>0$. Every resolved-mode sequence of length $\ell=R+2$ repeats an index. Delete two occurrences of a repeated index, chosen by a fixed deterministic rule. The endpoint is unchanged because the two flips cancel. The shorter sequence, the removed index, and its two original positions uniquely reconstruct the original sequence. Consequently,
\begin{align}
    B_\ell(f)
    &\le R\binom{\ell}{2}B_{\ell-2}(f).
    \label{s1:eq:pair-deletion}
\end{align}
If $b>0$, this implies
\begin{align}
    \frac{w}{b}
    &\ge\frac{16g}{R}
    \ge\frac43.
\end{align}
If $b=0$, the claimed bound is immediate. For $R=0$, we have $\ell=2$, $B_2(f)=0$, and $B_0(f)=1$, so again $b=0$. This covers all cases.
\end{proof}

The lemma yields the convex decomposition used in the main text. For every special group of positive probability, write
\begin{align}
    \hat\omega_{f,j}=\eta\hat\sigma+(1-\eta)\frac{\hat\Pi_e}{8},
    \qquad \hat\sigma :=\frac37\hat\rho_\star+\frac47\frac{\hat\Pi_e}{8}, 
    \qquad \eta := \frac{7b}{3(b+w)}.
    \label{s1:eq:gaussian-mixture}
\end{align}
The bound $w\ge4b/3$ ensures $0\le\eta\le1$. 

\subsection{Gaussian mixture}
\label{s1:app:ensemble}

To implement the Gaussian replacement in Sec.~\ref{sec:active-worst}, we need to sample a pure Gaussian state whose average equals $\hat\omega_{f,j}$. Equation~\eqref{s1:eq:gaussian-mixture} expresses this state as a convex combination of $\hat\sigma$ and $\hat\Pi_e/8$. Since $\hat\Pi_e/8$ is already the uniform mixture of the eight even occupation states, it remains to construct an explicit Gaussian decomposition of $\hat\sigma$. The convex-Gaussian criterion of Ref.~\cite[Eqs.~(4)--(5)]{Oszmaniec2014} also explains the coefficient $3/7$: applied to $t\hat\rho_\star+(1-t)\hat\Pi_e/8$, it gives even-sector concurrence $\max\{0,(7t-3)/4\}$, which vanishes for $t\leq3/7$. Here the noise is confined to the even-parity sector. The construction below supplies a fixed ensemble that can be sampled directly at this threshold.

\begin{lemma}
\label{s1:lem:upgrade-gaussian-ensemble}
The state $\hat\sigma$ in Eq.~\eqref{s1:eq:gaussian-mixture} is the uniform mixture of fourteen normalized even pure Gaussian states. These states and their four-mode Gaussian preparations can be fixed in advance.
\end{lemma}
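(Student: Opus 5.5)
The plan is to write down the fourteen states explicitly and check three things: that they are Gaussian, that their uniform average is $\hat\sigma$, and that they can be prepared. First I would decompose $\hat\sigma$ on the eight-dimensional even sector. Write $|\psi_{\rm eq}\rangle=(|1100\rangle+|0011\rangle)/\sqrt2$, which is $|\phi_4\rangle$, and $|\psi_{\rm eq}^-\rangle=(|1100\rangle-|0011\rangle)/\sqrt2$. Since $\hat\Pi_e/8\cdot 4/7=\hat\Pi_e/14$, we get
\begin{align}
\hat\sigma=\tfrac12|\psi_{\rm eq}\rangle\langle\psi_{\rm eq}|+\tfrac12\,\hat Q/7,
\end{align}
where $\hat Q$ is the rank-seven projector onto the orthogonal complement of $|\psi_{\rm eq}\rangle$ in the even sector. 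The weight $1/2$ on $|\psi_{\rm eq}\rangle$ is what makes the threshold $3/7$ natural. It suggests states of the form $(|\psi_{\rm eq}\rangle\pm|\chi_i\rangle)/\sqrt2$ with $|\chi_i\rangle\perp|\psi_{\rm eq}\rangle$. Averaging over the sign cancels the cross terms and leaves $\tfrac12|\psi_{\rm eq}\rangle\langle\psi_{\rm eq}|+\tfrac12|\chi_i\rangle\langle\chi_i|$. It then suffices to choose seven unit vectors $|\chi_i\rangle$ with $\frac17\sum_i|\chi_i\rangle\langle\chi_i|=\hat Q/7$, each making both sign choices Gaussian. This gives $2\times7=14$ states with equal weight.

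Next I would choose the seven $|\chi_i\rangle$ in three pairs plus one singleton:
\begin{align}
|e_\varphi\rangle&=\tfrac{1}{\sqrt2}\bigl(e^{-i\varphi/2}|0000\rangle+e^{i\varphi/2}|1111\rangle\bigr),\\
|f_\varphi\rangle&=\tfrac{1}{\sqrt2}\bigl(e^{-i\varphi/2}|1010\rangle+\epsilon_f e^{i\varphi/2}|0101\rangle\bigr),\\
|h_\varphi\rangle&=\tfrac{1}{\sqrt2}\bigl(e^{-i\varphi/2}|1001\rangle+\epsilon_h e^{i\varphi/2}|0110\rangle\bigr),
\end{align}
each with $\varphi\in\{0,\pi\}$, together with $|\psi_{\rm eq}^-\rangle$. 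Averaging over $\varphi\in\{0,\pi\}$ removes the coherences within each pair. Each pair therefore contributes the uniform projector on its two basis states, and with $|\psi_{\rm eq}^-\rangle$ the seven vectors sum to $\hat Q$. The singleton case is immediate: $(|\psi_{\rm eq}\rangle\pm|\psi_{\rm eq}^-\rangle)/\sqrt2$ equals $|1100\rangle$ or $|0011\rangle$, which are Fock states and hence Gaussian.

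For Gaussianity of the remaining twelve states I would use the standard criterion for even four-mode states. Such a state is Gaussian iff a single quadratic pure-spinor constraint vanishes. Up to convention-dependent signs, this constraint is
\begin{align}
\psi_{0000}\psi_{1111}-\psi_{1100}\psi_{0011}+\psi_{1010}\psi_{0101}-\psi_{1001}\psi_{0110}=0,
\end{align}
equivalently the even-sector concurrence criterion of Ref.~\cite{Oszmaniec2014}. For $(|\psi_{\rm eq}\rangle\pm|\chi\rangle)/\sqrt2$ only two products are nonzero: the $\psi_{1100}\psi_{0011}$ term, equal to $1/4$, and one partner term equal to $e^{-i\varphi/2}e^{i\varphi/2}(\pm1)^2/4$ times the sign $\epsilon$. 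These two products are independent of $\varphi$ and of the sign $\pm$. The signs $\epsilon_f,\epsilon_h$ (and the sign convention of the $e_\varphi$ term, as in the main text's $|g_s(\varphi)\rangle$) are chosen so that they cancel. This also confirms that the main text's $|g_\pm(\varphi)\rangle$ are Gaussian.

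The main obstacle I expect is getting these signs right. They depend on the ordering convention $|x\rangle=(\hat c_1^\dagger)^{x_1}\cdots|0\rangle$. To avoid trusting a remembered formula, I would verify each family directly. For each of the twelve states I would exhibit a Gaussian unitary on the block, a product of two beam-splitter or pairing rotations, mapping $|0000\rangle$ or $|1100\rangle$ to it, and fix $\epsilon_f,\epsilon_h$ from the resulting amplitudes. An equivalent check is that the $8\times8$ Majorana covariance matrix $\Gamma$ satisfies $\Gamma\Gamma^T=I$. These explicit circuits also supply the preparations $\hat V_j$, with $|\xi\rangle=\hat V_j|0000\rangle$ after composing with a fixed occupation flip. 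Since the list depends on no parameters, it is fixed in advance, which completes the lemma.
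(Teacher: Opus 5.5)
Your proposal is correct and is essentially the paper's own proof. The paper likewise averages $(e_0\pm ie_j)/\sqrt2$ over the sign, where $\{ie_j\}_{j=1}^7$ is an orthonormal basis of the even-sector complement of $|\phi_4\rangle$, and certifies Gaussianity with the same quadratic constraint $Q(v)=v_{0000}v_{1111}-v_{0011}v_{1100}+v_{0101}v_{1010}-v_{0110}v_{1001}=0$. In that convention your undetermined signs are $\epsilon_f=+1$ and $\epsilon_h=-1$, and your fourteen states then coincide with the paper's $|\xi_j^\pm\rangle$ up to relabeling and the averaged sign.
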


\begin{proof}
Let the following vectors form an orthonormal basis of the even four-mode subspace:
\begin{align}
    e_0 :=\frac{|0011\rangle+|1100\rangle}{\sqrt2},
    \qquad e_1:=\frac{i(|0011\rangle-|1100\rangle)}{\sqrt2}, \\
    e_2:=\frac{|0000\rangle-|1111\rangle}{\sqrt2},
    \qquad e_3 :=\frac{i(|0000\rangle+|1111\rangle)}{\sqrt2}, \\
    e_4 :=\frac{|0101\rangle-|1010\rangle}{\sqrt2},
    \qquad e_5 :=\frac{i(|0101\rangle+|1010\rangle)}{\sqrt2}, \\
    e_6 :=\frac{|0110\rangle+|1001\rangle}{\sqrt2}, 
    \qquad e_7:=\frac{i(|0110\rangle-|1001\rangle)}{\sqrt2}.
\end{align}
For an even vector $v$, write $v_z:=\langle z|v\rangle$ and define
\begin{align}
    Q(v)
    &:=v_{0000}v_{1111}-v_{0011}v_{1100}
       +v_{0101}v_{1010}-v_{0110}v_{1001}.
    \label{s1:eq:Q}
\end{align}

By the four-mode pure-Gaussian criterion~\cite[Eq.~(9)]{Oszmaniec2014}, a nonzero even vector $v$ is Gaussian if and only if $Q(v)=0$, expressed here in our occupation-basis convention. Define the fourteen normalized states $|\xi_j^\pm\rangle:=(e_0\pm i e_j)/\sqrt2$ for $j=1,\ldots,7$. Direct substitution into Eq.~\eqref{s1:eq:Q} gives $Q(\xi_j^\pm)=0$, so all these states are Gaussian. Averaging over the two signs cancels the cross terms. Since $e_0e_0^\dagger=\hat\rho_\star$ and $\sum_{a=0}^7e_ae_a^\dagger=\hat\Pi_e$, we obtain
\begin{align}
    \frac1{14}\sum_{j=1}^7\sum_{s\in\{-,+\}}
    |\xi_j^s\rangle\langle\xi_j^s|
    &=\frac{6\hat\rho_\star+\hat\Pi_e}{14}
    =\hat\sigma.
\end{align}
Each state is even Gaussian on four modes, so its preparation from vacuum by a Gaussian unitary can be fixed in advance.
\end{proof}

We can now sample the replacement state explicitly. With probability $\eta$, choose uniformly among the fourteen states in the lemma; otherwise, choose uniformly among the eight even occupation states. The average is $\hat\omega_{f,j}$. Replacing block $j$ by vacuum and absorbing the selected state's local Gaussian preparation into the circuit reduces the number of intact blocks by one. This implements the replacement used in Sec.~\ref{sec:active-worst}.

\subsection{Sampling a conditional instrument path}
\label{s1:app:bridge}

The Gaussian replacement preserves the group's output distribution, while the recovery updates require an original instrument history. We show how conditioning that history on the returned output restores the joint distribution needed for recovery, and then give an efficient sampler for this conditional distribution.

Let $\alpha_\lambda$ be the probability of an original batch path $\lambda$, let $\mu(u\mid\lambda)$ be its probability of assignment to group $u$, and let $|\chi_\lambda\rangle$ be its normalized endpoint. An ordinary group contains a single path. Under the saved parent circuit $\hat G$, the path's output law is
\begin{align}
    p_\lambda(x)&:=|\langle x|\hat G|\chi_\lambda\rangle|^2.
    \label{eq:sampling-path-law}
\end{align}
Since Gaussian replacement leaves each group's averaged endpoint state unchanged, choosing the group and sampling its output gives
\begin{align}
    \Pr[u,x]&=\sum_\lambda\alpha_\lambda\mu(u\mid\lambda)p_\lambda(x).
    \label{eq:sampling-group-output}
\end{align}
For a sampled pair $(u,x)$, restore the path from the posterior
\begin{align}
    \Pr[\lambda\mid u,x]&=\frac{\alpha_\lambda\mu(u\mid\lambda)p_\lambda(x)}{\sum_{\lambda'}\alpha_{\lambda'}\mu(u\mid\lambda')p_{\lambda'}(x)}.
    \label{eq:sampling-path-posterior}
\end{align}
The denominator is positive on the sampled support. Multiplying by the group--output probability and summing over $u$, with $\sum_u\mu(u\mid\lambda)=1$, yields
\begin{align}
    \Pr[\lambda,x]&=\alpha_\lambda p_\lambda(x).
    \label{eq:sampling-restored-law}
\end{align}
Thus the restored path and output have exactly the joint law of the original batch, as required for its reverse recovery updates.

We implement this reconstruction in two stages. First, we select an original endpoint state using its prior weight within the group and its Born probability for the returned output. Next, we sample an instrument path leading to that endpoint. A counting recurrence allows us to sample the resolved-mode flips without enumerating their possible sequences, while the operations within the selected block are sampled directly. The reconstructed path determines the intermediate input states needed to apply the recovery updates in reverse order.

For a special group, the possible endpoint tags are the Type A state $|\chi_A\rangle$ and the eight states $|\chi_z\rangle$ with block $j$ replaced by an even occupation string $z$. Writing $p_\chi^G(x):=|\langle x|\hat G|\chi\rangle|^2$, their posterior weights are
\begin{align}
    b\,p_{\chi_A}^G(x),\qquad \frac w8\,p_{\chi_z}^G(x)\quad(|z|\text{ even}).
    \label{eq:sampling-posterior-tags}
\end{align}
These are absolute Born probabilities under the parent circuit. The Gaussian component used in the recursive call does not enter these weights.

\begin{lemma}
\label{s1:lem:upgrade-path-reconstruction}
Given a special group of positive probability and an output $x$ with positive probability under that group's averaged output distribution, the original instrument path conditioned on the group and $x$ can be sampled exactly using at most nine scalar Born-probability queries and $O((R+1)\ell)$ additional operations.
\end{lemma}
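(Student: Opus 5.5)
The plan is to sample the posterior of Eq.~\eqref{eq:sampling-path-posterior} in two stages: first the endpoint tag, then a path consistent with that tag. The key observation is that, inside a special group $u=(f,j)$, the output law $p_\lambda(x)$ depends on the path $\lambda$ only through its endpoint state $|\chi_\lambda\rangle$. All paths in the group agree on the resolved modes (endpoint $f$) and on the other intact blocks, which stay $|\phi_4\rangle$. On block $j$ they either leave $|\phi_4\rangle$ (Type A) or end in an even occupation string $z$ (Type B). The only other point to check is that the ladder-operator signs and normalisations produce at most a global phase. Such a phase does not affect $p_\lambda(x)$, so the endpoint states can be identified as the nine tags $\chi_A$ and $\chi_z$ with $|z|$ even.

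Summing $\alpha_\lambda\mu(u\mid\lambda)$ over each tag class then gives the prior masses, up to the common factor $M^{-\ell}$. Lemma~\ref{s1:lem:upgrade-group-weights} already supplies these: $b$ for Type A and $w/8$ for each $z$, the latter because the $32$ equiprobable local histories give each even endpoint exactly four preimages. The posterior over tags is therefore Eq.~\eqref{eq:sampling-posterior-tags}. I would evaluate it with nine absolute scalar Born queries using Corollary~\ref{s1:lem:scalar-born-query}. Absolute normalisation is needed here because the Type A tag has one more intact block than the others, so the $2^{-g}$ factors differ, while $|\kappa|^2$ is common since $x$ and $\hat G$ are fixed. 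Positivity of the group's averaged output probability guarantees that the normaliser is nonzero. Sample the tag from these weights.

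Conditioned on the tag, the factor $p_\lambda(x)$ is constant, so it remains to sample uniformly among the paths with that tag (each path has prior $M^{-\ell}$).

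\textbf{Type A tag.} Sample a uniform sequence of $\ell$ resolved-mode flips taking $f_0$ to $f$. The label $j$ is already fixed and contributes the constant factor $1/g$. I would do this with the counting recurrence: $B_t(f)$ depends only on the Hamming distance $h=|f\oplus f_0|$, and
\begin{align}
    N_t(h)=h\,N_{t-1}(h-1)+(R-h)\,N_{t-1}(h+1),\qquad N_0(h)=\mathbf 1_{h=0}.
\end{align}
Tabulating $N_t(h)$ for $t\le\ell$ and $h\le R$ costs $O((R+1)\ell)$. The sequence is then drawn step by step. At each step, choose either to flip a currently mismatched mode or a matched one, with probabilities proportional to the corresponding counts. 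Then pick the specific mode uniformly within the chosen class.

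\textbf{Type B tag $z$.} First choose the two visit positions uniformly among $\binom{\ell}{2}$. Next choose uniformly among the four local histories leading to $z$, namely the first mode flipped (the second is forced) together with its ladder outcome. Finally fill the remaining $\ell-2$ positions by the same recurrence, which has length $\ell-2$ and target $f$. Uniformity follows because the count factorises exactly as $\binom{\ell}{2}\cdot4\cdot B_{\ell-2}(f)$.

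Once the path is fixed, its ladder outcomes on resolved modes are deterministic. This determines the intermediate states used by the recovery updates.

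The step I expect to need the most care is the claim that paths with the same tag have identical output laws, i.e.\ that they share one endpoint state up to a phase. This requires tracking fermionic ladder signs through the intervening resolved-mode flips and through the two local operations. Those operations act on disjoint modes and commute up to sign, so only a global phase should result, but this must be checked. Beyond that, the cost bound is immediate: nine queries, an $O((R+1)\ell)$ table, and $O(\ell)$ draws.
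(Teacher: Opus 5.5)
Your proposal is correct and follows the paper's proof essentially step by step. You select one of the nine endpoint tags from the posterior weights $b\,p^G_{\chi_A}(x)$ and $\tfrac{w}{8}p^G_{\chi_z}(x)$, keeping the differing input normalizations $2^{-g}$ and $2^{-(g-1)}$ while the common $|\kappa|^2$ cancels. You then draw a uniform compatible path using the Hamming-distance counting recurrence for the resolved-mode flips, together with a uniform choice of the two visit positions and of one of the four local histories ending at $z$.

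One harmless slip: the ladder outcome at the first visit to an intact block is equiprobable, so a Type B path has prior $M^{-\ell}/2$ rather than $M^{-\ell}$. This factor is common to all paths with a given tag, so your uniformity argument is unaffected.
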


\begin{proof}
We first select an endpoint tag using the posterior weights in Eq.~\eqref{eq:sampling-posterior-tags}. Let $S_A$ and $S_z$ denote the scalar Pfaffian sums in Eq.~\eqref{s1:eq:scalar-query} for the Type A endpoint and the Type B endpoint with block $j$ in the even occupation state $|z\rangle$, respectively. All queries use the saved parent circuit $\hat G$ and the same output $x$, so they share the Gaussian bra $\langle\zeta|=\langle x|\hat G$ and its coefficient representation. The relative weights are therefore
\begin{align}
    b\,2^{-g}|S_A|^2,
    \qquad
    \frac w8\,2^{-(g-1)}|S_z|^2
    \quad (|z|\ \text{even}).
    \label{s1:eq:scalar-tags}
\end{align}
The common factor $|\kappa|^2$ cancels, but the input normalization factors differ because the Type A endpoint has $g$ intact blocks, whereas each Type B endpoint has $g-1$. Zero-prior tags can be skipped. By Thm.~\ref{s1:thm:active-probability-query}, the at most nine queries take $O(M^3+2^g)$ operations in total.

Conditional on the selected endpoint state, the output $x$ no longer affects the path distribution, because all paths with that endpoint have the same Born likelihood. We therefore need to sample a resolved-mode flip sequence with the prescribed endpoint and, for Type B, a compatible local history.

For a specified endpoint at Hamming distance $\Delta$ from the current resolved occupation string, let $B_t(\Delta)$ count length-$t$ flip sequences reaching that particular endpoint. It does not count all endpoints at distance $\Delta$. Symmetry implies that this number depends only on $\Delta$. Set out-of-range values to zero and compute
\begin{align}
    B_0(\Delta) =\mathbf1_{\Delta=0},
    \qquad B_t(\Delta) =\Delta B_{t-1}(\Delta-1) +(R-\Delta)B_{t-1}(\Delta+1).
    \label{s1:eq:bridge}
\end{align}
The first flipped coordinate is mismatched with the endpoint in $\Delta$ cases and matched in $R-\Delta$ cases, proving the recurrence. The table for $0\le t\le\ell$ and $0\le\Delta\le R$ also supplies the counts defining $b$ and $w$. For $R>0$, constructing it takes $O(R\ell)$ operations. For $R=0$, only the empty sequence is possible, and we handle this case directly.

To generate a sequence with $t$ steps remaining, assign each mismatched coordinate weight $B_{t-1}(\Delta-1)$ and each matched coordinate weight $B_{t-1}(\Delta+1)$. Select a coordinate according to these weights, flip it, and continue. The weights sum to $B_t(\Delta)$. The conditional probabilities along any admissible sequence telescope to the reciprocal of the initial count, so the resulting sequence is uniform. No rejection is required, and a positive-count state never selects a zero-count move. Scanning all resolved coordinates at each step costs $O(R\ell)$ operations.

For the Type A tag, generate a resolved-mode sequence of length $\ell$. For a Type B endpoint $z$, generate one of length $\ell-2$ and choose the two local-visit positions uniformly among the $\binom{\ell}{2}$ pairs of positions. Of the 32 local histories described in Lem.~\ref{s1:lem:upgrade-group-weights}, exactly four end at $z$. Choose one of these four uniformly and interleave it with the resolved-mode sequence.

Every compatible choice of positions, local history, and resolved-mode sequence has the same prior probability. Thus, this procedure gives the required conditional path distribution. Including the local history and interleaving, the additional work is $O((R+1)\ell)$.
\end{proof}

This procedure samples Eq.~\eqref{eq:sampling-path-posterior} without enumerating the paths. Equation~\eqref{eq:sampling-restored-law} then ensures that the reconstructed intermediate states can be used in the original recovery updates.

\subsection{Gate-by-gate sampling}
\label{s1:app:fallback}

The recursion terminates with a gate-by-gate sampler when $R>12g$. We review gate-by-gate sampling for fermionic occupation measurements~\cite{bravyi2022simulate} and then explain how our probability-query algorithm supplies the weights required by this method.

We use the conditional update of Ref.~\cite[Algorithm 2]{bravyi2022simulate}. Its application to fermionic modes follows from one property: a parity-preserving gate $\hat G$ supported on $S$ commutes with every occupation projector outside $S$, including when the modes in $S$ are nonadjacent. If $q^-(u,v)$ and $q^+(u,v)$ are the occupation distributions before and after the gate, with $u$ outside $S$ and $v$ on $S$, then
\begin{align}
    \sum_v q^+(u,v)&=\sum_v q^-(u,v).
    \label{s1:eq:fallback-marginal}
\end{align}
Given a sample from $q^-$, retain $u$ and redraw $v$ from
\begin{align}
    \Pr(v\mid u)&=\frac{q^+(u,v)}{\sum_{v'}q^+(u,v')}.
    \label{s1:eq:fallback-update}
\end{align}
The unchanged outside marginal and the new conditional distribution give exactly $q^+$. The denominator is positive on sampled inputs. Starting from an exact sample of the input occupation distribution and applying this update to successive gates therefore gives an exact output sample.

We now apply this method to our inputs. Each Gaussian gate in a Givens decomposition acts on at most two fermionic modes, so one update requires at most four occupation probabilities. These probabilities can be evaluated using Thm.~\ref{s1:thm:active-probability-query}, giving the following cost bound.

\begin{lemma}
\label{s1:lem:upgrade-fallback}
Let $|\psi\rangle$ be an input of the form in Eq.~\eqref{s1:eq:residual-input}, with $g$ intact blocks on $M$ modes, and let $\hat O$ be a Gaussian unitary specified by $O\in\operatorname{SO}(2M)$. Applying gate-by-gate sampling with the probability-query algorithm of Thm.~\ref{s1:thm:active-probability-query} generates an exact occupation sample of $\hat O|\psi\rangle$ in
\begin{align}
    T_{\rm fallback}(M,g)
    &=O(M^5+M^2 2^g)
    \label{s1:eq:fallback-cost}
\end{align}
arithmetic operations.
\end{lemma}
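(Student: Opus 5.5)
The plan has two parts: first confirm exactness, then count gates and the cost of one update. For exactness, I use the conditional-update principle in Eqs.~\eqref{s1:eq:fallback-marginal}--\eqref{s1:eq:fallback-update}, applied inductively over the gates. Decompose $\hat O$ into a sequence of Gaussian gates $\hat G_1,\ldots,\hat G_L$, each supported on at most two modes. A Givens (QR-type) factorization of $O\in\operatorname{SO}(2M)$ into rotations on pairs of Majorana coordinates gives $L=O(M^2)$ such gates in $O(M^3)$ time. Each rotation either acts within one mode or mixes coordinates from two modes, so its support is at most two modes. If $\det$ constraints leave a residual single-mode parity factor, it can be absorbed into the last gate or handled as a diagonal phase that does not change occupation probabilities.

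The initial sample comes from the occupation distribution of $|\psi\rangle$ itself. Resolved modes are deterministic. Each intact block yields $X_b(0)$ or $X_b(1)$ with probability $1/2$. This step costs $O(M)$.

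Next, let $\hat O_t:=\hat G_t\cdots\hat G_1$ and $q_t(x)=p^{O_t}_\psi(x)$. Each gate preserves parity and commutes with occupation projectors outside its support. Therefore the marginal on the complement is unchanged, as in Eq.~\eqref{s1:eq:fallback-marginal}. Redrawing the local bits from $q_t(u,\cdot)/\sum_{v'}q_t(u,v')$ then yields an exact sample of $q_t$. Induction on $t$ gives an exact sample of $q_L=p^O_\psi$. The denominators are positive on the sampled support because the outside marginal of $q_{t-1}$ and $q_t$ coincide.

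For the cost, each update needs $q_t(u,v)$ for the at most four local assignments $v$; by parity only two are actually nonzero. These are absolute scalar Born probabilities for the circuit $\hat O_t$. Part 2 of Thm.~\ref{s1:thm:active-probability-query} gives each in $O(M^3+2^g)$ operations, with $g$ unchanged along the sweep since the input is fixed. Only ratios within one update are needed, and all four queries share the circuit $\hat O_t$. The cost per gate is therefore $O(M^3+2^g)$, and summing over $L=O(M^2)$ gates gives $O(M^5+M^2 2^g)$, which is Eq.~\eqref{s1:eq:fallback-cost}.

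The main obstacle I expect is bookkeeping rather than mathematics. The Majorana matrix of the prefix circuit $O_t$ must be available for each query without recomputing it from scratch. I would maintain $O_t$ incrementally: multiplying by a two-mode rotation updates only $O(M)$ entries, which fits within the per-gate budget. Rebuilding the coefficient representation $(H,Z,\kappa)$ of Lem.~\ref{s1:lem:gaussian-coefficients} for each query is already included in the $O(M^3)$ term, so no further sharing is required. A secondary point to check is that Thm.~\ref{s1:thm:active-probability-query} covers singular cases and zero vacuum overlap. This matters because intermediate prefix circuits may produce degenerate Gaussian bras, and the theorem as stated handles exactly this.
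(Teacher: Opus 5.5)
Your proposal is correct and follows the paper's proof essentially step for step: a Givens decomposition of $O$ into $O(M^2)$ gates on at most two modes in $O(M^3)$ time, exact sampling of the input occupations, the conditional redraw of Eq.~\eqref{s1:eq:fallback-update} using the fixed coherent input $|\psi\rangle$ and the current prefix circuit, and at most four absolute Born-probability queries per gate at cost $O(M^3+2^g)$ each, giving $O(M^5+M^2 2^g)$.

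Two minor remarks, neither of which affects the result. First, your residual-diagonal caveat is unnecessary: choosing each Givens angle so that the pivot is nonnegative leaves the identity when $\det O=1$. Also, a leftover pair of $-1$ entries on Majorana coordinates of two different modes would flip occupations rather than act as a phase. Second, your observation that only ratios within one update are needed does not let you drop the normalization. The bras $\langle x(u,v)|\hat O_t$ for different $v$ have different $\kappa$, which is why the absolute query of Part~2 of Thm.~\ref{s1:thm:active-probability-query}, as you use it, is the right tool.
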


\begin{proof}
We first sample the input occupation distribution by independently choosing one of the two occupation strings in each intact block with equal probability and retaining the resolved occupations. Next, decompose $O$ into $O(M^2)$ coordinate-plane Givens rotations, which takes $O(M^3)$ operations. Each corresponding Gaussian gate acts on at most two fermionic modes. We apply the update in Eq.~\eqref{s1:eq:fallback-update} successively to these gates.

At each step, let $\hat H$ be the Gaussian prefix consisting of all gates processed so far, including the gate currently being applied. The required weights are
\begin{align}
    q^+(u,v)
    &=|\langle x(u,v)|\hat H|\psi\rangle|^2.
\end{align}
Thus, every probability calculation uses the original coherent input $|\psi\rangle$ and the current circuit prefix. The sampled occupation string is used only for the conditional update; it does not replace $|\psi\rangle$ in this calculation.

For each gate, Thm.~\ref{s1:thm:active-probability-query} evaluates the at most four required absolute Born probabilities in $O(M^3+2^g)$ operations. Applying the update to all $O(M^2)$ gates therefore costs $O(M^5+M^2 2^g)$. The circuit decomposition and prefix-matrix updates are absorbed in this bound. Exactness follows from the gate-by-gate argument above.
\end{proof}

\subsection{Correctness and cost of the worst-case sampler}
\label{s1:app:worst-proof}

The preceding constructions make every step of the batched sampler explicit. We now prove Thm.~\ref{thm:sampling-active-worst} and sum the costs over its recursive calls.

\begin{proof}[Proof of Thm.~\ref{thm:sampling-active-worst}]
\emph{Exactness and recursive progress.} We argue by induction on the number $g$ of intact blocks. A terminal call is exact by Gaussian occupation sampling or Lem.~\ref{s1:lem:upgrade-fallback}. In a nonterminal call, every ordinary path has fewer intact blocks at its endpoint. For a special group, Lems.~\ref{s1:lem:upgrade-group-weights} and~\ref{s1:lem:upgrade-gaussian-ensemble} preserve the group's averaged endpoint while replacing one intact block by a sampled Gaussian state. Thus, the induction hypothesis applies to every recursive child and gives the correct group output law.

Lem.~\ref{s1:lem:upgrade-path-reconstruction} then samples the posterior in Eq.~\eqref{eq:sampling-path-posterior}. By Eq.~\eqref{eq:sampling-restored-law}, the restored path and the returned output have the same joint law as in the original $\ell$-step instrument procedure. Applying its recovery updates in reverse order therefore recovers the parent distribution exactly, by Thm.~\ref{thm:instrument-recovery}. A restored Type A path is used only in recovery and creates no new recursive call. Since every nonterminal call reduces $g$ by at least one, there are at most $n$ such calls and a single terminal call.

\emph{Cost of one batch.} A length-$\ell$ batch uses $\ell$ neighboring-probability queries for recovery and at most nine scalar Born-probability queries for path restoration. Recording the batch and reconstructing its intermediate input states takes $O(M\ell)$ operations; preparing the child circuit takes at most $O(M^2)$. The counting and conditional-path procedure of Lem.~\ref{s1:lem:upgrade-path-reconstruction} uses $O((R+1)\ell)$ additional operations. The probability-query bound of Thm.~\ref{s1:thm:active-probability-query} therefore gives total nonrecursive work
\begin{align}
    O\!\left(\ell M^3+M\ell+R\ell+(\ell+1)2^g\right).
    \label{eq:sampling-batch-cost}
\end{align}
The group weights are obtained from the same counting table used for path restoration, and sampling the fixed four-mode Gaussian ensemble has a constant cost.

\emph{Total cost.} At the $s$-th nonterminal call, write $k_s:=n-g_s$. Then $R_s=4k_s$ and $\ell_s=4k_s+2$. Strict decrease of $g_s$ makes the $k_s$ distinct, so the exponential contribution is bounded by
\begin{align}
    \sum_s(\ell_s+1)2^{g_s}&\leq2^n\sum_{k=0}^{\infty}\frac{4k+3}{2^k}=14\,2^n.
    \label{eq:sampling-worst-geometric}
\end{align}
Also $\sum_s\ell_s=O(n^2)$ and $M=4n$, so the accumulated polynomial work is $O(n^5)$.

If the terminal call has $g=0$, Gaussian occupation sampling has polynomial cost within this bound. Otherwise $R>12g$, which implies $g<n/4$. Lem.~\ref{s1:lem:upgrade-fallback} then bounds the terminal cost by
\begin{align}
    O(M^5+M^2 2^g)&=O(n^5+2^n).
    \label{eq:sampling-fallback-cost}
\end{align}
Adding the batch costs and the single terminal cost proves the claimed per-run bound.
\end{proof}

\section{Proof for exact expectation-value problem}\label{App:Main}

In the main text, we discussed the exact expectation-value problem for even-parity block-product inputs with constant block sizes and presented a dynamic programming algorithm. We illustrated its construction using $|\Psi_n\rangle=|\phi_4\rangle^{\otimes n}$. In this section, we provide a detailed derivation and proof of the algorithm. Here, we extend the main-text result to matrix elements between two possibly different even-parity block-product states with arbitrary block sizes. The expectation values considered in the main text are recovered by taking the two states to be identical.

\subsection{Restatement of the theorem}
\label{App:Mean theorem}

We restate the main-text result in a more general form, allowing different block sizes and matrix elements between two possibly different block-product states. Let $L_1,\ldots,L_n$ be a partition of the $M$ fermionic modes into disjoint blocks, with $|L_q|=m_q$ and $M=\sum_{q=1}^n m_q$. We order the modes block by block. Consider the block-product states
\begin{align}
    |\Psi\rangle=\bigotimes_{q=1}^n|\psi_q\rangle,
    \qquad |\Phi\rangle =\bigotimes_{q=1}^n|\phi_q\rangle,
\end{align}
where $|\psi_q\rangle$ and $|\phi_q\rangle$ are normalized pure states on block $q$. Their amplitudes in the local occupation basis are supplied as input. For simplicity, we assume that both states have even parity on every block. We explain below how the argument extends to blocks of either parity, provided that $|\psi_q\rangle$ and $|\phi_q\rangle$ have the same definite parity for each $q$.

Let $\hat O$ be an $M$-mode Gaussian unitary specified by its Majorana matrix $O\in\operatorname{SO}(2M)$, and let $\hat\Gamma$ be a Hermitian Majorana string of length $1\leq k\leq M$, i.e., $\hat\Gamma=i^{k(k-1)/2} \hat{\gamma}_{a_1}\cdots\hat{\gamma}_{a_k}$, where $1\leq a_1<\cdots<a_k\leq2M$. Then, our goal is to compute
\begin{align}
    \langle \hat\Gamma\rangle &:=\langle\Psi| \hat O^\dagger\hat\Gamma\hat O |\Phi\rangle.
\end{align}
Taking $|\Phi\rangle=|\Psi\rangle$ recovers the expectation value considered in the main text. We now state the algorithmic result.

\begin{theorem}
\label{Thm:Expectation value of a Majorana string}
Under the assumptions above, there is a classical algorithm that exactly computes $\langle\hat\Gamma\rangle$ in $O(k2^k\sum_{q=1}^n2^{2m_q})$ time. In particular, if $m_q=O(\log n)$ for all $q$, then the time complexity becomes $O(k2^k\poly(n))$. Additionally, if $m_q= O(1)$ for all $q$, then the result in the main text is recovered.
\end{theorem}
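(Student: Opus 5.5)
The plan is to generalize the proof sketch of Thm.~\ref{thm:exact-expectation-value} to arbitrary block sizes and distinct bra/ket states. First write $\hat\eta_t:=\hat O^\dagger\hat\gamma_{a_t}\hat O=\sum_{q=1}^n\hat\eta_t^{[q]}$, where $\hat\eta_t^{[q]}=\sum_{r\in R_q}O_{a_t r}\hat\gamma_r$ and $R_q$ is the set of $2m_q$ Majorana indices of block $L_q$. Then
\begin{align}
    \langle\hat\Gamma\rangle=i^{k(k-1)/2}\sum_{\tau:[k]\to[n]}\langle\Psi|\hat\eta_1^{[\tau(1)]}\cdots\hat\eta_k^{[\tau(k)]}|\Phi\rangle .
\end{align}
For a fixed assignment $\tau$, reorder the factors so that those on block $1$ come first, then block $2$, and so on, keeping the original order within each block. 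Majoranas on different blocks anticommute, so this reordering produces the sign of the corresponding shuffle. After reordering, the matrix element factorizes as $\prod_q c_q(\tau^{-1}(q))$, where
\begin{align}
    c_q(A)=\langle\psi_q|\hat\eta_A^{[q]}|\phi_q\rangle .
\end{align}
Factorization holds because each $|\psi_q\rangle$ and $|\phi_q\rangle$ is even. Even operators on one block therefore pass through the other blocks without extra signs, and odd $|A|$ gives $c_q(A)=0$.

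The shuffle sign is encoded by auxiliary modes, exactly as in Lem.~\ref{lem:mean-aux-factorization}. Set $\hat M_q=\sum_{A\subseteq[k]}c_q(A)\hat b_A^\dagger$. Then
\begin{align}
    \langle\varnothing|\,\hat b_{[k]}\,\hat M_n\cdots\hat M_1|\varnothing\rangle
\end{align}
reproduces the signed sum. Each product $\hat b_{A_1}^\dagger\cdots\hat b_{A_n}^\dagger$ has the same reordering sign to $\hat b_{[k]}^\dagger$ as the physical shuffle, and nilpotence excludes overlapping $A_q$. Because only even $A$ contribute, the $\hat M_q$ commute, so the block order in the product does not matter. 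The recursion $|w_q\rangle=\hat M_q|w_{q-1}\rangle$ then yields $\langle\hat\Gamma\rangle=i^{k(k-1)/2}f_n([k])$.

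The cost comes from Lem.~\ref{lem:mean-table-update}, applied now with block-$q$ local vectors of dimension $2^{m_q}$. Each update in step 2 of that lemma is a local matrix–vector product with $\hat\eta_t^{[q]}$. This product costs $O(2^{2m_q})$ if $\hat\eta_t^{[q]}$ is formed densely. Since the operator is a sum of $2m_q$ Majorana monomials, each a signed permutation, it can also be applied in $O(m_q2^{m_q})$; I would quote the dense bound, as the theorem does. There are $k2^{k-1}$ updates per block, and the initialization and the contraction against $\langle\psi_q|$ cost $O(2^k2^{m_q})$. Forming the $k$ local operators costs $O(k\,m_q2^{2m_q})$, which is dominated. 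Summing over $q$ gives $O(k2^k\sum_q2^{2m_q})$. With $m_q=O(\log n)$ this is $O(k2^k\poly(n))$. With $m_q=O(1)$ it reduces to $O(nk2^k)$, and adding $O(M^2)$ for reading $O$ and the complement reduction gives the main-text bound.

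The main obstacle is correctness of the update: that the in-place passes $t=k,\dots,1$ with sign $(-1)^{|S\cap[t-1]|}$ realize $\langle\psi_q|\prod_t(\hat I+\hat\eta_t^{[q]}\otimes\hat b_t^\dagger)|\phi_q\rangle$. This requires checking two things carefully.
\begin{enumerate}
    \item Applying the factors right to left builds each $\hat\eta_A^{[q]}|\phi_q\rangle$ with its factors in increasing order.
    \item The sign from $\hat b_t^\dagger$ acting on $|S\setminus\{t\}\rangle$ is the stated one, since no larger index has been inserted yet when $t$ is processed.
\end{enumerate}
Mixed parities also need a remark. For odd blocks with equal bra and ket parity, the cross-block signs acquire an extra factor from passing factors through odd block states. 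I would absorb this by attaching one fixed auxiliary Grassmann-like sign per odd block, or by conjugating with the block parity operators, and check that the count of $2^k$ table entries is unchanged.
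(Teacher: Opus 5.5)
Your proposal is correct and follows the paper's proof essentially step for step: block decomposition of $\hat\eta_t$, parity-based block factorization, and auxiliary creation operators whose reordering signs reproduce the physical shuffle signs (the paper packages your sign-matching argument as the operator identity of Lem.~\ref{lem:mean-aux-factorization}); then commuting even $\hat M_q$ and the in-place $t=k,\ldots,1$ update, costing $O(k2^k2^{2m_q})$ per block. Three small fixes are needed: each $\hat\eta_t^{[q]}$ is assembled in $O(2^{2m_q})$, not $O(m_q2^{2m_q})$, since every local Majorana matrix has one nonzero entry per row (your extra $m_q$ would not be dominated when $m_q\gg2^k$); the sign $(-1)^{|S\cap[t-1]|}$ is simply the canonical sign of $\hat b_t^\dagger|S\setminus\{t\}\rangle$ whatever indices are present, and larger indices have in fact already been inserted by the passes $t'>t$; and odd blocks with matching bra and ket parity need no extra sign device, because any assignment with odd $|A_q|$ contributes zero and the surviving even local factors carry no Jordan--Wigner strings, so the recurrence is unchanged, as the paper notes.
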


For $k=0$, the string is the identity, and the target quantity is computed directly as
\begin{align}
    \langle \hat I\rangle
    &=\langle\Psi|\Phi\rangle
      =\prod_{q=1}^n\langle\psi_q|\phi_q\rangle.
\end{align}
Moreover, strings longer than $M$ are handled using the complementary-string relation proved below.

We first record the parity reduction and the block factorization identity. We then represent the subset table in an auxiliary Fock space, where creation operators account for the fermionic signs. This representation gives a factorized block update whose direct implementation proves Thm.~\ref{Thm:Expectation value of a Majorana string}.

\subsection{Properties of Majorana matrix elements}

We use two consequences of fermionic parity and the Jordan--Wigner representation: a relation between complementary Majorana strings and factorization over even-parity input blocks. We record the signs explicitly in our conventions; for the underlying Majorana algebra and parity structure, see Refs.~\cite{Bravyi2005,Parity}.

For $S=\{s_1<\cdots<s_k\}\subseteq[2M]$, let $\bar S=[2M]\setminus S$ and define
\begin{align}
    \hat\Gamma_S &:= i^{\binom{k}{2}}\hat\gamma_{s_1}\cdots\hat\gamma_{s_k},
    \qquad \hat\Pi:=(-i)^M\hat\gamma_1\cdots\hat\gamma_{2M},
\end{align}
with $\hat\Gamma_{\varnothing}=\hat I$. If $k$ is even, the canonical anticommutation relations give
\begin{align}
    \hat\Gamma_S\hat\Gamma_{\bar S} &= \eta_M(S)\hat\Pi,
    \qquad \eta_M(S):=(-1)^{M+\operatorname{inv}(S,\bar S)},
\end{align}
where $\operatorname{inv}(S,\bar S)$ counts pairs $(s,t)\in S\times\bar S$ with $s>t$. Since $\hat\Gamma_{\bar S}^2=\hat I$ and $\hat O|\Psi\rangle$ has even parity, this yields
\begin{align}
    \langle\hat\Gamma_S\rangle &= \eta_M(S)\langle\hat\Gamma_{\bar S}\rangle.
    \label{eq:mean-complementary-string}
\end{align}
For odd $k$, both matrix elements vanish because $\hat O|\Psi\rangle$ and $\hat O|\Phi\rangle$ have even parity. Thus, Eq.~\eqref{eq:mean-complementary-string} allows us to use a string of length at most $M$, with a sign correction computable in $O(M)$ time.

The block factorization follows from the Jordan--Wigner representation and its tensor-product structure for locally even operators~\cite{JordanWigner,Parity}; see in particular Secs.~2 and 7.1--7.2 of Ref.~\cite{Parity}. Let $\hat P_q$ denote the parity operator on block $L_q$, and let $\hat A_q$ be supported on that block and have definite parity, $\hat P_q\hat A_q\hat P_q=(-1)^{\sigma_q}\hat A_q$ with $\sigma_q\in\{0,1\}$. Because the modes are ordered block by block, its Jordan--Wigner representation consists of its local matrix preceded by $\hat P_1^{\sigma_q}\otimes\cdots\otimes \hat P_{q-1}^{\sigma_q}$. In the ordered product $\hat A_1\cdots \hat A_n$, these parity factors act trivially on the even-parity block kets, so
\begin{align}
    \langle\Psi|\hat A_1\cdots \hat A_n|\Phi\rangle &= \prod_{q=1}^n\langle\psi_q|\hat A_q|\phi_q\rangle.
    \label{eq:mean-block-factorization}
\end{align}
On the right-hand side, $\hat A_q$ denotes its matrix on block $q$. If any $\hat A_q$ is odd, both sides vanish. Each product of block-local Majorana linear combinations has definite parity, so Eq.~\eqref{eq:mean-block-factorization} applies to every term obtained by grouping the transformed Majorana factors by input block.

The block factorization also holds when corresponding bra and ket blocks have the same definite parity, whether even or odd, because matrix elements of odd local operators still vanish. Therefore, the recurrence and table update below remain unchanged, with the same runtime. If the common total parity of the two states is odd, the right-hand side of Eq.~\eqref{eq:mean-complementary-string} acquires an additional minus sign. Odd-length matrix elements still vanish.

\subsection{Auxiliary-fermion representation}
\label{App:Mean recurrence}

We encode subsets of the transformed Majorana factors as occupation states of auxiliary fermions. Their creation operators keep track of both which factors have been used and the signs from reordering them. This turns the expansion over assignments to input blocks into a product of operators on a $2^k$-dimensional space.

For the indices $a_1<\cdots<a_k$ of $\hat\Gamma$, define
\begin{align}
    \hat\eta_t&:=\hat O^\dagger\hat\gamma_{a_t}\hat O=\sum_{q=1}^n\hat\eta_t^{[q]},\qquad \hat\eta_t^{[q]}:=\sum_{r\in\widetilde L_q}O_{a_t r}\hat\gamma_r,
\end{align}
where $\widetilde L_q:=\{2j-1,2j:j\in L_q\}$ is the Majorana operator index set of block $q$. For $A=\{t_1<\cdots<t_s\}\subseteq[k]$, let
\begin{align}
    \hat\eta_A^{[q]}&:=\hat\eta_{t_1}^{[q]}\cdots\hat\eta_{t_s}^{[q]},\qquad c_q(A):=\langle\psi_q|\hat\eta_A^{[q]}|\phi_q\rangle.
    \label{Eq:Local eta}
\end{align}
As in Eq.~\eqref{eq:mean-block-factorization}, the matrix element uses the local representation on block $q$. We set $\hat\eta_{\varnothing}^{[q]}=\hat I$, so $c_q(\varnothing)=\langle\psi_q|\phi_q\rangle$. Matching block parities imply $c_q(A)=0$ for odd $|A|$.

Let $\mathcal F_k$ be the Fock space of $k$ auxiliary modes with creation operators $\hat b_1^\dagger,\ldots,\hat b_k^\dagger$. Attach it to the physical space by the ordinary tensor product $\mathcal H\otimes\mathcal F_k$. Thus $[\hat\eta_t\otimes\hat I,\hat I\otimes\hat b_s^\dagger]=0$: physical and auxiliary operators commute across the tensor factors, while the canonical anticommutation relations hold within each space. The auxiliary modes record subsets and their reordering signs. Contracting a physical bra and ket leaves an operator on $\mathcal F_k$; we display the auxiliary identity $\hat I_{\mathcal F_k}$ explicitly in such contractions below. For $S=\{s_1<\cdots<s_\ell\}$, use the basis and ordered creation products
\begin{align}
    |S\rangle&:=\hat b_S^\dagger|\varnothing\rangle,\qquad \hat b_S^\dagger:=\hat b_{s_1}^\dagger\cdots\hat b_{s_\ell}^\dagger,\qquad \hat b_{\varnothing}^\dagger:=\hat I.
\end{align}
All products indexed by $t$ below are ordered increasingly from left to right. Define the auxiliary operator for block $q$ by
\begin{align}
    \hat M_q&:=\sum_{A\subseteq[k]}c_q(A)\hat b_A^\dagger.
    \label{eq:mean-aux-transfer}
\end{align}

\begin{lemma}\label{lem:mean-aux-factorization}
As an operator on $\mathcal F_k$,
\begin{align}
    (\langle\Psi|\otimes\hat I_{\mathcal F_k})\left[\prod_{t=1}^k\bigl(\hat I+\hat\eta_t\otimes\hat b_t^\dagger\bigr)\right](|\Phi\rangle\otimes\hat I_{\mathcal F_k})&=\hat M_1\cdots\hat M_n.
    \label{eq:mean-aux-factorization}
\end{align}
\end{lemma}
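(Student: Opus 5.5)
We will expand both sides and match them term by term over assignments of Majorana factors to input blocks.

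\textbf{Expanding the left-hand side.} Each auxiliary operator $\hat b_t^\dagger$ occurs in exactly one factor of the ordered product. Its physical and auxiliary parts commute across the tensor factors, so expanding gives
\begin{align}
    \prod_{t=1}^k\bigl(\hat I+\hat\eta_t\otimes\hat b_t^\dagger\bigr)=\sum_{T\subseteq[k]}\hat\eta_T\otimes\hat b_T^\dagger,
\end{align}
where $\hat\eta_T$ and $\hat b_T^\dagger$ are both ordered increasingly. Next we substitute $\hat\eta_t=\sum_q\hat\eta_t^{[q]}$ into $\hat\eta_T$ and expand. Each term corresponds to a map $\tau:T\to[n]$, and the product is $\prod_{t\in T}\hat\eta_t^{[\tau(t)]}$ in increasing $t$ order. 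We then reorder it into block order $\hat\eta_{A_1}^{[1]}\cdots\hat\eta_{A_n}^{[n]}$ with $A_q=\tau^{-1}(q)$. Factors on distinct blocks are linear combinations of distinct Majoranas, so they anticommute. The reordering therefore produces the sign $\epsilon(A_1,\dots,A_n)$ of the permutation sorting the concatenation $(A_1,\dots,A_n)$ into increasing order.

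\textbf{Contracting with the physical states.} We contract with $\langle\Psi|$ and $|\Phi\rangle$ and apply the block factorization Eq.~\eqref{eq:mean-block-factorization}. Each $\hat\eta_{A_q}^{[q]}$ has definite parity $|A_q|\bmod 2$, so the term equals $\epsilon\prod_q c_q(A_q)$. Terms with any odd $|A_q|$ vanish on both sides. The left-hand side thus becomes
\begin{align}
    \sum_{(A_1,\dots,A_n)\ \text{disjoint}}\epsilon(A_1,\dots,A_n)\prod_q c_q(A_q)\,\hat b^\dagger_{A_1\cup\cdots\cup A_n}.
\end{align}

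\textbf{Expanding the right-hand side.} Expanding $\hat M_1\cdots\hat M_n$ gives $\sum\prod_q c_q(A_q)\,\hat b_{A_1}^\dagger\cdots\hat b_{A_n}^\dagger$. By nilpotence, overlapping tuples vanish. For disjoint tuples, the canonical anticommutation relations give $\hat b_{A_1}^\dagger\cdots\hat b_{A_n}^\dagger=\epsilon(A_1,\dots,A_n)\,\hat b^\dagger_{\cup A_q}$, with the same permutation sign as before. The two expansions therefore agree coefficient by coefficient.

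\textbf{Main obstacle.} The delicate point is the sign bookkeeping. We must confirm that the sign from reordering the physical factors is exactly the same permutation sign as the one from reordering the auxiliary creators, rather than its inverse or its product with a parity correction. Both are signs of the same sorting permutation of anticommuting symbols, so they coincide; since signs are $\pm1$, the "two minus signs cancel" remark in the main text just means the physical and auxiliary reorderings pair up.

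One must also check that the block factorization tolerates the leading Jordan--Wigner parity strings. Here we rely on Eq.~\eqref{eq:mean-block-factorization} as stated, which applies because each grouped product has definite parity. Finally, when the blocks are odd, the stated caveat that odd local operators have vanishing matrix elements still makes odd-$|A_q|$ terms drop out, so the identity holds unchanged.
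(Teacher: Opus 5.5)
Your proof is correct and follows essentially the same route as the paper. Both expand over assignments of the transformed factors to blocks, note that reordering into block order produces the same permutation sign on the physical and auxiliary sides, and apply the block factorization Eq.~\eqref{eq:mean-block-factorization} term by term. The only difference is packaging: the paper shows that $\hat\eta_t^{[q]}\otimes\hat b_t^\dagger$ and $\hat\eta_s^{[q']}\otimes\hat b_s^\dagger$ commute for $q\neq q'$ and uses nilpotence to write $\hat I+\hat\eta_t\otimes\hat b_t^\dagger=\prod_q(\hat I+\hat\eta_t^{[q]}\otimes\hat b_t^\dagger)$, so regrouping by block introduces no sign at all, whereas you compute $\epsilon(A_1,\dots,A_n)$ explicitly on both sides.
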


\begin{proof}
Set $\hat X_t^{[q]}:=\hat\eta_t^{[q]}\otimes\hat b_t^\dagger$, using the global physical operators, including their Jordan--Wigner parity strings. If $q\neq q'$, the physical factors anticommute, as do the auxiliary creation operators. The two signs cancel, giving
\begin{align}
    \hat X_t^{[q]}\hat X_s^{[q']}&=\hat X_s^{[q']}\hat X_t^{[q]},\qquad q\neq q'.
\end{align}
For $s=t$, both products are zero. The same nilpotence eliminates all terms containing two copies of $\hat b_t^\dagger$, so
\begin{align}
    \hat I+\hat\eta_t\otimes\hat b_t^\dagger&=\prod_{q=1}^n(\hat I+\hat X_t^{[q]}).
\end{align}
Substitute this identity for every $t$ and regroup by $q$. This only exchanges factors on distinct blocks and preserves their order within each block, yielding
\begin{align}
    \prod_{t=1}^k\bigl(\hat I+\hat\eta_t\otimes\hat b_t^\dagger\bigr)&=\prod_{q=1}^n\left(\sum_{A\subseteq[k]}\hat\eta_A^{[q]}\otimes\hat b_A^\dagger\right).
\end{align}
There is no additional sign within each tensor product. Taking the physical matrix element and applying Eq.~\eqref{eq:mean-block-factorization} term by term replaces each ordered product of physical block operators by the product of the corresponding $c_q(A)$. The right-hand side becomes $\hat M_1\cdots\hat M_n$.
\end{proof}

Only the term selecting all $k$ factors on the left-hand side contributes between $\langle[k]|$ and $|\varnothing\rangle$. Since $\hat b_1^\dagger\cdots\hat b_k^\dagger|\varnothing\rangle=|[k]\rangle$, the lemma gives
\begin{align}
    \langle\hat\Gamma\rangle&=i^{k(k-1)/2}\langle[k]|\hat M_1\cdots\hat M_n|\varnothing\rangle.
    \label{eq:mean-aux-output}
\end{align}
Moreover, each $\hat M_q$ contains only even products of creation operators. Such products commute: exchanging disjoint even products gives a positive sign, while overlapping products vanish. We can therefore apply the blocks in the order $1,\ldots,n$, defining
\begin{align}
    |w_0\rangle&:=|\varnothing\rangle,\qquad |w_q\rangle:=\hat M_q|w_{q-1}\rangle,\qquad f_q(S):=\langle S|w_q\rangle.
    \label{eq:mean-aux-recurrence}
\end{align}
The index $q$ counts the processed blocks, whereas $S\subseteq[k]$ labels an entry in the table. The initial table has $f_0(\varnothing)=1$ and all other entries zero; the first update gives $f_1(S)=c_1(S)$. Subsequent updates use the complete preceding table, and the final answer is $i^{k(k-1)/2}f_n([k])$. Every table vanishes on odd subsets.

This representation already specifies the block-by-block recurrence. Its direct expansion would visit $3^k$ disjoint pairs of subsets per block. We next apply $\hat M_q$ in factorized form, computing the same table with only $O(k2^k)$ local matrix--vector operations.

\subsection{Efficient evaluation of the block updates}\label{App:Mean proof}\label{App:Subroutine}

The auxiliary operator $\hat M_q$ need not be constructed as a matrix, and its coefficients $c_q(A)$ need not be evaluated individually. To apply it, we retain the physical state of block $q$ while processing the $k$ factors and contract with $\langle\psi_q|$ only at the end. Expanding the ordered product shows that, in the local representation on block $q$,
\begin{align}
    \hat M_q&=(\langle\psi_q|\otimes\hat I_{\mathcal F_k})\left[\prod_{t=1}^k\bigl(\hat I+\hat\eta_t^{[q]}\otimes\hat b_t^\dagger\bigr)\right](|\phi_q\rangle\otimes\hat I_{\mathcal F_k}).
    \label{eq:mean-local-transfer}
\end{align}
Here $\hat\eta_t^{[q]}$ is the $2^{m_q}\times2^{m_q}$ local matrix, without the preceding-block parity strings. Applying the product to a vector processes its factors from right to left, hence $t=k,\ldots,1$.

\begin{lemma}\label{lem:mean-one-block-update}
Given the preceding table $f_{q-1}$, the next table in Eq.~\eqref{eq:mean-aux-recurrence} can be computed in $O(k2^k2^{2m_q})$ arithmetic operations.
\end{lemma}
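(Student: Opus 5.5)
We will show that the three-step procedure of Lemma~\ref{lem:mean-table-update} (with $|\phi_4\rangle$ replaced by $|\phi_q\rangle$ in the initialization and $\langle\psi_q|$ in the contraction) applies $\hat M_q$, in the form of Eq.~\eqref{eq:mean-local-transfer}, to $|w_{q-1}\rangle$. We then count its cost.

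\textbf{Representation.} The vector $|w_{q-1}\rangle=\sum_S f_{q-1}(S)|S\rangle$ lives in $\mathcal F_k$. We work in the enlarged space $\mathcal H_q\otimes\mathcal F_k$, where $\mathcal H_q$ is the $2^{m_q}$-dimensional local space of block $q$. A general vector there is written as $\sum_S v_S\otimes|S\rangle$, with $v_S\in\mathcal H_q$. The initialization $v_S:=f_{q-1}(S)|\phi_q\rangle$ represents $(|\phi_q\rangle\otimes\hat I)|w_{q-1}\rangle$.

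\textbf{Single factor.} Next we check that applying one factor $\hat I+\hat\eta_t^{[q]}\otimes\hat b_t^\dagger$ realizes the update in Eq.~\eqref{eq:mean-subroutine-update}. The physical and auxiliary tensor factors commute, so this factor acts as
\[
\sum_S v_S\otimes|S\rangle\mapsto\sum_S v_S\otimes|S\rangle+\sum_{S\not\ni t}\hat\eta_t^{[q]}v_S\otimes\hat b_t^\dagger|S\rangle .
\]
We have $\hat b_t^\dagger|S\rangle=(-1)^{|S\cap[t-1]|}|S\cup\{t\}\rangle$, since $\hat b_t^\dagger$ anticommutes past the created modes with smaller index. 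Re-indexing by $S'=S\cup\{t\}$ gives exactly the stated update, and the sign is the same $|S'\cap[t-1]|$. Only entries $v_{S'}$ with $t\in S'$ change, and their sources $v_{S'\setminus\{t\}}$ do not contain $t$. The sources are therefore unchanged during pass $t$, and the in-place update is valid.

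\textbf{Ordering and contraction.} Applying the factors in the order $t=k,\dots,1$ implements the increasing left-to-right product. Contracting with $\langle\psi_q|\otimes\hat I$ then gives $f_q(S)=\langle\psi_q|v_S\rangle$, which by Eq.~\eqref{eq:mean-local-transfer} equals $\langle S|\hat M_q|w_{q-1}\rangle$. Equation~\eqref{eq:mean-local-transfer} itself follows by expanding the product. The local matrices $\hat\eta_t^{[q]}$ carry no Jordan--Wigner strings, and the contraction with the even bra and ket matches the definition of $c_q(A)$ in Eq.~\eqref{Eq:Local eta}. This is the step I expect to need the most care: the sign conventions for the ordering of $\hat b_A^\dagger$ must be confirmed to coincide with the fermionic sign above, and it should be checked that no parity string enters the local representation.

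\textbf{Cost.} Building each local matrix $\hat\eta_t^{[q]}$ takes $O(m_q2^{2m_q})$ operations, since it is a linear combination of $2m_q$ sparse local Majorana matrices. Summed over $t$ this is $O(km_q2^{2m_q})$, which is dominated by the main loop. Each pass performs at most $2^{k-1}$ matrix--vector products of size $2^{m_q}$, so it costs $O(2^k2^{2m_q})$. The $k$ passes therefore cost $O(k2^k2^{2m_q})$. Initialization and contraction take $O(2^k2^{m_q})$, giving the claimed bound.
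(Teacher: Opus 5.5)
Your correctness argument is the paper's proof. It has the same initialization, the same reading of the signed update as applying $\hat I+\hat\eta_t^{[q]}\otimes\hat b_t^\dagger$ via $\hat b_t^\dagger|S\setminus\{t\}\rangle=(-1)^{|S\cap[t-1]|}|S\rangle$, and the same in-place justification. It processes $t=k,\ldots,1$, contracts via Eq.~\eqref{eq:mean-local-transfer}, and counts $k$ passes of $2^{k-1}$ dense local matrix--vector products, exactly as the paper does. The sign issue you flag is harmless. Expanding the increasingly ordered product gives $\hat\eta_A^{[q]}\otimes\hat b_A^\dagger$, with both factors in the same increasing order and no exchange sign. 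Moreover, $c_q(A)$ is defined with the local, string-free matrices.

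The one step that fails as written is the cost of assembling the local matrices. You charge $O(m_q2^{2m_q})$ per matrix and then claim that $O(km_q2^{2m_q})$ is dominated by $O(k2^k2^{2m_q})$. That needs $m_q=O(2^k)$, which the lemma does not assume: $k$ and $m_q$ are independent parameters, e.g.\ fixed $k$ with growing $m_q$. So your accounting only yields $O(k(2^k+m_q)2^{2m_q})$.

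The fix uses the sparsity you already mention. Each local Majorana matrix has exactly one nonzero entry per row. Forming $\hat\eta_t^{[q]}$ densely therefore costs $O(2^{2m_q})$ for the zero matrix plus $O(m_q2^{m_q})$ for the $2m_q$ sparse additions, and $m_q2^{m_q}=O(2^{2m_q})$. Summed over $t$ this is $O(k2^{2m_q})$, as in the paper. Alternatively, never form $\hat\eta_t^{[q]}$ densely and apply it as a combination of $2m_q$ signed permutation matrices, at $O(m_q2^{m_q})$ per product.
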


For a fixed block $q$, introduce one local vector $v_S$ for each $S\subseteq[k]$. Together they represent the joint vector $\sum_S v_S\otimes|S\rangle$ in the block space tensored with $\mathcal F_k$. The update consists of three steps:
\begin{enumerate}
    \item \textbf{Initialize from the preceding table.} Set
    \begin{align}
        v_S&:=f_{q-1}(S)|\phi_q\rangle,\qquad S\subseteq[k].
        \label{eq:mean-vector-initialization}
    \end{align}
    This represents $|\phi_q\rangle\otimes|w_{q-1}\rangle$.
    \item \textbf{Apply the factors.} For $t=k,\ldots,1$, update every subset containing $t$ by
    \begin{align}
        v_S&\leftarrow v_S+(-1)^{|S\cap[t-1]|}\hat\eta_t^{[q]}v_{S\setminus\{t\}},\qquad t\in S,
        \label{Eq:Update rule m=1}
    \end{align}
    leaving the other vectors unchanged, with $[0]=\varnothing$. The source vector has no $t$ in its index and is unchanged during the pass, so these updates can be performed in place.
    \item \textbf{Contract to obtain the new table.} Set
    \begin{align}
        f_q(S)&=\langle\psi_q|v_S\rangle.
        \label{eq:mean-vector-contraction}
    \end{align}
    Pass this scalar table to the next block; the local vectors can then be discarded.
\end{enumerate}

\begin{proof}
For $t\in S$, the canonical anticommutation relations give
\begin{align}
    \hat b_t^\dagger|S\setminus\{t\}\rangle&=(-1)^{|S\cap[t-1]|}|S\rangle,
    \label{eq:mean-creation-sign}
\end{align}
while creation on an already occupied mode gives zero. Consequently, Eq.~\eqref{Eq:Update rule m=1} is exactly the coefficient update for applying $\hat I+\hat\eta_t^{[q]}\otimes\hat b_t^\dagger$ to $\sum_Sv_S\otimes|S\rangle$. Processing $t=k,\ldots,1$ and taking the final physical inner product therefore applies the operator in Eq.~\eqref{eq:mean-local-transfer}. The output is $\hat M_q|w_{q-1}\rangle=|w_q\rangle$, proving correctness. In particular, the sign in the vector update follows directly from auxiliary creation and requires no separate parity invariant.

Each vector has dimension $2^{m_q}$, and each update uses one dense local matrix--vector multiplication, costing $O(2^{2m_q})$ operations. Each of the $k$ passes updates $2^{k-1}$ subsets; initialization and final contractions cost $O(2^k2^{m_q})$. The local matrices can be assembled within $O(k2^{2m_q})$ operations: each local Majorana matrix has one nonzero entry per row, and $m_q2^{m_q}=O(2^{2m_q})$. Accessing the required entries of $O$ is also absorbed in this bound. Thus the total cost is $O(k2^k2^{2m_q})$.
\end{proof}

\begin{proof}[Proof of Thm.~\ref{Thm:Expectation value of a Majorana string}]
The complete algorithm is as follows:
\begin{enumerate}
    \item Initialize $f_0(\varnothing)=1$ and $f_0(S)=0$ for nonempty $S\subseteq[k]$.
    \item For $q=1,\ldots,n$, apply Lem.~\ref{lem:mean-one-block-update} to compute the complete table $f_q$ from $f_{q-1}$, then discard the preceding table.
    \item Return $i^{k(k-1)/2}f_n([k])$.
\end{enumerate}
The table after block $q$ contains the coefficients of $\hat M_q\cdots\hat M_1|\varnothing\rangle$. Since the $\hat M_q$ commute, Eq.~\eqref{eq:mean-aux-output} proves that the returned coefficient is the desired matrix element. The identity case and the complementary-string reduction are handled as described above. Summing the costs of the block updates gives
\begin{align}
    O\!\left(k2^k\sum_{q=1}^n2^{2m_q}\right).
\end{align}
This bound assumes access to entries of the supplied matrix $O$. Reading the entire dense matrix adds $O(M^2)$, as included in the main-text runtime. For constant-size blocks, the total is $O(n^2+nk2^k)$, completing the proof.
\end{proof}

\section{Proof for the approximate expectation-value problem}\label{app:approx}

We prove the transition-amplitude estimation guarantee of Thm.~\ref{thm:active-flo-inner-product-estimation}, which underlies the additive-error probability and observable estimates in Sec.~\ref{Sec:Approximate mean value}. We build the estimator one reduction at a time, starting with occupation sampling and then replacing the remaining amplitude by a Gaussian-overlap estimate. This construction identifies the second-moment bound needed for efficiency. We prove that bound before collecting the steps into the complete algorithm.

\subsection{Restatement of the theorem}
Let $\hat O$ be a parity-preserving fermionic Gaussian unitary on $4n$ modes, corresponding to $O\in\operatorname{SO}(8n)$, and let
\begin{align}
    \ket{\Phi} &=\bigotimes_{j=1}^n\ket{\phi_j}, \qquad \ket{\Psi}=\bigotimes_{j=1}^n\ket{\psi_j},
\end{align}
where every block is a normalized pure even-parity state on four modes. Local Gaussian unitaries bring the blocks to the canonical forms below~\cite{bako2025fermionic}. Absorbing these unitaries into $\hat O$, we may write
\begin{align}
    \ket{\phi_j} &=\cos\alpha_j\ket{1100}+\sin\alpha_j\ket{0011}:=u_j\ket{1100}+v_j\ket{0011},\\
    \ket{\psi_j} &=\cos\beta_j\ket{1100}+\sin\beta_j\ket{0011}:=c_{j,a}\ket{1100}+c_{j,b}\ket{0011},
\end{align}
with $\alpha_j,\beta_j\in[0,\pi/4]$. Our target is the transition amplitude
\begin{align}
    A &:=\bra{\Phi}\hat O\ket{\Psi}.
\end{align}

The unitary $\hat O$ is supplied as a polynomial-size Gaussian circuit including its overall phase.

\begin{theorem}[Additive-error estimation of Gaussian transition amplitudes]\label{thm:Approximate mean value}
Given classical descriptions of $\ket{\Phi}$, $\ket{\Psi}$, and $\hat O$ as above and $0<\epsilon,\delta<1$, a randomized classical algorithm returns a complex number $\widetilde A$ satisfying
\begin{align}\label{eq:app-amplitude-guarantee}
    \Pr\!\left[|\widetilde A-A|\leq\epsilon\right] &\geq1-\delta.
\end{align}
After a one-time polynomial-time preprocessing of the Gaussian circuit that retains its overall phase, the algorithm uses $O(\epsilon^{-2}\log(2/\delta))$ independent samples, each generated with $O(n^3)$ arithmetic operations. The arithmetic cost after preprocessing is therefore $O(n^3\epsilon^{-2}\log(2/\delta))$.
\end{theorem}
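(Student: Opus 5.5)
We plan to follow the estimator sketched in the main text and verify its two moments. First, we write each bra block as $\langle\phi_j|=\sqrt2\langle\psi_{\rm eq}|\hat F_j$. Here $\hat F_j$ is a positive (possibly non-invertible) Gaussian operation that is diagonal on $|1100\rangle,|0011\rangle$ with entries $u_j,v_j$. On all other local occupation strings we choose it so that $\hat F_j\le I$; for example, $\hat F_j=\exp$ of a quadratic in the number operators, or its limit. Then $A=2^{n/2}\langle\Psi_{\rm eq}|\hat F\hat O|\Psi\rangle$. Next we expand $|\Psi\rangle=\sum_J c_J|J\rangle$ over $J\in\{a,b\}^n$ and sample $J$ with probability $c_J^2$. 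Since the coefficients are real and nonnegative, $\mathbb E_J[c_J^{-1}\langle\Psi_{\rm eq}|V_J\rangle]\,2^{n/2}=A$. We then replace $\langle\psi_{\rm eq}|$ blockwise by $\sqrt2\,\mathbb E_{s}\langle g_s(\varphi)|$, using $\tfrac12\sum_s|g_s(\varphi)\rangle=|\psi_{\rm eq}\rangle/\sqrt2$. This gives the prefactor $2^{n/2}\cdot2^{n/2}=2^n$, so $\mathbb E[Y]=A$ for any choice of phases $\varphi_j$, including phases that depend on $J$. This settles unbiasedness.

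For the second moment, $\mathbb E|Y|^2=\sum_J 4^n\,\mathbb E_{\bm s}|\langle G_{\bm s}|V_J\rangle|^2$, and the $c_J^2$ factors cancel. The key step is a per-$J$ bound of the form $\mathbb E_{\bm s}|\langle G_{\bm s}|V_J\rangle|^2\le 4^{-n}\|V_J\|^2\cdot(\text{something summable})$. For this we use the phase choice $\operatorname{Tr}[K(\varphi_j)^T(\Gamma_J)_{jj}]=0$. We intend to show (the branch lemma) that for a normalized Gaussian $|v\rangle$ with this phase choice, $|\langle G_{\bm s}|v\rangle|^2\le\prod_j(\text{local factor})$ uniformly in $\bm s$. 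The argument would use the Gaussian overlap formula $|\langle G|v\rangle|^2=2^{-2n}|\pf(\Gamma_G+\Gamma_v)|$-type expressions ($\sqrt{\det((K+\Gamma)/2)}$). Each block's $K(\varphi_j)$ is then tuned so that the linear term in its contribution vanishes, and a bound of the form $\det(I-X)\le e^{-\operatorname{Tr}X}$ controls the rest.

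The remaining sum $\sum_J\|V_J\|^2$-weighted contribution must then be bounded by one. This is the \emph{Gaussian compression} step. We rewrite $\sum_J \langle J|\hat O^\dagger \hat F^2\hat O|J\rangle\times(\cdot)$ as a trace of $\hat F^2$ (conjugated) restricted to the $2^n$-dimensional canonical occupation subspace. We then bound it via Hölder by the fourth Schatten norm $\|\hat F^2\|_4$ times the compression factor; for a Gaussian $\hat F^2$, $\|\hat F^2\|_4^4=\prod_j\operatorname{Tr}\hat F_j^8$. Choosing $\hat F_j$ so that $u_j^8+v_j^8+(\text{other entries})\le1$ yields the factor one. \textbf{This is the step I expect to be the main obstacle}, since it requires reconciling the per-sample overlap bound (which has exponent $1/2$ in determinants) with the trace over $J$. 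I would first try to prove $\sum_J\langle J|\hat X|J\rangle^2\le\|\hat X\|_2^2$ restricted appropriately and then use Gaussianity to pass to a Schatten-4 norm.

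Finally, each sample requires building $|V_J\rangle$ by applying the precompiled $\hat O$ and the local $\hat F$ to a Fock state, computing its covariance matrix, choosing the $n$ phases in $O(1)$ each, and evaluating one Gaussian overlap with phase. All of this is $O(n^3)$ via Pfaffian/determinant routines on $8n\times8n$ matrices. Median-of-means applied separately to $\operatorname{Re}Y$ and $\operatorname{Im}Y$, each with variance at most one, gives error $\epsilon$ with failure probability $\delta$ using $O(\epsilon^{-2}\log(2/\delta))$ samples, which proves the theorem.
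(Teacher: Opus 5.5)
You prove unbiasedness the same way the paper does, and your fixed-$J$ overlap bound is also the paper's. With $H_{\bm s}=(\Gamma_J+K_{\bm s})^T(\Gamma_J+K_{\bm s})=2I+X_{\bm s}$, the phase condition gives $\operatorname{Tr}X_{\bm s}=0$ for every sign string. Then $\det(I+X_{\bm s}/2)\le e^{\operatorname{Tr}X_{\bm s}/2}=1$ (the paper uses AM--GM) yields $|\langle G_{\bm s}|V_J\rangle|^2\le2^{-2n}\|V_J\|^2$ pointwise, so $\mathbb E|Y|^2\le\sum_J\|V_J\|^2=\operatorname{Tr}(\hat F^2\hat O\hat P_{\mathcal C}\hat O^\dagger)$.

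\textbf{Main gap: bounding this trace by one.} This is the step you flag as the obstacle, and neither route you suggest can close it. H\"older gives only $\operatorname{Tr}(\hat P_{\mathcal C}\hat X)\le\|\hat P_{\mathcal C}\|_{4/3}\|\hat X\|_4=2^{3n/4}\|\hat X\|_4$. Going through $\sum_J\langle J|\hat X|J\rangle^2\le\|\hat X\|_2^2$ loses $2^{n/2}$ by Cauchy--Schwarz over the $2^n$ strings. The loss is real for general positive operators: for $\hat X=\hat P_E$ the ratio $\operatorname{Tr}(\hat P_E\hat X)/\|\hat X\|_4$ is exactly $2^{3n/4}$. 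So Gaussianity must enter essentially.

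The missing idea is a dimension-free compression inequality: $\operatorname{Tr}(\hat P_E\hat X)\le\|\hat X\|_4$ for every positive even Gaussian $\hat X$. Here $\hat P_E$ projects each block onto $|\text{vac}\rangle,|\text{full}\rangle$. A blockwise particle--hole Gaussian unitary $\hat C$ maps $\hat P_{\mathcal C}$ to $\hat P_E$, and the inequality is applied to $\hat X=\hat C\hat O^\dagger\hat F^2\hat O\hat C^\dagger$. The paper proves it in three steps.
\begin{enumerate}
\item \emph{One block.} Write $\hat Z=\hat V^\dagger\bigl(\bigotimes_k\operatorname{diag}(\lambda_{k,0},\lambda_{k,1})\bigr)\hat V$, with $p_x=|\langle x|\hat V|\text{vac}\rangle|^2$ and $d_x=\prod_k\lambda_{k,x_k}$. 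The Gaussian symmetry $|\langle x|\hat V|\text{full}\rangle|^2=|\langle\bar x|\hat V|\text{vac}\rangle|^2$ gives $\operatorname{Tr}(\hat P_{E,1}\hat Z)=\sum_xp_x(d_x+d_{\bar x})$. Four-factor H\"older then gives $d_x+d_{\bar x}\le\|\hat Z\|_4$.
\item \emph{Correlations between blocks.} Prove $\|\langle\text{vac}|\hat X|\text{vac}\rangle\|_4+\|\langle\text{full}|\hat X|\text{full}\rangle\|_4\le\|\hat X\|_4$. This uses Petz's monotonicity of $\operatorname{Tr}(\hat\rho^{1/4}\hat\sigma^{3/4})$ under partial trace with $\hat\rho=\hat X^4$, followed by the one-block bound for the Gaussian operator $(\operatorname{Tr}_B\hat X^4)^{1/4}$.
\item \emph{Induction} over the blocks.
\end{enumerate}

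\textbf{Second gap: the choice of $\hat F_j$ is not free.} The criterion $\hat F_j\le I$ does not control the variance. For example, $\hat F_j=\operatorname{diag}(1,\sqrt{u_j})^{\otimes2}\otimes\operatorname{diag}(1,\sqrt{v_j})^{\otimes2}$ satisfies it and has the right entries on $|a\rangle,|b\rangle$. Let $\hat O$ complement the first two modes of each block, so $|a\rangle\mapsto|\text{vac}\rangle$ and $|b\rangle\mapsto|\text{full}\rangle$ up to phase, and let all $c_{j,a},c_{j,b}>0$. Since $|\langle g_s(\varphi)|\text{vac}\rangle|=|\langle g_s(\varphi)|\text{full}\rangle|=1/2$, you get $\mathbb E|Y|^2=\prod_j(1+u_j^2v_j^2)$, which is $(5/4)^n$ for equal weights.

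Your later condition $\operatorname{Tr}\hat F_j^8\le1$ is the right one, but you must show it is attainable. For every product $\bigotimes_k\operatorname{diag}(\lambda_{k,0},\lambda_{k,1})$ with the prescribed entries, four-factor H\"older forces $\operatorname{Tr}\hat F_j^8\ge(u_j^2+v_j^2)^4=1$. So the condition can hold only with equality. The paper's balanced choice $\hat F_j=\operatorname{diag}(v_j^{1/4},u_j^{1/4})^{\otimes2}\otimes\operatorname{diag}(u_j^{1/4},v_j^{1/4})^{\otimes2}$ attains it, giving $\|\hat F^2\|_4=\prod_j(u_j^2+v_j^2)=1$.
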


This also covers an occupation-state bra $\langle x|$, as stated in the main text. If $x$ has even total occupation, a Gaussian unitary prepares $|x\rangle$ from vacuum; absorbing its inverse into $\hat O$ reduces the bra to a product of four-mode vacuum states. For odd total occupation the amplitude vanishes by parity conservation. The number-correlator guarantee then follows from the parity-operator reduction in the proof of Thm.~\ref{thm:active-number-correlators}.

\subsection{Constructing the estimator}\label{app:approx-construction}

We construct the estimator by reducing the target amplitude to successively simpler quantities. First, sampling an occupation component of $|\Psi\rangle$ leaves an amplitude with a Gaussian ket. We then incorporate the coefficients of the bra into a Gaussian operation and estimate the remaining overlap by sampling a Gaussian decomposition. The freedom in this decomposition will let us control the second moment.

\subsubsection{Sampling an occupation component}

Write $|a\rangle:=|1100\rangle$ and $|b\rangle:=|0011\rangle$. The canonical form of the ket gives
\begin{align}\label{eq:canonical-expansion}
    |\Psi\rangle &=\sum_{J\in\{a,b\}^n}c_J|J\rangle, \qquad c_J:=\prod_{j=1}^n c_{j,J_j}.
\end{align}
The coefficients are nonnegative and satisfy $\sum_Jc_J^2=1$. We can therefore sample $J$ with probability $c_J^2$ by choosing its entries independently. This writes the target as
\begin{align}\label{eq:app-outer-sampling}
    A &=\mathbb E_{J\sim c_J^2}\!\left[\frac{\langle\Phi|\hat O|J\rangle}{c_J}\right].
\end{align}
Only strings with $c_J>0$ are sampled, so the division is well defined. If the amplitude in the numerator were available exactly, this estimator would already have second moment at most one:
\begin{align}\label{eq:app-ideal-second-moment}
    \mathbb E_{J\sim c_J^2}\!\left[\left|\frac{\langle\Phi|\hat O|J\rangle}{c_J}\right|^2\right] &=\sum_{J:c_J>0}|\langle\Phi|\hat O|J\rangle|^2\leq1.
\end{align}
The inequality follows from the orthonormality of the states $\hat O|J\rangle$ and the normalization of $|\Phi\rangle$. Thus occupation sampling gives a useful starting point. Its remaining difficulty is the numerator: $\hat O|J\rangle$ is Gaussian, whereas the bra $\langle\Phi|$ is generally non-Gaussian. We next construct an efficiently computable unbiased estimate of this numerator, keeping track of the extra second moment introduced by that estimate.

\subsubsection{Moving the bra coefficients into a Gaussian operation}

For fixed $J$, absorb the bra coefficients into the Gaussian ket using the local non-unitary Gaussian operations
\begin{align}\label{eq:app-bounded-filter}
    \hat F_j &:=\operatorname{diag}(v_j^{1/4},u_j^{1/4})^{\otimes2}\otimes\operatorname{diag}(u_j^{1/4},v_j^{1/4})^{\otimes2}.
\end{align}
The factors act on the four modes in the occupation basis. Since $\hat F_j|a\rangle=u_j|a\rangle$ and $\hat F_j|b\rangle=v_j|b\rangle$, we have
\begin{align}\label{eq:app-local-filter-identity}
    \langle\phi_j| &=\sqrt2\langle\psi_{\rm eq}|\hat F_j, \qquad |\psi_{\rm eq}\rangle:=\frac{|a\rangle+|b\rangle}{\sqrt2}.
\end{align}
These operations are Gaussian, including at $v_j=0$, where $\hat F_j=|a\rangle\langle a|$ is a Fock projection~\cite{Bravyi2005}.

With $\hat F:=\bigotimes_j\hat F_j$, the amplitude therefore reduces to an overlap with the equal-weight block-product state:
\begin{align}\label{eq:app-filtered-amplitude}
    |V_J\rangle &:=\hat F\hat O|J\rangle, \qquad \langle\Phi|\hat O|J\rangle=2^{n/2}\langle\psi_{\rm eq}|^{\otimes n}|V_J\rangle.
\end{align}
If $|V_J\rangle=0$, return zero for this sample, otherwise we continue to the next step.

\subsubsection{Estimating the remaining overlap by Gaussian decomposition}

For any phase $\varphi$, the equal-weight state admits the Gaussian decomposition
\begin{align}\label{eq:app-equal-state-decomposition}
    |\psi_{\rm eq}\rangle &=\frac{|g_1(\varphi)\rangle+|g_{-1}(\varphi)\rangle}{\sqrt2},
\end{align}
where
\begin{align}\label{eq:app-phase-gaussian}
    |g_s(\varphi)\rangle &:=\frac{|a\rangle+|b\rangle+s e^{-i\varphi/2}|\text{vac}\rangle+s e^{i\varphi/2}|\text{full}\rangle}{2}, \qquad s\in\{\pm1\}.
\end{align}
Here $|\text{vac}\rangle:=|0000\rangle$ and $|\text{full}\rangle:=|1111\rangle$. Each $|g_s(\varphi)\rangle$ is a product of normalized two-mode Gaussian states on pairs $(1,2)$ and $(3,4)$.

For now, fix any phases $\bm\varphi=(\varphi_1,\ldots,\varphi_n)$ and draw $\bm s$ uniformly from $\{\pm1\}^n$. Define
\begin{align}
    |G_{\bm s}(\bm\varphi)\rangle &:=\bigotimes_{j=1}^n|g_{s_j}(\varphi_j)\rangle, \qquad Z_J:=2^n\langle G_{\bm s}(\bm\varphi)|V_J\rangle.\label{eq:app-inner-estimator}
\end{align}
The Gaussian decomposition and Eq.~\eqref{eq:app-filtered-amplitude} immediately give
\begin{align}\label{eq:app-inner-mean}
    \mathbb E_{\bm s}[Z_J] &=2^{n/2}\langle\psi_{\rm eq}|^{\otimes n}|V_J\rangle=\langle\Phi|\hat O|J\rangle.
\end{align}
Thus $Z_J$ is an unbiased estimate of the required numerator, and each sample is a Gaussian overlap that can be evaluated in polynomial time, including its complex phase~\cite{DiasKonig2024,reardon2024improved}. 

At this point the mean is correct, but the prefactor $2^n$ could lead to a large second moment. The decomposition phases have not yet been specified. We choose them next so that each sampled overlap is small enough to compensate for this prefactor. The phases may depend on $J$ through $|V_J\rangle$, but they are fixed before the signs $\bm s$ are drawn; Eq.~\eqref{eq:app-inner-mean} then continues to hold for every $J$.

\subsubsection{Choosing the phases to control the second moment}

For a fixed $J$, we use the freedom in the Gaussian decomposition to obtain $\mathbb E_{\bm s}|Z_J|^2\leq\|V_J\|^2$. It suffices to make every squared overlap in Eq.~\eqref{eq:app-inner-estimator} at most $2^{-2n}\|V_J\|^2$. The key is that a single phase choice on each block will control both signs on that block, and hence all $2^n$ Gaussian components simultaneously. To see how, use the covariance convention
\begin{align}
    \Gamma_{pq}(\omega) &:=-\frac{i}{2}\langle\omega|[\hat\gamma_p,\hat\gamma_q]|\omega\rangle
\end{align}
for a normalized state $|\omega\rangle$. Let $K(\varphi)$ be the $8\times8$ covariance matrix of $|g_1(\varphi)\rangle$. Equation~\eqref{eq:app-phase-gaussian} gives covariance $sK(\varphi)$ for $|g_s(\varphi)\rangle$ and
\begin{align}\label{eq:Klinear-simple}
    K(\varphi) &=K_{\rm c}\cos(\varphi/2)+K_{\rm s}\sin(\varphi/2), \qquad K_{\rm c}:=K(0),\quad K_{\rm s}:=K(\pi).
\end{align}
Write $\Gamma_J$ for the covariance matrix of $|V_J\rangle/\|V_J\|$, with diagonal blocks $(\Gamma_J)_{jj}$. Varying $\varphi_j$ moves $K(\varphi_j)$ within the span of $K_{\rm c}$ and $K_{\rm s}$. We choose it orthogonal to $(\Gamma_J)_{jj}$ in the Frobenius inner product:
\begin{align}\label{eq:balance-simple}
    \Tr\!\left[K(\varphi_j)^T(\Gamma_J)_{jj}\right] &=0.
\end{align}
Then $-K(\varphi_j)$ is orthogonal to the same block as well, so this one condition handles both choices of $s_j$. By Eq.~\eqref{eq:Klinear-simple}, finding the phase requires solving a single homogeneous equation in $\cos(\varphi_j/2)$ and $\sin(\varphi_j/2)$, with coefficients given by two fixed-size matrix traces. A solution always exists; if both coefficients vanish, take $\varphi_j=0$. Denote these phases by $\bm\varphi(J)$.

Each Gaussian component $|G_{\bm s}\rangle$ has a block-diagonal covariance matrix, so its trace inner product with $\Gamma_J$ is the sum of the local terms above, multiplied by the signs $s_j$. All these terms vanish separately. Thus $n$ scalar conditions enforce the same global trace constraint for every sign string, even when $|V_J\rangle$ has correlations between blocks. The Gaussian overlap formula expresses the squared overlap through a determinant; the proof below uses the trace constraint and the arithmetic--geometric mean inequality to bound that determinant uniformly over $\bm s$.

\begin{lemma}\label{lem:branch-bound}
Let $|V\rangle\neq0$ be an unnormalized even pure Gaussian vector on $n$ four-mode blocks, with covariance matrix $\Gamma$ for $|V\rangle/\|V\|$. Write $\Gamma_{jj}\in\mathbb R^{8\times8}$ for its $j$-th diagonal principal block. If the phases satisfy
\begin{align}\label{eq:lemma-phase-balance}
    \Tr\!\left[K(\varphi_j)^T\Gamma_{jj}\right] &=0 \qquad (j=1,\ldots,n),
\end{align}
then every $\bm s\in\{\pm1\}^n$ obeys
\begin{align}\label{eq:branch-pointwise}
    |\langle G_{\bm s}(\bm\varphi)|V\rangle|^2 &\leq 2^{-2n}\|V\|^2.
\end{align}
\end{lemma}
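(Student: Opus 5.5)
We normalize first. Replace $|V\rangle$ by $|\hat V\rangle:=|V\rangle/\|V\|$, so the claim becomes $|\langle G_{\bm s}|\hat V\rangle|^2\le 2^{-2n}$ for every sign string. Both $|G_{\bm s}\rangle$ and $|\hat V\rangle$ are normalized pure even Gaussian states on $N=4n$ modes. Their covariance matrices are $\Gamma$ and $K_{\bm s}:=\bigoplus_j s_jK(\varphi_j)$. We then invoke the standard overlap formula for pure fermionic Gaussian states (Refs.~\cite{Bravyi2005,DiasKonig2024}),
\begin{align}
 |\langle G_{\bm s}|\hat V\rangle|^2 = 2^{-N}\left|\pf(\Gamma+K_{\bm s})\right| = 2^{-N}\sqrt{\left|\det(\Gamma+K_{\bm s})\right|}.
\end{align}
Since $2^{-N}=2^{-4n}$, it suffices to prove $|\det(\Gamma+K_{\bm s})|\le 2^{4n}=2^{N}$, i.e.\ the determinant of an $8n\times 8n$ matrix is at most $2^{8n/2}$.

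Next we bound this determinant. Write $X:=\Gamma+K_{\bm s}$, which is real antisymmetric of size $2N$. Its eigenvalues are $\pm i\lambda_r$ with $\lambda_r\ge 0$, so $|\det X|=\prod_{r=1}^N\lambda_r^2$. The AM--GM inequality gives
\begin{align}
 \prod_r\lambda_r^2\le\Bigl(\tfrac1N\sum_r\lambda_r^2\Bigr)^N=\Bigl(\tfrac{1}{2N}\Tr X^TX\Bigr)^N .
\end{align}
We expand $\Tr X^TX=\Tr\Gamma^T\Gamma+\Tr K_{\bm s}^TK_{\bm s}+2\Tr K_{\bm s}^T\Gamma$. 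Purity gives $\Gamma^T\Gamma=-\Gamma^2=I$, so the first two terms each equal $2N$. The cross term is block diagonal in the blocks of $K_{\bm s}$, so it equals $\sum_j s_j\Tr[K(\varphi_j)^T\Gamma_{jj}]$, which vanishes by hypothesis~\eqref{eq:lemma-phase-balance} for every $\bm s$ simultaneously. Hence $\Tr X^TX=4N$, and $|\det X|\le 2^N$. Substituting back yields $|\langle G_{\bm s}|\hat V\rangle|^2\le 2^{-N}2^{N/2}=2^{-2n}$, and multiplying by $\|V\|^2$ gives~\eqref{eq:branch-pointwise}.

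The main obstacle is fixing the overlap formula with the correct constant and sign conventions. In the paper's convention $\Gamma_{pq}=-\tfrac i2\langle[\hat\gamma_p,\hat\gamma_q]\rangle$ on $N$ modes, one needs $|\langle\psi|\phi\rangle|^2=2^{-N}|\pf(\Gamma_\psi+\Gamma_\phi)|$ exactly, with no stray sign flip on one covariance matrix. I would verify this on a single mode, where $\Gamma=\pm\begin{pmatrix}0&1\\-1&0\end{pmatrix}$: equal states should give $1$ and orthogonal states $0$. This check also covers the case where $\Gamma+K_{\bm s}$ is singular, where the formula correctly gives zero overlap.

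A secondary point is to confirm that $K(\varphi)$ is a legitimate pure covariance, $K^TK=I$. This holds because each $|g_s(\varphi)\rangle$ is a product of normalized two-mode Gaussian states, which also justifies the covariance $sK$ for $|g_s\rangle$ and the linearity in Eq.~\eqref{eq:Klinear-simple}.
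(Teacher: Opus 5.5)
Your proof is correct and follows the paper's own argument. Both use the Pfaffian overlap formula. Both use purity together with the blockwise phase condition to get $\Tr\bigl[(\Gamma+K_{\bm s})^T(\Gamma+K_{\bm s})\bigr]=16n$ for every $\bm s$, and then AM--GM to bound the determinant by $2^{8n}$. The only cosmetic difference is that you apply AM--GM to the $4n$ numbers $\lambda_r^2$, whereas the paper applies it to the $8n$ eigenvalues of $H_{\bm s}=(\Gamma+K_{\bm s})^T(\Gamma+K_{\bm s})$. These are the same values, each repeated twice.
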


\begin{proof}
Fix $\bm s$ and let $K_{\bm s}:=\bigoplus_{j=1}^ns_jK(\varphi_j)$ be the covariance matrix of $|G_{\bm s}(\bm\varphi)\rangle$. Purity gives $\Gamma^T\Gamma=K_{\bm s}^TK_{\bm s}=I$, so the positive semidefinite matrix
\begin{align}\label{eq:Hs-expanded}
    H_{\bm s} &:=(\Gamma+K_{\bm s})^T(\Gamma+K_{\bm s})=2I+K_{\bm s}^T\Gamma+\Gamma^TK_{\bm s}
\end{align}
has trace $16n$. Indeed, the cross-term trace is $2\sum_js_j\Tr[K(\varphi_j)^T\Gamma_{jj}]$, which vanishes block by block by the phase condition. This is why the same bound holds for every choice of signs. The arithmetic--geometric mean inequality for the $8n$ eigenvalues of $H_{\bm s}$ gives
\begin{align}\label{eq:global-det-bound}
    \det H_{\bm s} &\leq\left(\frac{\Tr H_{\bm s}}{8n}\right)^{8n}=2^{8n}.
\end{align}
Using the pure-Gaussian overlap formula~\cite{DiasKonig2024}, we obtain
\begin{align}
    \frac{|\langle G_{\bm s}(\bm\varphi)|V\rangle|^2}{\|V\|^2} &=2^{-4n}|\operatorname{Pf}(\Gamma+K_{\bm s})|=2^{-4n}(\det H_{\bm s})^{1/4}\leq2^{-2n}.
\end{align}
Multiplication by $\|V\|^2$ proves the claim.
\end{proof}

Apply this lemma to $|V_J\rangle$ with the phases chosen in Eq.~\eqref{eq:balance-simple}. The factor $2^{-2n}$ cancels the squared prefactor in $Z_J$, giving
\begin{align}\label{eq:app-inner-second-moment}
    \mathbb E_{\bm s}[|Z_J|^2] &\leq\|V_J\|^2.
\end{align}
Together with Eq.~\eqref{eq:app-inner-mean}, this completes the estimate for a fixed occupation string. Both statements also hold when $|V_J\rangle=0$, for which we set $Z_J=0$. We next include the randomness in $J$; the bound above must still be summed over occupation strings.

\subsection{Averaging over occupation strings}\label{app:correctness}

We have constructed, for each fixed $J$, a polynomial-time sample $Z_J$ with the correct mean and a second moment controlled by $\|V_J\|^2$. Returning to the occupation sampling in Eq.~\eqref{eq:app-outer-sampling}, define the complete sample by
\begin{align}\label{Eq:Definition of Y}
    Y &:=\frac{Z_J}{c_J}, \qquad J\sim c_J^2.
\end{align}
The mean of $Y$ will be $A$. To control its second moment, we must sum the fixed-$J$ bounds over all sampled occupation strings. Introduce the positive Gaussian operator and the occupation-subspace projector
\begin{align}\label{eq:Qdef}
    \hat Q_j &:=\hat F_j^2, \qquad \hat Q:=\bigotimes_{j=1}^n\hat Q_j=\hat F^2, \qquad \hat P_{\mathcal C}:=\left(|a\rangle\langle a|+|b\rangle\langle b|\right)^{\otimes n}=\sum_{J\in\{a,b\}^n}|J\rangle\langle J|.
\end{align}
Then $\|V_J\|^2=\langle J|\hat O^\dagger\hat Q\hat O|J\rangle$, so the sum of these norms can be written as a single trace.

\begin{proposition}\label{prop:importance}
The sample $Y$ defined in Eq.~\eqref{Eq:Definition of Y} satisfies
\begin{align}
    \mathbb E[Y] &=\langle\Phi|\hat O|\Psi\rangle, \qquad \mathbb E[|Y|^2]\leq\Tr\!\left(\hat Q\hat O\hat P_{\mathcal C}\hat O^\dagger\right).
\end{align}
\end{proposition}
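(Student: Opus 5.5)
The plan is to handle the mean and the second moment separately, conditioning on $J$ in both cases and then averaging over $J\sim c_J^2$.

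\textbf{Mean.} By Eq.~\eqref{eq:app-inner-mean}, for every $J$ with $c_J>0$ we have $\mathbb E_{\bm s}[Z_J]=\langle\Phi|\hat O|J\rangle$. This holds because the phases $\bm\varphi(J)$ are fixed before the signs are drawn. The tower property then gives
\begin{align}
\mathbb E[Y]=\sum_{J:c_J>0}c_J^2\,\frac{\langle\Phi|\hat O|J\rangle}{c_J}=\sum_{J}c_J\langle\Phi|\hat O|J\rangle=\langle\Phi|\hat O|\Psi\rangle,
\end{align}
where the last equality uses the expansion in Eq.~\eqref{eq:canonical-expansion}. Strings with $c_J=0$ contribute nothing on either side, so dropping them costs nothing. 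If $|V_J\rangle=0$, we set $Z_J=0$, which is consistent because the numerator then vanishes by Eq.~\eqref{eq:app-filtered-amplitude}.

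\textbf{Second moment.} Conditioning on $J$ and applying Eq.~\eqref{eq:app-inner-second-moment}, which rests on Lem.~\ref{lem:branch-bound} with the balanced phases, gives
\begin{align}
\mathbb E[|Y|^2]=\sum_{J:c_J>0}c_J^2\,\frac{\mathbb E_{\bm s}|Z_J|^2}{c_J^2}\le\sum_{J:c_J>0}\|V_J\|^2.
\end{align}
The importance weights $c_J^2$ cancel exactly here. Next I write $\|V_J\|^2=\langle J|\hat O^\dagger\hat F^\dagger\hat F\hat O|J\rangle$. Each $\hat F_j$ is a real diagonal matrix with nonnegative entries, so $\hat F^\dagger\hat F=\hat F^2=\hat Q$, and hence $\|V_J\|^2=\langle J|\hat O^\dagger\hat Q\hat O|J\rangle$. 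Every term is nonnegative because $\hat Q\ge0$. I can therefore extend the sum to all $J\in\{a,b\}^n$ and obtain
\begin{align}
\sum_J\langle J|\hat O^\dagger\hat Q\hat O|J\rangle=\Tr\!\bigl(\hat O^\dagger\hat Q\hat O\hat P_{\mathcal C}\bigr)=\Tr\!\bigl(\hat Q\hat O\hat P_{\mathcal C}\hat O^\dagger\bigr),
\end{align}
using cyclicity of the trace.

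The only points needing care are small. One is that $\hat F$ is Hermitian and positive, which is immediate from Eq.~\eqref{eq:app-bounded-filter}. The other is that the phase choice depends on $J$ but not on $\bm s$, which is needed for the conditional expectations to factor as above. I do not expect a real obstacle. The substantive work, bounding this trace by one via Gaussian compression, is deferred beyond this proposition.
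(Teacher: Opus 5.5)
Your proposal is correct and matches the paper's proof step by step. Both use the conditional mean from Eq.~\eqref{eq:app-inner-mean} with the tower property, cancel the importance weights $c_J^2$ in the second moment, write $\|V_J\|^2=\langle J|\hat O^\dagger\hat Q\hat O|J\rangle$, and extend the sum to all $J\in\{a,b\}^n$ using $\hat Q\geq0$. Your remarks on the Hermiticity of $\hat F$ and on the phases depending on $J$ but not on $\bm s$ just make explicit what the paper uses implicitly.
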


\begin{proof}
Using the conditional mean in Eq.~\eqref{eq:app-inner-mean}, we obtain
\begin{align}
    \mathbb E[Y] &=\sum_{J:c_J>0}c_J\mathbb E_{\bm s}[Z_J]=\sum_Jc_J\langle\Phi|\hat O|J\rangle=\langle\Phi|\hat O|\Psi\rangle.
\end{align}
For the second moment, the probability $c_J^2$ cancels the squared weight $1/c_J^2$. The bound in Eq.~\eqref{eq:app-inner-second-moment} therefore gives
\begin{align}
    \mathbb E[|Y|^2] &=\sum_{J:c_J>0}\mathbb E_{\bm s}[|Z_J|^2]\leq\sum_{J\in\{a,b\}^n}\langle J|\hat O^\dagger\hat Q\hat O|J\rangle=\Tr\!\left(\hat Q\hat O\hat P_{\mathcal C}\hat O^\dagger\right).
\end{align}
Including strings with $c_J=0$ can only increase the upper bound because $\hat Q\geq0$.
\end{proof}

The remaining task is to bound this trace independently of $n$. Bounding each $\|V_J\|^2$ separately by one would give only $2^n$, which would not establish the desired sample complexity. We instead use the Gaussian structure of $\hat Q$ and $\hat O$ to bound their total contribution by one.

\subsection{Complete algorithm and estimation guarantee}

To complete the second-moment estimate in Prop.~\ref{prop:importance}, we need the following bound.

\begin{lemma}[Uniform trace bound]\label{lem:filtercontraction-simple}
For $\hat Q$ and $\hat P_{\mathcal C}$ defined in Eq.~\eqref{eq:Qdef}, every Gaussian unitary $\hat G$ on the $4n$ modes satisfies
\begin{align}\label{eq:filtercontraction-simple}
    \Tr\!\left(\hat Q\hat G \hat P_{\mathcal C}\hat G^\dagger\right) &\leq1.
\end{align}
\end{lemma}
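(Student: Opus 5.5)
The plan is to prove the bound via a \emph{Gaussian compression} inequality, $\Tr(\hat Q\hat G\hat P_{\mathcal C}\hat G^\dagger)\leq\|\hat Q\|_4$, and then to check that $\|\hat Q\|_4=1$. The second part is routine. Since $\hat Q=\hat F^2$ and $\hat F$ is a tensor product of single-mode diagonal factors, $\Tr\hat Q^4=\Tr\hat F^8$ factorizes over modes. On each block the eight-th powers give $(v_j^2+u_j^2)^4=1$, because $u_j^2+v_j^2=1$. Hence $\|\hat Q\|_4^4=\prod_j 1=1$. All the work is therefore in the compression inequality, which must exploit Gaussianity: a naive H\"older bound $\|\hat Q\|_4\,\|\hat P_{\mathcal C}\|_{4/3}$ is useless, since $\|\hat P_{\mathcal C}\|_{4/3}=2^{3n/4}$.

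To make the structure of $\hat P_{\mathcal C}$ usable, I would first write each local projector through single-mode parities $\hat Z_r:=\hat I-2\hat n_r$:
\begin{align}
|a\rangle\langle a|+|b\rangle\langle b|=\tfrac18(\hat I+\hat Z_1\hat Z_2)(\hat I+\hat Z_3\hat Z_4)(\hat I-\hat Z_1\hat Z_3).
\end{align}
This exhibits $\hat P_{\mathcal C}$ as a combination of parity-type operators. Each such operator is Gaussian (a limit of Gaussian operations), and $\hat P_{\mathcal C}$ itself is a product of commuting Gaussian projectors onto correlated occupation patterns. Equivalently, $\hat P_{\mathcal C}$ is the projector onto a rank-$2^n$ subspace spanned by Fock states. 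Hence $\hat G\hat P_{\mathcal C}\hat G^\dagger=\sum_J|\chi_J\rangle\langle\chi_J|$ is a sum of $2^n$ orthonormal pure Gaussian states, and
\begin{align}
\Tr(\hat Q\hat G\hat P_{\mathcal C}\hat G^\dagger)=\sum_J\langle\chi_J|\hat Q|\chi_J\rangle .
\end{align}

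Next I would invoke a Gaussian-extremality argument, as in fermionic Brascamp--Lieb/Lieb-type theorems. For a fixed positive Gaussian operator $\hat Q$, the supremum over Gaussian unitaries $\hat G$ of this trace is attained (or approached) at a $\hat G$ that is block-compatible with the normal-mode structure of $\hat Q$. Since $\hat Q$ is diagonal in the occupation basis with a product of single-mode weights $\operatorname{diag}(q_0,q_1)$, this reduces the supremum to a permutation/passive rearrangement problem. In that reduced problem one can bound, blockwise, the sum of $\hat Q$-weights over a $2$-dimensional Fock-spanned local subspace by $\|\hat Q_j\|_4$, using the constraint that the subspace is built from paired occupations. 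Concretely, for the unrotated case the local sum is $u_j^2+v_j^2=1$, while a single diagonal entry satisfies $u_j\le\|\hat Q_j\|_4$. I would then check that active mixing across blocks cannot beat the product value, via a convexity argument in the covariance-matrix representation. That argument writes $\langle\chi_J|\hat Q|\chi_J\rangle$ as $\det$-type expressions in $\Gamma_J$ and the Majorana generator of $\hat Q$, and sums over $J$ using a generating function in the parity expansion above.

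The main obstacle is this extremality step: showing that no Gaussian rotation mixing the $2^n$-dimensional code subspace with the support of $\hat Q$ can exceed the product value. If direct extremality is hard, I would first try an interpolation (Riesz--Thorin / complex interpolation) argument restricted to Gaussian operators. The endpoints would be $\hat Q=\hat I$, where the trace equals $\Tr\hat P_{\mathcal C}$ after normalizing, and $\hat Q$ a rank-one Fock projector, where the trace is at most $1$. The exponent $4$ would be recovered from the four-fold pairing structure of each block.
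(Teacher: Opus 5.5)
\textbf{There is a genuine gap: the compression bound is never proved.} Your reduction is the right one. The lemma does follow from $\Tr(\hat Q\hat G\hat P_{\mathcal C}\hat G^\dagger)\le\|\hat Q\|_4$, and your computation $\|\hat Q\|_4=\prod_j(u_j^2+v_j^2)=1$ is correct. But that compression bound is the entire content, and your ``Gaussian-extremality'' step is asserted rather than derived. No Brascamp--Lieb-type result says that the supremum of $\hat G\mapsto\Tr(\hat Q\hat G\hat P_{\mathcal C}\hat G^\dagger)$ over the non-convex Gaussian group is attained at a $\hat G$ aligned with the normal modes of $\hat Q$. Your blockwise H\"older estimate only covers Fock-preserving $\hat G$ (mode permutations and particle--hole maps), where the problem factorizes. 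The ``convexity argument'' meant to exclude active mixing across blocks is exactly the missing piece.

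The interpolation fallback cannot work as described. The endpoints you name, $\Tr(\hat P_{\mathcal C}\hat X)\le\|\hat X\|_1$ and $\Tr(\hat P_{\mathcal C}\hat X)\le2^n\|\hat X\|_\infty$, are both tight on Gaussian operators. Any Riesz--Thorin or three-lines interpolation between them at $p=4$ returns $2^{3n/4}\|\hat X\|_4$, which is the useless H\"older constant. Gaussianity has to enter somewhere other than the endpoints.

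Separately, your parity expansion does not make $\hat P_{\mathcal C}$ a product of Gaussian projectors. The factor $(\hat I+\hat Z_1\hat Z_2)/2$ normalizes to a state with vanishing covariance matrix but $\langle\hat Z_1\hat Z_2\rangle=1$, which violates Wick's theorem. The non-Gaussianity of $\hat P_{\mathcal C}$ is precisely why the bound is nontrivial.

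\textbf{What the paper uses instead.} Two concrete ingredients are needed.

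\emph{One-block bound.} Apply the local particle--hole map $\hat C$, which sends $\hat P_{\mathcal C}$ to $\hat P_E=(|\text{vac}\rangle\langle\text{vac}|+|\text{full}\rangle\langle\text{full}|)^{\otimes n}$. Write a positive even Gaussian $\hat Z$ on four modes as $\hat V^\dagger\bigl[\bigotimes_k\operatorname{diag}(\lambda_{k,0},\lambda_{k,1})\bigr]\hat V$. Use the identity $|\langle x|\hat V|\text{full}\rangle|^2=|\langle\bar x|\hat V|\text{vac}\rangle|^2$, obtained by reversing both covariance matrices in the pure-Gaussian overlap formula. Then $\Tr(\hat P_{E,1}\hat Z)=\sum_xp_x(d_x+d_{\bar x})$ is a convex combination of Fock-preserving values, each at most $\|\hat Z\|_4$ by four-factor H\"older. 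This is the rigorous form of your extremality intuition, but only for a single block.

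\emph{Inter-block correlations.} Your plan has no mechanism for the correlations an active $\hat G$ creates between blocks. Set $\hat X:=\hat C\hat G^\dagger\hat Q\hat G\hat C^\dagger$ and condition block~1 on vac and full to get $\hat X_0,\hat X_4$. The paper proves $\|\hat X_0\|_4+\|\hat X_4\|_4\le\|\hat X\|_4$ using Petz's monotonicity of $\Tr(\hat\rho^{1/4}\hat\sigma^{3/4})$ under the partial trace. It applies this with $\hat\rho=\hat X^4$ and $\hat\Omega=\sum_s|s\rangle\langle s|\otimes\hat X_s^4/\Tr\hat X_s^4$, which reduces the claim to the one-block bound on the Gaussian operator $(\Tr_B\hat X^4)^{1/4}$. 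Induction over blocks, using that Fock conditioning preserves positive Gaussianity, then gives $\Tr(\hat P_E\hat X)\le\|\hat X\|_4=\|\hat Q\|_4$.

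Without a step playing the role of this conditional inequality, nothing in your proposal controls the $2^n$-term sum $\sum_J\langle\chi_J|\hat Q|\chi_J\rangle$ beyond the trivial estimate.
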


We prove this lemma in Sec.~\ref{app:approx-compression-proof} and use it here to complete the estimation guarantee. The following proof collects the sampling steps constructed above and establishes their accuracy and cost.

\begin{proof}[Proof of Thm.~\ref{thm:Approximate mean value}]
Generate one independent sample as follows:
\begin{enumerate}
    \item \textbf{Sample an occupation string.}\label{Step 1} Draw $J\sim c_J^2$ by sampling its entries independently, as in Eq.~\eqref{eq:canonical-expansion}.
    \item \textbf{Compute the unnormalized Gaussian state.}\label{Step 2} Form a phase-sensitive Gaussian description of $\hat O|J\rangle$, and compute the norm and covariance matrix of $|V_J\rangle=\hat F\hat O|J\rangle$. If this vector is zero, return $Y=0$ and finish this sample.
    \item \textbf{Choose the phases.}\label{Step 3} From the covariance matrix of $|V_J\rangle/\|V_J\|$, find one phase for each block satisfying Eq.~\eqref{eq:balance-simple}.
    \item \textbf{Sample and evaluate a Gaussian overlap.}\label{Step 4} Draw $\bm s$ uniformly from $\{\pm1\}^n$ and return $Y=2^n\langle G_{\bm s}(\bm\varphi(J))|V_J\rangle/c_J$.
\end{enumerate}
Proposition~\ref{prop:importance} and Lemma~\ref{lem:filtercontraction-simple}, with $\hat G=\hat O$, give
\begin{align}
    \mathbb E[Y] &=A, \qquad \mathbb E[|Y|^2]\leq\Tr\!\left(\hat Q\hat O\hat P_{\mathcal C}\hat O^\dagger\right)\leq1.
\end{align}
Thus $\mathbb E[|Y-A|^2]\leq1$. It remains to combine independent samples and account for their cost.

\emph{Accuracy.} Take $m=\lceil8\epsilon^{-2}\rceil$ and let $L$ be the smallest odd integer at least $8\log(2/\delta)$. Generate $L$ groups of $m$ independent samples, with group means $\overline Y_g=m^{-1}\sum_{t=1}^mY_{g,t}$, and return their coordinatewise median:
\begin{align}\label{eq:app-amplitude-median}
    \widetilde A &=\operatorname{median}_{g\in[L]}\operatorname{Re}\overline Y_g+i\operatorname{median}_{g\in[L]}\operatorname{Im}\overline Y_g.
\end{align}
Independence implies $\mathbb E[|\overline Y_g-A|^2]\leq1/m$. For either coordinate $h=\operatorname{Re}$ or $h=\operatorname{Im}$, Chebyshev's inequality implies
\begin{align}
    \Pr\!\left[|h(\overline Y_g)-h(A)|>\epsilon/\sqrt2\right] &\leq\frac{2}{m\epsilon^2}\leq\frac14.
\end{align}
The median can fail this coordinate error bound only if at least half the groups fail it. Hoeffding's inequality bounds that probability by $e^{-L/8}\leq\delta/2$. A union bound over the two coordinates gives $|\widetilde A-A|\leq\epsilon$ with probability at least $1-\delta$.

\emph{Cost.} First compile the Gaussian circuit, including the local changes of basis, into a phase-sensitive $KAK$ representation. This one-time preprocessing is polynomial in the circuit size and $n$; retaining the scalar phase is necessary because the Majorana transformation alone does not determine a complex amplitude. The update rules of Ref.~\cite[Sec.~4.2]{reardon2024improved} then apply the compiled unitary to any Gaussian state in $O(n^3)$ arithmetic operations.

For each sample, form a phase-sensitive description of $\hat O|J\rangle$ in $O(n^3)$ operations. To choose the phases, only the covariance matrix of the normalized Gaussian vector proportional to $\hat{F}\hat{O}\ket{J}$ is needed. Apply the $4n$ single-mode diagonal factors of $\hat F$ to this covariance matrix, tracking the norm. Each update costs $O(n^2)$: Wick's theorem expresses each updated two-point function using a constant number of entries of the preceding covariance matrix. This includes occupation projections when a coefficient vanishes; a zero norm terminates the sample with $Y=0$. Thus all updates together cost $O(n^3)$~\cite{Bravyi2005}. The blockwise phase choice and the draws of $J$ and $\bm s$ cost $O(n)$.

For the final complex overlap, use the Hermiticity of $\hat F$ to write
\begin{align}\label{eq:app-filter-overlap-cost}
    \langle G_{\bm s}|V_J\rangle &= \langle \hat F G_{\bm s}|\hat O|J\rangle.
\end{align}
The Gaussian vector $\hat F|G_{\bm s}\rangle$ is a product of four-mode blocks, so its norm and phase are obtained by constant-size calculations within each block. Its complex overlap with $\hat O|J\rangle$ costs $O(n^3)$ using phase-sensitive Gaussian overlap routines~\cite{DiasKonig2024,reardon2024improved}. In particular, no phase-sensitive update is required for each factor of $\hat{F}$. Each sample therefore costs $O(n^3)$, and the $mL=O(\epsilon^{-2}\log(2/\delta))$ samples, including their means and medians, cost $O(n^3\epsilon^{-2}\log(2/\delta))$ after preprocessing.
\end{proof}

For a normalized pure Gaussian input $|\zeta\rangle$, the same phase choice gives an unbiased estimator $Z=2^n\langle G_{\bm s}|\hat F|\zeta\rangle$ of $\langle\Phi|\zeta\rangle$, without the occupation-string draw. Lemma~\ref{lem:branch-bound} and the operator norm of the non-unitary Gaussian operators give
\begin{align}
    \mathbb E[|Z|^2]&\leq\|\hat F|\zeta\rangle\|^2\leq\|\hat F\|_{\rm op}^2=\prod_{j=1}^n\max\{u_j^2,v_j^2\}.
    \label{eq:app-gaussian-block-bound}
\end{align}
Conjugating the target amplitude gives the Gaussian--block-product case stated in the main text. The Gaussian-input case needs no occupation-string draw, so the norm of the non-unitary operation alone suffices for this bound.

\subsection{Proof of the uniform trace bound}\label{app:approx-compression-proof}

We now prove Lem.~\ref{lem:filtercontraction-simple}. The main step is a compression inequality of the form $\Tr(\hat P_E\hat X)\leq\|\hat X\|_4$ for positive Gaussian operators. Here $\hat P_E$ selects the vacuum and fully occupied state on each four-mode block:
\begin{align}
    \hat P_{E,1} &:=|\text{vac}\rangle\langle\text{vac}|+|\text{full}\rangle\langle\text{full}|, \qquad \hat P_E:=\hat P_{E,1}^{\otimes n}.
\end{align}
A local particle--hole transformation maps $\hat P_{\mathcal C}$ to $\hat P_E$, so this inequality will bound the required trace by $\|\hat Q\|_4$. Here the Schatten norm is
\begin{align}\label{eq:Schatten-norm}
    \|\hat X\|_p &:=\left[\Tr\!\left((\hat X^\dagger \hat X)^{p/2}\right)\right]^{1/p}, \qquad \|\hat X\|_4=(\Tr \hat X^4)^{1/4}\quad\text{for }\hat X\geq0.
\end{align}
All traces and norms are over the full Fock space, and an even operator preserves fermion parity. We first prove the one-block inequality, then show how to apply it successively in the presence of correlations between blocks.

\begin{lemma}\label{lem:scalarcompression-simple}
For every positive even Gaussian operator $\hat Z$ on four modes,
\begin{align}\label{eq:scalarcompression-simple}
    \Tr(\hat P_{E,1}\hat Z) &\leq\|\hat Z\|_4.
\end{align}
\end{lemma}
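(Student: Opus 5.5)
The plan is to reduce to a normal form and then prove a finite-dimensional scalar inequality. Both sides of Eq.~\eqref{eq:scalarcompression-simple} are positively homogeneous of degree one in $\hat Z$, so we may normalize $\|\hat Z\|_4=1$. By a limiting argument we may also take $\hat Z$ invertible: the non-invertible positive Gaussian operators are limits of invertible ones, and both sides are continuous. An invertible positive even Gaussian operator is $c\exp(\tfrac{i}{4}\sum_{pq}h_{pq}\hat\gamma_p\hat\gamma_q)$ with $h$ real antisymmetric. Block-diagonalizing $h$ by $W\in\operatorname{SO}(8)$ then gives
\begin{align}
    \hat Z=\hat W\Bigl(\bigotimes_{k=1}^4\operatorname{diag}(p_k,q_k)\Bigr)\hat W^\dagger,\qquad p_k,q_k>0,\qquad \|\hat Z\|_4^4=\prod_{k=1}^4(p_k^4+q_k^4).
\end{align}
Thus the target is $\langle\mathrm{vac}|\hat Z|\mathrm{vac}\rangle+\langle\mathrm{full}|\hat Z|\mathrm{full}\rangle\le\prod_k(p_k^4+q_k^4)^{1/4}$, to be shown for every $W\in\operatorname{SO}(8)$.

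The first observation is that the eigenvalue data alone is not enough. Ky Fan would only bound the left side by the sum of the two largest eigenvalues of $\hat Z$. Normalizing $p_k=1\ge q_k$, that sum is $1+\max_k q_k$, which exceeds $(1+q^4)^{1/4}$. So the proof must use that the two test vectors $|\mathrm{vac}\rangle,|\mathrm{full}\rangle$ differ in all four modes, and that the eigenvectors of $\hat Z$ form a Gaussian frame.

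I would exploit this as follows. Write $a:=\langle\mathrm{vac}|\hat Z|\mathrm{vac}\rangle$ and $b:=\langle\mathrm{full}|\hat Z|\mathrm{full}\rangle$. Using the particle--hole transformation exchanging $|\mathrm{vac}\rangle$ and $|\mathrm{full}\rangle$, both are vacuum expectation values of Gaussian operators with the same spectrum. Each has a closed form via the Gaussian overlap formula: with $\Lambda=W\bigl(\bigoplus_k\tanh(\tfrac12\log(q_k/p_k))J\bigr)W^T$ the covariance of $\hat Z/\Tr\hat Z$, one gets $a=\Tr\hat Z\cdot2^{-4}\,\mathrm{Pf}(\Lambda-\Lambda_{\rm vac})$-type expressions, and $b$ is the same with $\Lambda_{\rm vac}\to-\Lambda_{\rm vac}$, where $\Lambda_{\rm vac}=\bigoplus J$. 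The identity $\mathrm{Pf}(\Lambda-\Lambda_0)+\mathrm{Pf}(\Lambda+\Lambda_0)$ for the two opposite-covariance pure states is an even polynomial in the off-diagonal structure of $\Lambda$. I would then split the eight Majorana modes into the two pairs $(1,2)$ and $(3,4)$ used for $|g_s\rangle$. That splitting should let me bound $a+b$ by a product of two one-pair quantities, each of which is controlled by a two-mode analogue, $\langle00|\hat Y|00\rangle+\langle11|\hat Y|11\rangle\le\|\hat Y\|_4^{\,\cdot}$-type. The two-mode analogue is checkable by hand, since the space of two-mode even positive Gaussian operators is essentially a $2\times2$ problem on $\operatorname{span}\{|00\rangle,|11\rangle\}$ together with the odd sector.

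The main obstacle is the final scalar inequality over all $W\in\operatorname{SO}(8)$, namely $(a+b)^4\le\prod_k(p_k^4+q_k^4)$. This is a genuine optimization over the orbit, and equality is expected on a nontrivial set, since the fourth Schatten norm is exactly the threshold chosen to make Lemma~\ref{lem:filtercontraction-simple} come out as $\|\hat Q\|_4=1$. My first attempt would be a Hölder/Cauchy--Schwarz chain that uses the Gaussian structure. I would write $a+b=\Tr(\hat P_{E,1}\hat Z)$ and apply Cauchy--Schwarz twice, in the form $\Tr(\hat P\hat Z)\le\Tr(\hat P\hat Z^2)^{1/2}\Tr\hat P^{1/2}$. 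Then I would replace the crude factor $\Tr\hat P=2$ by the sharper fact that $\hat P_{E,1}$ is the average over $\theta$ of $2|e_\theta\rangle\langle e_\theta|$, with $|e_\theta\rangle=(|\mathrm{vac}\rangle+e^{i\theta}|\mathrm{full}\rangle)/\sqrt2$. I would then use the maximal non-Gaussianity of $|e_\theta\rangle$, the $Q(v)$ criterion of Eq.~\eqref{s1:eq:Q} being maximal for it, to show that its overlap with any Gaussian eigenframe is spread evenly. If this fails, the fallback is a direct Lagrange-multiplier analysis in the normal form: parameterize $\operatorname{SO}(8)$ modulo the stabilizers of both the spectrum and $\hat P_{E,1}$, reducing to a few angles.
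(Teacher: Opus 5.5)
Your reduction to the normal form $\hat Z=\hat W\bigl(\bigotimes_k\operatorname{diag}(p_k,q_k)\bigr)\hat W^\dagger$ is correct, and so is your observation that spectral data alone (Ky Fan) cannot suffice. But the proposal stops exactly where the proof has to happen. The inequality $a+b\le\prod_k(p_k^4+q_k^4)^{1/4}$ is left as an open optimization over $\operatorname{SO}(8)$, and none of the routes you sketch closes it.

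\emph{The pair split.} This step has no mechanism behind it. A generic $W$ couples the pairs $(1,2)$ and $(3,4)$, so bounding $a+b$ by a product of one-pair quantities needs a tensorization statement for correlated operators, which you do not supply. You also leave the exponent of the two-mode analogue unspecified.

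\emph{The Cauchy--Schwarz chain.} This route is lossy in a way no better constant can repair. It yields $\Tr(\hat P_{E,1}\hat Z)\le 2^{3/4}\Tr(\hat P_{E,1}\hat Z^4)^{1/4}$. For $\hat Z=|\mathrm{vac}\rangle\langle\mathrm{vac}|$ the lemma is an equality, yet the chain gives $2^{3/4}$ instead of $1$. The same loss already occurs in your $\theta$-averaged version, at the step $\langle e_\theta|\hat Z|e_\theta\rangle\le\langle e_\theta|\hat Z^4|e_\theta\rangle^{1/4}$, which reads $1/2\le 2^{-1/4}$ there.

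\emph{The spreading argument.} Maximal non-Gaussianity of $|e_\theta\rangle$ only gives $w_x:=|\langle e_\theta|\hat W|x\rangle|^2\le 1/2$. Maximizing $\sum_x 2w_xd_x$ under $w_x\le1/2$ and $\sum_xw_x=1$ returns the sum of the two largest even eigenvalues. That is the Ky Fan bound you already ruled out, and it still fails in the even sector: for $q_1=q_2=q$, $q_3=q_4=0$ it gives $1+q^2>\sqrt{1+q^4}=\|\hat Z\|_4$.

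\textbf{The missing idea.} What you need is an exact symmetry of the weights, not an inequality on them. Set $r_x:=|\langle x|\hat W^\dagger|\mathrm{vac}\rangle|^2$, which is a probability vector, and $d_x:=\prod_k\lambda_{k,x_k}$ with $\lambda_{k,0}=p_k$, $\lambda_{k,1}=q_k$. The states $|\mathrm{full}\rangle$ and $\hat W|\bar x\rangle$ have covariance matrices opposite to those of $|\mathrm{vac}\rangle$ and $\hat W|x\rangle$. The pure-state overlap $2^{-4}|\mathrm{Pf}(\Gamma_1+\Gamma_2)|$ is invariant when both covariances are reversed; odd $x$ give zero on both sides. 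Hence $|\langle x|\hat W^\dagger|\mathrm{full}\rangle|^2=r_{\bar x}$.

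You came close with ``$\Lambda_{\rm vac}\to-\Lambda_{\rm vac}$''. However, you applied the reversal to the mixed covariance $\Lambda$ of $\hat Z/\Tr\hat Z$, which discards the spectral information. It has to be applied to the pure eigenvectors $\hat W|x\rangle$. Then
\[
a+b=\sum_x r_x\,(d_x+d_{\bar x}),\qquad d_x+d_{\bar x}\le\prod_{k=1}^4(p_k^4+q_k^4)^{1/4},
\]
where the second bound is four-factor H\"older, using that $x$ and $\bar x$ differ in every mode. Averaging over the probability vector $r$ finishes the proof, with no optimization over $\operatorname{SO}(8)$.

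This is the paper's proof. It also shows why the exponent is exactly $4$: one H\"older factor per mode. A split into two-mode pairs would obscure this.
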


\begin{proof}
The claim is immediate for $\hat Z=0$. Otherwise, the Gaussian canonical form~\cite{Bravyi2005} gives a parity-preserving Gaussian unitary $\hat V$ and nonnegative numbers $\lambda_{k,0},\lambda_{k,1}$ such that
\begin{align}
    \hat Z &=\hat V^\dagger\left[\bigotimes_{k=1}^4\left(\lambda_{k,0}|0\rangle\langle0|+\lambda_{k,1}|1\rangle\langle1|\right)\right]\hat V.
\end{align}
For $x\in\{0,1\}^4$, write $d_x:=\prod_{k=1}^4\lambda_{k,x_k}$, $p_x:=|\langle x|\hat V|\text{vac}\rangle|^2$, and let $\bar x$ be the bitwise complement of $x$. Complementing all occupation numbers reverses the covariance matrix of a Fock state. Thus $\hat V|\text{vac}\rangle$ and $\hat V|\text{full}\rangle$ have opposite covariance matrices, as do $|x\rangle$ and $|\bar x\rangle$. For even-weight $x$, the pure-Gaussian overlap formula~\cite[Eq.~(23)]{DiasKonig2024} is unchanged when both covariance matrices are reversed, giving
\begin{align}\label{eq:complementary-amplitude}
    |\langle x|\hat V|\text{full}\rangle|^2 &=|\langle\bar x|\hat V|\text{vac}\rangle|^2=p_{\bar x}.
\end{align}
For odd-weight strings both sides vanish by parity preservation.

It follows that $\Tr(\hat P_{E,1}\hat Z)=\sum_xp_x(d_x+d_{\bar x})$. Each pair $d_x+d_{\bar x}$ is a sum of two products of four factors, so four-factor H\"older's inequality gives
\begin{align}
    d_x+d_{\bar x} &\leq\prod_{k=1}^4(\lambda_{k,0}^4+\lambda_{k,1}^4)^{1/4}=\|\hat Z\|_4.
\end{align}
This is where the Schatten fourth norm enters. Averaging with the probabilities $p_x$, which sum to one, proves the lemma.
\end{proof}

For several blocks, conditioning the first block on $|\text{vac}\rangle$ or $|\text{full}\rangle$ leaves two operators on the remaining modes. An inductive argument therefore needs a bound on the \emph{sum} of their fourth norms. The next lemma supplies exactly this bound, including when the chosen block is correlated with the remaining modes.

\begin{lemma}\label{lem:conditionalcompression-simple}
Let $\hat X$ be a positive even Gaussian operator on a four-mode subsystem $A$ and a fermionic environment $B$. Define
\begin{align}
    \hat X_0 &:=\langle\text{vac}|_A \hat X|\text{vac}\rangle_A, \qquad \hat X_4:=\langle\text{full}|_A \hat X|\text{full}\rangle_A.
\end{align}
Then
\begin{align}\label{eq:conditionalcompression-simple}
    \|\hat X_0\|_4+\|\hat X_4\|_4 &\leq\|\hat X\|_4.
\end{align}
\end{lemma}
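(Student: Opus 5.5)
The plan is to mirror the proof of Lemma~\ref{lem:scalarcompression-simple}: reduce to a diagonal situation using Gaussian canonical forms, then finish with a Hölder-type inequality. The first step is a duality reformulation. For positive $\hat X_0$ we have $\|\hat X_0\|_4=\Tr(\hat X_0\hat T_0)$ with $\hat T_0:=\hat X_0^3/\|\hat X_0\|_4^3$, which satisfies $\|\hat T_0\|_{4/3}=1$; define $\hat T_4$ analogously. Hence
\begin{align}
    \|\hat X_0\|_4+\|\hat X_4\|_4=\Tr\!\bigl[\hat X\,\hat W\bigr],\qquad \hat W:=|\text{vac}\rangle\langle\text{vac}|_A\otimes\hat T_0+|\text{full}\rangle\langle\text{full}|_A\otimes\hat T_4 .
\end{align}
Projecting a positive even Gaussian operator onto a Fock state of $A$ gives a positive even Gaussian operator on $B$, and powers of such operators remain Gaussian. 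So $\hat T_0$ and $\hat T_4$ are positive Gaussian operators on $B$. Plain Hölder, $\Tr(\hat X\hat W)\le\|\hat X\|_4\|\hat W\|_{4/3}$, loses a factor $2^{3/4}$, so the proof has to use the Gaussianity of $\hat X$ and not just its positivity. This is the same phenomenon that occurs in Lemma~\ref{lem:scalarcompression-simple}, where the complementary-string pairing does the work.

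The second step reduces the correlations between $A$ and $B$. A positive even Gaussian $\hat X$ on $AB$ can be written as $\hat X=\hat V^\dagger\hat D\hat V$, with $\hat D$ a product of single-mode diagonal factors. Using a Gaussian unitary acting only on $B$, which leaves $\hat X_0$, $\hat X_4$ and all Schatten norms unchanged, I would bring the $A$--$B$ block of the covariance data into a normal form. In this form at most eight modes of $B$ are coupled to $A$, and the rest factor off as a tensor product. Both sides of Eq.~\eqref{eq:conditionalcompression-simple} are multiplicative over this decoupled factor, so it cancels. This leaves a statement about a positive Gaussian operator on $A$ plus a constant-size environment $B'$.

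The third step is the core inequality, and I expect it to be the main obstacle. Write $\hat X=\hat V^\dagger\hat D\hat V$ on $AB'$ and expand both $\|\hat X_0\|_4^4=\Tr(\hat X_0^4)$ and $\Tr(\hat X\hat W)$ in the eigenbasis $\hat V^\dagger|x\rangle$. As in Eq.~\eqref{eq:complementary-amplitude}, the particle--hole symmetry on $A$ relates the amplitudes $\langle\text{full}|_A\langle y|_{B'}\hat V^\dagger|x\rangle$ to vacuum amplitudes at complemented strings. This should pair each eigenvalue product $d_x$ with a partner, so that the quantity to bound becomes a convex combination of terms of the form $d_x^{(0)}+d_{x}^{(4)}$. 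Each such term would then be bounded by $\prod_k(\lambda_{k,0}^4+\lambda_{k,1}^4)^{1/4}=\|\hat X\|_4$ using the four-factor Hölder inequality, as in Lemma~\ref{lem:scalarcompression-simple}.

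The difficulty is that the environment indices $y$ now enter the amplitudes, so the weights $p_x$ become matrices rather than probabilities. If this pairing argument does not close, my fallback is an interpolation argument. I would consider the Gaussian family $\hat X(\theta)$ obtained by rotating the $A$--$B'$ coupling to zero. I would then show that $\|\hat X_0(\theta)\|_4+\|\hat X_4(\theta)\|_4$ is monotone in $\theta$ while $\|\hat X(\theta)\|_4$ is constant. At $\theta$ where the coupling vanishes, $\hat X$ factorizes as $\hat Z_A\otimes\hat Y_B$, and the inequality follows directly from Lemma~\ref{lem:scalarcompression-simple} applied to $\hat Z_A$, together with multiplicativity of the fourth norm.
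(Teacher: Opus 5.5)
\textbf{There is a genuine gap.} Your first step is sound and matches the paper's. Your $\hat W$ is exactly $\hat\Omega^{3/4}$ for $\hat\Omega:=|\text{vac}\rangle\langle\text{vac}|_A\otimes\hat X_0^4/\Tr\hat X_0^4+|\text{full}\rangle\langle\text{full}|_A\otimes\hat X_4^4/\Tr\hat X_4^4$, and you are right that plain H\"older is too weak. The argument breaks down after that.

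\textbf{Step 2 fails for the operators the lemma must handle.} An orthogonal transformation on the Majoranas of $B$ can concentrate the off-diagonal covariance block $\Gamma_{AB}$ on at most eight Majoranas of $B$. It cannot make the rest factor off, because the $B$--$B$ block still couples those Majoranas to the remainder of $B$. Factorization would need a small $\Gamma_B$-invariant subspace containing the range of $\Gamma_{AB}^T$, and generically the only such subspace is all of $B$. Such a mode reduction does exist for pure Gaussian states, but there the lemma is trivial: for rank-one $\hat X$, $\|\hat X_0\|_4+\|\hat X_4\|_4=\Tr[(\hat P_{E,1}\otimes\hat I)\hat X]\le\Tr\hat X=\|\hat X\|_4$. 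The application needs genuinely mixed operators, for example $\hat Q=\hat F^2$ with $u_j,v_j>0$ and the conditional operators in the induction of Corollary~\ref{thm:compression-simple}. For these, $\Tr\hat X_0^4$ depends on the spectrum of $\hat X_0$ on all of $B$, so no decoupled factor cancels.

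\textbf{Step 3 and the fallback do not supply the missing argument.} Step 3 is left open; you concede yourself that the matrix-valued pairing may not close. The fallback only asserts that the left-hand side is monotone along a decoupling path while the right-hand side stays constant, with no mechanism. Showing that the left-hand side, over the unitary orbit of $\hat X$, is maximized at an uncorrelated point is essentially the lemma itself.

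\textbf{The missing idea is that no normal form is needed: $B$ can be removed by a data-processing inequality.} Since $\hat W=\hat\Omega^{3/4}$ and $\Tr_B\hat\Omega=\hat P_{E,1}$, apply Petz's monotonicity of the quasi-entropy $\Tr(\hat\rho^{1/4}\hat\sigma^{3/4})$ under partial trace with $\hat\rho=\hat X^4$ and $\hat\sigma=\hat\Omega$. This gives $\Tr(\hat X\hat W)\le\Tr[\hat P_{E,1}(\Tr_B\hat X^4)^{1/4}]$. The operator $(\Tr_B\hat X^4)^{1/4}$ is a positive even Gaussian operator on the four modes of $A$, and its fourth norm is $(\Tr\hat X^4)^{1/4}=\|\hat X\|_4$. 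Lemma~\ref{lem:scalarcompression-simple} then closes the argument. Gaussianity is used only in this final one-block step, not to decouple $B$.
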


\begin{proof}
We first express the sum of norms as a single trace and then use monotonicity under the partial trace to remove $B$. This reduces the problem to the one-block bound in Lem.~\ref{lem:scalarcompression-simple}.

Write $|0\rangle_A:=|\text{vac}\rangle_A$ and $|4\rangle_A:=|\text{full}\rangle_A$. To represent each norm as a trace, choose $\hat\sigma_s:=\hat X_s^4/\Tr\hat X_s^4$ when $\hat X_s\neq0$, so that $\Tr(\hat X_s\hat\sigma_s^{3/4})=\|\hat X_s\|_4$. When $\hat X_s=0$, any even normalized density operator $\hat\sigma_s$ gives the same identity. Combining the two choices into
\begin{align}
    \hat\Omega &:=\sum_{s\in\{0,4\}}|s\rangle\langle s|_A\otimes\hat\sigma_s
\end{align}
gives
\begin{align}\label{eq:Qidentity-simple}
    \Tr(\hat X\hat\Omega^{3/4}) &=\|\hat X_0\|_4+\|\hat X_4\|_4, \qquad \Tr_B\hat\Omega=\hat P_{E,1}.
\end{align}
The first identity uses the orthogonality of $|0\rangle_A$ and $|4\rangle_A$, and the second uses $\Tr\hat\sigma_s=1$. No Gaussianity assumption on $\hat\Omega$ is needed.

We now use the partial-trace form of Petz's monotonicity~\cite{Petz1985}, stated for arbitrary positive semidefinite operators in Ref.~\cite[Cor.~2.19(1), arXiv version]{Hiai2023}. At exponent $1/4$, it reads
\begin{align}\label{eq:app-petz-partial-trace}
    \Tr(\hat\rho^{1/4}\hat\sigma^{3/4}) &\leq\Tr\!\left[(\Tr_B\hat\rho)^{1/4}(\Tr_B\hat\sigma)^{3/4}\right].
\end{align}
Neither normalization nor invertibility is required. Apply this inequality with $\hat\rho=\hat X^4$ and $\hat\sigma=\hat\Omega$. Since $\Tr_B\hat\Omega=\hat P_{E,1}$ is a projector, Eq.~\eqref{eq:Qidentity-simple} gives
\begin{align}\label{eq:conditional-petz-bound}
    \|\hat X_0\|_4+\|\hat X_4\|_4 &=\Tr(\hat X\hat\Omega^{3/4})\leq\Tr\!\left[\hat P_{E,1}(\Tr_B\hat X^4)^{1/4}\right].
\end{align}
The Gaussian canonical form and Wick's rule~\cite{Bravyi2005} show, respectively, that positive powers and partial traces preserve Gaussianity. Hence $(\Tr_B\hat X^4)^{1/4}$ is a positive even Gaussian operator on the four modes of $A$, to which Lem.~\ref{lem:scalarcompression-simple} applies:
\begin{align}
    \Tr\!\left[\hat P_{E,1}(\Tr_B\hat X^4)^{1/4}\right] &\leq\|(\Tr_B\hat X^4)^{1/4}\|_4=(\Tr\hat X^4)^{1/4}=\|\hat X\|_4.
\end{align}
Together with Eq.~\eqref{eq:conditional-petz-bound}, this proves the claim.
\end{proof}

Fock conditioning preserves Gaussianity by the occupation-measurement update~\cite{Bravyi2005}; positivity is preserved by taking a matrix element. We can therefore use the conditional bound at each successive block, which gives the desired compression inequality.

\begin{corollary}\label{thm:compression-simple}
For every positive even Gaussian operator $\hat X$ on $4n$ modes,
\begin{align}\label{eq:compression-simple}
    \Tr(\hat P_E\hat X) &\leq\|\hat X\|_4.
\end{align}
\end{corollary}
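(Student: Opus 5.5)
We argue by induction on the number $n$ of four-mode blocks, using Lem.~\ref{lem:conditionalcompression-simple} to peel off one block at a time. For $n=1$ the claim is exactly Lem.~\ref{lem:scalarcompression-simple}. For the inductive step, write the modes as $A\cup B$, with $A$ the first four-mode block and $B$ the remaining $4(n-1)$ modes. Since $\hat P_E=\hat P_{E,1}\otimes\hat P_E^{(n-1)}$ and $\hat P_{E,1}=|\text{vac}\rangle\langle\text{vac}|_A+|\text{full}\rangle\langle\text{full}|_A$, we decompose
\begin{align}
    \Tr(\hat P_E\hat X)&=\Tr_B\!\left(\hat P_E^{(n-1)}\hat X_0\right)+\Tr_B\!\left(\hat P_E^{(n-1)}\hat X_4\right),
\end{align}
where $\hat X_0$ and $\hat X_4$ are the conditioned operators of Lem.~\ref{lem:conditionalcompression-simple}. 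Fermionic ordering is harmless here: the vacuum and full states of $A$ are even, and $\hat X$ is even, so taking these matrix elements introduces no Jordan--Wigner sign.

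Next we check that $\hat X_0$ and $\hat X_4$ satisfy the hypotheses on $B$. They are positive, because each is a compression of $\hat X\ge0$ by a fixed vector. They are even, since parity of $\hat X$ and of the conditioning states is preserved. They are Gaussian, because Fock conditioning on occupation states of $A$ is a Gaussian operation (the occupation-measurement update cited just before the corollary). A zero operator satisfies the bound trivially.

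Applying the induction hypothesis to each term then gives
\begin{align}
    \Tr(\hat P_E\hat X)\le\|\hat X_0\|_4+\|\hat X_4\|_4\le\|\hat X\|_4,
\end{align}
where the last inequality is Lem.~\ref{lem:conditionalcompression-simple}.

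The main point of care, though not a real obstacle, is the Gaussianity of $\hat X_4$. Conditioning on $|\text{full}\rangle$ is the same as applying the number projectors $\hat n_j$ for all $j\in A$ and then taking a partial trace. Both operations preserve Gaussianity, which justifies applying the induction hypothesis. All the analytic content sits in Lem.~\ref{lem:conditionalcompression-simple}, so the corollary is a straightforward telescoping argument.
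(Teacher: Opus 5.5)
Your proposal is correct and follows the paper's proof: induction on $n$ with Lem.~\ref{lem:scalarcompression-simple} as the base case, splitting $\hat P_E$ on the first block, and applying the induction hypothesis to the positive, even, Gaussian conditional operators $\hat X_0,\hat X_4$. The inductive step then closes with Lem.~\ref{lem:conditionalcompression-simple}. Your explicit check of positivity, parity, and Gaussianity of $\hat X_0$ and $\hat X_4$ is the same justification the paper gives just before the corollary.
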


\begin{proof}
We use induction on $n$. The case $n=1$ is Lem.~\ref{lem:scalarcompression-simple}. For the inductive step, take the first block as $A$ and the remaining blocks as $B$, and define $\hat X_0,\hat X_4$ as in Lem.~\ref{lem:conditionalcompression-simple}. Both conditional operators are positive and Gaussian, so the induction hypothesis applies to each of them. Consequently,
\begin{align}
    \Tr(\hat P_E\hat X) &=\sum_{s\in\{0,4\}}\Tr\!\left(\hat P_{E,1}^{\otimes(n-1)}\hat X_s\right)\leq\|\hat X_0\|_4+\|\hat X_4\|_4\leq\|\hat X\|_4,
\end{align}
where the last step is the conditional bound.
\end{proof}

It remains to return from $\hat P_E$ to $\hat P_{\mathcal C}$ and evaluate the norm of the non-unitary Gaussian operation. The resulting bound is uniform in both the number of blocks and the Gaussian unitary.

\begin{proof}[Proof of Lem.~\ref{lem:filtercontraction-simple}]
Write $\hat\gamma_{j,r}:=\hat\gamma_{8(j-1)+r}$ for the $r$th Majorana operator in block $j$, with $r\in[8]$. The Gaussian unitary $\hat C_j:=\exp[(\pi/2)\hat\gamma_{j,6}\hat\gamma_{j,8}]$ complements the last two occupation numbers. It maps $|a\rangle=|1100\rangle$ to $|\text{full}\rangle$ and $|b\rangle=|0011\rangle$ to $|\text{vac}\rangle$, up to phases. Thus $\hat C:=\bigotimes_j\hat C_j$ satisfies $\hat C \hat P_{\mathcal C}\hat C^\dagger=\hat P_E$.

Set $\hat X:=\hat C\hat G^\dagger \hat Q\hat G\hat C^\dagger$, which is positive, even, and Gaussian. The compression inequality and unitary invariance of the Schatten norm give
\begin{align}
    \Tr\!\left(\hat Q\hat G \hat P_{\mathcal C}\hat G^\dagger\right) &=\Tr(\hat P_E\hat X)\leq\|\hat X\|_4=\|\hat Q\|_4.
\end{align}
Each of the four one-mode factors of $\hat Q_j=\hat F_j^2$ has eigenvalues $\sqrt{u_j}$ and $\sqrt{v_j}$, by Eq.~\eqref{eq:app-bounded-filter}. Hence $\Tr \hat Q_j^4=(u_j^2+v_j^2)^4$. Multiplicativity of the Schatten norm and the normalization $u_j^2+v_j^2=1$ now imply
\begin{align}
    \|\hat Q\|_4 &=\prod_{j=1}^n\|\hat Q_j\|_4=\prod_{j=1}^n(u_j^2+v_j^2)=1.
\end{align}
These identities also hold when $v_j=0$, so the bound includes all allowed coefficients.
\end{proof}

\addtocontents{toc}{\protect\AppendixTOCHide}
\bibliography{reference.bib}

@article{oh2024quantum,
  title={Quantum-inspired classical algorithms for molecular vibronic spectra},
  author={Oh, Changhun and Lim, Youngrong and Wong, Yat and Fefferman, Bill and Jiang, Liang},
  journal={Nature Physics},
  volume={20},
  number={2},
  pages={225--231},
  year={2024},
  publisher={Nature Publishing Group UK London},
  doi = {10.1038/s41567-023-02308-9}
}

@article{bouland2019complexity,
  title={On the complexity and verification of quantum random circuit sampling},
  author={Bouland, Adam and Fefferman, Bill and Nirkhe, Chinmay and Vazirani, Umesh},
  journal={Nature Physics},
  volume={15},
  number={2},
  pages={159--163},
  year={2019},
  doi={10.1038/s41567-018-0318-2}
}

@inproceedings{clifford2018classical,
  title={The classical complexity of boson sampling},
  author={Clifford, Peter and Clifford, Rapha{\"e}l},
  booktitle={Proceedings of the Twenty-Ninth Annual ACM-SIAM Symposium on Discrete Algorithms},
  pages={146--155},
  year={2018},
  organization={SIAM},
  doi={10.1137/1.9781611975031.10},
  eprint={1706.01260},
  archivePrefix={arXiv}
}

@article{clifford2024faster,
  title={Faster classical boson sampling},
  author={Clifford, Peter and Clifford, Rapha{\"e}l},
  journal={Physica Scripta},
  volume={99},
  pages={065121},
  year={2024},
  doi={10.1088/1402-4896/ad4688},
  eprint={2005.04214},
  archivePrefix={arXiv}
}

@article{Brod2020classicalsimulation,
  author  = {Brod, Daniel J. and Oszmaniec, Micha{\l}},
  title   = {Classical simulation of linear optics subject to nonuniform losses},
  journal = {Quantum},
  volume  = {4},
  pages   = {267},
  year    = {2020},
  doi     = {10.22331/q-2020-05-25-267},
  archivePrefix = {arXiv},
  eprint  = {1906.06696},
  primaryClass = {quant-ph}
}

@article{Oszmaniec2014,
  title = {Classical simulation of fermionic linear optics augmented with noisy ancillas},
  author = {Oszmaniec, Micha{\l} and Gutt, Jan and Ku{\'s}, Marek},
  journal = {Physical Review A},
  volume = {90},
  number = {2},
  pages = {020302},
  numpages = {5},
  year = {2014},
  month = {Aug},
  publisher = {American Physical Society},
  doi = {10.1103/PhysRevA.90.020302},
  url = {https://link.aps.org/doi/10.1103/PhysRevA.90.020302}
}

@misc{knill2001fermioniclinearopticsmatchgates,
      title={Fermionic Linear Optics and Matchgates}, 
      author={E. Knill},
      year={2001},
      eprint={quant-ph/0108033},
      archivePrefix={arXiv},
      primaryClass={quant-ph},
      url={https://arxiv.org/abs/quant-ph/0108033}, 
}

@article{JordanWigner,
  author  = {Jordan, P. and Wigner, E.},
  title   = {{{\"U}ber das Paulische {\"A}quivalenzverbot}},
  journal = {Zeitschrift f{\"u}r Physik},
  volume  = {47},
  pages   = {631--651},
  year    = {1928},
  doi     = {10.1007/BF01331938},
  number = {9--10}
}

@article{Parity,
  author  = {Szalay, Szil{\'a}rd and Zimbor{\'a}s, Zolt{\'a}n and M{\'a}t{\'e}, Mih{\'a}ly and Barcza, Gergely and Schilling, Christian and Legeza, {\"O}rs},
  title   = {Fermionic systems for quantum information people},
  journal = {Journal of Physics A: Mathematical and Theoretical},
  volume  = {54},
  pages   = {393001},
  year    = {2021},
  doi     = {10.1088/1751-8121/ac0646},
  number = {39}
}

@article{Phasecraft,
  title={Fermionic dynamics on a trapped-ion quantum computer beyond exact classical simulation},
  author={Alam, Faisal and Bosse, Jan Lukas and {\v{C}}epait{\.e}, Ieva and others},
  journal={arXiv preprint arXiv:2510.26300},
  year={2025},
  url = {https://arxiv.org/abs/2510.26300}
}

@article{DiasKonig2024,
  author  = {Dias, Beatriz and Koenig, Robert},
  title   = {Classical simulation of non-{Gaussian} fermionic circuits},
  journal = {Quantum},
  volume  = {8},
  pages   = {1350},
  year    = {2024},
  doi     = {10.22331/q-2024-05-21-1350},
  eprint  = {2307.12912},
  archivePrefix = {arXiv},
  primaryClass  = {quant-ph}
}

@article{Bravyi2005,
  author  = {Bravyi, Sergey},
  title   = {Lagrangian representation for fermionic linear optics},
  journal = {Quantum Information and Computation},
  volume  = {5},
  number  = {3},
  pages   = {216--238},
  year    = {2005},
  eprint  = {quant-ph/0404180},
  archivePrefix = {arXiv},
  doi = {10.26421/QIC5.3-3}
}

@article{Petz1985,
  author  = {Petz, D\'enes},
  title   = {Quasi-entropies for states of a von {Neumann} algebra},
  journal = {Publications of the Research Institute for Mathematical Sciences},
  volume  = {21},
  number  = {4},
  pages   = {787--800},
  year    = {1985},
  doi     = {10.2977/PRIMS/1195178929}
}

@article{bravyi2022simulate,
  title={How to simulate quantum measurement without computing marginals},
  author={Bravyi, Sergey and Gosset, David and Liu, Yinchen},
  journal={Physical Review Letters},
  volume={128},
  number={22},
  pages={220503},
  year={2022},
  publisher={APS},
  doi = {10.1103/PhysRevLett.128.220503}
}

@article{oh2026classical,
  title={Classical simulation of free-fermionic dynamics and quantum chemistry with magic input},
  author={Oh, Changhun and Oszmaniec, Micha{\l} and Reardon-Smith, Oliver and Zimbor{\'a}s, Zolt{\'a}n},
  journal={arXiv preprint arXiv:2604.26813},
  year={2026},
  url = {https://arxiv.org/abs/2604.26813}
}

@article{oszmaniec2022fermion,
  title={Fermion sampling: A robust quantum computational advantage scheme using fermionic linear optics and magic input states},
  author={Oszmaniec, Micha{\l} and Dangniam, Ninnat and Morales, Mauro E. S. and Zimbor{\'a}s, Zolt{\'a}n},
  journal={PRX Quantum},
  volume={3},
  number={2},
  pages={020328},
  year={2022},
  publisher={APS},
  doi = {10.1103/PRXQuantum.3.020328}
}

@article{AvezBender2012,
 author={Avez, B. and Bender, M.},
 title={Evaluation of overlaps between arbitrary fermionic quasiparticle vacua},
 journal={Phys. Rev. C},
 volume={85},
 pages={034325},
 year={2012},
 eprint={1109.2078},
 archivePrefix={arXiv},
  number = {3},
  doi = {10.1103/PhysRevC.85.034325}
}

@article{Wimmer2011,
 author={Wimmer, M.},
 title={Algorithm 923: Efficient numerical computation of the {Pfaffian} for dense and banded skew-symmetric matrices},
 year={2012},
 eprint={1102.3440},
 archivePrefix={arXiv},
  journal = {ACM Transactions on Mathematical Software},
  volume = {38},
  number = {4},
  pages = {30},
  numpages = {17},
  doi = {10.1145/2331130.2331138}
}

@article{mcardle2020quantum,
  title={Quantum computational chemistry},
  author={McArdle, Sam and Endo, Suguru and Aspuru-Guzik, Al{\'a}n and Benjamin, Simon C and Yuan, Xiao},
  journal={Reviews of Modern Physics},
  volume={92},
  number={1},
  pages={015003},
  year={2020},
  publisher={APS},
  doi = {10.1103/RevModPhys.92.015003}
}

@article{valiant2002quantum,
  title={Quantum circuits that can be simulated classically in polynomial time},
  author={Valiant, Leslie G},
  journal={SIAM Journal on Computing},
  volume={31},
  number={4},
  pages={1229--1254},
  year={2002},
  publisher={SIAM},
  doi = {10.1137/S0097539700377025}
}

@article{hebenstreit2020computational,
  title={Computational power of matchgates with supplementary resources},
  author={Hebenstreit, Martin and Jozsa, Richard and Kraus, Barbara and Strelchuk, Sergii},
  journal={Physical Review A},
  volume={102},
  number={5},
  pages={052604},
  year={2020},
  publisher={APS},
  doi = {10.1103/PhysRevA.102.052604}
}

@article{DaviesLewis1970,
  author = {Davies, E. B. and Lewis, J. T.},
  title = {An operational approach to quantum probability},
  journal = {Communications in Mathematical Physics},
  volume = {17},
  pages = {239--260},
  year = {1970},
  doi = {10.1007/BF01647093},
  number = {3}
}

@book{Watrous2018,
  author = {Watrous, John},
  title = {The Theory of Quantum Information},
  publisher = {Cambridge University Press},
  year = {2018},
  doi = {10.1017/9781316848142}
}

@article{bako2025fermionic,
  title={Fermionic {Born} machines: Classical training of quantum generative models based on fermion sampling},
  author={Bak{\'o}, Bence and Kolarovszki, Zolt{\'a}n and Zimbor{\'a}s, Zolt{\'a}n},
  journal={arXiv preprint arXiv:2511.13844},
  year={2025},
  url = {https://arxiv.org/abs/2511.13844}
}

@article{reardon2024improved,
  title={Improved simulation of quantum circuits dominated by free fermionic operations},
  author={Reardon-Smith, Oliver and Oszmaniec, Micha{\l} and Korzekwa, Kamil},
  journal={Quantum},
  volume={8},
  pages={1549},
  year={2024},
  publisher={Verein zur F{\"o}rderung des Open Access Publizierens in den Quantenwissenschaften},
  doi = {10.22331/q-2024-12-04-1549}
}

@article{jozsa2008matchgates,
  title={Matchgates and classical simulation of quantum circuits},
  author={Jozsa, Richard and Miyake, Akimasa},
  journal={Proceedings of the Royal Society A: Mathematical, Physical and Engineering Sciences},
  pages={3089--3106},
  year={2008},
  publisher={The Royal Society},
  volume = {464},
  number = {2100},
  doi = {10.1098/rspa.2008.0189}
}

@article{cudby2023gaussian,
  title={Gaussian decomposition of magic states for matchgate computations},
  author={Cudby, Joshua and Strelchuk, Sergii},
  journal={arXiv preprint arXiv:2307.12654},
  year={2023},
  url = {https://arxiv.org/abs/2307.12654}
}

@article{dias2026optimal,
  title={Optimal and improved gate decompositions for accelerated classical simulation of near-{Gaussian} fermionic circuits},
  author={Dias, Beatriz and Bosse, Jan Lukas and Seddon, James R},
  journal={arXiv preprint arXiv:2603.18869},
  year={2026},
  url = {https://arxiv.org/abs/2603.18869}
}

@article{kerenidis2026scalable,
  title={Scalable Quantum Machine Learning: Trainability, Expressivity and Efficiency},
  author={Kerenidis, Iordanis},
  journal={arXiv preprint arXiv:2607.24014},
  year={2026},
  url = {https://arxiv.org/abs/2607.24014}
}

@article{Hiai2023,
  title={Equality cases in monotonicity of quasi-entropies, {Lieb}'s concavity and {Ando}'s convexity},
  author={Hiai, Fumio},
  journal={Journal of Mathematical Physics},
  volume={64},
  number={10},
  pages={102201},
  year={2023},
  publisher={AIP Publishing},
  doi={10.1063/5.0154271},
  eprint={2304.04361},
  archivePrefix={arXiv},
  primaryClass={quant-ph}
}

@article{aaronson2013computational,
  title={The computational complexity of linear optics},
  author={Aaronson, Scott and Arkhipov, Alex},
  journal={Theory of Computing},
  volume={9},
  number={4},
  pages={143--252},
  year={2013},
  doi={10.4086/toc.2013.v009a004}
}

\end{document}